\newcommand\smalltriangledown{%
   {\vcenter{\hbox{${\scriptscriptstyle\triangledown}$}}}}
\newenvironment{MyItemize}[1]{\begin{itemize}\setlength{\itemsep}{0.1cm}
\setlength{\parskip}{-0.05cm} #1}{\end{itemize}}
\newcommand{\mcal}[1]{\ensuremath{\mathcal {#1}}}
\renewcommand{\poly}{\ensuremath{{{\sf poly}}}\xspace}
\renewcommand{\TC}{\ensuremath{{\bf TC}}}
\newcommand{\SP}{\ensuremath{{\bf SP}}}
\newcommand{\LC}{\ensuremath{{\bf LC}}}
\newcommand{\idx}{\ensuremath{\mathsf{idx}}\xspace}
\newcommand{\err}{\ensuremath{{\sf err}}\xspace}
\newcommand{\labels}{\ensuremath{{\mcal{L}}}\xspace}
\newcommand{\czigzag}{\ensuremath{C_{\rm zigzag}}\xspace}
\definecolor{darkgreen}{rgb}{0,0.5,0}
\definecolor{lightblue}{RGB}{0,176,240}
\definecolor{darkblue}{RGB}{0,112,192}
\definecolor{lightpurple}{RGB}{124, 66, 168}
\definecolor{grey}{RGB}{139, 137, 137}
\definecolor{maroon}{RGB}{178, 34, 34}
\definecolor{green}{RGB}{34, 139, 34}
\definecolor{types}{RGB}{72, 61, 139}
\definecolor{gold}{rgb}{0.8, 0.33, 0.0}
\definecolor{darkgray}{gray}{0.3}
\newcommand{\skiptext}[1]{}
\newcommand{\splitter}{{segmenter}\xspace}
\newcommand{\Splitter}{{Segmenter}\xspace}
\definecolor{darkred}{rgb}{0.5, 0, 0}
\definecolor{darkgreen}{rgb}{0, 0.5, 0}
\definecolor{darkblue}{rgb}{0,0,0.5}
\theoremstyle{definition}
\newtheorem{Definition}{Definition}
\newcommand\markx[2]{}
\renewcommand{\path}{\ensuremath{\mathsf{path}}\xspace}
\newcommand{\N}{\mathbb{N}}
\newcommand{\ignore}[1]{}
\renewcommand{\paragraph}[1]{\vspace{5pt}\noindent\textbf{#1}}
\newcounter{task}
\newenvironment{proofof}[1]{\noindent \textbf{Proof of #1:}}{\hfill \qed}
\newtheorem{thm}{Theorem}[section]      
\newtheorem{theorem}[thm]{Theorem}
\newtheorem{lemma}[thm]{Lemma}
\newtheorem{corollary}[thm]{Corollary}
\newtheorem{fact}[thm]{Fact}
\newtheorem{proposition}[thm]{Proposition}
\theoremstyle{definition}
\newtheorem{Remark}{Remark}
\newcommand{\elaine}[1]{{\footnotesize\color{magenta}[Elaine: #1]}}
\newcommand{\gnote}[1]{{\footnotesize\color{blue}[Gilad: #1]}}
\newcommand{\weikai}[1]{{\footnotesize\color{green}[WK: #1]}}
\renewcommand{\elaine}[1]{}
\renewcommand{\gnote}[1]{}
\renewcommand{\weikai}[1]{}
\newcounter{cnt:challenge}
\newcommand{\I}{\ensuremath{{\bf I}}\xspace}
\begin{document}

\newcommand\relatedversion{}
\renewcommand\relatedversion{\thanks{The full version of the paper can be accessed at \protect\url{https://arxiv.org/abs/2102.11489}}} 



 \title{\bf Optimal Sorting Circuits for Short Keys
 \footnote{Author ordering is randomly generated.}}

 \author{Wei-Kai Lin \\ Cornell \\ {\tt wklin@cs.cornell.edu}
 \and Elaine Shi
 \\ CMU \\
 {\tt runting@cs.cmu.edu}}
\date{}

\maketitle



\begin{abstract}

A long-standing open question in the algorithms and complexity literature
is whether there exist sorting circuits of size $o(n \log n)$.
A recent work by Asharov, Lin, and Shi (SODA'21) showed that if 
the elements to be sorted have short keys 
whose length $k = o(\log n)$,
then one can indeed overcome the $n\log n$ barrier 
for sorting circuits, by leveraging non-comparison-based techniques. 
More specifically, Asharov et al.~showed
that there exist $O(n) \cdot \min(k, \log n)$-sized 
sorting circuits for $k$-bit keys, ignoring $\poly\log^*$ factors.
Interestingly, the recent works 
by Farhadi et al. (STOC'19)
and Asharov et al. (SODA'21) also showed that the above result is 
essentially optimal for every key length $k$,  
assuming that the famous Li-Li network coding conjecture holds.
Note also that proving any {\it unconditional} super-linear circuit lower bound
for a wide class of problems is beyond the reach of current techniques.

Unfortunately, the 
approach taken by previous works to achieve optimality in size
somewhat crucially relies on sacrificing the depth:  
specifically, 
their circuit is super-{\it poly}logarithmic in depth even for
1-bit keys.
Asharov et al.~phrase it as an open question 
how to achieve optimality both in size and depth. 
In this paper, we 
close this important gap in our understanding.
We construct a sorting circuit of size $O(n) \cdot \min(k, \log n)$
(ignoring $\poly\log^*$ terms) and depth $O(\log n)$.
To achieve this, our approach departs significantly  
from the prior works.
Our result can be viewed as a generalization of the landmark
result by Ajtai, Koml\'os, and Szemer\'edi 
(STOC'83), 
simultaneously in terms of size and depth. 
Specifically, for $k = o(\log n)$, 
we achieve asymptotical improvements in size over the AKS sorting circuit,
while preserving optimality in depth.

\ignore{
Although proving any super-linear circuit lower bound 
for a large class of problems {\it unconditionally} is beyond 
the reach of current techniques, the recent works
of Farhadi et al. (STOC'19) and Asharov et al. (SODA'21)
proved a conditional lower bound:
assuming that the famous Li-Li network coding conjecture
is true, any circuit that sorts elements with $k$-bit keys
must have size $\Omega(n) \cdot \min(k, \log n)$.
}

\end{abstract}


\pagestyle{plain}

\section{Introduction}


Sorting circuits have been investigated 
for a long time in the algorithms and complexity theory
literature, and it is almost surprising that we still
do not fully understand sorting circuits.
Suppose we want to sort an input array with $n$ elements, each
with a $k$-bit comparison key and a $w$-bit payload.
A long-standing open question is whether there exist
circuits with $(k + w) \cdot o(n \log n)$ boolean gates where each gate
is assumed to have constant fan-in and constant 
fan-out~\cite{isthereoramlb}.
The recent works of Farhadi et al.~\cite{sortinglbstoc19} (STOC'19)
showed that assuming the famous Li-Li network 
coding conjecture~\cite{lilinetcoding}, 
it is impossible to construct sorting
circuits of size $(k + w) \cdot o(n \log n)$
when there is no restriction on the key length $k$.
Given this conditional lower bound, we seem to have hit another wall. 
However, shortly afterwards,  
Asharov, Lin, and Shi~\cite{soda21} 
showed that we can indeed overcome the $n \log n$
barrier for {\it short keys}, specifically, when $k = o(\log n)$.
More specifically, Asharov et al.~showed
that an array containing $n$ elements
each with a $k$-bit key and a $w$-bit payload
can be sorted in a circuit of size
$(k+w) \cdot O(n) \cdot \min(k, \log n)$ 
(ignoring $\poly\log^*$ terms);
moreover, Asharov et al.~\cite{soda21}
prove that this is optimal for every choice of $k$.

Asharov et al.~\cite{soda21}'s result 
moved forward our understanding on sorting circuits, 
since it achieved asymptotical improvements for short keys 
relative to the landmark result
by Ajtai, Koml\'os, and Szemer\'edi~\cite{aks} (STOC'83), who  
constructed sorting circuits containing $O(n \log n)$
{\it comparator}-gates.
As Asharov et al.~\cite{soda21} point out, 
an $o(n \log n)$ sorting circuit for short keys  
might have eluded the community
earlier due to 
a couple natural barriers.
First, an $o(n \log n)$ sorting circuit is 
impossible in the {\it comparator-based} model
even for $1$-bit keys --- this follows
partly due to the famous 0-1 principle which
was described in Knuth's textbook~\cite{knuthbook}.
Indeed, Asharov et al.~\cite{soda21}
is the first to show how to leverage {\it non-comparison-based}
techniques to achieve a non-trivial sorting result
in the circuit model.
Earlier, non-comparison-based sorting
was investigated in the Random Access Machine (RAM) model
to achieve almost linear-time sorting~\cite{sortlinearandersson,Kirkpatricksort,hansort00,hansort01,thorupsort}
but it was unknown how non-comparison-based 
techniques can help in the circuit model.
The second natural barrier pertains to the {\it stability}
of the sorting algorithm. 
Stability requires that elements
with the same key should preserve 
the same order as they appear in the input.
Recent works~\cite{osortsmallkey,lbmult} have shown that 
an $o(n \log n)$-sized {\it stable} sorting circuit is impossible
even for $1$-bit keys, 
if we either assume the Li-Li network coding conjecture~\cite{lilinetcoding}
or assume that the circuit 
follows the so-called {\it indivisibility} model (i.e., 
the circuit does not perform encoding or computation
on the elements' payloads. 
Therefore, to achieve their result, 
Asharov et al.~\cite{soda21}
had to forgo both the comparator-based restriction as well as the
stability requirement.

Despite the progress, 
Asharov et al.~\cite{soda21}'s result 
is nonetheless unsatisfying --- to achieve optimal circuit size, 
they pay a significant price in terms of depth: 
their circuit
is $(\log n)^{\omega(1)}$ in depth even for 1-bit keys.
In fact, as written, the depth of their circuit is 
{\it super-linear} --- 
however, with some work, it is possible to leverage existing techniques~\cite{paracompact}
to improve their depth to 
$(\log n)^{O(\log (\log^* n))}$, which grows 
asymptotically faster than any poly-logarithmic function.
We are not aware of any known technique that 
can improve the depth to 
even polylogarithmic, even for $1$-bit keys, 
while still preserving the $o(n \log n)$ circuit size.
\elaine{todo: calculate their depth more carefully}
\weikai{
I believe $(\log n)^{O(\log\log^* n)}$ will bound the depth:
using a tight compaction of depth $(\log n)^{c_1\log\log^* n}$,
the depth of median is $D(n)=D(n/5)+(\log n)^{c_1\log\log^* n}+D(7n/10)$,
which solves to $D(n) \le (\log n)^{c_2\log\log^* n}$ for constant $c_2 > c_1$,
and then two-parameter recursive sorting incurs another $\log n$.
}

We therefore ask the following natural question, which
was also phrased as the main open question in the 
work by Asharov et al.~\cite{soda21}:
\begin{itemize}[leftmargin=5mm,topsep=2pt]
\item[]
{\it Can we construct sorting circuits for short keys 
optimal both in size and depth?}
More concretely, 
can we sort $n$ elements each with a $k$-bit
key and $w$-bit payload in a circuit
of size $(k+w) \cdot O(n) \cdot \min(k, \log n)$
and of logarithmic depth?
\end{itemize}
If we could achieve the above, we would get a result
that strictly generalizes AKS~\cite{aks} (taking 
both circuit size and depth into account). 
Independently and concurrently to this work,
Kouck{\'{y}} and Kr{\'{a}}l~\cite{KK21}
also improved the depth to $O(\log^3 n)$;
we will summerize and compare their results later in Section~\ref{sec:additional_results},
but the above question remains open even after their work.

\subsection{Our Main Result}
\label{sec:mainresult}

We answer the above question affirmatively except
for an extra $\poly\log^*$ factor in the circuit size.
We explicitly construct a sorting circuit
for short keys that is optimal in size 
modulo $\poly\log^*$ factors, and optimal in depth, 
as stated in the following theorem:

\begin{theorem}[Optimal sorting circuits for short keys]
Suppose that $n > 2^{4k + 7}$.
There is a constant fan-in, constant fan-out 
boolean circuit that correctly sorts 
any array containing $n$ elements each with a $k$-bit key 
and a $w$-bit payloads, whose 
size is $O(n k (w+k)) \cdot \max(1, \poly(\log^* n - \log^* (w+k)))$
and whose depth is $O(\log n + \log w)$.
\label{thm:intro-sort-circ}
\end{theorem}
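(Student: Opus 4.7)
The plan is to attack Theorem~\ref{thm:intro-sort-circ} via a two-parameter recursion on $(n,k)$ that interleaves non-comparison bucketing (which wins in size) with AKS (which wins in depth), designed so that the recursion contributes only to size and not to depth. The long-key regime $k = \Omega(\log n)$ is dispatched immediately: apply AKS directly to the $(k+w)$-bit records, obtaining size $O(n(k+w)\log n) = O(nk(k+w))$ and depth $O(\log n + \log w)$, where the $\log w$ term comes from implementing each comparator on $w$-bit payloads. So the interesting regime is $k = o(\log n)$, where we must beat $n\log n$ in size without losing logarithmic depth.

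For the short-key regime, the construction would be built around a single primitive that I will call a \emph{shallow distributor}: given $n$ records with $k'$-bit sub-keys, it routes them into $2^{k'}$ buckets ordered by key value (but not necessarily stable within a bucket) in depth $O(\log n + \log w)$ and size $O(n(k+w) k')$, tolerating arbitrary skew in the bucket sizes. Given such a distributor, the sorter is a straightforward recursion: split the key into a top and bottom half, apply the distributor to the top half, and recurse in parallel on each resulting bucket with the bottom half. Because all recursive calls fire simultaneously on disjoint buckets, the depth recurrence is $D(n,k) \le D(n, k/2) + O(\log n + \log w)$, which collapses to $O(\log n + \log w)$ provided the recursion has only $O(1)$ or $O(\log^* n)$ effective levels; the size recurrence is $S(n,k) \le \sum_i S(n_i, k/2) + O(nk(k+w))$, which solves to $O(nk(k+w))$, losing only the stated $\poly\log^*$ slack coming from re-amplifying the distributor's reliability at each recursive level.

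The main obstacle, and where this paper must depart from Asharov, Lin, and Shi, is realizing the shallow distributor at depth $O(\log n + \log w)$. The tight-compaction routines used in prior work are built from super-concentrators of super-logarithmic depth, and naively substituting AKS inside their construction gives $O(\log^2 n)$ depth. My plan is a two-stage pipeline: (i) an $O(\log n)$-depth, $O(n(k+w))$-size \emph{approximate} routing through a bounded-degree expander that delivers all but a $1/\poly\log n$ fraction of elements to the correct buckets, and (ii) an AKS clean-up applied only to the small residual set of misrouted elements (tagged during stage (i)), whose $O(\log n)$ depth and near-linear size in the residual fit comfortably inside the target bounds. Proving that stage (i) can be made explicit, depth-$O(\log n)$, and adversary-proof against worst-case input distributions is the technical heart of the result; once that is in place, correctness of the outer recursion follows by induction on $k$ using the size and depth accounting above.
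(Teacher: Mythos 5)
Your plan has two serious gaps, and the second is exactly the trap the paper spends half its technical effort climbing out of.

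First, the depth accounting of your outer recursion does not close. You propose distributing on the top $k/2$ bits and recursing on the bottom $k/2$, giving the recurrence $D(n,k) \le D(n, k/2) + O(\log n + \log w)$. That recurrence has $\Theta(\log k)$ levels, not ``$O(1)$ or $O(\log^* n)$ effective levels'' as you assert; for $k$ anywhere close to $\log n$ you get $\Theta(\log n \cdot \log\log n)$ depth, and even for $k$ polylogarithmically small you exceed $O(\log n)$. This is not a bookkeeping issue: the paper explicitly identifies the two-parameter recursion paradigm (Lin, Shi, Xie) as \emph{inherently} polylogarithmic in depth, and in fact points out that the concurrent work of Kouck\'y and Kr\'al lands at $O(\log^3 n)$ precisely because it uses that paradigm. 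The paper's whole point in the $1$-bit-to-$k$-bit step is to \emph{avoid} any recursion on $k$: it runs the first $6k$ layers of AKS once, as a ``nearly orderly segmenter,'' and proves (using the AKS error-decay lemma non-black-box) that with only $2^k$ distinct keys this already leaves just a $2^{-\Theta(k)}$ fraction of elements misplaced. All remaining work is a constant number of extraction/alignment/compaction passes, each $O(\log n)$ deep, so the depth is a flat $O(\log n)$ regardless of $k$. Nothing in your plan produces this collapse.

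Second, your ``shallow distributor'' hides the paper's other main contribution. You invoke, as a subroutine, an $O(\log n)$-depth, linear-size circuit that extracts the $1/\poly\log n$ fraction of misrouted elements, and you wave at an expander-based approximate router plus an AKS cleanup. But extracting that sparse residual into a $\Theta(n/\log n)$-length array, without loss and at $O(\log n)$ depth and $O(nw)$ size, \emph{is} a tight/sparse compaction circuit, and no such circuit was known before this paper: the best prior work (Asharov, Lin, Shi, SODA'21) is linear-size but $(\log n)^{\omega(1)}$-deep, while the Pippenger-style constructions are not linear in size. The paper builds the needed compactor through a chain of new abstractions — lossy loose compaction, approximate splitters, approximate tight compaction with Pippenger-style early stopping, and a sparse loose compactor based on a \emph{polylogarithmic-degree} expander (not bounded-degree, as you suggest) to kill the recursive depth blowup. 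None of these steps are routine, and your ``stage (i) + stage (ii)'' sketch neither produces them nor explains how to sidestep them. As written, your proposal reduces the theorem to two lemmas (a depth-$O(\log n)$ distributor and a depth-$O(\log n)$ residual-extraction/reverse-route) that are roughly as hard as the theorem itself.

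One small thing you do have right: dispatching $k = \Omega(\log n)$ directly to AKS with an additive $O(\log w)$ for the $w$-bit comparators matches the paper. But the short-key regime — which is the entire content of the theorem — is not on solid ground in your sketch.
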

The circuit size is {\it optimal} upto $\poly\log^*$ 
factors {\it for every $k$}
due to a lower bound by Asharov et al.~\cite{soda21} (assuming
either the invisibility model or the Li-Li network coding conjecture).
Furthermore, $\Omega(\log n)$ depth is necessary even for $1$-bit
keys, as implied by the lower bound 
of Cook et al.~\cite{orlbpram};
moreover, the $\log w$ part of the depth is needed
even for propagating the comparison result to 
all bits of the output.
Our sorting circuit leverages {\it non-comparison-based} techniques,
and moreover it 
does {\it not} preserve stability --- as mentioned
earlier, forgoing the comparison-based restriction
and the stability requirement is inherent even for the 1-bit key special case.

\subsection{Technical Highlights}

\paragraph{Blueprint and challenges.}
To get our main results, we need two major stepping stones: 
\begin{enumerate}[leftmargin=5mm,]
\item 
{\it Linear-sized, logarithmic-depth compaction circuit.}
First, we solve the problem for the 1-bit special case. We show how to get a 
1-bit sorting circuit (also called a compaction circuit)
that is linear in size (modulo $\poly\log^*$ factors) and logarithmic in depth.
In comparison, the prior state-of-the-art~\cite{soda21} is also  
linear in size (modulo $\poly\log^*$ factors) but suffers
from $(\log n)^{\omega(1)}$  depth.
\item 
{\it $1$-bit to $k$-bit upgrade.} 
Next, our goal is to upgrade $1$-bit sorting to $k$-bit sorting.
Since any $o(n \log n)$-sized circuit 
that sorts $1$-bit keys
inherently cannot be stable~\cite{lbmult,soda21}, 
we cannot use classical techniques such as Radix sort 
to get $k$-bit sorting from $1$-bit sorting. 
To date, the only known technique for accomplishing 
the $1$-bit to $k$-bit upgrade {\it without relying on stability}
was a clever two-parameter recursion trick 
suggested by Lin, Shi, and Xie~\cite{osortsmallkey}. Unfortunately, their approach
incurs at least polylogarithmic depth.
We propose a brand new paradigm for performing the $1$-bit to $k$-bit upgrade,
elaborated below.
\end{enumerate}

We now explain at a very high level the novel ideas
that allow us to overcome these challenges.

\paragraph{Technical highlight: a brand new 1-bit to $k$-bit upgrade.}
Given a linear-sized, logarithmic-depth circuit that sorts
$1$-bit keys, we want to leverage it to construct
a $k$-bit sorting circuit that is $O(k)$ times larger in size,
and without blowing up the depth.
As mentioned, the only known prior technique~\cite{osortsmallkey}
for performing this upgrade inherently suffers from poly-logarithmic depth,
and it seems unlikely that we can hope to overcome this depth barrier 
if we stick to the known technical frameworks.
Therefore, our approach completely departs from prior works.

Our novel idea 
lies in using the famous  AKS
construction in a non-blackbox manner. Specifically, we 
propose a new building block called a {\it nearly ordered segmenter}
which can be constructed by running the beginning $6k = o(\log n)$ layers of the 
AKS circuit.
We prove that such a nearly ordered segmenter 
can {\it partially sort} an input array in the following sense:
if we divide the outcome into $2^{3k}$ segments,
then, inside each segment, at most  
$\frac{1}{2^{8k}}$ fraction of elements do not belong to the current segment.
This new abstraction ``nearly ordered segmenter'' is of independent interest 
and may be useful in other applications.

Now, imagine that we apply a 
nearly ordered segmenter to an input array with a small number of distinct keys
(specifically, $2^k$ distinct keys), 
resulting in $2^{3k}$ segments, where for each segment,
only a small fraction of elements are in the wrong segment.
Since there are only $2^k$ distinct keys but
as many as $2^{3k}$ segments, 
most of the segments would have only a single key had the array been
completely sorted.
This means that if we apply the 
nearly ordered segmenter to an input array with only $2^k$ distinct keys, then
we can prove something even stronger about the outcome: in fact, 
only a small fraction of the elements are {\it misplaced} in the sense
that they do not belong to the current {\it position} (had the array been completely
sorted).

If we could somehow extract these misplaced elements, sort them,
and then route the sorted result back into the misplaced positions, then we could  
fully sort the input!
Indeed, this is what we do, and we accomplish this 
with the help of the compaction circuit.
How to use the compaction circuit to correct the remaining errors
turns out to be very much non-trivial too. 
There are two main technical challenges: first, 
even identifying 
which elements are misplaced (subject
to the desired performance bounds) is non-trivial;
second, after we determine which set of 
possibly misplaced elements to extract, sort, and route back,
we cannot directly use compaction to perform the extraction and 
route-back because the compaction circuit is {\it unstable}!
We discuss how to overcome these technical challenges
in Section~\ref{sec:roadmap} and the subsequent formal sections.

\paragraph{Technical highlight: linear-size, logarithmic-depth compaction circuit.}
To get this result, 
we need fairly sophisticated and novel techniques. 
At a high level, to avoid
suffering from the super-polylogarithmic depth of Asharov et al.~\cite{soda21}, 
we first construct various building
blocks that can be regarded as relaxations of (tight) compaction. 
Specifically, by relaxing compaction along
several different axes, we define several new, intermediate abstractions, 
each of which will play a role in the
final construction. We show that the relaxed abstractions can be realized in sub-logarithmic or logarithmic
depth. We then gradually bootstrap these building blocks into stronger ones, and the final tight compaction
circuit is achieved through multiple steps of bootstrapping. We defer 
the details to Section~\ref{sec:roadmap}.

\subsection{Additional Result in the Oblivious PRAM Model}
\label{sec:additional_results}
Along the way towards getting our main result (Theorem~\ref{thm:intro-sort-circ}),
we also get an intermediate result for the oblivious Parallel RAM (PRAM) model: 
we show how to construct a deterministic, oblivious PRAM algorithm
that sorts short keys, optimal in both total work and depth
(and this time without the extra $\poly\log^*$ factors). As we explain below,
even this intermediate result is interesting in its own right.
Note also that 
this intermediate oblivious PRAM 
result does not directly give our circuit result --- 
partly, this is because on a PRAM, word-level operations 
on $\log n$-bits can be accomplished with unit cost; but there is no such free lunch
in the circuit model.  
Specifically, for the Oblivious PRAM model,
an optimal compaction 
algorithm linear in total work and logarithmic 
was known~\cite{paracompact} and we could directly rely on that in the
$1$-bit to $k$-bit upgrade.
Unfortunately, the circuit counterpart of this result is unknown, 
and getting the circuit counterpart of this result is highly non-trivial as
our paper shows.

A deterministic algorithm 
in the oblivious PRAM model is a PRAM algorithm whose
memory access patterns do not depend on the input 
(except the input size).
We show that indeed, one can obliviously sort 
$n$ elements
each with a $k$-bit key in $O(n) \cdot \min(k, \log n)$ total
work and $O(\log n)$ depth, assuming that each element
can be stored in $O(1)$ memory words.
The total work is optimal assuming either
the indivisibility model or the Li-Li network coding
conjecture~\cite{osortsmallkey,soda21}, and the depth
is optimal unconditionally even for 1-bit keys~\cite{orlbpram}.

\begin{theorem}[Sorting short keys on an oblivious PRAM]
There exists a deterministic oblivious parallel algorithm
that sorts any input array
containing $n$ elements each with a $k$-bit key
in $O(n) \cdot \min(k, \log n)$
total work and $O(\log n)$ depth, assuming that each element
can be stored in $O(1)$ words\footnote{Note that the theorem statement
for oblivious PRAM does {\it not}
 have an extra $\poly\log^*$ blowup in total work.}.
\label{thm:intro-sort-opram}
\end{theorem}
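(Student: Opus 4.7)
The plan is to combine the parallel tight-compaction primitive of \cite{paracompact}, which runs in $O(n)$ total work and $O(\log n)$ depth \emph{without} any $\poly\log^*$ blowup, with the new $1$-bit to $k$-bit upgrade paradigm sketched in Section~\ref{sec:roadmap}. The PRAM setting is strictly easier than the circuit setting in this regard: word-level parallelism comes for free, so we never need the elaborate bootstrapping that builds a linear-size logarithmic-depth compaction \emph{circuit}, and this is precisely what lets us avoid the $\poly\log^*$ overhead that appears in Theorem~\ref{thm:intro-sort-circ}.

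First, I would dispose of the regime $k \ge \log n$ by simulating the AKS sorting network on the PRAM, which yields $O(n \log n)$ total work and $O(\log n)$ depth and already matches the target $O(n)\cdot\min(k,\log n)$. The rest of the argument addresses $k = o(\log n)$, where the target becomes $O(nk)$ total work. For this regime I would apply the \emph{nearly ordered segmenter} by simulating the first $6k$ layers of the AKS network on the PRAM. This costs $O(nk)$ work and $O(k) = O(\log n)$ depth, and guarantees that the array is partitioned into $2^{3k}$ segments in each of which at most a $1/2^{8k}$ fraction of the elements are misplaced. Because there are only $2^k$ distinct keys but $2^{3k}$ segments, the same counting argument that underlies the $1$-bit to $k$-bit upgrade shows that only an $O(1/2^{\Omega(k)})$ fraction of the entire array is still out of its final sorted position. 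Each position can then be locally tagged \textbf{good} or \textbf{bad} in $O(1)$ depth by comparing its key against the key-range its segment is supposed to host; the segment-to-key-range map is recovered by a single parallel prefix-sum over segment boundaries.

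Next, I would use the parallel compaction of \cite{paracompact} to gather all bad elements into a dense buffer of size $m = O(n/2^{\Omega(k)})$, sort that buffer with AKS at cost $O(m\log n) = o(nk)$ work and $O(\log n)$ depth, and then use a second compaction to route the sorted bad elements back into the vacated slots. Each of the four phases costs $O(nk)$ work and $O(\log n)$ depth, so the totals telescope to the bounds claimed in the theorem.

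The main obstacle will be the \emph{route-back} step, because the paracompact primitive is itself unstable: inverting the gather-map is not guaranteed to place the sorted bad block back in the correct positions. My plan is to annotate each element with its originating global index \emph{before} running the segmenter, so that after AKS-sorting the bad buffer, the target slot of every bad element is uniquely determined by the rank of its key among the bad keys together with the list of vacated slot indices (which is itself sorted). The reinsertion is then driven by a routing compaction keyed on target indices rather than on any stability property, sidestepping the lack of stability entirely. Once this bookkeeping is in place the remaining verification --- that the four phases compose correctly and that the recursion base case does not inflate the $O(nk)$ total --- is routine given the primitives already cited.
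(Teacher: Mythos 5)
Your high-level skeleton---apply the $6k$-cycle AKS segmenter, tag and extract the small fraction of misplaced elements with the work- and depth-optimal compaction of Asharov et al., fix them, and route them back---matches the paper. But two of your steps blow the $O(nk)$ work budget precisely in the regime the theorem targets, namely small $k$, and this is exactly where the paper's new machinery does the heavy lifting.

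The sharpest error is the claim that AKS-sorting the bad buffer costs $O(m\log n)=o(nk)$. Your estimate $m=O(n/2^{\Omega(k)})$ is right (the $\le K$ boundary segments alone contribute $\Theta(n/K^2)=\Theta(n/2^{2k})$ elements that can be out of position), but then AKS on $m$ elements costs $\Theta(m\log m)$. For $k=O(1)$, say $k=1$, you get $m=\Theta(n)$ and a cost of $\Theta(n\log n)$, whereas the budget is $O(nk)=O(n)$; more generally $m\log m > nk$ whenever $2^{2k}\cdot k\lesssim \log n$, i.e., your argument only kicks in once $k$ is already $\Omega(\log\log n)$. The paper never runs a comparison sort on the extracted buffer: it uses the non-comparison sorts ${\bf SlowSort}^K$ and ${\bf SlowAlign}^{K,K^2}$ (Theorems~\ref{thm:slowsort} and~\ref{thm:slowalign}), which exploit the $2^k$-sized key domain so that sorting the $\Theta(n/K^2)$-sized buffer costs only $O((n/K^2)\cdot K^2)=O(n)$ work.

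Your route-back plan has a matching obstruction. The paracompact primitive is a \emph{compaction}, and the distribution routine built from it does not preserve the order of the elements being placed, so ``a routing compaction keyed on target indices'' is not an available primitive. To land the $r$-th smallest bad key in the $r$-th vacated slot you must first realign the compacted buffer against the (unstable) compaction's implicit permutation before reverse-routing, and that realignment is a sort on an index of $\log m$ bits, which again costs $\Theta(m\log m)$. This is exactly what the paper's granularity-switching trick is for: it aligns the extracted buffer only up to the $2k$-bit super-segment index via ${\bf SlowAlign}^{K,K^2}$ in $O(n)$ work, reverse-routes so every element lands in the correct super-segment, and then separately sorts the at most $K$ multi-keyed super-segments (Steps~\ref{step:multikey-correction}--\ref{step:final-correction} in the proof of Theorem~\ref{thm:corrector}), again in $O(n)$ work. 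A secondary gap: tagging positions against ``the key-range the segment is supposed to host'' requires the global histogram of the $2^k$ keys and its prefix sums, which is not a ``single parallel prefix-sum over segment boundaries''; the paper sidesteps the global histogram entirely by running ${\bf FindDominant}$ (Theorem~\ref{thm:finddominant}) per segment and marking an explicitly bounded \emph{superset} of the misplaced elements, and note further that if ``bad'' is interpreted segment-level as you wrote, the $\le K$ boundary segments remain internally unsorted after your route-back, which is the separate cleanup the paper handles in its final steps.
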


Prior to our work, it was known that   
$n$ elements with $k$-bit keys 
can sorted by a {\it randomized} oblivious algorithm 
in $O(k n \frac{\log \log n}{\log k})$
work and polylogarithmic depth~\cite{osortsmallkey}.
It is possible to improve the total work to  
$O(k n)$ and get rid of the randomization 
by combining techniques from Lin et al.~\cite{osortsmallkey}
and Asharov et al.~\cite{paracompact}.
However, to the best of our knowledge, existing techniques 
are stuck at polylogarithmic depth. 
To attain the above result,  
our techniques depart significantly from the prior 
works~\cite{osortsmallkey,soda21}.

\paragraph{Concurrent work of Kouck{\'{y}} and Kr{\'{a}}l~\cite{KK21}.}\footnote{
  This manuscript was written in 2020, 
  and then we posted it in February 20201 (arXiv:2102.11489),
  four days after Kouck\'y and Kr\'al.
}
In the independent and concurrent work,
Kouck{\'{y}} and Kr{\'{a}}l construct a sorting circuit of size 
$O(nk(w+k)\cdot(1+\log^*n - \log^*(w+k)))$ and depth $O(\log^3 n)$
for $k\le (\log n)/11$ bits.
Compared to 
our result, their circuit depth is still poly-logarithmic, 
since they directly adopt the two-parameter recursion trick
by Lin et al.~\cite{osortsmallkey} 
to upgrade from $1$-bit sorting to $k$-bit sorting. 
As mentioned, this framework inherently suffers from at least
poly-logarithmic depth. We got around this issue by proposing
a brand new framework 
for this 1-bit to $k$-bit upgrade.
On the other hand, Kouck{\'{y}} and Kr{\'{a}}l tightened 
the $\poly\log^*$ factor to $\log^*$. 
In addition, Kouck{\'{y}} and Kr{\'{a}}l consider another variant of sorting circuit
that sorts $n$ integers each of $k$ bits \emph{without payload},
commonly referred to as {\it integer sorting}.
For integer sorting, 
they claim circuit size $O(nk^2)$ and depth $O(\log n + k \log k)$.
Our paper mainly focuses on sorting with payload.


\section{Technical Roadmap}
\label{sec:roadmap}

We give an informal technical overview of our ideas in this section.

\subsection{Sorting Short Keys on an Oblivious PRAM (a.k.a. $1$-bit to $k$-bit Upgrade)}
\label{sec:roadmap-opram}
As an intermediate stepping stone, we first consider how to sort $k$-bit keys 
on an Oblivious PRAM in $O(n) \cdot \min(k, \log n)$ total work and $O(\log n)$ depth.
Without loss of generality, 
we assume that $k < \frac{1}{8}\log n$ in the following exposition
where $n$ denotes the length of the array 
to be sorted; since if $k \geq \frac{1}{8}\log n$, we can simply run AKS~\cite{aks}
to sort the array.
We also assume that $n$ is a power of $2$; if not, we can pad
it with elements with $\infty$ keys to the next power of $2$.
We assume that each element can be stored in $O(1)$ memory words.

In the Oblivious 
PRAM model, Asharov et al.~\cite{paracompact} showed how to get an optimal compaction algorithm that is linear
in total work and logarithmic in depth
(even though the optimal counterpart in the circuit model 
is not known prior to our work).
Our goal is to upgrade the 1-bit sorting (i.e., compaction) 
to $k$-bit sorting. 
As mentioned, we cannot build upon existing approaches for this upgrade
since they incur polylogarithmic depth~\cite{osortsmallkey,soda21}. 
We therefore suggest a brand new approach.


\subsubsection{New Abstraction: Nearly Orderly Segmenter}
We propose a new abstraction called an {\it $(\eta, p)$-orderly 
segmenter}, where $\eta \in (0, 1)$ indicates how sorted
the resulting array is, and $p$ 
denotes the number of segments.
\ignore{
Henceforth, we use the notation $p$ to 
denote the number of segments, 
and we use $\eta \in (0, 1)$ to denote the 
fraction of elements 
in each segment that are misplaced.
}
An array $A := A_1 || A_2 || \ldots || A_p$,
represented as the concatenation of $p$ equally sized partitions
denoted $A_1, A_2, \ldots, A_p$, 
is said to be $(\eta, p)$-orderly iff  
in each of the $p$ segments, at most $\eta$ fraction of the elements
belong to the \emph{wrong segment}
if the array were to be fully sorted. 
An $(\eta, p)$-orderly segmenter 
receives an input array whose length is divisible by $p$, 
and outputs a permutation of the input array that 
is $(\eta, p)$-orderly. 

We then show how to construct a deterministic, oblivious 
$(2^{-8k}, 2^{3k})$-orderly segmenter
that requires $O(nk)$ total work and $O(k)$ depth.
The construction involves partially executing  
the AKS algorithm~\cite{aks}.
Recall that the full AKS algorithm would execute for a total of $\log n$ cycles.
In each cycle, the following is repeated for $O(1)$ number of times: 
partition the array into disjoint partitions where  
each partition may not be a contiguous region 
in the original array, and apply an $\epsilon$-near-sorter 
to each partition in parallel where $\epsilon \in (0, 1)$ is a sufficiently
small constant\footnote{An $\epsilon$-near-sorter is a constant depth 
comparator circuit described by Ajtai et al.~\cite{aks}, which we will
formally define in the subsequent technical sections.}.
Our key observation is the following:

\begin{quote}
\textbf{Observation.} 
If we execute the AKS algorithm 
not for the full $\log n$ cycles, but only for $6k$ cycles, 
it gives a $(2^{-8k}, 2^{3k})$-orderly segmenter.
\end{quote}

The proof of the above statement is rather technical  
since it requires us to {\it use the properties of AKS in a non-blackbox manner}.
We defer the proof to the formal technical sections.

\paragraph{Applying a nearly ordered segmenter to an array with few
distinct keys produces an almost sorted array.}
One helpful intuition is the following:  
if we run AKS for only $o(\log n)$ cycles, in general, 
we cannot guarantee sortedness within 
each segment of length $n/2^{o(\log n)}$.
Specifically, had the number of distinct keys been large, 
running AKS for only $o(\log n)$ cycles
could produce an outcome that is far from sorted (i.e.,
a large number of elements are misplaced in the \emph{wrong position}
even they are in their correct segments).

Fortunately, 
our input array has relatively few distinct keys --- specifically, at most 
$2^k$ distinct keys.
This means that if the array were fully sorted,
then almost every segment consists of identical keys
except for $2^k$ segments.
In this case, applying a $(2^{-8k}, 2^{3k})$-orderly segmenter 
results in an array that is close to fully sorted, i.e., 
only a small $O(1/2^{2k})$ fraction 
\elaine{what is the fraction?}\weikai{filled}%
of elements are misplaced in the wrong position.
Given this crucial observation, what remains to be done
is to {\it extract} the misplaced elements,  
{\it sort} them, and {\it route them  
back} into the original positions while preserving the sorted order. 
We now discuss how to accomplish this goal --- doing so turns out to be 
rather non-trivial, and we first need to 
construct some new building blocks which we describe next.

\subsubsection{Additional New Building Blocks}
Henceforth, let $K := 2^k$.
We will need the following new building blocks to be able to
extract, sort, and route back the remaining errors.

\paragraph{${\bf SlowSort}^K(A)$:} an inefficient oblivious sort algorithm --- when
given an array $A$ of length $m$ with at most $K := 2^k$ distinct keys,
the algorithm outputs a sorted permutation of $A$. 
We would like to accomplish ${\bf SlowSort}^K(A)$ 
in $O(mK)$ total work and $O(\log m + k)$ depth, since later 
we will apply ${\bf SlowSort}^K(A)$
to arrays of size $n/K$ where $n$ is the length of the larger array 
we need to sort.

It turns out that even this slow version is somewhat non-trivial to construct.
The most obvious idea, that is, relying on AKS~\cite{aks}, does not work.
AKS would have incurred $O(m \log m)$ 
work; and for small choices of $K$, $\log m$ could be larger than $K$.

We instead make $K$ copies of the input array:
in the $u$-th copy, we want to put elements
with the key $u \in [0, K-1]$ into the right positions, where
as all other elements should be fillers.
If we can accomplish this, 
we can sort $A$ by performing a coordinate-wise $K$-way selection
among the $K$ arrays.

Specifically, let $s_u$ be the number of elements smaller than $u$.
In the sorted array, elements with the key $u$ should appear
in positions $s_u + 1, s_u + 2, \ldots, s_{u+1}$.
Now, in the $u$-th copy, we preserve all the elements with the key $u$
but replace all other elements with fillers.
We mark exactly $s_u$ fillers with the key $-\infty$ and 
the mark rest of fillers with the key $\infty$.
Now, the $u$-th copy of the problem boils down to sorting 
$m$ elements with 3 different keys. 
We show that this can be accomplished 
in linear time and logarithmic depth, 
if we leverage 
the linear-work, logarithmic depth 
oblivious compaction~\cite{paracompact} algorithm (we defer
the details of the construction to subsequent technical sections). 

\paragraph{${\bf FindDominant}(A)$:}
let $A$ be an input array containing $n$ elements
each with a $k$-bit key, and let $\epsilon \in (0, 1/2)$.
We say that an array $A$ of length $n$ is $(1 - \epsilon)$-uniform iff
except for at most $\epsilon n$
elements, all other elements in $A$ have the same key --- henceforth
this key is said to be the {\it dominant} key.
We will need an oblivious algorithm 
${\bf FindDominant}(A)$ which finds the 
dominant key among an $(1-2^{-8k})$-uniform input array $A$ 
containing $n$ elements;
further, we want to accomplish this in $O(n)$ total work and
$O(\log n + k)$ depth.
We construct an oblivious algorithm for solving this problem 
that is reminiscent of 
Blum et al.~\cite{blumselect}'s median-finding algorithm, and 
moreover the algorithm employs ${\bf SlowSort}^K$ as a building block 
--- see the 
subsequent technical section for details.

\subsubsection{Sorting a $(2^{-8k}, 2^{3k})$-Orderly Array}
Let $A$ be a $(2^{-8k}, 2^{3k})$-orderly array containing
$n$ elements with $k$-bit keys, and recall that $K := 2^k$.
If $A$ were to be fully sorted, then 
among the $K^3$ segments, at most $K$ segments can have multiple keys;
and all remaining segments must have only a single key. 
Since $A$ is $(2^{-8k}, 2^{3k})$-orderly, it means that 
all but $K$ segments of $A$ must be $(1-2^{-8k})$-uniform, i.e., 
all but $2^{-8k}$
fraction of elements have the same key.

Our goal is to extract, sort, and then route back all the misplaced elements
that do not belong to the right position.
However, it turns out that even identifying which elements are 
misplaced (within the desired performance bounds) is challenging.
Rather than identifying the set of misplaced elements precisely,
we will instead identify a {\it superset}
of the misplaced elements, 
including 1) every segment that is not $(1-2^{-8k})$-uniform,
and 2) roughly $O(2^{-8k})$ fraction of elements from each 
$(1-2^{-8k})$-uniform segments.
Observe that for any $(1-2^{-8k})$-uniform segment, 
at most $2^{-8k}$ fraction of elements with the dominant key can be misplaced;
and moreover, elements whose keys are not dominant might be misplaced too. 
To identify a small superset of misplaced elements 
in each segment, we must first identify the dominant key of each segment.
To this end, we make use of the aforementioned ${\bf FindDominant}$
building block which correctly identifies the dominant key
as long as the segment is $(1-2^{-8k})$-uniform. 

\paragraph{Strawman algorithm.}
Based on these ideas, let us first look at a 
flawed strawman algorithm that makes use of compaction to extract
the misplaced elements, and route them back after they are sorted:
\begin{mdframed}
\begin{center}
{\bf Flawed strawman idea: sorting a $(2^{-8k}, 2^{3k})$-orderly array}
\end{center}
\begin{enumerate}[leftmargin=5mm,itemsep=1pt,topsep=2pt]
\item 
Each segment decides if it is $(1-2^{-8k})$-uniform or not. 
That is, each segment calls ${\bf FindDominant}$
to find its dominant key. 
If the segment is indeed $(1-2^{-8k})$-uniform, 
${\bf FindDominant}$ is guaranteed to return the correct dominant key;
else an arbitrary result may be returned.

\item 
\label{step:roadmap-extract1}
Use oblivious compaction to extract 
1) all segments in $A$ that 
are not $(1-2^{-8k})$-uniform, and 2) from each $(1-2^{-8k})$-uniform 
segment:  extract all elements whose keys differ from the dominant key,
and extract $2^{-8k} \cdot (n/K^3)$ 
elements with the dominant key
where $n/K^3$ is the segment size.
We can show that the number of extracted elements
is upper bounded by $3n/K^2$; and there is a way to 
pad the extracted array 
with fillers 
to a fixed length of $3n/K^2$ to hide how long it actually is.

Note that the extracted elements contain all the
elements that belong to incorrect segments,
but possibly some additional elements too.  
The invariant we want to maintain here is that 
{\it all remaining elements
must belong to the right segment and all segments are uniform}.

\item 
Call ${\bf SlowSort}^K$ to sort the extracted array 
and reverse route the result back to the original array.
\label{step:roadmap-slowsortextracted}

\item 
At this moment, all elements fall into the correct segment, 
but if a segment has multiple keys, it may not be sorted internally.
Fortunately, we know that at most $K$ segments can be multi-keyed. 
Therefore, we use oblivious compaction to extract these $K$ segments,
call ${\bf SlowSort}^K$ 
to sort within each extracted segment, reverse 
route the result back, and output the final result.
\label{step:roadmap-sortwithin}
\end{enumerate}
\end{mdframed}

This algorithm almost works except for one subtle issue
that breaks correctness: the linear-work, logarithmic-depth 
oblivious compaction algorithm~\cite{paracompact}
is {\it not stable} and in fact this is inherent~\cite{osortsmallkey,lbmult}.
This means that in the Step~\ref{step:roadmap-extract1} above, 
the extracted elements do not preserve the order 
in which they appear in the input array.

\paragraph{The fix: granularity switching.}
Note that we do not need full stability, we just
need to make sure that the extracted elements 
are ordered based on their segment numbers --- this way,
after we sort the extracted elements and route the sorted elements back into  
the original positions, 
every element will land in the correct segment.

Therefore, to fix this problem, one na\"ive idea is to use ${\bf SlowSort}^{K^3}$
to sort the extracted array once again
based on which segment each element belongs to,  
but this would be too costly
since it would incur $K^3 \cdot 3n/K^2 = O(nK)$ work.
We propose a {\it granularity-switching} idea.
Specifically, we switch to a 
more coarse-grained partitioning scheme 
at this point: we instead view the array as 
$K^2$ super-segments, where each super-segment
is the concatenation of $K$ original segments.
Therefore, we use ${\bf SlowSort}^{K^2}$
to sort the extracted array whose length is $3n/K^2$, and this incurs
$O(n)$ work and $O(\log n + k)$ depth.
At this moment, we can follow through with 
Steps~\ref{step:roadmap-slowsortextracted}
and \ref{step:roadmap-sortwithin}, with the following modifications:
\begin{enumerate}[leftmargin=5mm,itemsep=1pt,topsep=2pt]
\item 
The reverse-routing in Step~\ref{step:roadmap-slowsortextracted}
now needs to reverse the decision of the ${\bf SlowSort}^{K^2}$
instance as well as the compaction.
\item At the end of Step~\ref{step:roadmap-slowsortextracted}, every
element now belongs to the correct super-segment. 
Therefore, Step~\ref{step:roadmap-sortwithin} 
now works on the super-segments rather than the segments. 
\end{enumerate}

We defer a  
detailed description of our final algorithm
to the subsequent formal sections.


\subsection{Linear-Sized, Logarithmic-Depth Compaction Circuit}
\label{sec:roadmap-compact-circ}

So far, we can get an oblivious PRAM algorithm that sorts $k$-bit keys 
in $O(n) \cdot \min(k, \log n)$ total work and $O(\log n)$ depth ---
to achieve this, we  
critically make use of an oblivious PRAM algorithm for compaction
that is linear in total work and logarithmic in depth.
Eventually, we want to get the circuit counterpart of this result.
A critical missing link is a  
compaction circuit optimal in both size and depth. 
Even though we know how to construct an optimal compaction
algorithm on an Oblivious PRAM~\cite{paracompact} and this may seem
tantalizingly close to a compaction circuit, unfortunately the oblivious PRAM
result does not  
directly translate to the circuit model --- if one tries to 
directly convert
the oblivious PRAM algorithm 
to the circuit model, it results in 
a circuit $O(nw + n \log n)$ in size~\cite{soda21}. 
Partly, this is because word-level
operations on $\log n$ bits 
can be accomplished in unit cost on a PRAM but there is no such free lunch
in the circuit model.
The recent work of Asharov, Lin, and Shi~\cite{soda21}
showed how to obtain a compaction circuit that is  
$O(nw)$ size (ignoring $\poly\log^*$ terms), 
but their circuit depth (as written) is linear.


We now describe 
how to get a compaction circuit 
that is not only optimal in size upto $\poly\log^*$ factors, 
but also optimal in depth.
To accomplish this goal, we go through 
several steps of bootstrapping that takes us from weaker primitives
to stronger primitives.
Specifically, we need several intermediate abstractions ---
all of these abstractions can be viewed in some way
as a relaxation of (tight) compaction; 
but each relaxation is of an incomparable nature.
We will first define all these intermediate abstractions, and
then we explain our blueprint 
for getting an optimal tight compaction circuit.


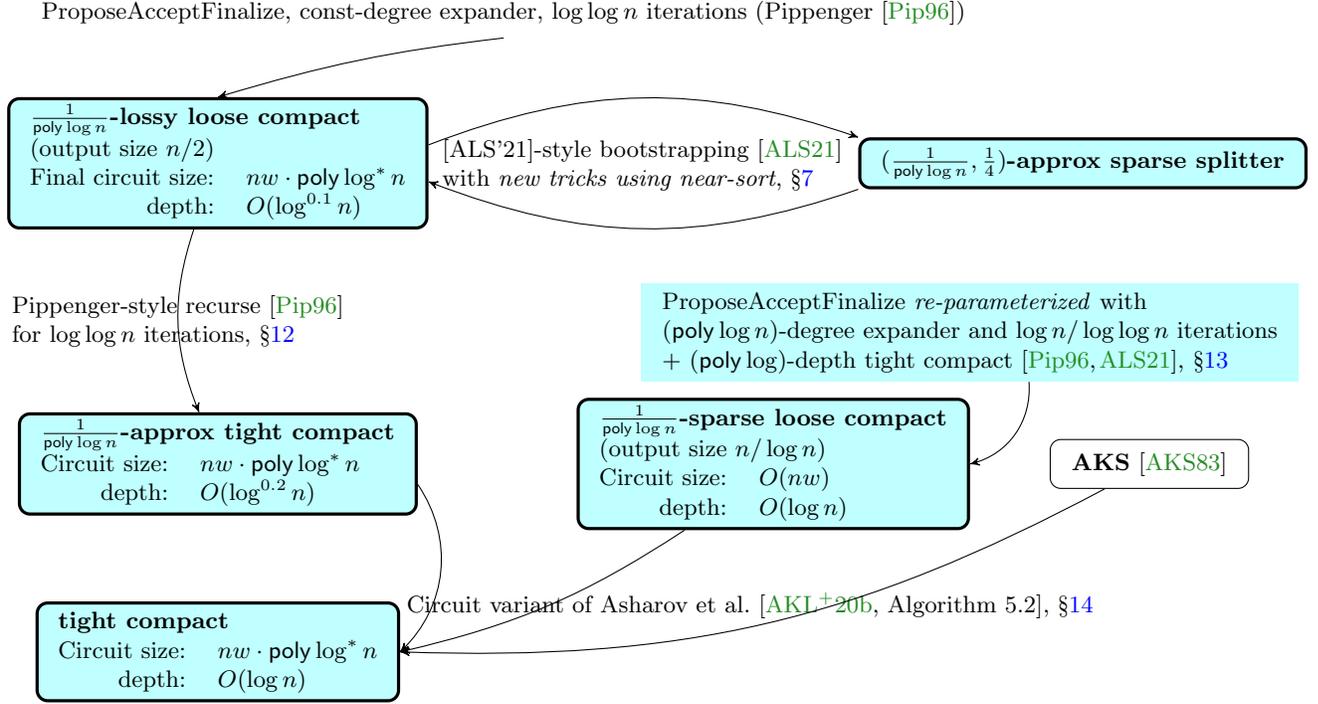
\begin{figure}[t]
\begin{center}
\begin{tikzpicture}[->,>=stealth']
\footnotesize

\definecolor{aqua}{rgb}{0.75, 1.0, 1.0}

\tikzstyle{state} = [rectangle,
           rounded corners,
           draw=black, very thick,
           minimum height=2em,
           inner sep=2pt,
           text centered,]

\tikzstyle{box} = [rectangle,
           minimum height=2em,
           inner sep=2pt,
           text centered,]

\node[box,
] (PAF) 
{
\begin{tabular}{l}
 ProposeAcceptFinalize, const-degree expander, 
 $\log\log n$ iterations (Pippenger~\cite{selfroutingsuperconcentrator})
\end{tabular}
};

\node[state,
  below of=PAF,
  node distance=2cm,
  anchor=east,
  xshift=-1cm,
  fill=aqua,
] (LOSSYLOOSE) 
{
\begin{tabular}{rl}
  \multicolumn{2}{l}{
    \textbf{$\frac{1}{\poly\log n}$-lossy loose compact}
  }\\
  \multicolumn{2}{l}{
    (output size $n/2$)
  }\\
  Final circuit size: & $nw \cdot \poly\log^* n$\\
  depth: & $O(\log^{0.1} n)$\\
\end{tabular}
};

\node[state,
  right of=LOSSYLOOSE,   
  node distance=11.5cm,  
  fill=aqua,
  ] (APPROXSPARSE) 
{
\begin{tabular}{l}
 \textbf{$(\frac{1}{\poly\log n}, \frac{1}{4})$-approx sparse splitter}\\
\end{tabular}
};

\node[state,
  below of=LOSSYLOOSE,
  node distance=4cm,
  fill=aqua,
] (APPROXTIGHT) 
{
\begin{tabular}{rl}
  \multicolumn{2}{l}{
    \textbf{$\frac{1}{\poly\log n}$-approx tight compact}
  }\\
  Circuit size: & $nw \cdot \poly\log^* n$\\
  depth: & $O(\log^{0.2} n)$\\
\end{tabular}
};

\node[box,
  below of=APPROXSPARSE,
  node distance=2.25cm,
  xshift=-1.5cm,
  fill=aqua,
] (PAFNEW) 
{
\begin{tabular}{l}
 ProposeAcceptFinalize \emph{re-parameterized} with \\
 ($\poly \log n$)-degree expander and $\log n / \log\log n$ iterations\\
  + ($\poly\log$)-depth tight compact~\cite{selfroutingsuperconcentrator,soda21},  
  \S\ref{sec:sparselc}\\
\end{tabular}
};

\node[state,
  below of=PAFNEW,
  node distance=1.75cm,
  anchor=east,
  fill=aqua,
] (SPARSELOOSE) 
{
\begin{tabular}{rl}
  \multicolumn{2}{l}{
    \textbf{$\frac{1}{\poly\log n}$-sparse loose compact}
  }\\
  \multicolumn{2}{l}{
    (output size $n/\log n$)
  }\\
  Circuit size: & $O(nw)$\\
  depth: & $O(\log n)$\\
\end{tabular}
};

\node[state,
  thin,
  right of=SPARSELOOSE,
  node distance=5cm,
] (AKS) 
{
\begin{tabular}{l}
 \textbf{AKS}~\cite{aks}\\
\end{tabular}
};

\node[state,
  below of=APPROXTIGHT,
  node distance=2.5cm,
  fill=aqua,
] (TIGHT) 
{
\begin{tabular}{rl}
  \multicolumn{2}{l}{
    \textbf{tight compact}
  }\\
  Circuit size: & $nw \cdot \poly\log^* n$\\
  depth: & $O(\log n)$\\
\end{tabular}
};

\path 
(PAF.south)      edge[bend right=5] (LOSSYLOOSE.north)
(LOSSYLOOSE.5)  edge[bend left=20]
    node[anchor=south,below,yshift=-0.4cm]{
    \begin{tabular}{l}
    [ALS'21]-style bootstrapping~\cite{soda21} \\
    with \emph{new tricks using near-sort}, \S\ref{sec:llc}
    \end{tabular}}
  (APPROXSPARSE.north west)
(APPROXSPARSE.south west)      edge[bend left=20] (LOSSYLOOSE.355)
(PAFNEW.320)      edge[bend left=40] (SPARSELOOSE.east)
(LOSSYLOOSE)      edge[bend right=20]
  node[]{
    \begin{tabular}{l}
    Pippenger-style recurse~\cite{selfroutingsuperconcentrator} \\
    for $\log\log n$ iterations, \S\ref{sec:approxtc}  
    \end{tabular}
  } 
  (APPROXTIGHT)
(APPROXTIGHT)      edge[bend left=40]  (TIGHT.east)
(SPARSELOOSE)      edge[bend left=10] 
  node[anchor=west,xshift=-2cm]{
    Circuit variant of Asharov et al.~\cite[Algorithm 5.2]{paracompact}, \S\ref{sec:tc}
  }
  (TIGHT.east)
(AKS)      edge[bend left=15] (TIGHT.east)
;
\end{tikzpicture}
\end{center}
\caption{Blueprint of our tight compaction circuit. 
The light blue parts represent our new abstractions or constructions.
Rounded boxes represent the intermediate abstractions. }
\label{fig:compaction}
\end{figure}

\paragraph{Intermediate abstractions.}
We rely the following intermediate abstractions --- 
among them, the approximate splitter 
and the sparse loose compactor are new abstractions.
\weikai{removed ``the approximate tight compaction''}

\begin{itemize}[leftmargin=6mm,itemsep=1pt,topsep=1pt]
\item {\it Lossy loose compaction}.
Let $\alpha \in (0, 1)$. 
Given an array of length $n$ 
containing at most $n/128$ {\it real} 
elements and all remaining elements are {\it fillers},
an $\alpha$-lossy loose compactor 
compresses the array by a half, losing at most $\alpha n$  
real elements in the process.

\item {\it Approximate splitter}.
Let $\beta \in (0, 1/4]$ and let $\alpha \in (0, 1)$.
An $(\alpha, \beta)$-approximate splitter solves the following problem:
we are given an input array 
containing $n$ elements
each marked with a 1-bit label indicating whether the element
is {\it distinguished} or not. It is promised that at most $\beta \cdot n$
elements in 
the input are distinguished.
We want to output a permutation 
of the input array, 
such that at most $\alpha n$ distinguished elements
are not contained in the first $\floor{\beta n + n/64}$
positions of the output.

\item {\it Approximate tight compaction}.
Let $\alpha \in (0, 1)$.
Given an input array containing $n$ elements,
each with a $1$-bit key,
an $\alpha$-approximate tight compactor outputs
a permutation 
of the input array, 
such that at most $\alpha \cdot n$ elements 
in the output are misplaced. Here,
the $i$-th element in the output is 
said be misplaced iff 
its key disagrees with what the $i$-th smallest key in the input array.

\item {\it Sparse loose compactor}.
Let $\alpha \in (0, 1)$. An array of length $n$ is said to 
be $\alpha$-sparse
if there are at most $\alpha n$ real elements in it 
and the rest are all fillers.
A sparse loose compactor performs exactly the same task
as a lossy loose compactor, except that 
1) the input array is promised to be 
$1/(\log n)^{C_\smalltriangledown}$-sparse
for some fixed constant $C_\smalltriangledown > 8$;
2) we now want to compress the array to $n/\log n$ length; 
and 3)  
we do not want to lose any real elements in the compressed output array. 
\end{itemize}

\paragraph{Blueprint of our compaction circuit.}
The entire construction is fairly sophisticated. To help understanding,
we depict the blueprint in Figure~\ref{fig:compaction}.
We explain the high-level ideas below and 
give a more detailed exposition in the remainder of this section.

\begin{enumerate}[leftmargin=5mm,itemsep=1pt]
\item 
{\it Using a repeated bootstrapping 
trick to get a 
$1/\poly\log n$-lossy loose compactor of linear size and sub-logarithmic depth.}
The work of Asharov et al.~\cite{soda21} suggests a {\it repeated
bootstrapping} idea that 
upgrades an inefficient compaction circuit of size $O(n w + n\log n)$ 
to an efficient compaction circuit of size  
$O(nw \cdot \poly(\log^*n - \log^*(w+k)))$.
An inefficient compaction circuit 
of size $O(n w + n \log \log n)$ can be obtained
from Pippenger~\cite{selfroutingsuperconcentrator} with some additional work ---
but the resulting circuit has poly-logarithmic depth $d = \poly\log n$. 
\elaine{double check}
Moreover, if the inefficient compaction circuit has depth $d$,
then the resulting efficient compaction circuit  
would have depth $d^{\log (\log^* n)}$.
To avoid this depth blowup, 
we apply their repeated bootstrapping 
idea not directly to (tight) compaction, 
but to the weaker abstraction,
$1/\poly\log n$-lossy loose compactor. 
Given Pippenger's ideas~\cite{selfroutingsuperconcentrator}, we can construct
an initial inefficient $1/\poly\log n$-lossy loose compactor
that is $O(n w + n \log\log n)$ in size, 
\elaine{is this correct?}\weikai{loglog consistent with depth}%
but whose depth is only $O(\log \log n)$.
In this way, even after this repeated bootstrapping, we can cap the depth
at $O(\log^{0.2} n)$.
Some technicalities arise to adapt Asharov et al.'s repeated bootstrapping idea 
to our case: we will need to make use of a new
abstraction called an approximate splitter which we define above;
additionally, we also need to make use of $\epsilon$-near-sorters
in our new repeated bootstrapping.
We defer the technical details to Sections~\ref{sec:llc}, \ref{sec:splitterfromlc}, \ref{sec:lcfromsplitter},
and \ref{sec:optllc}. 

\item 
{\it Upgrade to a $1/\poly\log n$-approximate tight compactor of linear size
and sub-logarithmic depth.}
Pippenger~\cite{selfroutingsuperconcentrator} 
showed how to get a tight compactor from a loose compactor.
We will use Pippenger's ideas to obtain 
a $1/\poly\log n$-approximate tight compactor 
of linear size and sub-logarithmic depth from the aforementioned  
$1/\poly\log n$-lossy loose compactor.
To avoid depth blowup, we need to stop the Pippenger-style recursion early,
and cap it at $O(\log \log n)$ iterations. 
The resulting tight compactor is not perfect and still has $1/\poly\log n$ 
fraction of misplaced elements, partly
because the loose compactor 
we started with is lossy, and partly because we stopped the recursion early.
We defer the details to Section~\ref{sec:approxtc}. 
\ignore{
Once we obtain 
$1/\poly\log n$-lossy loose compactor of linear size (ignoring $\poly\log^*$ terms) 
and sub-logarithmic depth, 
we can upgrade it to a $1/\poly\log n$-approximate tight compactor 
also of linear size and sub-logarithmic depth, using
techniques from Pippenger~\cite{selfroutingsuperconcentrator}
stopping the recursion early at $O(\log \log n)$ iterations.
}

\item 
{\it Constructing a sparse loose compactor.}
Given the $1/\poly\log n$-approximate tight compactor, our remaining job 
is to correct the remaining $1/\poly\log n$ fraction of errors.
To correct the remaining errors, 
we are again inspired by Pippenger~\cite{selfroutingsuperconcentrator}:
we correct a large fraction of these errors using expander
graphs, and then extract the remaining errors using a {\it loose compactor} 
(into a half-sized array), 
we then recurse on the extracted 
array to correct all remaining errors, and route the corrected 
elements back into their original positions.

The problem is that the recursive extraction incurs another logarithmic factor in depth
while a (non-lossy) loose compactor already takes logarithmic depth.
Fortunately, a crucial observation that helps here is
that the errors are {\it sparse} --- specifically, at most  
$1/\poly\log n$ elements are errors. 
We show how to construct
a (non-lossy) sparse loose compactor, that compresses a sparse array 
containing at most $n/\poly\log n$ real elements, to 
a size of $n /\log n$ (rather than just half of $n$), 
without losing any real elements in the process ---
notice that we avoid the recursive extraction and the extra logarithmic factor in depth.
To accomplish this, 
our key insight is to use a {\it $\poly\log n$-degree expander graph}
rather than a constant-degree expander as in Pippenger's approach --- 
this way, the depth of the overall extraction is reduced to
$O(\log n)$, and this is critical in achieving
small depth.
To make this idea fully work, we additionally need a slightly inefficient 
tight compactor circuit  
that achieves linear work and polylogarithmic depth --- we show how to get such
a tight compactor with some modifications to 
Asharov's construction~\cite{soda21}. 
We defer the details to Section~\ref{sec:sparselc}.

\item {\it Putting everything all together.} 
Finally, as mentioned, given the 
$1/\poly\log n$-approximate tight compactor of linear size
and sub-logarithmic depth, 
a sparse loose compactor of linear size and logarithmic depth, 
we can leverage Pippenger's ideas~\cite{selfroutingsuperconcentrator}
to get a tight compaction circuit of linear size and logarithmic depth.
Moreover, this construction also makes use of the AKS sorting network~\cite{aks}.
We defer the details to Section~\ref{sec:tc}.
\end{enumerate}

\elaine{TODO: fix all the broken refs}


\ignore{
\subsubsection{Our Operational Circuit Model}
Although our final results are stated in 
a standard circuit model consisting of constant fan-in, constant fan-out
AND, OR, and NOT gates, 
for convenience, we adopt an enhanced operational 
circuit model in intermediate steps~\cite{soda21}.
This operational model 
allows {\it generalized boolean gates}
of constant fan-in and constant fan-out 
implementing any truth table; and moreover it allows
$w$-selector gates each of which 
takes in a flag bit and two $w$-bit payload strings, and 
outputs one of the two payloads determined by the flag. 
Clearly, each generalized 
boolean gate can be 
instantiated with $O(1)$ constant fan-in, constant fan-out
AND, OR, and NOT gates. Each $w$-selector gate can be implemented
with $w$ generalized boolean 
gates in $O(\log w)$ depth. 
At first sight, it seems like fully instantiating
all $w$-selector gates will incur an $O(\log w)$ {\it multiplicative} blowup
in depth, but we show later (Lemma~\ref{lem:operational})
that this can be avoided, and we can get away
with only  
$O(\log w)$ {\it additive} overhead if we were to fully instantiate
all $w$-selector gates.
}

\ignore{
\paragraph{Challenges of the circuit model.} As mentioned 
in Section~\ref{sec:highlight},  
going from our oblivious PRAM result to the circuit result is 
highly non-trivial. On a PRAM, arithmetic and boolean operations
on $\log n$ bits are performed in unit cost, and leveraging
this capability, prior work~\cite{paracompact}
provided us with an ``ideal'' oblivious compaction building block
which requires linear work and logarithmic depth.
In the circuit model, unfortunately there is no such free lunch, 
and the counterpart of an ideal oblivious compaction 
is not known in the circuit model.
Although the recent work by Asharov, Lin, and Shi~\cite{soda21}
constructed a linear-sized compaction circuit (ignoring
$\poly\log^*$ factors), 
their circuit has super-polylogarithmic depth. 

To turn our oblivious PRAM algorithm into a circuit, a critical gap  
we need to overcome 
is to construct a compaction circuit optimal in both size and depth.
This turned out to be highly non-trivial:
notably, Asharov et al.~\cite{soda21}'s bootstrapping techniques for getting 
optimal circuit size seem to 
come at the price of depth blowup. 
}

\ignore{
\subsubsection{Approximate Tight Compaction}
\label{sec:roadmap:approxtc}
Given a $1/(8 \log^C n)$-lossy loose compactor
with $O(nw) \cdot \poly(\log^* n - \log^* w)$ 
boolean gates and $O(\log^{0.5}n)$ depth,
 we construct a $1/(\log n)^C$-approximate tight compactor
which asymptotically preserves the circuit size 
and has $O(\log n)$ depth.
The construction is similar to how Asharov et al.~\cite{soda21}
constructed a tight compactor from a loose compactor, 
except that 1) to achieve $O(\log n)$ small depth, 
we run the algorithm only for $\Theta(\log \log n)$ 
iterations rather than $\Theta(\log n)$ as Asharov et al. did;
and 2) we prove that due to the lossiness  
in the loose compactor as well as the 
early stopping, our bootstrapping 
achieves only {\it approximate} tight compaction, i.e., 
$\alpha$ fraction of elements may still be misplaced.
We defer the details to the formal sections.
}

\ignore{
\subsubsection{Compaction Circuit Optimal in Size and Depth}
\label{sec:roadmap:tight}
We can now put everything together and construct 
construct a compaction circuit optimal in size (barring
$\poly\log^*$ factors) and also optimal in depth.
We first 
apply our $1/\poly\log n$-approximate tight compactor 
to sort almost all of the input array, except
for leaving $1/\poly\log n$ fraction of elements 
still misplaced.
Next, we rely on a sparse loose compactor 
to extract all misplaced elements to an array of size $n/\log n$,
and the extracted array is padded with fillers 
besides containing all the misplaced elements.
We then use AKS to swap misplaced 0s with misplaced 1s in the 
short, extracted 
array, and reverse route the result back.

\paragraph{Additional technicalities.}
Our informal description above is a somewhat simplified version of our
actual tight compaction circuit.
We omitted various technicalities regarding how to 
implement some of the 
other building blocks in circuit, 
in a way that avoids extra blowups. 
We defer these details to the formal sections.
}

\subsection{Sorting Circuit for Short Keys}
With our algorithms in 
Sections~\ref{sec:roadmap-opram} and \ref{sec:roadmap-compact-circ}, 
and with some extra work, one can 
get a sorting circuit for short keys
that satisfies Theorem~\ref{thm:intro-sort-circ}.
The technicalities 
here 
are mostly how to efficiently convert some of the algorithmic building
blocks used by the oblivious PRAM sorting algorithm  
to the circuit model. We defer the details to the subsequent formal sections. 


\subsection{Additional Related Work}

Since the landmark AKS result~\cite{aks},
various works have attempted to simplify it and/or reduce
the concrete constants~\cite{akspaterson,aksjoel,zigzag}. Notably,
the recent ZigZag sort of Goodrich (STOC'14)~\cite{zigzag} 
took a rather different approach than the original AKS; unfortunately,
its depth is asymptotically worse than AKS.
None of these works achieved 
theoretical improvements over AKS, and all of them 
considered the comparator-based model.

As mentioned, the special case of sorting 1-bit keys 
is also called compaction, 
which is trivial to accomplish on a (non-oblivious) RAM.
A line of work was concerned about the circuit complexity
of compaction~\cite{yaoselectnet,sortminmem,jimboselect,pippengerselect};
but all earlier works focused on the comparator-based model.
Due to the famous 0-1 principle described as early 
as in Knuth's textbook~\cite{knuthbook}, there is an 
$\Omega(n \log n)$ lower bound for compaction  
with comparator-based circuits.
Several works have considered compaction 
in other incomparable models of computation 
as explained below (but none of them 
easily translate to a circuit result).
Leighton et al.~\cite{leightonselection}
show how to construct comparison-based, {\it probabilistic} 
circuit families for compaction, 
with $O(n \log \log n)$ comparators; again, here we require that 
for every input, an overwhelming
fraction of the circuits in the family  
can give a correct result on the input. 
\ignore{
One interesting observation 
is that although deterministic, comparator-based {\it selection} circuits
have a $\Omega(n \log n)$
lower bound~\cite{leightonselection,sortminmem}, probabilistic circuit families 
do not inherit the same lower bound.
\elaine{why?}%
}%
Subsequent works~\cite{odsmitchell,osortsmallkey} have improved 
Leighton's result by removing
the restriction that the circuit family must be parametrized
with the number of 0s
without increasing the asymptotical overhead.
These works also imply that compaction 
can be accomplished with in $O(n \log \log n)$
time on a randomized Oblivious RAM~\cite{odsmitchell,osortsmallkey}.

Asharov et al.~considered how to accomplish  
compaction on {\it deterministic} Oblivious RAMs in linear work~\cite{optorama},
but their construction is sequential in nature.
Their work was subsequently extended~\cite{paracompact}
to a PRAM setting achieving optimality in both work and depth;  
but a counterpart of such an optimal compaction result in the circuit 
model was not known earlier.
Dittmer and Ostrovsky improve
its concrete constants by introducing randomness back~\cite{DittmerOstrovsky20}.
Interestingly, linear-time oblivious compaction played a 
pivotal role in the construction of an optimal Oblivious RAM (ORAM) 
compiler~\cite{optorama}, 
a machine that translates a RAM program to a functionally-equivalent 
one with oblivious access patterns. 
Specifically, earlier ORAM compilers relied on oblivious sorting which requires 
$\Omega(n\log n)$ time 
either assuming the indivisibility model~\cite{osortsmallkey} or 
the Li-Li network coding conjecture~\cite{sortinglbstoc19}; whereas more recent
works~\cite{panorama,optorama} observed that with 
with a lot more additional work, we could replace oblivious sorting
with the weaker compaction primitive.

Besides Pippenger's self-routing
super-concentrator\cite{selfroutingsuperconcentrator},
Arora, Leighton, and Maggs~\cite{alm90}
considered a self-routing {\it permutation} network. Their construction
does not accomplish sorting.
Further, converting their non-blocking network to a permutation
circuit would require at least $\Omega(n \log^2 n)$ gates~\cite{bmm}.
Pippenger's work~\cite{selfroutingsuperconcentrator}
adopted some techniques from the Arora et al.~work~\cite{alm90}.



\section{Nearly Orderly \Splitter}
\label{sec:segmenter}

\subsection{Notations}
\paragraph{Array and multiset notations.}
Whenever we say an {\it array}, we mean an ordered array.
Throughout the paper, we may assume that the array
to be sorted has length $n$ that is a power of $2$ ---
in case not, we can always round it up to the nearest 
power of $2$ by padding $\infty$ elements, incurring
only constant blowup in array length and consuming at most one additional bit in 
terms of key length.

Given an array $A$, the notation ${\it mset}(A)$ denotes
multiset formed by elements in $A$.
Suppose that 
$A$ and $A'$ are two arrays, 
then $A || A'$ denotes the array formed
by concatenating $A$ and $A'$. 
For $m \in \N$, we use the notation $[m] := \{1, 2, \ldots, m\}$.
Suppose that $1 \leq s \leq t \leq |A|$, 
we use the notation $A[s:t]$ 
to denote the length-$(t-s+1)$ segment of the array 
$A$ from $s$-th element to the $t$-th element. We define 
the short-hand notations $A[:t] := A[1:t]$ and $A[s:] := A[s:|A|]$.
Unless otherwise noted $\log$ means 
$\log_2$.

\paragraph{Binary tree notations.}
Given a complete binary tree with $t$ levels, 
the {\it level} 
of a node is the number of edges from the root to the node.
For example, 
the root is at 
{\it level} $0$;
and the
leaves are at 
{\it level} $t-1$.

The {\it tree distance} of two nodes in a binary tree  
is the length of the shortest path between them.


\subsection{Definitions}

\paragraph{``Misplaced'' elements w.r.t.~segments.}
Let $A$ be an array of length $n$, and let $[s, t] \subseteq [n]$
be a contiguous sub-range of $[n]$.
The number of {\it ``misplaced''} elements in the segment $A[s:t]$, 
denoted $\err(A[s:t])$, is defined as 
as the number of elements residing in $A[s:t]$, however, if $A$ were to be
sorted, ought not to be in $A[s:t]$.\footnote{
  In this section, we abuse ``misplaced'' and refer to the elements
  that are in the wrong \emph{segments} (instead of the wrong \emph{position}
  in the remaining of this work).
}
More formally, 
$${\err}(A[s:t]) = 
\left|{\it mset}(A[s:t])
- {\it mset}(B[s:t]) \right|.
$$
where $B = {\sf sorted}(A)$ denotes the sorted
version of $A$, and recall that ${\it mset}(A[s:t])$ 
denotes the multiset formed by elements in $A[s:t]$.
As a special case, if 
${\it mset}(A[s:t]) = {\it mset}(B[s:t])$, then 
the ${\err}(A[s:t]) = 0$.
(Notice that )

\paragraph{Nearly orderly \splitter.}
We now define $(\eta, p)$-orderliness and 
an $(\eta, p)$-orderly \splitter.

\begin{Definition}[$(\eta, p)$-orderly]
Let $m$ and $p$ be positive integers, and suppose that $n = mp$. 
Write an array $A$ of length $n$ as 
the concatenation of $p$ equal-sized segments:
$A = A_1 || A_2 || \ldots || A_p$.
We say that $A$   
is $(\eta, p)$-orderly iff  
for each $i \in [p]$, ${\err}(A_i) \leq \eta \cdot |A_i|$.
\end{Definition}

\begin{Definition}[$(\eta, p)$-orderly \splitter]
Let $n := mp$.
An $(\eta, p)$-nearly orderly \splitter (for $n$)
is a circuit that takes an array $A$ of length $n$, and outputs  
a permutation of $A$   
which is $(\eta, p)$-orderly.
\end{Definition}

\subsection{Construction}
We rely on ideas from the AKS sorting network~\cite{aks} to construct 
a nearly orderly \splitter.

At a very high level, 
the AKS algorithm proceeds 
in $O(\log n)$ cycles to sort a length-$n$ input array.
During each cycle $t$,
\begin{enumerate}
\item The algorithm partitions the current array into 
a number of disjoint 
{\it intervals} which are not necessarily equally sized. Henceforth
the term {\it interval} refers to a contiguous subarray.
The number of intervals is geometrically growing with each cycle.
\item 
The algorithm then partitions the intervals  
into groups, and each group contains
a disjoint (but not necessarily contiguous) 
subset of the intervals. It then sorts each group and writes
the sorted array back in place. Further, all the groups
are sorted in parallel.
The above partitioning and sorting procedure  
is repeated three times (and each time the partitioning may be different), 
and then the algorithm enters the next cycle.
\end{enumerate}
At the end of $\log n$ cycles,  
the input array is guaranteed to be sorted.
\elaine{it's exactly log n ?}

It turns out if we repeat the AKS algorithm for 
$6k < \log n$ cycles and stop,
the resulting array will satisfy $(2^{-8k}, 2^{3k})$-orderliness. 
For completeness, below we describe the algorithm 
where we essentially perform AKS for $6k < \log n$ cycles,
we then rely on a technical lemma proven in the AKS 
paper~\cite{aks} to prove that the resulting array  
is indeed $(2^{-8k}, 2^{3k})$-orderly.

\subsubsection{Preliminaries}
\label{sec:segmenter-prelim}
Recall that we would like to 
partition the current array into a number of intervals in each 
AKS cycle $t$.  To understand how the intervals a defined, we will
first define a helper data structure called a $t$-AKS-tree.

\begin{figure*}
\begin{center}
\includegraphics[width=0.75\textwidth]{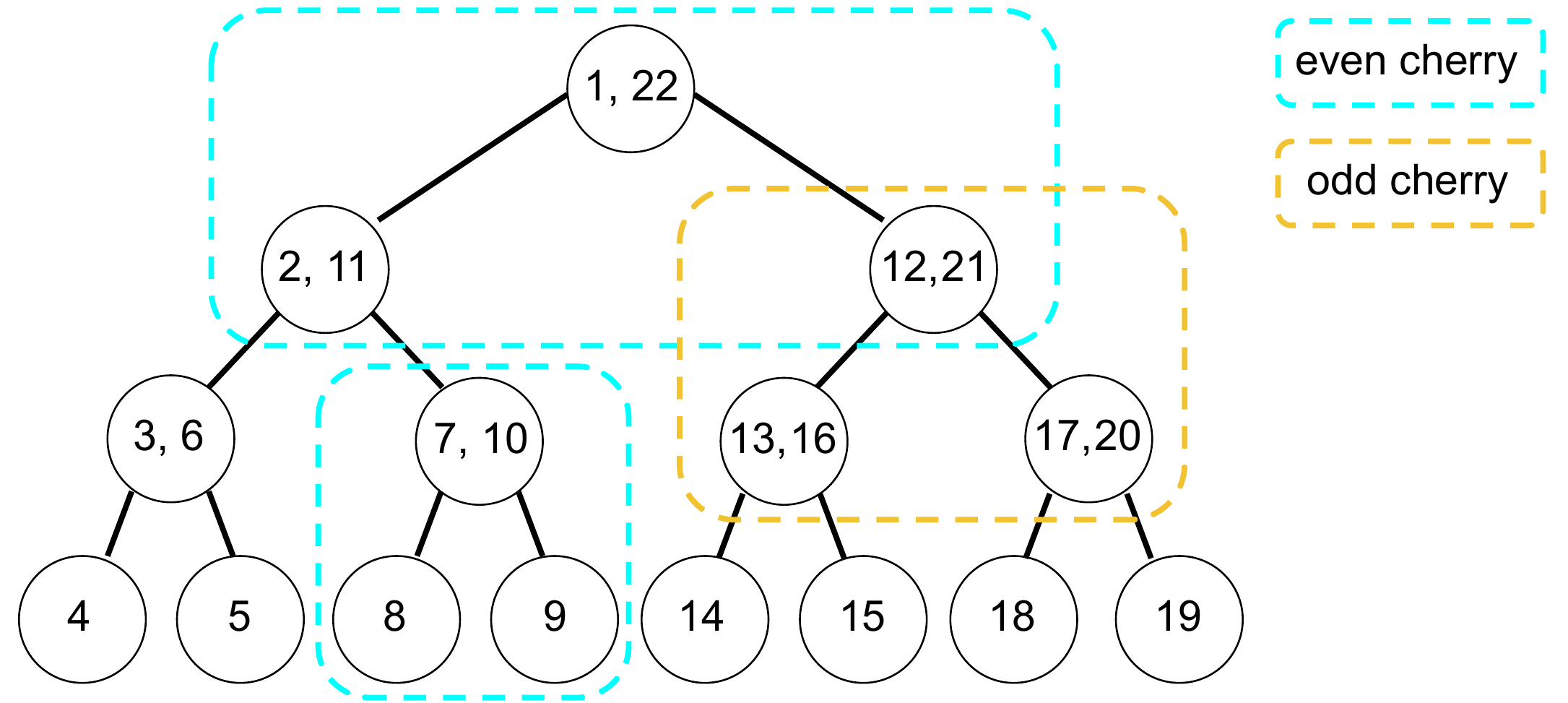}
\end{center}
\caption{$t$-AKS-tree for $t = 3$.}
\label{fig:akstree}
\end{figure*}

\paragraph{$t$-AKS-tree.}
An $t$-AKS-tree is binary tree containing a total of $t + 1$ levels
numbered $0, 1, \ldots, t$, respectively.
Henceforth define $M(t) := 3 \cdot 2^{t} - 2$.
All tree nodes receive either one or two labels from the range 
$[M(t)]$; further, each label is given to exactly  
one tree node. The labeling scheme satisfies the following constraints:
\begin{enumerate}[itemsep=1pt]
\item 
Each leaf receives one label from the range $[M(t)]$;
and each non-leaf node receives two labels from the same range. 
\item For each internal node, every  
label in its right subtree is strictly greater
than every label in its left subtree.
\item 
The set of labels assigned to 
each subtree is a contiguous sub-range 
$[s, t] \subseteq [M(t)]$; and further, the
minimum $s$ and maximum $t$ of the range 
are assigned to the root of the sub-tree.
\end{enumerate}
One can check that the above set of constraints 
uniquely define the labeling on the tree nodes.
In Figure~\ref{fig:akstree}, we give an example of 
a $t$-AKS-tree where $t = 3$.


\paragraph{$t$-AKS-intervals.}
Given an array $A$ of length $n$, we can divide
it into $M(t)$ intervals called $t$-AKS-intervals, i.e., 
$A := A_1 || A_2 || \ldots || A_{M(t)}$, where the length
of each interval $A_i$ 
depends on which level 
the label $i$ shows up in the $t$-AKS-tree.
At a very high level, 
the length geometrically decreases by a factor
of approximately $\gamma := 16$ as the label $i$'s level 
becomes smaller.
\elaine{double-check if 16 works}

We now define the lengths of each $t$-AKS-interval more formally,
following the same approach as in the original AKS 
paper~\cite{aks}.
We first define the following 
numbers
for $t = 1, 2, \ldots, \log n$, and for  $\ell = 1, 2, \ldots, t$:
$$
\begin{array}{l}
\qquad \qquad X_t(\ell) := \left\lfloor \frac{1}{2\gamma}\cdot n \cdot 2^{-t} \cdot 
\gamma^{\ell-t}\right\rfloor, 
\quad Y_t(\ell) := \sum_{j = 1}^\ell X_t(j)  
\end{array}
$$


Let $\ell \in [0, t-1]$ and $j \in [1, 2^\ell]$.
Suppose that the $j$-th node at level $\ell$ in the $t$-AKS-tree 
have the two labels $i$ and $i'$. 
Then, the lengths of the 
two intervals $A_i$ and $A_{i'}$ are defined
as follows.
\[
|A_i| :=  
\begin{cases}
X_t(\ell + 1)   & \text{if $j$ is odd} \\
Y_t(\ell+1) & \text{o.w.}
\end{cases}
\quad 
\text{ and } 
\quad 
|A_{i'}|
 := X_t(\ell+1) +  Y_t(\ell+1) - |A_i|
\]

Finally, in the last level $\ell = t$ in the $t$-AKS-tree, 
each node has only one label. Suppose that the $j$-th node's
label is $i$, then 
the length of the interval $A_i$ is $|A_i| := 
n \cdot 2^{-t} - Y_t(t)$.

\ignore{
Then, the length of $A_i$ is defined as follows:
\[
|A_i| := 
\begin{cases}
\frac{n}{2^{t-1}} \cdot \frac{1}{16^{h(i)}} \cdot (1-\frac{1}{16}) & 
\text{if $h(i) \neq t-1$}  \\
\frac{n}{2^{t-1}} \cdot \frac{1}{16^{h(i)}} & \text{o.w.}  
\end{cases}
\]
One can verify 
that that $\sum_{i \in [M(t)]} |A_i| = n$.
}

\elaine{TO DO: double check this
}

\begin{fact}[Group $t$-AKS-intervals into equally sized segments]
As mentioned, assume that $n := |A|$ is a power of 2. 
Fix any non-leaf level $\ell \in \{0,1,\dots,t - 1\}$ in a
$t$-AKS-tree, 
we can partition $A$ into $2^\ell$ equally sized segments as follows
(where equally sized means that every segment contains
the same number of elements): 
\label{fct:equalseg}
\end{fact}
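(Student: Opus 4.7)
My plan is to first spell out explicitly what the partition is, then verify contiguity and equal sizes. The natural choice is to associate each of the $2^\ell$ segments with one of the $2^\ell$ subtrees rooted at level $\ell$: the $j$-th segment will be the concatenation of the intervals $A_i$ for every label $i$ in the $j$-th level-$\ell$ subtree, \emph{augmented} with the contiguous ancestor-node labels (from levels $0, 1, \ldots, \ell-1$) that act as separators between consecutive level-$\ell$ subtrees. Since every internal node carries exactly the minimum and the maximum of its subtree's label range, each such ancestor label is ``glued'' to either the immediately preceding or immediately following level-$\ell$ subtree in the label order: the smaller label of an ancestor on the left spine of a segment glues to its right, the larger label on the right spine glues to its left. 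This rule assigns each of the $2(2^\ell-1) = 2^{\ell+1}-2$ ancestor labels unambiguously to exactly one of the $2^\ell$ segments.

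Next I would verify the two required properties. Contiguity follows directly from the tree's ordering property together with the gluing rule above: by induction on the tree, the label set of the $j$-th segment is a contiguous sub-range of $[M(t)]$, and since the intervals $A_1, A_2, \ldots, A_{M(t)}$ are laid out in label order inside $A$, each segment is a contiguous sub-array. For the equal-size claim, I would induct downward on $\ell$. The base case $\ell=0$ is trivially the whole array of length $n$. For the inductive step, a level-$\ell$ segment decomposes as the union of its two level-$(\ell+1)$ child-segments plus the two labels sitting at the level-$\ell$ root; the identity $|A_i|+|A_{i'}| = X_t(\ell+1)+Y_t(\ell+1)$ (which holds regardless of the parent's parity, since swapping $X$ and $Y$ between the two labels preserves the sum) combined with $Y_t(\ell+1) = Y_t(\ell) + X_t(\ell+1)$ lets me balance the redistribution of the ancestor labels against the halving of the parent length, yielding exactly $n/2^{\ell+1}$ for each child.

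The hard part, I expect, will be the algebra at the lowest level, namely $\ell=t-1$, where the children are leaves of length $n \cdot 2^{-t} - Y_t(t)$ rather than matching the clean inductive formula. Here one must verify a telescoping identity that uses the full definition of $Y_t(t) = \sum_{j=1}^t X_t(j)$, exploits the geometric ratio $\gamma$ in $X_t(\ell) = \lfloor \frac{n}{2\gamma}\, 2^{-t}\, \gamma^{\ell-t}\rfloor$, and (crucially) relies on the floors being exact under the assumption that $n$ is a power of $2$ in the regime of $t$ we use. Before committing to the general induction I would double-check the algebra on $t=2$ with $\ell=1$ (where the root's smaller label prepends to the left subtree and its larger label appends to the right subtree, giving segment lengths $n/2$ each), both to sanity-check the gluing rule and to make sure no rounding slack appears.
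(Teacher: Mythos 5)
Your plan differs from the paper's proof in direction: you propose a downward induction on $\ell$ showing each segment has size $n/2^\ell$, whereas the paper pushes labels all the way to the leaves and shows by an \emph{upward} induction that every node at level $i$ receives from its parent a set of labels totaling exactly $Y_t(i)$ in length, so that each leaf's accumulated length is $n\cdot 2^{-t}$ and each segment is $2^{t-\ell}\cdot n\cdot 2^{-t}$. The paper's route goes to the leaves deliberately, because the leaf interval length is \emph{defined} as $n\cdot 2^{-t}-Y_t(t)$ precisely so that leaf-plus-received equals $n\cdot 2^{-t}$ exactly; this is the mechanism that absorbs all the rounding from the floors in $X_t(\cdot)$. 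Your stated hope that ``the floors [are] exact under the assumption that $n$ is a power of $2$'' is not correct --- $X_t(\ell)$ is generally not an exact integer, and for small $\ell$ it is $0$ --- so a proof that relies on the floors being exact at internal levels would not close.

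There are also two concrete gaps in your inductive step. First, the decomposition ``a level-$\ell$ segment is the union of its two level-$(\ell+1)$ child-segments plus the two labels at the level-$\ell$ root'' double-counts: the two labels at the level-$\ell$ node are glued ancestors of the level-$(\ell+1)$ child-segments, so they are already inside one of the child-segments. As sets, the parent segment is exactly the disjoint union of the two child-segments, with nothing left over. Second, and more importantly, the identity $|A_i|+|A_{i'}|=X_t(\ell+1)+Y_t(\ell+1)$ being parity-independent tells you nothing by itself about how the glued ancestor labels split between the two children; that split \emph{is} parity-dependent (a left child sends its accumulated block plus its min down-left, a right child sends its accumulated block plus its max down-right), and showing that both children nevertheless receive exactly $Y_t(\ell+1)$ requires the very invariant the paper establishes. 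Once you have that invariant you also need to note that the original labels in the two sibling subtrees have equal total length (by level-by-level symmetry of the length formula), or do the telescoping sum for $\sum_\text{Subtree}$ directly. In short, the skeleton of your argument can be made to work, but not without importing the paper's received-length invariant, fixing the decomposition, and dropping the ``floors are exact'' assumption.
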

\begin{mdframed}
\begin{enumerate}[leftmargin=3mm]
\item 
Initially, 
for every node $v$ in the $t$-AKS tree, 
${\labels}(v)$
is defined to be the set of the original 
labels of $v$. 
Specifically, for every non-leaf node $v$,  
${\labels}(v)$ has two labels, 
and for every 
leaf node $v$,  
${\labels}(v)$ has only one label.

\item 
\label{stp:group_seg:assign}
For level $i = 0$ to $\ell-1$, 
for every node $v$ in level $i$ of the $t$-AKS-tree, 
\begin{enumerate}
\item 
let $S \subseteq {\labels}(v)$ be the subset 
of node $v$'s labels 
smaller than every label in ${\labels}(v.{\sf LeftChild})$, and 
let $S' := {\labels}(v) \backslash S$.

\item 
let 
${\labels}(v.{\sf LeftChild}) := {\labels}(v.{\sf LeftChild}) \cup S$;
\item 
let ${\labels}(v.{\sf RightChild}) := {\labels}(v.{\sf RightChild}) \cup S'$;
\end{enumerate}
\item 
For every node $v$ in level $\ell$ of the $t$-AKS-tree:
all $t$-AKS-intervals whose corresponding labels 
are in ${\sf Subtree}(v)$ (including $\labels(v)$)
are grouped together and called one segment, where 
${\sf Subtree}(v)$ means the subtree rooted at $v$.
\end{enumerate}
\end{mdframed}

\ignore{
\begin{fact}
As mentioned, assume that $n := |A|$ is a power of 2. 
Fix any non-leaf level $\ell \in \{0,1,\dots,t - 1\}$ in a
$t$-AKS-tree, and let $L := 2^\ell$.  
Let $i_1 < i_2 < \ldots < i_{L}$ be the ordered set of the larger label (i.e., second label) 
for each odd node in level-$j$ for all $j \le \ell$ of the $t$-AKS-tree 
(e.g., $i_1$ is the second label of 1st node in level-$\ell$,
$i_2$ is that of 1st in level-$(\ell-1)$, $i_3$ is that of 3rd in level-$\ell$,
$i_4$ is that of 1st in level-$(\ell-2)$, and so on). Define $i_0 := 0$.
Then, we can partition $A$ into $L$ equally sized segments,
where  for $j \in [L]$, the $j$-th segment
contains the $t$-AKS-intervals
$A_{i_{j-1} + 1}, A_{i_{j-1} + 2},  \ldots, A_{i_{j}}$.
\ignore{
if we partition $A$ 
into $2^\ell$ equally sized segments, 
then no $t$-AKS-interval in $A$ belongs to two or more segments.
}
\elaine{weikai: is this what you mean?}
\end{fact}
}

The example in Figure~\ref{fig:akstree},
shows the segments for $t = 3$ and level $\ell = 2$. 
In this case, if we partition $A$ into $4$ equally sized segments,
then the segments are:
\[
(A_1, A_2, \ldots, A_6), 
(A_7, A_8, \ldots, A_{11}), 
(A_{12}, A_8, \ldots, A_{16}), 
(A_{17}, A_{18}, \ldots, A_{22}).
\]

\begin{proofof}{Fact~\ref{fct:equalseg}}
This fact is implicit in the AKS paper~\cite{aks}, we prove it explicitly below.
Alternatively, we can consider the following equivalent 
variant of the above algorithm: 
in Step~\ref{stp:group_seg:assign}, \elaine{hard coded ref} 
we do not stop at the end of the $(\ell-1)$-th iteration,
but continue all the way to level $t-1$.
At this moment, for each node $v$ in level 
$\ell$ of the tree: we group together the labels on all leaf nodes 
in ${\sf Subtree}(v)$ --- their corresponding $t$-AKS-intervals
will form one segment.

It is not hard to show through induction that at the end
of the iteration $i = 0$ to $t - 1$ in Step~\ref{stp:group_seg:assign}, \elaine{hard coded refs} 
each node $v$ in level $i + 1$ of the $t$-AKS tree receives
a set of labels from its parent 
which correspond to a total of $Y_t(i)$ elements.
Therefore, at the end of iteration $0$ to $t-2$, 
for each node $v$ in level $i + 1$, 
its labels correspond to 
a total of consecutive $Y_t(i + 1)$ elements.
At the end of the final iteration $i = t-1$, 
each leaf node's labels correspond to $n \cdot 2^{-t}$ consecutive elements.
\end{proofof}


\ignore{to remark: 
In the above intervals, for any level $\ell \in \{0,1,\dots,t\}$ and any $j\in[2^\ell]$,
the location $j\cdot n \cdot 2^{-\ell}$ must be a boundary between 
adjacent intervals $A_i$ and $A_{i+1}$ such that the levels of $A_i$ and $A_{i+1}$ are at most $\ell$
(except for the case $j=2^\ell$ where $A_i$ is the last interval).
This property will be used to split the array into $2^\ell$ equal-sized segments.
}

\paragraph{Even and odd cherries.}
In a binary tree, a {\it cherry} is
defined to be a parent node and its two children.
The {\it even (or odd, resp.) cherries} are those whose parents reside
at an even (or odd, resp.) {\it level}. 

Given $A := A_1 || \ldots || A_{M(t)}$  
written as $t$-AKS-intervals, 
we define 
${\sf EvenCherries}(A_1 || \ldots || A_{M(t)})$
to be a set of {\it disjoint groups} of  
$t$-AKS-intervals. 
Specifically, 
${\sf EvenCherries}(A_1 || \ldots || A_{M(t)})$
is of the form 
$\left\{ G_1, G_2, \ldots, G_d\right\}$
where $d$ is the number of even cherries in a 
$t$-AKS-tree, 
and 
for $i \in [d]$, each group $G_i$ 
corresponds to a distinct even cherry in a $t$-AKS-tree, i.e., 
$G_i$ is one of the following two forms depending
on whether the even cherry touches the leaf level:
\begin{itemize}[itemsep=0pt]
\item 
either $G_i:= A_{j_1} || \ldots || A_{j_6}$ 
where $j_1 < j_2< \ldots < j_6$, 
and moreover, $j_1, \ldots, j_6$ correspond
to the labels of an even cherry in the $t$-AKS-tree that does not 
involve the leaf level;
\item 
or $G_i:= A_{j_1} || \ldots || A_{j_4}$ 
where $j_1 < j_2< \ldots < j_4$.
and moreover, $j_1, \ldots, j_4$ correspond
to the labels of an even cherry 
in the $t$-AKS-tree, involving the leaves this time.
\end{itemize}

The notation ${\sf OddCherries}$ is similarly defined but
replacing ``even'' with ``odd''. 

\elaine{todo: depict cherry in figure}


\paragraph{$\epsilon$-near-sorter.}
Let $\epsilon \in (0, 1)$ be a constant. 
An array $A$ of length $n$ is said to be $\epsilon$-near-sorted, 
iff the following holds
for any $1 \leq k \leq n$:
\begin{enumerate}[leftmargin=5mm,itemsep=1pt]
\item  
$A[1:k+\epsilon n]$ 
contains at least $(1-\epsilon)k$ of the $k$ smallest elements in $A$;
\item 
$A[n-k - \epsilon n + 1 : n]$
contains at least $(1-\epsilon)k$ of the $k$ largest elements in $A$.
\end{enumerate}
In the above, we use the following notations
to deal with boundary conditions: 
for $i > n$, $A[1:i] := A[1:n]$; and for $i < 1$, $A[i:n] := A[1:n]$.

\elaine{move this to building blocks}

An $\epsilon$-near-sorter (for $n$) is a 
circuit containing $O(n)$ comparators 
and of constant depth (dependent on $\epsilon$)
that permutes any input array of length $n$ into one
that is  $\epsilon$-near-sorted.
Earlier works have shown how to construct
such and $\epsilon$-near-sorter using expander graphs~\cite{aks}.
\elaine{double check this defn}



\subsubsection{Nearly Orderly \Splitter Construction}
\label{sec:segmenterdetail}

Our nearly orderly \splitter construction is described below:

\begin{mdframed}
\begin{center}
{\bf Nearly orderly \splitter}
\end{center}

\paragraph{Input:} An array ${\bf I}$ whose length $n$ is a power
of $2$.

\paragraph{Parameters:} 
Let 
$\epsilon \in (0, 1)$ be a sufficiently small constant, and 
let 
$\czigzag  > 1$ be a sufficiently large constant.

\paragraph{Algorithm:}  \ \\

\noindent
Let $A := {\bf I}$ be the current array.

\vspace{5pt}
\noindent
For $t = 1, 2, \ldots, \min(6k, \log n)$: \quad {\tt // $t$-th AKS cycle}
\elaine{change constant}
\begin{enumerate}[leftmargin=6mm,topsep=5pt,itemsep=0pt]
\item 
{\bf Divide into $t$-AKS-intervals.}
Write $A := A_1 || A_2 || \ldots || A_{M(t)}$, where 
$A_1, A_2, \ldots, A_{M(t)}$ are $t$-AKS-intervals.

\item 
Repeat the following $\czigzag$ times:
\begin{enumerate}[leftmargin=3mm,itemsep=1pt]

\item 
\label{step:zig}
{\bf Near-sort even cherries.}
In parallel, apply an $\epsilon$-near-sorter  
to each group of intervals contained in  
${\sf EvenCherries}(A_1 || \ldots ||$ $A_{M(t)})$,
and the result is written back in place (i.e., into the 
$t$-AKS-intervals'  
original positions within $A$).

\item 
{\bf Near-sort odd cherries.}
In parallel, apply an $\epsilon$-near-sorter  
to each group of intervals contained in 
${\sf OddCherries}(A_1 || \ldots ||$ $A_{M(t)})$,
and the result is written back in place.
\end{enumerate}

\item  {\bf Near-sort even cherries.} Repeat Step~\ref{step:zig}
one final time.
\end{enumerate}

\paragraph{Output:} Finally, output $A$.
\end{mdframed}

\begin{theorem}[$(2^{-8k}, 2^{3k})$-orderly-\splitter]
Let $\epsilon \in (0, 1)$
be a suitably small constant, and 
let $\czigzag$
be a suitably large constant.
Then, the above construction is a 
$(2^{-8k}, 2^{3k})$-orderly-\splitter; moreover,
it can be implemented as a comparator-based circuit with 
$O(n) \cdot \min (6k, \log n)$ comparators and 
of $O(1) \cdot \min (6k, \log n)$ depth.
\label{thm:segmenter}
\end{theorem}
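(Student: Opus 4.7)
}
The plan is to handle the circuit complexity bounds first, and then reduce the orderliness claim to a (non-blackbox) invariant from the AKS analysis. For size and depth, observe that cycle $t$ consists of a constant number ($2\czigzag{+}1$) of parallel rounds, each of which applies an $\epsilon$-near-sorter independently to the disjoint groups given by ${\sf EvenCherries}$ or ${\sf OddCherries}$ of the $t$-AKS-intervals. Since the groups in either collection are disjoint and together cover all of $A$, and since an $\epsilon$-near-sorter on $m$ inputs uses $O_\epsilon(m)$ comparators with $O_\epsilon(1)$ depth, each cycle contributes $O(n)$ comparators and $O(1)$ depth. Summing over the $\min(6k,\log n)$ cycles yields the stated bounds.

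For correctness, the key is to invoke the ``progress per cycle'' invariant from the AKS analysis, adapted to our truncated run. Fix $t \le \min(6k,\log n)$. For every node $v$ of the $t$-AKS-tree at level $\ell \in \{0,\dots,t\}$, let $R_v \subseteq [n]$ denote the union of positions of those $t$-AKS-intervals whose labels lie in the subtree rooted at $v$, and let $S_v$ denote the multiset of elements that would occupy $R_v$ if $A$ were fully sorted (this is determined purely by the combinatorial structure of the tree and the sizes of the $t$-AKS-intervals). Call an element currently in $R_v$ a \emph{stranger to $v$} if its rank does not lie in the rank range associated with $R_v$. Then AKS's analysis (using the halver/near-sorter property together with the zigzag pattern of even/odd cherries, repeated $2\czigzag{+}1$ times per cycle) shows that, for a suitable absolute constant $\lambda \in (0,1)$ whose smallness is controlled by choosing $\epsilon$ small enough and $\czigzag$ large enough, after cycle $t$ every node $v$ at level $\ell$ satisfies
\[
\#\{\text{strangers to }v\} \;\le\; \lambda^{\,t-\ell}\cdot |R_v|.
\]
I would re-derive this inequality essentially verbatim from the AKS paper: the induction is on $t$, and the single-cycle step uses the fact that near-sorting every even cherry, then every odd cherry, and so on, geometrically contracts the ``stranger mass'' at each internal node (since strangers at a parent must be pushed to the correct child, and the halver inside the cherry does this up to an $\epsilon$-fraction of error).

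Now fix $t := \min(6k,\log n)$ and apply the invariant at level $\ell := \min(3k, t/2)$ (so $\ell \le t$ and $t - \ell \ge 3k$ in the relevant regime $6k \le \log n$; the other regime $6k > \log n$ is trivial since then $A$ is fully sorted). By Fact~\ref{fct:equalseg}, grouping the $t$-AKS-intervals according to the subtrees rooted at level-$\ell$ nodes partitions $A$ into $2^\ell = 2^{3k}$ segments of equal size $n/2^{3k}$. For each such segment, ${\err}$ on the segment equals exactly the number of strangers at the corresponding level-$\ell$ node, which by the invariant is at most $\lambda^{3k}\cdot (n/2^{3k})$. Choosing $\epsilon$ small and $\czigzag$ large so that $\lambda \le 2^{-8/3}$ gives $\lambda^{3k} \le 2^{-8k}$, establishing $(2^{-8k}, 2^{3k})$-orderliness.

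The main obstacle is the middle step: precisely extracting from the AKS analysis the per-node stranger bound $\lambda^{t-\ell}|R_v|$ with a quantitative $\lambda$ that we can make smaller than any prescribed constant by tuning $\epsilon$ and $\czigzag$. AKS themselves state their invariant so as to guarantee full sortedness after $\Theta(\log n)$ cycles; I would carefully re-examine their inductive step and isolate the geometric factor per cycle at each internal node, checking that (i) the peculiar interval sizes $X_t,Y_t$ and the cherry structure line up so that even/odd near-sorter passes contract strangers at both parents simultaneously, and (ii) the argument goes through uniformly in $t$, so that stopping early at $t=6k$ still yields the per-node bound above. The remaining pieces --- deducing the segment-level $\err$ bound from the node-level stranger bound via Fact~\ref{fct:equalseg}, and choosing constants --- are then immediate.
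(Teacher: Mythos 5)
Your size and depth analysis is correct and matches the paper's. Your overall strategy for orderliness — run AKS for a truncated number of cycles, extract a per-cycle error-decay invariant, and apply it at level $3k$ — is also the paper's plan, but your write-up has a concrete gap in the step that converts the per-subtree bound into a per-segment bound.

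The gap is in the sentence ``\emph{grouping the $t$-AKS-intervals according to the subtrees rooted at level-$\ell$ nodes partitions $A$ into $2^\ell$ segments of equal size}'' and in the claim that $\err$ on a segment ``\emph{equals exactly the number of strangers at the corresponding level-$\ell$ node}.'' Neither is true. If you only take the $t$-AKS-intervals whose labels lie in ${\sf Subtree}(v)$ for a level-$\ell$ node $v$, their total length is strictly less than $n/2^\ell$: the levels $0,\dots,\ell-1$ contain intervals whose labels are \emph{not} in any level-$\ell$ subtree. Fact~\ref{fct:equalseg} constructs equal-sized segments precisely by passing down pieces of these ancestor labels, so each segment is the subtree $R_v$ \emph{plus} some ancestor-level intervals. (The paper's own example for $t=3$, $\ell=2$ makes this concrete: the first segment is $A_1,\dots,A_6$, six intervals, but the level-2 subtree contributes only four.) Your stranger bound $\lambda^{t-\ell}|R_v|$, even if established, says nothing about misplacement inside these inherited ancestor intervals, which could in principle be entirely misplaced. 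The paper handles this with a separate bound (their $(\star\star)$): the total length of ancestor-level intervals in a segment is a geometric tail $\sum_{\ell'<3k} S_{\ell'} \leq 2^{-9k}\,S_{6k}$ and they simply charge all of it as error, then add it to the subtree-level bound $(\star)$. You need to add the analogous term; without it, the orderliness conclusion does not follow from what you have proved.

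On the per-node invariant itself: the statement AKS actually supply (Lemma~\ref{lem:aks} in the paper, quoting their page-7 theorem) is a per-\emph{interval}, per-\emph{tree-distance} bound $\err_r(A_i) < \alpha^{3r+27}|A_i|$, not a per-subtree bound of the form $\lambda^{t-\ell}|R_v|$. A bound of roughly that shape does follow by summing the per-interval bound over the levels $\ell'\in[\ell,t]$, using that $S_{\ell'}$ decays geometrically toward the root and that a level-$\ell'$ interval inside $R_v$ sits at tree distance $\geq \ell'-\ell+1$ from anything outside $R_v$; this is exactly the paper's $(\star)$ computation. So you are not quite ``re-deriving verbatim'' — you would be re-doing the same summation the paper does. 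I'd recommend citing Lemma~\ref{lem:aks} directly and doing that summation explicitly (as the paper does), rather than positing a cleaner-looking $\lambda^{t-\ell}$ invariant and claiming it appears in AKS, since the clean form obscures the dependence on the interval-size geometry that the calculation actually needs. Once you add the ancestor-interval term and carry out the summation, the argument closes and matches the paper's proof.
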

\begin{proof}
The proof is presented in Section~\ref{sec:segmenterproof}.
\end{proof}

\subsection{Proof of Theorem~\ref{thm:segmenter}}
\label{sec:segmenterproof}
The size and depth bounds follow in a straightforward manner.
Below we focus on proving that the algorithm
gives a $(2^{-8k}, 2^{3k})$-orderly-\splitter.
To prove this, 
we need to rely on a 
technical lemma proven by Ajtai et al.~\cite{aks}.

\begin{lemma}[Technical lemma due to Ajtai et al.~\cite{aks}] 
Fix any arbitrarily small constant $\alpha \in (0, 1)$
such that $(16 \gamma)^2 \cdot \alpha^2 < 1$.  
\elaine{todo: double check this expression}
There exist a suitably small constant $\epsilon \in (0, 1)$
and a suitably large constant $\czigzag > 1$, 
such that in the above construction, at the end of each cycle 
$t \leq \log n$, the following hold    
for any $t$-AKS-interval $A_i$ where $i \in [M(t)]$:
\begin{enumerate}[leftmargin=5mm]
\item[]
For $r \geq 1$,  $\err_r(A_i) < \alpha^{3r + 27} \cdot |A_i|$, 
where $\err_r(A_i)$ denotes the number of elements
actually in $A_i$, but if the array were sorted,
would land in a $t$-AKS-interval 
that is at a tree-distance at least $r$ away from 
the node labeled with $i$ in the $t$-AKS-tree.
\end{enumerate}

\label{lem:aks}
\end{lemma}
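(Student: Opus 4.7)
The plan is to prove Lemma~\ref{lem:aks} by induction on the cycle number $t$, with an inner analysis tracking how the ``stranger'' count $\err_r(A_i)$ evolves across the $2\czigzag + 1$ near-sort rounds inside cycle $t$. The ingredients we have at our disposal are: (i) the definition of an $\epsilon$-near-sorter, which after sorting a group $G$ of intervals leaves at most $\epsilon |G|$ elements on the ``wrong'' side of any prefix/suffix split of $G$; (ii) the explicit geometric decay $|A_i| \approx |A_{i'}| / \gamma$ as one descends a level of the $t$-AKS-tree (with $\gamma = 16$); and (iii) the even/odd-cherry alternation, which guarantees that \emph{every} internal edge of the $t$-AKS-tree is the parent edge of some cherry touched in each pair of zig-zag rounds.

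For the outer induction, at the start of cycle $t$ I move from the $(t-1)$-AKS-partition to the finer $t$-AKS-partition: each cycle-$(t-1)$ interval splits into a bounded number of cycle-$t$ intervals, and an element at tree-distance $\geq r$ from its true node in the cycle-$t$ tree was at tree-distance $\geq r - c_0$ from its true node in the cycle-$(t-1)$ tree, for some absolute constant $c_0$. Thus the inductive hypothesis at cycle $t-1$ gives every new interval an initial error budget of at most $\alpha^{3r+27-3c_0} |A_i|$, which is what feeds the in-cycle analysis.

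Within cycle $t$ I use a potential-function argument in the spirit of the original AKS analysis: to each element I assign a weight $\alpha^{-3r'}$ where $r'$ is the current tree-distance between the element and its correct $t$-AKS-interval, and I sum these weights over all intervals. Applying the $\epsilon$-near-sorter to one cherry $G$ ``moves'' every element internally within the $3$-interval (or $4$-interval) group $G$; combined with the length ratio $\gamma$ between a cherry's parent and its children and the $\epsilon$-near-sorting guarantee, a single even-then-odd pair of rounds shrinks the weighted stranger mass in a cherry by a factor of $(16\gamma)^2 \alpha^2 < 1$, which is exactly the hypothesis of the lemma. Iterating $\czigzag$ such pairs drives the weighted stranger mass below $\alpha^{3r+27}|A_i|$ on every interval $A_i$, closing the induction.

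The hard part will be the bookkeeping that makes the specific exponent $3r+27$ fall out. The factor $3r$ is paid one unit of $\alpha^3$ per unit of distance $r$, coming from the combined length-ratio, near-sort, and alternation gains; the additive $+27$ absorbs (a) the constant $c_0$-bounded loss incurred when passing from cycle $t-1$ to cycle $t$, (b) the boundary effects at the root of the tree where cherries do not have a full six intervals, and (c) the small overshoot from the initial cycles $t = 1, 2, \ldots$ before the recursion enters its steady regime. Concretely, I will first fix $\alpha$, then choose $\czigzag$ large enough that each pair of zig-zag rounds contracts the weighted stranger mass by a factor of at least $\alpha^{3}$, and finally choose $\epsilon$ small enough (depending on $\alpha$ and $\czigzag$) that the per-round $\epsilon$-error is dominated by this contraction. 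This essentially reproduces the AKS analysis but stopped at cycle $t$ rather than cycle $\log n$, so that the output is the quantitative $\err_r$ bound of the lemma rather than full sortedness.
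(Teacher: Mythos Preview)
The paper does not prove this lemma at all: it is quoted as a black-box result from Ajtai, Koml\'os, and Szemer\'edi~\cite{aks}, with the remark that ``The above Lemma~\ref{lem:aks} is implied by the Theorem stated on page~7 of the original AKS paper.'' So there is no paper-side proof to compare your proposal against.

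That said, what you are sketching is essentially the standard AKS analysis (in the Paterson-style reformulation), and the ingredients you list---induction on the cycle $t$, the refinement step that costs a bounded additive constant in the exponent, a weighted ``stranger'' potential $\alpha^{-3r'}$, and the contraction coming from the interplay between the $\epsilon$-near-sorter guarantee and the $\gamma$-ratio of interval lengths across a cherry---are the right ones. Your identification of the hypothesis $(16\gamma)^2\alpha^2<1$ as precisely the contraction condition is also on target. The part of your plan that would require the most care if you actually wrote it out is the refinement bookkeeping at the start of cycle~$t$: you need to show not just that tree-distance drops by at most a constant $c_0$, but that the \emph{relative} error $\err_r(A_i)/|A_i|$ stays controlled when a cycle-$(t-1)$ interval is split into its cycle-$t$ pieces, since the new leaf intervals are much larger than the old leaf intervals' parents. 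This is where the specific choice of $\gamma=16$ and the exponent structure $3r+27$ are genuinely used in the original AKS argument, and your sketch glosses over it. But as a proof \emph{plan} for reproducing the AKS theorem, this is the correct outline.
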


The above Lemma~\ref{lem:aks} is implied by the Theorem  
stated on page 7 of the original AKS paper~\cite{aks} --- we stated
the lemma slightly differently from the 
original AKS paper for our convenience. 
We now use Lemma~\ref{lem:aks} to prove Theorem~\ref{thm:segmenter}.

\medskip
\begin{proofof}{Theorem~\ref{thm:segmenter}}
Recall that $\gamma = 16$.
We will choose $\alpha$ such that 
$4 \cdot (16\gamma)^2 \cdot \alpha^2 = 1$, i.e., $\alpha = 
(32 \gamma)^{-1} = \frac{1}{2^9}$.
Moreover, suppose that we pick $\czigzag$ to be sufficiently large
and $\epsilon \in (0, 1)$ to be sufficiently small
such that Lemma~\ref{lem:aks} is satisfied.
We run the algorithm specified in Section~\ref{sec:segmenterdetail}
with the aforementioned parameters, and let $A$ be the output array.
Without loss of generality, we may assume that $6k < \log n$ 
since otherwise Ajtai et al.~\cite{aks}
proved that the outcome $A$ would be sorted, and this 
would be the easy case.

We now divide $A$ into $2^{3k}$ equally sized segments.
We can equivalently view the $2^{3k}$ equally sized segments as 
being created by the procedure specified in Fact~\ref{fct:equalseg},
where $\ell := 3k$.
Pick an arbitrary segment, say, the $i$-th segment, 
among the $2^{3k}$ equally sized segments.
Henceforth, let $v_{3k, i}$
denote the $i$-th node in level $3k$ of the $6k$-AKS-tree.

Due to the procedure specified in Fact~\ref{fct:equalseg}, 
we know that the $i$-th segment
consists of 
\begin{enumerate}[leftmargin=6mm]
\item 
all $6k$-AKS-intervals whose labels reside in ${\sf Subtree}(v_{3k, i})$
of the original $6k$-AKS-tree 
({\it not} of the tree output by the procedure in Fact~\ref{fct:equalseg});
\item 
a subset of the $6k$-AKS-intervals whose labels reside
in an ancestor node of $v_{3k, i}$ in the $6k$-AKS-tree. 
\end{enumerate}

For convenience, whenever we say the level of a $t$-AKS-interval,
we mean the level of its corresponding label
in the $t$-AKS-tree.
Let $S_{\ell}$ denote the 
the total length of all $6k$-AKS-intervals 
of level $\ell$
contained in the $i$-th segment.
It is not hard to see that for $\ell \in [0, 6k-1]$, 
$S_\ell \leq S_{\ell + 1}/8$, by the definition of the lengths
of the $t$-AKS-intervals.
\ignore{
Therefore, we have that 
\[
\frac{n}{2^{3k}} \leq \frac{1}{1-1/8}S_{6k} \leq 1.2 S_{6k}
\]
where $\frac{n}{2^{3k}}$ denotes the length of the $i$-th segment. 
}

For $\ell \in [3k, 6k]$, a $6k$-AKS-interval 
of level $\ell$ contained in the $i$-th segment
must have tree distance at least 
$\ell - 3k + 1$ from any $6k$-AKS-interval  
not contained in the $i$-th segment.

We use the term ``wrong elements'' to mean
elements that do not belong to 
the $i$-th segment if the array were sorted. 
Let $W_\ell$ denote the total number of wrong elements 
in some $6k$-AKS-interval of level $\ell$ in the $i$-th segment.
By Lemma~\ref{lem:aks}, we have that 
$$W_\ell \leq 
\alpha^{3(\ell-3k+1) + 27} \cdot S_\ell
\leq \alpha^{3(\ell-3k+1) + 27} \cdot 
\frac{S_{6k}}{8^{6k-\ell}}
$$
Therefore, we have that 

\begin{align*}
\frac{\sum_{\ell \in [3k, 6k]} W_\ell}{S_{6k}}
& \leq \alpha^{3 (3k+1) + 27} \cdot 
\left(1 + 
\frac{\alpha^{-3}}{8} 
+  
\left(\frac{\alpha^{-3}}{8}\right)^2 + \ldots
+
\left(\frac{\alpha^{-3}}{8}\right)^{3k} \right) \\
& \leq 
\alpha \cdot \alpha^{9k}
\cdot \left(\frac{\alpha^{-3}}{8}\right)^{3k} \cdot 2   
\leq 2^{-9k}  
\qquad \qquad \qquad \qquad \qquad \qquad \ (\star)
\end{align*}

Moreover, we have that 
\begin{align*}
\frac{\sum_{\ell \in [0, 3k-1]} S_\ell}{S_{6k}}
\leq \frac{1}{8^{3k+1}} 
\cdot \left( 1 + \frac{1}{8} + \ldots + \frac{1}{8^{3k-1}}\right) 
\leq \frac{1}{8^{3k+1}} \cdot 2 
\leq 2^{-9k}
\qquad \ \ (\star \star)
\end{align*}

Combining $(\star)$ and $(\star \star)$, 
we have that 
\[
\frac{\sum_{\ell \in [0, 6k]} W_\ell}{S_{6k}}
\leq 
2^{-9k} \cdot 2 \leq 2^{-8k}
\]
Since $S_{6k}$
is smaller than the total length of the $i$-th segment,
 we have that the fraction of ``misplaced'' elements 
of the $i$-th segment
must be upper bounded by $2^{-8k}$.
\end{proofof}

\ignore{
Suppose that in some cycle $t$, the 
intervals are denoted $I_1, \ldots, I_\ell$ from left to right. 
The algorithm performs the following during the cycle $t$
before moving onto cycle $t+1$:
\begin{enumerate}
\item Defines a partitioning of 
the intervals: $P_1 \cup P_2 \cup 
\ldots \cup P_d = \{I_1, \ldots, I_\ell\}$, 
where each partition $P_j$ contains a disjoint (but not necessarily contiguous) 
subset of of the intervals. 
\item 
For each partition $j \in \ell$, 
suppose that $P_j := \{I_{j_1}, \ldots, I_{j_k}\}$.

\end{itemize}
}



\section{Building Blocks for the Oblivious PRAM Model}
\label{sec:prambldgblock}

In this section, we present some building blocks 
that can be implemented
as deterministic, oblivious parallel algorithms.
This means that the algorithms' memory access patterns
are fixed a-priori and independent of 
the input (once we fix the input's length).

\paragraph{Compaction.}
Compaction (short for ``tight compaction'') 
solves the following problem: given an array 
in which every element is tagged with a 1-bit key, move
all elements tagged with $0$ to the 
front of the array, and move elements tagged with $1$ to the end. 
Asharov et al.~\cite{paracompact}
showed a deterministic algorithm that 
obliviously compacts any array containing $n$
elements each of which encoded as $\ell$ words; and their algorithm 
achieves $O(\ell \cdot n)$ 
total work and $O(\log n)$ depth.

Furthermore, their compactor supports a ``reverse routing'' capability.
Specifically, their compactor can be thought of a 
network consisting of $O(n)$ selector gates 
of depth $O(\log n)$, with $n$ inputs and $n$ outputs. 
Each selector gate takes in a 1-bit 
flag and two input elements that are $\ell$ words long, 
and the flag is used to decide
which of the two input elements to output. 
The first phase of their algorithm, 
takes $O(n)$ work and $O(\log n)$ depth: it  
computes on the elements' 1-bit keys, 
and populates all selector gates' 1-bit flags. 
The second phase of their algorithm then routes 
the input elements to the output layer over this selector network.
This takes $O(\ell \cdot n)$ work and $O(\log n)$ depth.
Since each selector
gate can remember its flag, it is possible to later on route
elements in the reverse direction, from the output layer back to the input
layer.

We stress that Asharov et al.~\cite{paracompact}'s oblivious compaction
algorithm is {\it not stable}, i.e., it does not preserve
the relative order 
of elements with the same key
as they appeared in the input array.
In fact, Lin, Shi, Xie~\cite{osortsmallkey} showed that this is inherent: 
any oblivious algorithm in the indivisibility 
model that achieves {\it stable} compaction must incur $\Omega(n \log n)$ work.
Here, an algorithm in the indivisibility 
model is one that 
does not perform encoding or computation
on the elements' payload strings.
Afshani et al.~\cite{lbmult}
shows that the $\Omega(n \log n)$ lower bound holds
for oblivious, deterministic stable compaction even 
without the indivisibility requirement, but instead assuming
that the Li-Li network coding conjecture holds~\cite{lilinetcoding}.

\paragraph{Distribution.}
Distribution solves the following problem.
We are given an input array ${\bf I}$ of length $n$ in which
each element carries a $w$-bit payload and a 1-bit label
indicating whether the element is {\it real} or a {\it filler}. 
Additionally, we are given a bit-vector ${\bf v}$ of length $n$,
where ${\bf v}[i]$ indicates whether the $i$-th output position
is available to receive a real element.
It is promised that the number of available positions
is at least as many as the number of real elements in ${\bf I}$.
We want to output an array ${\bf O}$ 
such that the multiset of real elements in ${\bf O}$
is the same as the multiset of   
real elements in ${\bf I}$, and moreover
if ${\bf O}[i]$
contains a real element, then it must be that ${\bf v}[i] = 1$, i.e.,
only available positions in the output array ${\bf O}$ can receive
real elements.

The following algorithm accomplishes the aforementioned distribution task
using compaction as a building block:
\begin{mdframed}
\begin{center}
{\bf Distribution}
\end{center}
\begin{enumerate}[leftmargin=5mm,itemsep=1pt]
\item 
Let ${\bf X}$ be an array 
in which all payloads are fillers and each ${\bf X}[i]$ 
is marked with the label ${\bf v}[i]$.
\item 
Now, apply tight compaction to ${\bf X}$ routing all 
entries with $1$-labels to the front, and all entries
with $0$-labels to the end. 
\label{step:routex-distr}
\item 
Apply another instance of tight compaction to the input array
${\bf I}$ routing all real elements to the front and all filler
elements to the end; let the outcome be ${\bf I}'$.
\item 
Next, reverse-route the array ${\bf I'}$ by 
reversing the routing decisions made 
in Step~\ref{step:routex-distr}, 
and output the result.
\end{enumerate}
\end{mdframed}

Therefore, oblivious distribution can be accomplished  
with the same asymptotical overhead 
as oblivious compaction.
Just like compaction, here it also makes sense to consider a 
reverse-routing 
capability of our distribution algorithm.

\paragraph{All prefix sums.}
Given an array $A$ of length $n$, an all-prefix-sum algorithm 
outputs the prefix sums of all $n$ prefixes, 
i.e., $A[:1]$, $A[:2]$, $\ldots$, and $A[:n]$, respectively.
It is promised that the sum of the entire array $A$ can be stored
in $O(1)$ memory words.
It is well-known that there is a deterministic, 
oblivious algorithm that computes all prefix sums in $O(n)$
work and $O(\log n)$ depth~\cite{jajabook}.

\paragraph{Generalized binary-to-unary conversion.}
Imagine that there are $n$ receivers where the $i$-th receiver
is labeled with an indicator bit ${\bf x}[i]$.
We are given an integer $\ell \in \{0,1,\dots,n\}$ expressed
in binary representation, 
and we want to output an array of $n$ bits where the 
$i$-th bit represents
the bit received by the $i$-th receiver. 
We want that the first $\ell$ receivers marked with $1$ 
receive $1$, and all other receivers marked with $1$ 
receive $0$. The receivers marked with $0$ may receive an arbitrary bit.
Note that in the special case 
that all receivers are marked with $1$, then the problem
boils down to converting an integer $\ell \in \{0, 1, \ldots, n\}$ 
expressed in binary representation to a corresponding unary string.

The generalized binary-to-unary conversion problem 
can easily be solved by invoking an all-prefix-sum computation
on an oblivious parallel RAM, taking $O(n)$
total work and $O(\log n)$ depth\footnote{We
explicitly differentiate 
the generalized binary-to-unary conversion
from the all-prefix-sum because it is more convenient
later for our circuit-model results.
In the circuit model,  
the generalized binary-to-unary conversion
can be solved with a circuit $O(n)$ in size and $O(\log n)$
in depth, whereas all-prefix sum requires 
a circuit $O(n \log n)$
in size and $O(\log n)$ in depth (even when the input 
$A$ is a bit array). 
}.


\paragraph{Sorting elements with ternary keys.}
We will need a linear-work, and logarithmic-depth
oblivious algorithm to sort an input array with ternary keys, as stated
in the following theorem.

\ignore{this alg needs to be changed for circuit model: use
b-to-u conversion instead}

\ignore{
We show that constant-length keys can be sorted
in linear work and logarithmic depth. 
We use a two-parameter recursion technique described by  
Lin, Shi, and Xie~\cite{osortsmallkey} 
and Asharov, Lin, and Shi~\cite{soda21}.
}

\begin{theorem}[Sort elements with ternary keys]
There exists a deterministic, oblivious algorithm
that can sort any input array 
$A$ containing $n$ elements 
each with a key from the domain $\{0, 1, 2\}$
in $O(n)$ work and $O(\log n)$ depth.
\label{thm:ternarysort}
\end{theorem}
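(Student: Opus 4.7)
The plan is to reduce ternary sorting to a constant number of calls to the already-established oblivious distribution primitive, using the fact that in the sorted output the three key classes occupy fixed contiguous intervals. Concretely, if $c_j$ denotes the number of key-$j$ elements in $A$ and $s_0 := 0$, $s_1 := c_0$, $s_2 := c_0 + c_1$, then the key-$j$ elements must land in the interval $(s_j, s_j + c_j]$. So it suffices to compute the counts $c_j$ and then route each key class into its destined block.

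First I would compute the indicator bit-vectors $b_j[i] := \mathbf{1}[A[i] = j]$ for $j \in \{0,1,2\}$ and reduce to obtain $c_0, c_1, c_2$; both steps cost $O(n)$ work and $O(\log n)$ depth. Then, using the generalized binary-to-unary conversion described above, I would build three availability bit-vectors $v_j$ with $v_j[i] = 1$ exactly on the interval $(s_j, s_j + c_j]$, again in $O(n)$ work and $O(\log n)$ depth. For each $j$ in parallel, I would form an input array ${\bf I}_j$ that marks position $i$ as real (carrying the original payload) iff $A[i] = j$, and as filler otherwise, and then invoke distribution on $({\bf I}_j, v_j)$. Since the number of reals in ${\bf I}_j$ equals $c_j = \sum_i v_j[i]$, the precondition of distribution is satisfied with equality, so the output ${\bf O}_j$ places each key-$j$ element at some position in $(s_j, s_j + c_j]$ and leaves fillers everywhere else.

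Finally I would assemble the sorted array ${\bf O}$ by setting ${\bf O}[i] := {\bf O}_j[i]$ for the unique $j$ with $s_j < i \leq s_j + c_j$; since the three availability intervals partition $[n]$, each output position receives exactly one real element, and this selection adds only $O(n)$ work and $O(1)$ depth. There is no serious technical obstacle: every building block is already established in this section with the desired $O(n)$-work, $O(\log n)$-depth bounds, and the only thing to double-check is that the counts and availability vectors are consistent so that each distribution precondition holds with equality, which is immediate from the definitions. Summing the costs of all steps matches the claimed $O(n)$ total work and $O(\log n)$ depth.
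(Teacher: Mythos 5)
Your proposal is correct and follows essentially the same route as the paper's proof: compute per-key counts and prefix sums to determine each key class's destination interval, then invoke oblivious distribution once per key class to route those elements into place, and combine by selection. The paper states this more tersely (determine $[L_b, U_b]$ for each $b$, then apply distribution three times), but the underlying algorithm, building blocks, and complexity accounting are the same.
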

\begin{proof}
\ignore{The algorithm and analysis are described in Asharov et al.~\cite[Remark 5.3]{paracompact}.
It is a variant of compaction and thus inherits the capability of ``reverse routing''
and takes the same asymptotic complexity in both work and depth}
Consider the following algorithm:

\ignore{TODO: in the circuit later, we need to change this algo}
\begin{mdframed}
\begin{center}
{\bf Ternary-key sorting}
\end{center}
\begin{enumerate}[leftmargin=5mm,itemsep=1pt]
\item
For each key $b \in \{0, 1, 2\}$, 
let $L_b, U_b \in [n]$ denote the 
starting and ending index for $b$ if the array $A$ were to be fully sorted.
We can accomplish  
this by counting for each $b \in \{0, 1, 2\}$ the total number of occurrences
of $b$ in $A$. 

\item 
Relying on oblivious distribution three times, we can 
route all elements with the key $b$ 
to the positions $[L_b, U_b]$ of the output array. 
Output the result.
\end{enumerate}
\end{mdframed}

One can easily verify that the above algorithm sorts
the input array $A$ with ternary keys, and 
moreover, the algorithm completes in $O(n)$ 
total work and $O(\log n)$ depth.
\end{proof}

Just like compaction, here it also makes sense to consider a 
reverse-routing 
capability of our ternary-key sorting algorithm.


\ignore{
Note also that 
in the above ternary sorting algorithm,
Steps 3 to 5 \elaine{hard coded} 
can be viewed as a network of $O(n)$ selector gates 
of depth $O(\log n)$, and if each selector gate 
remembered its routing decision, 
we can reverse 
route the output elements in the reverse direction back
to the input array. 
}

\section{Sorting Short Keys on an Oblivious PRAM}

\ignore{
We describe 
some new building blocks we will need; and in this section,
we focus on the oblivious PRAM model of computation.
}

Throughout, we assume that the array $A$ to be sorted
contains elements that are (key, payload) pairs. 
A key can be expressed in $k$ bits, 
and the entire element can fit in $O(1)$ memory words. 
\ignore{
Whenever we say ``an algorithm ${\sf Alg}$ 
can sort $k$-bit keys'',
we mean that ${\sf Alg}$ 
can correctly sort any input 
array containing elements whose keys are expressed in $k$ or 
fewer bits. 
}

\subsection{Slow Sorter and Slow Alignment}


\paragraph{Slow sorter.}
We show that there is a slow sorter that sorts
an array containing $n$ elements with $k$-bit keys
in $O(2^k \cdot n)$  work and $O(\log n)$ depth.

\begin{theorem}[Slow sorter]
Let $K := 2^k$. There exists a deterministic, oblivious algorithm, 
henceforth 
denoted ${\bf SlowSort}^K(A)$, 
that can correctly sort any input array $A$
of length $n$ and containing elements with $k$-bit keys in 
$O(nK)$ total work and $O(k + \log n)$ depth.
\label{thm:slowsort}
\end{theorem}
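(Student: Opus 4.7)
The plan is to follow the roadmap sketched in the introduction: reduce $k$-bit sorting to $K = 2^k$ parallel instances of ternary-key sorting, one per possible key value, then combine the results by a coordinate-wise $K$-way selection. The key observation is that if $c_u$ denotes the number of occurrences of key $u$ in $A$ and $s_u := \sum_{u' < u} c_{u'}$, then in the sorted output the $u$-keyed elements must occupy positions $s_u + 1, \ldots, s_{u+1}$. So for each $u \in \{0, 1, \ldots, K-1\}$ we construct a length-$n$ auxiliary array $A^{(u)}$ whose entries have ternary keys in $\{-\infty, u, +\infty\}$: entries whose original key is $u$ are kept (with ternary key $u$), exactly $s_u$ of the remaining entries are relabeled with $-\infty$, and the rest with $+\infty$. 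Sorting $A^{(u)}$ under the ternary order lands the $u$-keyed elements in exactly positions $s_u+1, \ldots, s_{u+1}$, with fillers elsewhere.

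Concretely, the algorithm proceeds in four stages. \textbf{Stage 1 (counts and prefix sums).} For each $u \in [K]$ in parallel, sum the indicator vector $\mathbf{1}[A[i].\textsf{key}=u]$ using an oblivious tree reduction to get $c_u$; then compute $s_0, s_1, \ldots, s_{K-1}$ by an all-prefix-sum on the length-$K$ vector $(c_u)$. This costs $O(nK)$ work and $O(\log n + \log K) = O(\log n + k)$ depth. \textbf{Stage 2 (construct the $K$ copies).} For each $u$ in parallel, label position $i$ of $A^{(u)}$ as a ``real'' ternary-$u$ entry if $A[i].\textsf{key} = u$, otherwise mark it as a ``filler slot.'' Inside the filler slots of $A^{(u)}$ we must assign the first $s_u$ of them the tag $-\infty$ and the rest $+\infty$; this is exactly an instance of the generalized binary-to-unary conversion from Section~\ref{sec:prambldgblock}, applied per copy in $O(n)$ work and $O(\log n)$ depth, for a total of $O(nK)$ work. \textbf{Stage 3 (ternary sort each copy).} Apply the oblivious ternary-key sort of Theorem~\ref{thm:ternarysort} to each $A^{(u)}$ in parallel; this takes $O(n)$ work and $O(\log n)$ depth per copy, totalling $O(nK)$ work and $O(\log n)$ depth. \textbf{Stage 4 (merge).} At each output position $i$, the sorted copies $A^{(u)}[i]$ contain at most one real entry (the one whose $u$ equals the intended key at position $i$); collapse the $K$ candidates into a single output by an oblivious tournament of depth $O(\log K) = O(k)$ that selects the unique non-filler, for $O(nK)$ work in total.

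Summing across the four stages gives $O(nK)$ work and $O(\log n + k)$ depth, matching the claim. Obliviousness is immediate because every stage (counting, prefix sum, binary-to-unary conversion, ternary-key sort, tournament) is itself a deterministic oblivious primitive whose access pattern depends only on $n$ and $K$. The main obstacle is Stage 2: naively marking exactly $s_u$ fillers with $-\infty$ looks like it would require a stable scan, but the generalized binary-to-unary conversion already delivers exactly this functionality within the required work and depth budget, so no new machinery is needed.
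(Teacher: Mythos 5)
Your proposal is correct and matches the paper's own proof of Theorem~\ref{thm:slowsort} essentially step for step: you compute the counts $c_u$ and prefix sums $s_u$, build $K$ copies of the array reduced to ternary keys via the generalized binary-to-unary conversion, sort each copy with the ternary-key sorter of Theorem~\ref{thm:ternarysort}, and merge by a coordinate-wise $K$-way tournament of depth $O(k)$, with the same $O(nK)$-work, $O(\log n + k)$-depth accounting. Nothing is missing.
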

\begin{proof}
Consider the following algorithm:
\begin{mdframed}
\begin{center}
${\bf SlowSort}^K(A)$
\end{center}
\paragraph{Input:} An array $A$ whose length $n$ is a power of $2$.
Every element in $A$ has a $k$-bit key chosen 
from the domain $[0, K-1]$.

\paragraph{Algorithm:}
\begin{enumerate}
\item 
\label{stp:slowsort:count}
For each $u \in [0, K-1]$ in parallel, count
the number of occurrences of the key $u$ in $A$, and let 
$c_u$ be this count.
Using an all-prefix-sum algorithm, compute
$s_u := \sum_{i \in [0, u-1]} c_i$ for every $u \in [1, K-1]$, and 
define $s_0 := 0$.

\item 
\label{stp:slowsort:copy}
Make $K$ copies of the array $A$, denoted $B_0, \ldots, B_{K-1}$, 
respectively.
In each $B_u$, the elements whose keys are not $u$
are replaced with {\it filler}.
\item 
\label{stp:slowsort:ternary}
For $u \in [0, K-1]$: 
\begin{enumerate}[leftmargin=5mm,itemsep=1pt]
\item In array $B_u$, 
for the first $s_u$ filler elements, treat their keys as $-\infty$;
for every other filler element, treat 
its the key as $\infty$. This can be accomplished 
by invoking a generalized 
binary-to-unary conversion algorithm.
\item 
Invoke oblivious sorting for ternary keys to sort $B_u$.
In the resulting array denoted $B'_u$, the elements whose keys 
are equal to $u$ will 
appear at positions $s_u + 1, \ldots, s_{u+1}$.
\end{enumerate}

\item 
\label{stp:slowsort:populate}
In parallel, 
populate the $i$-th element in the output array for every $i \in [n]$ as
follows:  
select the element whose key is 
within the range $[0, K-1]$ 
among the elements $B'_0[i], B'_1[i], \ldots, B'_{K-1}[i]$.
The selection can be accomplished by aggregating over a binary tree
whose leaves are $B'_0[i]$, $B'_1[i]$, $\ldots$, $B'_{K-1}[i]$.
\end{enumerate}
\end{mdframed}

One can easily verify that the above algorithm indeed
correctly sorts in the input array. Moreover, its total work
is bounded by $O(nK)$ and its depth is bounded
by $O(k + \log n)$. Specifically, for the depth, the $O(k)$ part
upper bounds the depth 
of the first step 
that computes the all-prefix-sum of $K$ elements as well as the 
last step where we select among $K$ elements; 
and the $O(\log n)$ part is an upper bound on
the depth of the generalized binary-to-unary computation, as well
as the ternary-key sorting.
\end{proof}

\begin{Remark}[Reverse routing]
In the above ${\bf SlowSort}$ algorithm, there is a way
to reverse-route elements in the output array 
back into their original positions in the input.
Suppose that during Step~\ref{stp:slowsort:populate}, \elaine{hard coded}
we remember for each position of the output array, 
from which array $B'_u$ it received an element, 
In this way, we can reverse Step 4 and reconstruct
the arrays $B'_0, \ldots, B'_{K-1}$ from the output.
Now, we can reverse the routing decisions of the ternary sorter  
to reconstruct the arrays $B_0, \ldots, B_{K-1}$. 
For each $i \in [n]$, 
there is only one $B_u$ such that $B_u[i]$ 
is not a filler element, and this element $B_u[i]$ 
will be routed back to the $i$-th position of the input 
array.
Clearly, the reverse routing does not cost more
than the forward direction in terms of work and depth.
\end{Remark}

\paragraph{Slow alignment.}
We define a variant of the slow sorter algorithm, called
${\bf SlowAlign}^{K, K'}(A)$.
${\bf SlowAlign}^{K, K'}$ receives an input array $A$ in which
every element $A[i]$
is not only tagged with a key $A[i].{\sf key}$ 
from the domain $[0, K-1]$,
but also an index $A[i].{\sf idx}$ which can be expressed in 
$k' := \log K'$ 
bits.
As before, we assume that each element, including
its tagged key and index, 
can fit in $O(1)$ words. We want to output a permutation 
of $A$ such that in the ordering
the keys become consistent with the ordering of the indices
in the input array. 
In other words, suppose that $B$ is the output array in which
each element is tagged with only a key, 
then, 
\begin{equation}
\forall i, j \in [n] \text{ and } i \neq j: 
(A[i].{\sf idx} < A[j].{\sf idx})
\Longrightarrow
(B[i].{\sf key} \leq B[j].{\sf key})
\label{eqn:aligned}
\end{equation}

\begin{theorem}[${\bf SlowAlign}^{K,K'}$]
There is a deterministic, oblivious 
${\bf SlowAlign}^{K, K'}(A)$
algorithm that solves the above alignment problem
and outputs an array $B$ that is a permutation of the input  
array $A$ satisfying 
Equation~\eqref{eqn:aligned}; and moreover,
the algorithm
takes $O((K+K') n)$
total work and $O(\log K + \log K' + \log n)$
depth where $n$ is the length of the input array.
\label{thm:slowalign}
\end{theorem}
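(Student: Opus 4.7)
The plan is to realize $\mathbf{SlowAlign}^{K,K'}$ by stitching together two invocations of $\mathbf{SlowSort}$ (Theorem~\ref{thm:slowsort}), exploiting the reverse-routing capability noted in the Remark after that theorem. Concretely, one instance of $\mathbf{SlowSort}^{K'}$ is used purely as a permutation network that sends each input element to the position corresponding to the rank of its index, and its routing decisions are remembered so that the resulting map can be applied in reverse. A second, independent instance of $\mathbf{SlowSort}^{K}$ is used to produce the actual sorted sequence of keys that must appear in the output.

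In detail, I would run the following three steps in sequence. First, invoke $\mathbf{SlowSort}^{K'}$ on $A$ using $A[i].\mathsf{idx}$ as the sorting key; this costs $O(nK')$ work and $O(\log K' + \log n)$ depth, and (by the Remark) retains enough information to reverse-route any later length-$n$ array through the same permutation. Let $r_i \in [n]$ denote the position to which the element originally at index $i$ is sent by this sort, so that $A[i].\mathsf{idx} < A[j].\mathsf{idx}$ implies $r_i < r_j$. Second, in parallel, invoke $\mathbf{SlowSort}^{K}$ on $A$ using $A[i].\mathsf{key}$ as the sorting key, producing an array $T$ in which $T[r]$ is the $r$-th smallest key $k_{(r)}$ of $A$; this costs $O(nK)$ work and $O(\log K + \log n)$ depth. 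Third, feed $T$ into the reverse-routing phase of the step-1 sorter; the reverse routing, by the Remark, costs no more than the forward direction, so $O(nK')$ work and $O(\log K' + \log n)$ depth. Define $B[i]$ to be the $i$-th entry produced by this reverse routing; by construction $B[i].\mathsf{key} = T[r_i] = k_{(r_i)}$.

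For correctness, observe that if $A[i].\mathsf{idx} < A[j].\mathsf{idx}$ then $r_i < r_j$, so $B[i].\mathsf{key} = k_{(r_i)} \le k_{(r_j)} = B[j].\mathsf{key}$, which is exactly Equation~\eqref{eqn:aligned}. Moreover, since $i \mapsto r_i$ is a bijection on $[n]$, the multiset of keys in $B$ equals $\{k_{(1)}, \dots, k_{(n)}\}$, which is the multiset of keys in $A$; so $B$ is a permutation of $A$ as required. Summing the three phases gives $O(n(K+K'))$ total work and $O(\log K + \log K' + \log n)$ depth, matching the claim.

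I do not expect a serious obstacle here: the two hard ingredients (linear-work-per-distinct-key oblivious sorting and its reverse-routing capability) are already packaged by Theorem~\ref{thm:slowsort} and its Remark. The main point that requires a little care is tie-breaking: ties in indices are harmless because Equation~\eqref{eqn:aligned} is conditioned on a strict inequality of indices, while ties in keys are absorbed by the ``$\le$'' on the right-hand side. The only other detail worth spelling out is that ``reverse routing'' is applied to a generic payload array (namely $T$), which is exactly the usage pattern supported by the ternary-sort and distribution primitives underlying $\mathbf{SlowSort}$; no new building block is needed.
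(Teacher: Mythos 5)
Your proposal is correct and takes essentially the same approach as the paper: two independent invocations of $\mathbf{SlowSort}$ (one on keys, one on indices) followed by reverse-routing the key-sorted array through the index-sort's permutation. The only cosmetic differences are that you run the two sorts in the opposite order and apply $\mathbf{SlowSort}^{K'}$ to the full elements rather than to the bare index array, neither of which changes the argument or the asymptotics.
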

\begin{proof}
The oblivious algorithm ${\bf SlowAlign}^{K, K'}$ is described below:
\begin{mdframed}
\begin{center}
${\bf SlowAlign}^{K, K'}(A)$
\end{center}
\paragraph{Input:} An array $A$ of length $n$, 
and for every $i \in [n]$, the element $A[i]$ is tagged
with a key $A[i].{\sf key}$
and an index $A[i].{\sf idx}$.

\paragraph{Algorithm:}

\begin{enumerate}[leftmargin=5mm,itemsep=1pt]
\item 
\label{step:slowalign:sortkey}
Call ${\bf SlowSort}^K(A)$
using the ${\sf key}$
field as the key to sort the array $A$, and let $B$ 
be the outcome.
\item 
\label{step:slowalign:sortidx}
Call ${\bf SlowSort}^{K'}(A[1].\idx, A[2].\idx, \ldots, A[n].\idx)$ 
and let $\idx_1, \ldots, \idx_n$
be the resulting ordered list of indices.
\item 
\label{step:slowalign:reverse}
Reverse route $B$ by reversing 
the routing decisions made in 
Step~\ref{step:slowalign:sortidx}.
\end{enumerate}
\end{mdframed}
Correctness is easy to verify.
For performance bounds, 
observe that 
Step~\ref{step:slowalign:sortkey} takes $O(nK)$ work and $O(\log n + \log K)$ depth, 
Step~\ref{step:slowalign:sortidx} takes $O(nK')$ work and $O(\log n + \log K')$ depth, 
Step~\ref{step:slowalign:reverse}'s work and depth are not more than Step~\ref{step:slowalign:sortidx}.
\end{proof}

\subsection{Finding the Dominant Key}
Let $\epsilon \in (0, 1/2)$.
We say that an array $A$ of length $n$ is $(1 - \epsilon)$-uniform iff 
except for at most $\epsilon n$ 
elements, all other elements in $A$ have the same key --- henceforth
this key is said to be the {\it dominant} key.

We want an algorithm that can 
correctly identify the dominant
key when given an input array  
$A$ that is $(1-\epsilon)$-uniform.
If the input array $A$ is not $(1-\epsilon)$-uniform, the
output of the algorithm may be arbitrary. 

\begin{theorem}[${\bf FindDominant}$ algorithm]
Suppose that $n > 2^{k + 7}$ and moreover $n$ is a power of $2$. 
\elaine{note this condition}
Let $A$ be an array containing $n$ elements
each with a $k$-bit key, and suppose
that $A$ is $(1-2^{-8k})$-uniform.
There is a deterministic, oblivious algorithm 
that can correctly identify 
the dominant key given any such $A$;
and moreover, the algorithm 
requires 
$O(n)$ total work and 
$O(k + \log n)$ \elaine{check}
depth.
\label{thm:finddominant}
\end{theorem}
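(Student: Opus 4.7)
The plan is to follow the spirit of the Blum--Floyd--Pratt--Rivest--Tarjan median-of-medians algorithm: repeatedly compress the array by a constant factor using chunk-local sorting, preserving the identity of the dominant key, and then finish with a single call to ${\bf SlowSort}^K$ on a residual that is small enough to absorb its $O(K)$ multiplicative cost. Concretely, I would partition $A$ into consecutive chunks of $5$, sort each chunk with a fixed constant-size sorting network, and keep the middle (third) element of each sorted chunk, obtaining an array of length $n/5$. This compression step is then iterated for $r := \lceil \log_5 K \rceil$ rounds, yielding a residual $A_r$ of length at most $n/K$. Finally, I invoke ${\bf SlowSort}^K(A_r)$ and return the element at position $\lceil |A_r|/2 \rceil$ of the sorted output.

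For correctness I would prove, by induction on the round $i$, that after round $i$ the true dominant key $d$ is still the strict majority of the current array and that the error fraction $\epsilon_i$ satisfies $\epsilon_i \le (5/3)^i \cdot 2^{-8k}$. The inductive step is a counting argument: a sorted chunk's middle element can equal something other than $d$ only when at least $3$ of the chunk's $5$ elements differ from $d$, so the number of bad chunks is at most $\epsilon_i n / 3$, giving $\epsilon_{i+1} \le 5\epsilon_i/3$. At $r = \lceil \log_5 K \rceil$ the error is at most $(5/3) \cdot 2^{0.317k - 8k} = O(2^{-7.68k}) \ll 1/2$, so $d$ still occupies strictly more than half of $A_r$. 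Once $A_r$ is sorted by ${\bf SlowSort}^K$, the $d$-valued elements form a contiguous block longer than $|A_r|/2$, which must cover the middle position $\lceil |A_r|/2 \rceil$; hence the returned element equals $d$.

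For complexity, the $i$-th compression round processes $n/5^i$ elements in constant work per chunk and $O(1)$ depth, so across the $r = O(k)$ rounds the recursion contributes $O(n)$ total work and $O(k)$ depth. The final ${\bf SlowSort}^K$ call on an input of size at most $n/K$ contributes, by Theorem~\ref{thm:slowsort}, $O((n/K) \cdot K) = O(n)$ work and $O(k + \log(n/K)) = O(k + \log n)$ depth. The assumption $n > 2^{k+7}$ guarantees $n/5^r \ge n/(5K) > 25$, so the residual is well-defined and the middle-element rule is unambiguous. All steps (chunking by index, sorting with a fixed network, taking a fixed position, calling the oblivious ${\bf SlowSort}^K$) are data-independent, hence the algorithm is oblivious.

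The main subtlety is the joint constraint on $r$: one needs $5^r \ge K$ so that the linear-work budget of ${\bf SlowSort}^K$ is not exceeded, while simultaneously keeping $(5/3)^r \cdot 2^{-8k} < 1/2$ so that the middle-of-sorted rule succeeds at the base case. The choice $r = \lceil \log_5 K \rceil \approx 0.43 k$ satisfies both with ample slack, since the error-amplification threshold is only reached around $r \approx 10.85 k$. A minor technicality is handling intermediate array lengths that are not divisible by $5$; I would dispose of this by padding $A$ up front with $\infty$-key filler to the next multiple of $5^r$, incurring only constant-factor overhead given $n > 2^{k+7}$.
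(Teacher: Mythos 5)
Your proposal is recognizably the same median-of-medians strategy the paper uses: compress with chunk-local sorting, iterate until the residual is small enough that ${\bf SlowSort}^K$ costs $O(n)$, then output the middle of the fully sorted residual. The difference is parametric --- you use chunks of $5$ and keep one median (compression factor $5$, amplification $5/3$ per round), whereas the paper uses chunks of $8$ and keeps the two middle elements (compression factor $4$, amplification $8/5$). That parametric choice is not innocent, and it introduces a genuine gap.

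The gap is in the divisibility fix. You pad $A$ \emph{up front} with $\infty$-key fillers to the next multiple of $5^r$. Since $5^r = 5^{\lceil\log_5 K\rceil}$ can be as large as $5K-1$, this can add close to $5K$ filler elements, each of which is a minority. So your inductive invariant $\epsilon_i \le (5/3)^i \cdot 2^{-8k}$ is already false at $i=0$: the correct base value is $\epsilon_0 \le 2^{-8k} + (5^r-1)/n$, and the second term is not negligible. With $n$ a power of $2$ satisfying $n>2^{k+7}$, the smallest allowed $n$ is $256K$, so $(5^r-1)/n$ can be as large as about $5/256$. After $r=\lceil 0.431k\rceil$ rounds of $(5/3)$-amplification this error fraction becomes roughly $(5/3)\cdot 2^{0.318k}\cdot(2^{-8k}+5/256)$, which exceeds $1/2$ once $k$ is moderately large (around $k\gtrsim 13$), and then the median-of-$A_r$ rule can return the wrong answer. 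So as written the correctness argument fails.

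The issue is fixable, and the paper's version sidesteps it entirely: because $n$ is a power of $2$ and the chunk size is $8$ (and the residual length stays $\ge n/(4K) > 32$), every intermediate length is automatically a multiple of $8$ and no padding is ever needed. If you want to keep chunks of $5$, you would have to either (a) observe that all your padding fillers sit at the tail of the array, so at most $\lceil m/5\rceil$ of them survive each round as medians, hence after $r$ rounds at most one filler remains and they cannot dominate the residual; or (b) pad only locally at each round (at most $4$ extra fillers per round, $O(k)$ total, which \emph{is} negligible against $n>2^{k+7}$). Either repair works, but it is a necessary step, not a ``minor technicality'': without it the stated inductive invariant does not hold.

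One small quantitative note: your ``$n/5^r > 25$'' bound implicitly uses $n>128K$ and $5^r<5K$; since $n$ is also a power of $2$, you actually have $n\ge 256K$ and hence $n/5^r\ge 51$, which gives more slack --- but this does not rescue the padding issue above.
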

\begin{proof}
Let $K := 2^k$. We can call ${\bf FindDominant}(A, K, n)$ which
is defined below --- since $n > 2^{k + 7}$ 
and $n$ is a power of $2$, one can verify that
every recursive call to ${\bf FindDominant}(B, K, n)$
will have an input $B$ whose size is a multiple of $8$. 
\begin{mdframed}
\begin{center}
${\bf FindDominant}(B, K, n)$
\end{center}
\begin{enumerate}[leftmargin=5mm,itemsep=1pt]
\item 
\label{stp:finddom:base}
If $|B| \leq n/K$, then call ${\bf SlowSort}^K(B)$
and output either one of the median keys in the sorted array. 
Else, continue with the following.
\item 
\label{stp:finddom:group}
Divide the array into columns of size $8$.
Obliviously sort each column using AKS~\cite{aks}; 
and let $a_i, b_i$ be the two median elements in column $i$, 
i.e., the 4th and 5th smallest elements.
\item
\label{stp:finddom:recurse}
Output ${\bf FindDominant}\left(\{(a_i, b_i)\}_{i \in [|B|/8]}, K, n\right)$. 
\end{enumerate}
\end{mdframed}
Henceforth, any element whose key differs
from the dominant key is said to be a minority element.
In the above algorithm, for each column, 
if we want to make sure that both median elements
are minority, we must 
consume at least $5$ minority elements. 
If we want to make sure that one of the two 
median elements is minority, 
we must consume at least $4$ minority elements.

Suppose that 
$B$ is $(1-\mu)$-uniform.
In the array $\{(a_i, b_i)\}_{i \in [|B|/8]}$, 
the number of elements that are minority is upper bounded by 
$\frac{2 \mu \cdot |B|}{5}$;
the fraction of elements that  
are minority 
is upper bounded by 
\[
\frac{{2 \mu \cdot |B|}/{5}}{|B|/4}
= 8\mu/5
\]

After $D := \ceil{\log_4 K}$ recursive calls, the 
algorithm will encounter the base case, 
invoke ${\bf SlowSort}$ and output the median.
At this moment, the fraction of minority elements 
is upper bounded by 
\[
2^{-8k} \cdot \left(\frac{8}{5}\right)^D 
\leq 1/2 
\]
Therefore, outputting the median at this moment 
will give the correct result.
%
\end{proof}

\subsection{Sorting a Nearly Orderly Array}

Recall that using a nearly ordered segmenter
(see Section~\ref{sec:segmenter})
to partially sort the input array, such that the result
is $(\eta, p)$-nearly ordered. 
We will show that there is an efficient oblivious algorithm
that can correct the remaining errors and fully sort the array.

\begin{theorem}
Suppose that $n > 2^{4k + 7}$.  
There is a deterministic, oblivious algorithm
that fully sorts an $(2^{-8k}, 2^{3k})$-orderly array 
in $O(n)$ total work and $O(\log n)$ depth. 
\label{thm:corrector}
\end{theorem}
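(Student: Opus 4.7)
The plan is to follow the blueprint outlined in Section~\ref{sec:roadmap-opram} and instantiate each step using the building blocks developed in Section~\ref{sec:prambldgblock} together with ${\bf SlowSort}^K$ (Theorem~\ref{thm:slowsort}), ${\bf SlowAlign}^{K, K'}$ (Theorem~\ref{thm:slowalign}), and ${\bf FindDominant}$ (Theorem~\ref{thm:finddominant}). Set $K := 2^k$, view the input $A$ as the concatenation of $K^3$ equally sized segments $A_1 \| \cdots \| A_{K^3}$ of length $n/K^3$ each, and also as the concatenation of $K^2$ \emph{super-segments}, where each super-segment is the concatenation of $K$ consecutive ordinary segments. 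The hypothesis $n > 2^{4k+7}$ guarantees that $n/K^3 > 2^{k+7}$, so ${\bf FindDominant}$ may be invoked on each segment.

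First I would run ${\bf FindDominant}$ in parallel on all $K^3$ segments, at total cost $O(n)$ work and $O(\log n)$ depth; since $A$ is $(2^{-8k}, 2^{3k})$-orderly and since in the fully sorted array at most $K$ of the $K^3$ segments contain more than one distinct key, all but at most $K$ segments are $(1 - 2^{-8k})$-uniform, so ${\bf FindDominant}$ returns their true dominant keys. Next, in each segment I would mark for extraction every element whose key differs from the returned dominant key, together with the first $\lceil 2^{-8k} \cdot n/K^3 \rceil$ elements whose key equals the returned dominant key; this yields a set $E$ of size at most $3n/K^2$ (the $K$ bad segments contribute at most $K \cdot n/K^3 = n/K^2$, and the $(1-2^{-8k})$-uniform segments contribute at most $2 \cdot 2^{-8k} \cdot n$ in total). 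By Asharov et al.~\cite{paracompact}'s oblivious compaction I would extract $E$ into an array $E'$ padded with fillers to the fixed length $3n/K^2$, at cost $O(n)$ work and $O(\log n)$ depth. The crucial \emph{granularity-switching} step is now to tag each extracted element with the index $i \in [K^2]$ of the super-segment it originates from and invoke ${\bf SlowAlign}^{K, K^2}$ on $E'$ to reorder the extracted elements by their source super-segment; by Theorem~\ref{thm:slowalign} this takes $O((K + K^2) \cdot (3n/K^2)) = O(n)$ work and $O(\log n + k)$ depth. I would then reverse-route through the ${\bf SlowAlign}$ and through the compaction to place each element back into the slot it was extracted from. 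The claim is that after this reverse routing every element sits in its correct super-segment: intuitively, within each super-segment the set of slots vacated by extraction is refilled by exactly the multiset of elements that should end up there, because all elements originally belonging to that super-segment but residing elsewhere have been extracted, and the ${\bf SlowAlign}$ step routes each extracted element back to a vacated slot of its true super-segment.

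Finally, inside each super-segment the elements belong to the correct super-segment but may not be internally sorted; however, at most $K$ of the $K^2$ super-segments contain more than one distinct key (since there are only $K$ distinct keys overall). Using one more compaction I would extract exactly those at-most-$K$ super-segments into an array of length at most $n/K$, call ${\bf SlowSort}^K$ on each extracted super-segment in parallel at cost $O((n/K) \cdot K) = O(n)$ work and $O(\log n + k)$ depth, and reverse-route the sorted output back into place. Identifying the multi-keyed super-segments obliviously is easy: run an $O(\log(n/K^2))$-depth binary-tree reduction in each super-segment to detect whether any two adjacent elements differ. The overall work is $O(n)$ and the overall depth is $O(\log n)$ (using $k \le \tfrac{1}{8}\log n$).

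The main obstacle is the correctness argument for the extract/route-back step: one must verify that after the extraction only elements that genuinely belong to the current super-segment remain in-place and that the set of vacated slots in each super-segment exactly matches (as a multiset, ignoring intra-super-segment order) the set of extracted elements whose true super-segment is that one. This requires a counting argument based on the $(2^{-8k}, 2^{3k})$-orderliness guarantee combined with the fact that we extract precisely $\lceil 2^{-8k} \cdot n/K^3 \rceil$ dominant-key elements per uniform segment, so that in every super-segment the number of extracted slots equals the deficit between the number of in-place survivors and the true super-segment population. The remainder of the argument is routine bookkeeping of work and depth using the building blocks of Section~\ref{sec:prambldgblock}.
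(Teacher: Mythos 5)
Your proposal takes the same route as the paper's proof: run ${\bf FindDominant}$ per segment, mark a superset of the misplaced elements, extract them by compaction, re-order the extracted array by super-segment via ${\bf SlowAlign}^{K,K^2}$ (this is precisely the paper's granularity switch from $K^3$ segments to $K^2$ super-segments), reverse-route, and then handle the at-most-$K$ multi-keyed super-segments with a second compaction and a ${\bf SlowSort}^K$ per super-segment. The work/depth accounting matches. Two spots need tightening, however.

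First, you apply the marking rule ``extract every non-dominant-key element plus $\lceil 2^{-8k}\cdot n/K^3\rceil$ dominant-key elements'' to \emph{every} segment, whereas the paper first counts occurrences of the returned key to test whether the segment is $(1-\eta)$-uniform and, when it is not, marks the \emph{entire} segment for extraction. Your uniform rule does turn out to be sound: for any key $u$ that ${\bf FindDominant}$ returns (even garbage), orderliness gives $D_u - S_u \le \eta m$ where $D_u$ is the number of key-$u$ elements present and $S_u$ is the number of key-$u$ slots in the sorted segment, so after removing $\lceil\eta m\rceil$ of them the surviving multiset sits inside the sorted segment's multiset. But your sketch of the ``main obstacle'' only reasons about uniform segments and is silent about the at-most-$K$ non-uniform segments where ${\bf FindDominant}$'s output is arbitrary; that argument (or the paper's simpler extract-everything rule) needs to be supplied. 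Second, the phrase ``reverse-route through the ${\bf SlowAlign}$ and through the compaction'' would, taken literally, undo the key redistribution and restore the pre-extraction array: the correct sequence is to apply ${\bf SlowAlign}$ in the \emph{forward} direction and then reverse only the compaction's routing decisions. Your subsequent assertion that every element then lands in its correct super-segment only makes sense under that reading, so this appears to be a phrasing slip rather than a conceptual error, but it should be corrected.
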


\begin{proofof}{Theorem~\ref{thm:corrector}}
We consider the following algorithm.
\begin{mdframed}
\begin{center}
{\bf Fully sort an $(\eta, p)$-orderly array}
\end{center}

\paragraph{Input and parameters.}
The input is an array $A$ whose length $n$ is a power of $2$.
$A$ is promised to be $(\eta, p)$-orderly   
for $\eta := 2^{-8k}$ and $p := 2^{3k}$, where $k$ is a natural
number such that $6k < \log n$.
Henceforth we write $A$ as $A := A_1 || A_2 || \ldots || A_p$
where all $A_i$s are equally sized segments. 
Let $K := 2^k$ and let $m := n/p$.

\paragraph{Algorithm.}
\begin{enumerate}[leftmargin=5mm,itemsep=1pt]
\item 
\label{step:gb}
For each segment $i \in [p]$ in parallel: 
\begin{enumerate}[leftmargin=5mm,itemsep=0pt]
\item 
Call ${\sf key}^*_i := {\bf FindDominant}(A_i, K, |A_i|)$; 
\item 
\label{step:count}
Count the number of occurrences of ${\sf key}^*_i$ in $A_i$
to decide if $A_i$ is $(1-\eta)$-uniform.
\elaine{note: dont use k as key}
\item 
\label{step:labelgb}
If $A_i$ is $(1-\eta)$-uniform, 
mark the following elements as 
``{\tt misplaced}'': 
1) all elements whose key differ from ${\sf key}^*_i$; and 2) 
exactly $\ceil{\eta m}$ number of elements with the dominant 
key ${\sf key}^*_i$. \elaine{todo: explain how this is done in 
remark later}

Else, mark all elements in $A_i$ as  
``{\tt misplaced}''.

\end{enumerate}

\item 
All elements in $A$ calculate which segment it falls in ---
note that all elements can learn its position within $A$  
through an all-prefix-sum calculation,
and the segment number can be calculated from the element's 
position within $A$.

Call oblivious compaction to move all elements in $A$
marked with ``{\tt misplaced}'' to the front of the array, and
all other elements to the end; all elements carry their segment number 
in the process. 
Let the outcome be called $X$.
\label{step:compact-correction}

\item 
\label{step:slowalign-correction}
Call ${\bf SlowAlign}^{K, K^2}(X[1: 3n/K^2])$ on the first
$3n/K^2$ elements of $X$, where the first $2k$-bits of 
each element's segment number 
is used as the $\idx$ field in the 
${\bf SlowAlign}$ algorithm.

\item 
\label{step:reverseroute-correction}
Invoke the reverse routing algorithm of the 
compactor in Step~\ref{step:compact-correction} on the outcome of 
the previous step, and let 
$Y$ be the outcome.

\item 
We will now divide $Y$ into $K^2$ 
super-segments instead where each super-segment
is composed of $K$ original segments.
Write $Y := Y_1 || Y_2 || \ldots || Y_{K^2}$ as the concatenation of $K^2$ 
equally sized super-segments.
For each $i \in [K^2]$: 
check if $Y_i$ has multiple keys;
if so, label the super-segment as ``{\tt multi-key}''.
\label{step:multikey-correction}

\item 
Invoke an oblivious compaction algorithm 
to move all the super-segments marked ``{\tt multi-key}'' to the front
of the array (here 
the compaction algorithm treats each super-segment as an element). 
Let the outcome be $Z$.
\label{step:compact2-correction}


\item 
Now, for each of the beginning $K$ super-segments of $Z$ in parallel 
(where each super-segment is $n/K^2$ in size), 
use ${\bf SlowSort}^K$ to sort within each super-segment.

\label{step:sortwithin-correction}
\item
Finally, reverse route the outcome of the previous step
by reversing the decisions made in Step~\ref{step:compact2-correction},
and output the result.
\label{step:final-correction}
\end{enumerate}
\end{mdframed}

\begin{Remark}
In the above algorithm, Steps~\ref{step:count} and \ref{step:labelgb}
can be performed obliviously as follows:
\end{Remark}
\begin{itemize}[leftmargin=5mm,itemsep=1pt]
\item 
The counting in Step~\ref{step:count} can be performed
by aggregating over a binary-tree of 
depth $O(\log n)$.
\item 
If the segment $A_i$ is not $(1-\eta)$-uniform:
all elements in $A_i$ label itself as ``{\tt misplaced}''
(else all elements in $A_i$ pretend to write a label for obliviousness).
\item 
Else, every element whose key differs from the dominate key 
${\sf key}^*_i$
marks itself ``{\tt misplaced}''; moreover, using  
a generalized binary-to-unary conversion algorithm, 
the first $\ceil{\eta m}$ elements
with the dominant key ${\sf key}^*_i$ 
label itself
as ``{\tt misplaced}''. All remaining elements pretend
to write a label for obliviousness.
\end{itemize}

\paragraph{Correctness.}
We will now argue correctness of the above algorithm, 
i.e., that the result is fully sorted
as long as the input array is $(2^{-8k}, 2^{3k})$-orderly.
Since there are at most $K$ distinct keys, 
it must be that in a fully sorted array, 
at most $K$ out of the $p = 2^{3k} = K^3$ segments have multiple keys,
all remaining segments must have a single key.
Since the input array is $(2^{-8k}, 2^{3k})$-orderly, 
it means that in the input array, all but $K$ segments  
must be $(1-\eta)$-uniform.

For any $(1-\eta)$-uniform segment in the input array,  
if we extract from it all elements whose keys differ 
from the dominant key, as well as at least $\ceil{\eta m}$ number of elements
with the dominant key,  
then all remaining elements must belong to this segment 
if the array were fully sorted.
In Step~\ref{step:labelgb}, we label all such 
elements as ``{\tt misplaced}''; as well as all elements
in segments that are not $(1-\eta)$-uniform.
The total number of elements marked as ``{\tt misplaced}''
is upper bounded by 
\[
K \cdot m + 2\eta m (K^3 - K) \leq n/K^2  + 2 \eta n   
= n/K^2 + 2n/K^8
\leq 3n/K^2
\]

Therefore, after Step~\ref{step:compact-correction}
effectively $X[1:3n/K^2]$ contains 
all elements marked ``{\tt misplaced}''
as well as some additional elements that we want to extract,
such that all remaining elements belong to their segment.
Suppose that $i_1, i_2, \ldots, i_{K^2}$  
number of elements from each super-segment are contained
in $X[1:3n/K^2]$.  
In Step~\ref{step:slowalign-correction}
and Step~\ref{step:reverseroute-correction},
we move the smallest $i_1$ extracted elements back
to the first super-segment, then next smallest  
$i_2$ extracted elements to the second super-segment, and so on.
In the outcome of Step~\ref{step:reverseroute-correction},
every element must belong to the correct super-segment.

At this moment, we only need to sort the super-segments that 
are multi-keyed. The total number of multi-keyed super-segments is at most $K$.
This is accomplished as follows: Step~\ref{step:compact2-correction}
moves all multi-keyed super-segments to the front, and then 
sorts within each of the first $K$ super-segments.
Finally, Step~\ref{step:final-correction}
reverse routes all the super-segments back to their original positions.

\paragraph{Performance bounds.}
Since by assumption, $n > 2^{4k + 7}$, then 
the length of each segment $m := n/2^{3k} > 2^{k+7}$, and therefore
other assumption of 
Theorem~\ref{thm:finddominant}
is satisfied and we can use
Theorem~\ref{thm:finddominant} to characterize the performance of the 
${\bf FindDominant}$ step.
Steps~\ref{step:compact-correction}
and \ref{step:compact2-correction}
each incurs $O(n)$ work and $O(\log n)$ depth.
Step~\ref{step:slowalign-correction}
incurs $O(3n/K^2 \cdot K^2) = O(n)$
work and $O(\log n + k)$ depth.
Step~\ref{step:sortwithin-correction}
incurs $O(K \cdot (n/K^2) \cdot K) = O(n)$
work and $O(\log n + k)$ depth.
The costs of all other steps are upper bounded by $O(n)$
and $O(\log n + k)$ too.
\end{proofof}

\subsection{Sorting Short Keys on an Oblivious PRAM}

Now, we can put everything together and obtain an oblivious 
parallel algorithm
that sorts an input array with short keys.

\begin{theorem}[Restatement of Theorem~\ref{thm:intro-sort-opram}]
There exists a deterministic oblivious parallel algorithm 
that sorts any input array 
containing $n$ elements each with a $k$-bit key 
in $O(n) \cdot \min(k, \log n)$
total work and $O(\log n)$ depth, assuming that each element
can be stored in $O(1)$ words.
\label{thm:mainsort-opram}
\end{theorem}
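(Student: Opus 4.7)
\begin{proofof}{Theorem~\ref{thm:mainsort-opram}}
The plan is to reduce to the building blocks already established and dispatch on the relative sizes of $k$ and $\log n$. We assume without loss of generality that $n$ is a power of $2$ (round up by padding with $\infty$-keyed fillers, at most doubling the size and adding one bit to the key). We distinguish three regimes.

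First, if $k \geq \log n$, then $\min(k,\log n) = \log n$ and we simply invoke the AKS sorting network, which gives $O(n\log n)$ work and $O(\log n)$ depth -- this matches the bound $O(n)\cdot\min(k,\log n)$ exactly. Second, if $k < \log n$ but $n \leq 2^{4k+7}$, then $\log n \leq 4k + 7 = O(k)$, so again AKS gives $O(n\log n) = O(nk) = O(n\min(k,\log n))$ work and $O(\log n)$ depth, as required.

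The interesting case is $k < \log n$ and $n > 2^{4k+7}$, where we chain the two key building blocks. First I would apply the $(2^{-8k}, 2^{3k})$-orderly \splitter from Theorem~\ref{thm:segmenter} to the input. That theorem gives a comparator circuit with $O(n)\cdot\min(6k,\log n) = O(nk)$ comparators and $O(1)\cdot\min(6k,\log n) = O(k)$ depth; since each comparator is a constant-work oblivious operation on elements that fit in $O(1)$ words, the circuit is directly simulated by an oblivious PRAM algorithm of the same work and depth. The output is an array that is $(2^{-8k}, 2^{3k})$-orderly. Next, I would feed this output to the corrector algorithm of Theorem~\ref{thm:corrector}, whose hypothesis $n > 2^{4k+7}$ is exactly our case assumption; this fully sorts the array in $O(n)$ work and $O(\log n)$ depth. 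Summing the two phases, the total work is $O(nk) + O(n) = O(nk) = O(n\cdot\min(k,\log n))$ and the depth is $O(k) + O(\log n) = O(\log n)$ since $k < \log n$.

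There is essentially no obstacle left: all the hard work has been front-loaded into Theorems~\ref{thm:segmenter} and~\ref{thm:corrector}. The only things to verify are the mild bookkeeping items above, namely (i) that the three case splits collectively cover all $(n,k)$ and produce the claimed bound in each case, (ii) that the AKS network and the segmenter (both stated as comparator circuits of logarithmic or sub-logarithmic depth) simulate cleanly on an oblivious PRAM with the same asymptotic work and depth since elements fit in $O(1)$ words, and (iii) that the padding to a power of two does not affect the asymptotic bounds. Obliviousness of the overall algorithm follows because the case selection depends only on $n$ and $k$ (both public), and each constituent algorithm is already deterministic and oblivious.
\end{proofof}
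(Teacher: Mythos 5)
Your proposal is correct and follows essentially the same approach as the paper's proof: use AKS when $n \le 2^{4k+7}$ (a regime in which $O(n\log n)$ already matches $O(n)\cdot\min(k,\log n)$), and otherwise chain the nearly orderly segmenter (Theorem~\ref{thm:segmenter}) with the corrector (Theorem~\ref{thm:corrector}). Your three-way case split merely refines the paper's two-way split by separating out $k\ge\log n$ explicitly, and the bookkeeping observations (comparator network $\Rightarrow$ oblivious PRAM simulation, padding to a power of two, case selection depending only on public parameters) are all correct.
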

\begin{proof}
If $n \leq 2^{4k+7}$, we can just run AKS which takes
$O(n \log n)$ total work an $O(\log n)$.
Else, if $n > 2^{4k+7}$, 
we can accomplish the task with the following algorithm.

\begin{mdframed}
\begin{center}
{\bf Sorting short keys on an oblivious PRAM}
\end{center}

\paragraph{Input.}
An array $A$ of length $n$ each with a $k$-bit key
and a payload string. We assume that $n > 2^{4k+7}$ 
and moreover each element can be stored in $O(1)$ memory words.

\paragraph{Algorithm.}
\begin{enumerate}[leftmargin=5mm,itemsep=1pt]
\item 
Apply the $(2^{-8k}, 2^{3k})$-orderly segmenter  
construction of Theorem~\ref{thm:segmenter} to the input array $A$,
the outcome is a permutation of $A$ that is $(2^{-8k}, 2^{3k})$-orderly.
\item 
Apply the algorithm of Theorem~\ref{thm:corrector}
to correct the remaining errors and output the fully sorted result.
\end{enumerate}
\end{mdframed}
Given Theorems~\ref{thm:segmenter}
and \ref{thm:corrector}, it is not hard to see that the algorithm
takes $O(nk)$
work and $O(\log n)$ depth.
\end{proof}


\elaine{shall we get rid of generalized boolean gates}

\ignore{TO ADD: we shall first assume that selector is depth 1. 
at the end, we will actually compute all selectors in parallel.}

\section{Building Blocks for the Circuit Model}
\label{sec:circuitbldgblock}

\subsection{Our Operational Circuit Model}
Our result will be stated 
using the standard circuit model of computation~\cite{savagebook}
where the circuit consists of AND, OR, and NOT gates;
and moreover each AND and OR gate has  
constant fan-in and constant fan-out.

For convenience, we shall use an operational
model that consists of generalized boolean gates and (reverse) selector gates.
A {\it generalized boolean gate}  \elaine{do we need this?}%
has constant fan-in and constant fan-out, and implements  
any truth table between the inputs and outputs.
A {\it $w$-selector gate}
is a selector gate that 
takes in a 1-bit flag and two $w$-bit payload strings, and 
outputs one of the two payload strings determined by the flag.
A {\it reverse selector} gate is the opposite. 
A $w$-reverse selector gate takes one 
element $x$ of bit-width $w$ and a flag $b \in \{0, 1\}$ as input and 
outputs $(m, 0^w)$ if $b=0$ and $(0^w,m)$ if $b=1$. 
In our construction later, we will often use
reverse selector gates to preform ``reverse routing'', where we reverse the 
routing decisions 
made by earlier selector gates. 
Henceforth in the paper whenever we count 
selector and reverse selector gates, we do not 
distinguish between them and count
both towards selector gates. 

We say a circuit is in the \emph{indivisible} model if and only if the input 
to the circuit 
consists of 
elements with $k$-bit keys and $w$-bit payloads,
and the circuit never performs boolean computation on the payload
strings, that is, the payload strings are only moved around
using 
$w$-selector gates.

\begin{lemma}[Technical lemma about our operational circuit model] 
In the indivisible model, any circuit with $n$ generalized  
boolean gates, $n'$ number of $w$-selector gates
and of depth $d$ can be implemented as a boolean circuit  
(having constant fan-in and constant fan-out)
of size $O(n + n' \cdot w)$
and depth $O(d + \log w)$.
\label{lem:operational}
\end{lemma}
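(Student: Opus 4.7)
The plan is to exploit the structure of the indivisible model in order to avoid an $O(\log w)$ \emph{multiplicative} depth blowup, which is what a naive implementation incurs when each selector gate's 1-bit flag is fanned out to $w$ bit-level gadgets above it. The key observation is that in the indivisible model, $w$-bit payload wires feed only into payload inputs of later $w$-selector or $w$-reverse-selector gates; they never enter a generalized boolean gate and, in particular, never become a flag input. Consequently, the original circuit decomposes into two sub-DAGs: a boolean sub-DAG $C_b$ of depth $d_b \le d$ comprising all generalized boolean gates, which in particular produces every flag bit used downstream; and a selector sub-DAG $C_s$ of depth $d_s \le d$ comprising all selector and reverse-selector gates, whose flag inputs are drawn from outputs of $C_b$ and whose payload inputs come from original $w$-bit inputs or from payload outputs of other gates in $C_s$.

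First I would implement $C_b$ directly: each constant-fan-in, constant-fan-out generalized boolean gate is replaced by its $O(1)$-gate truth-table realization over AND/OR/NOT, contributing size $O(n)$ and depth $O(d_b)$. Next, for each selector gate in $C_s$ with flag $f$, I would (i) build a binary fan-out tree from $f$ of depth $O(\log w)$ producing $w$ independent copies of $f$ using constant-fan-out replication gates, and (ii) implement the $w$ bit-level selections in one parallel layer of $O(w)$ constant-size gates. Reverse selectors are handled identically. This yields $O(w)$ gates per selector and thus $O(n' w)$ gates in total for $C_s$, matching the claimed size $O(n + n' w)$.

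The main subtlety, and the place the argument could go wrong, is the depth accounting: one must ensure that the $\log w$ fan-out term appears \emph{additively at most once} along any root-to-leaf path, rather than once per selector level. This is exactly where the indivisibility hypothesis is essential. Consider any source-to-sink path in the new circuit. Once the path crosses into a payload wire (by entering a payload input of some selector), it remains on payload wires forever thereafter, since payloads cannot re-enter $C_b$ or reach any flag input. Thus the path decomposes as: a prefix inside the implementation of $C_b$ of some length $d_b' \le d_b$; followed by at most one traversal of a single flag fan-out tree, contributing $O(\log w)$; followed by a suffix through bit-level selector gadgets that corresponds to a path in $C_s$ of length $d_s'$, each gadget contributing $O(1)$ depth. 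The concatenation of the boolean prefix and the selector suffix corresponds to a valid path in the original circuit, so $d_b' + d_s' \le d$, and the total depth is $d_b' + O(\log w) + O(d_s') = O(d + \log w)$. If the indivisibility assumption were dropped, a path could leave and re-enter the selector network multiple times, each transition paying $O(\log w)$, which would destroy the additive bound; so this step is where indivisibility is genuinely used.
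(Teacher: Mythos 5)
Your proof is correct and matches the paper's construction: both replace generalized boolean gates by $O(1)$-sized truth-table realizations, fan each selector's flag out $w$ ways over an $O(\log w)$-depth tree, implement each $w$-selector as $w$ parallel bit-level gadgets, and invoke indivisibility to show the $O(\log w)$ cost is paid additively rather than multiplicatively. The only difference is presentational: the paper phrases the depth argument as a two-phase ``partial evaluation'' (evaluate with dummy $0^w$ payloads to populate all flags at depth $O(d)$, then fan out in parallel, then run the selector network), whereas you argue it path-by-path, observing that any source-to-sink path in the transformed circuit crosses at most one fan-out tree because a payload wire can never re-enter the boolean sub-DAG. These are two views of the same fact --- your path decomposition is arguably the more explicit justification of why the paper's phase layout works --- so I would call this the same approach.
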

\begin{proof}
Generalized boolean gates can be easily replaced with 
AND, OR, and NOT gates without incurring additional asymptotical overhead. 
The key is how to instantiate all the $w$-selector
gates without blowing up the 
circuit's depth by a $\log w$ {\it multiplicative} factor.

First, imagine we have a ``partial evaluation'' 
circuit where payloads are fake and of the form $0^w$.
In this way, we can implement every $w$-selector gate  
as a degenerate one that takes $O(1)$ depth, 
since the outputs are always $0^w$.
Evaluating this partial evaluation circuit will populate
the flags on all selector gates. 
Notice such partial evaluation relies on the circuit being indivisible and thus
populating a flag is independent of any result of any $w$-selector.

Since we are subject to constant fan-in and constant fan-out gates, 
to implement an actual selector gate will require replicating
the gate's flag $w$ times, and then use $w$ generalized boolean
gates, one for selecting each bit. 
After the partial evaluation phase, all selector gates
can perform this $w$-way replication in 
parallel, incurring an additive rather than multiplicative $\log w$ overhead. 
At this point, we can instantiate each $w$-selector gate
using one generalized boolean gate  
for each of the $w$ bits being selected.

Therefore, the total circuit size is $(n + n' \cdot w)$
and the depth is $O(d + \log w)$.
\end{proof}

\ignore{todo: remove the appendix on circuit model, move
reserve selector gates to this section}

We define some useful circuit gadgets below.


\paragraph{Comparator.}
A $k$-bit comparator takes two values each of $k$ bits,
and outputs an answer from a constant-sized result set such as $\{>, <, =\}$, or
$\{\geq, <\}$, or $\{\leq, >\}$.
Note that the outcome can be expressed as 1 to 2 bits.

\begin{fact}
A $k$-bit comparator can be implemented with a circuit 
with $O(k)$ generalized boolean gates
and $O(\log k)$ depth.
\label{fct:comparator}
\end{fact}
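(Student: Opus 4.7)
The plan is to use a standard divide-and-conquer (prefix-style) approach. First, for each bit position $i \in \{0,1,\dots,k-1\}$, compute in parallel a local comparison symbol $c_i \in \{<,=,>\}$ from the input bits $(a_i,b_i)$. Each $c_i$ can be encoded in $2$ bits and computed by an $O(1)$-size, $O(1)$-depth generalized boolean gadget from the two input bits, so this stage uses $O(k)$ gates at depth $O(1)$.

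Next, observe that the ternary comparison of $a$ and $b$ is obtained by combining the $c_i$'s under the associative operation $\oplus$ defined on $\{<,=,>\}$ by $c \oplus c' := c$ if $c \neq {=}$ and $c \oplus c' := c'$ otherwise. That is, the higher-order position's verdict dominates unless it is an equality, in which case we defer to the lower-order result. Associativity of $\oplus$ is immediate from the case analysis (there are only three symbols). Each application of $\oplus$ operates on two $O(1)$-bit encodings and can therefore be realized by an $O(1)$-size, $O(1)$-depth generalized boolean gadget.

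Given associativity, I would combine the $k$ symbols $c_{k-1},\dots,c_0$ (ordered from most significant to least significant) using a balanced binary tree of $\oplus$-gadgets. The tree has depth $\lceil \log_2 k \rceil$, and at level $\ell$ it contains $\lceil k / 2^\ell \rceil$ gadgets, giving total size $O(k)$ and depth $O(\log k)$. The root produces the desired ternary comparison symbol, from which any output in a constant-sized result set such as $\{<,=,>\}$, $\{\geq,<\}$, or $\{\leq,>\}$ is obtained by an $O(1)$-size, $O(1)$-depth boolean post-processing stage.

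The main (and really only) thing to check carefully is that the combining operation $\oplus$ is indeed associative and correctly computes the lexicographic comparison of $a$ and $b$ when the $c_i$ are combined from most significant to least significant; this is a straightforward verification. The overall size and depth bounds then follow from a routine analysis of the balanced tree. Summing the contributions of the three stages yields a circuit of size $O(k)$ and depth $O(\log k)$, as claimed.
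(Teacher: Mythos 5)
Your proof is correct, and since the paper states this fact without proof you are supplying the missing argument rather than diverging from one. The construction you give — per-bit ternary comparison symbols $c_i$, the associative ``high-order-dominates-unless-equal'' combining operation $\oplus$, and a balanced tree of $O(1)$-size $\oplus$-gadgets — is the standard way to realize a comparator as a parallel prefix/reduction circuit, and your size ($O(k)$) and depth ($O(\log k)$) accounting is right. Your associativity check is the only non-routine point and it does go through: if $c\neq{=}$ then both $(c\oplus c')\oplus c''$ and $c\oplus(c'\oplus c'')$ equal $c$, and if $c={=}$ then both equal $c'\oplus c''$. One tiny remark: you don't actually need associativity in the abstract to justify the tree; it suffices that $\oplus$ is associative \emph{and} the tree respects the left-to-right order of $c_{k-1},\dots,c_0$ (you already order them correctly), so your verification is exactly what is needed.
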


\paragraph{Delayed-carry representation and shallow addition.}
Adding two $\ell$-bit numbers in binary representation 
takes $O(\log \ell)$ depth.
We will later need adders that are constant in depth.
To do this, we can use a Wallace-tree-like trick
and adopt a delayed-carry representation of numbers.

We represent an $\ell$-bit number $v$ 
as the addition of two $\ell$-bit numbers, i.e., $v := x + y$.
Here, it must be that the sum $x + y \leq 2^\ell-1$ can still
be presented as $\ell$-bits; moreover, 
the delayed-carry representation of $v$ is not unique.
Given two $\ell$-bit numbers $v_1 := x_1 + y_1$ and $v_2 := x_2 + y_2$
in this delayed-carry representation, we can compute 
the $(\ell+1)$-bit number $v_1 + v_2$
as follows where the answer is also in delayed-carry representation:
\begin{enumerate}[leftmargin=5mm,itemsep=1pt]
\item 
Compute a delayed-carry representation of $x_1 + y_1 + x_2$,
and let the result be $x' + y'$.
This can be done by summing up the $i$-th bit of
$x_1$, $y_1$, and $x_2$ respectively, for each $i \in [\ell]$.
For each $i \in [\ell]$, the sum of the three bits can be expressed
as a 2-bit number, where the first bit becomes the $i$-th bit of 
$x'$ and the other bit becomes the $(i+1)$-st bit of $y'$.
\item 
Now, using the same method, compute and output a delayed-carry 
representation of $x' + y' + y_2$.
\end{enumerate}
\ignore{
we can compute two $(\ell + 2)$-bit numbers $x', y'$ 
such that $x' + y' = v_1 + v_2$, i.e., $x', y'$ is a delayed-carry 
representation of $v_1 + v_2$.  
Further, $x'$ and $y'$ can be computed  
}
The above can be accomplished 
with $O(\ell)$ generalized boolean gates and in $O(1)$ depth. 
Henceforth this is called a {\it shallow addition}.

\paragraph{Counting.}
We will need a simple circuit gadget that counts the number
of $1$s in an input array containing $n$ bits.

\begin{fact}
Given an input array containing $n$ bits, 
counting the number of 1s in the input
array can be realized with 
a circuit of size $O(n)$ 
and depth $O(\log n)$.
\label{fct:countprelim}
\end{fact}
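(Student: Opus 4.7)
The plan is to build a balanced binary summation tree over the $n$ input bits and maintain the running count at each internal node in the paper's delayed-carry representation, merging two children by a single shallow addition. A leaf is trivially already in delayed-carry form: the input bit $b$ is represented as $b+0$. At an internal node at level $\ell$ above the leaves, the subtree contains $2^\ell$ bits, so the count lies in $[0, 2^\ell]$ and fits in $\ell+1$ bits; combining the two children is one shallow addition of two $(\ell+1)$-bit delayed-carry numbers.

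First I would invoke the shallow addition primitive described just before this fact. The text states that combining two $\ell$-bit delayed-carry numbers uses $O(\ell)$ generalized boolean gates and $O(1)$ depth. Summing over the levels of the tree, the total gate count is
\[
\sum_{\ell=1}^{\lceil \log n \rceil} \frac{n}{2^\ell} \cdot O(\ell) \;=\; O(n),
\]
since $\sum_{\ell \geq 1} \ell/2^\ell$ is a convergent constant. The total depth is one shallow addition per level, i.e.\ $\sum_{\ell=1}^{\lceil \log n \rceil} O(1) = O(\log n)$.

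At the root, the output is a delayed-carry pair $(x,y)$ with $x,y$ each of $O(\log n)$ bits whose sum equals the number of $1$s in the input. To produce the final binary count, I would apply one standard carry-lookahead addition on these two $O(\log n)$-bit numbers, which takes $O(\log n)$ generalized boolean gates and $O(\log\log n)$ depth; this does not disturb either asymptotic bound.

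The main obstacle to watch is depth: if one naively used ripple-carry or tree-style addition at each internal node, each merge at level $\ell$ would cost $O(\log \ell)$ depth, giving total depth $\Theta(\log^2 n)$. Using delayed-carry keeps every merge at $O(1)$ depth, which is exactly why the delayed-carry gadget was introduced just above. A secondary point to check is that the bit-widths compose correctly across the tree, but this is immediate because a shallow addition of two $(\ell+1)$-bit delayed-carry numbers yields a delayed-carry representation of an $(\ell+2)$-bit number, matching what the next level up requires. \qed
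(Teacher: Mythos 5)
Your proof is correct and takes essentially the same route as the paper: a balanced binary summation tree where each merge is a constant-depth shallow addition on delayed-carry representations, with sizes $O(\ell)$ at level $\ell$ summing to $O(n)$, followed by one final conversion from delayed-carry to binary at the root. The paper phrases this as a modification of Fact 4.3 of Asharov et al., but the tree-of-shallow-adders structure and the level-by-level size accounting are the same as yours.
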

\begin{proof}
We can use the algorithm in Fact 4.3 of Asharov et al.~\cite{soda21}, but use 
the delayed-carry representation of numbers, and  
replace all adders with shallow adders.
Essentially, the numbers are added over a binary tree, where
in the leaf level (also called the last level), every number is 
promised to be at most $1$-bit long; in the second to last level,
every number is promised to be at most $2$-bit long; and so on. 
In this way, the total circuit size for the entire tree of adders 
is $O(n)$.
At the end of the algorithm, we perform a final addition to convert
the delayed-carry representation of the answer to a binary representation.
\end{proof}


\paragraph{All prefix sums.}
We consider an all-prefix-sum circuit gadget, which upon receiving
an input $A$ containing $n$ non-negative integers, 
outputs the sums of all $n$ prefixes, that is,  $A[:1], A[:2], 
A[:3], \ldots, A[:n]$. 
It is promised that the sum of  
the entire array $A$ can be stored in $\ell$ bits.

\begin{fact}
For any $\ell \le n$, there is a circuit composed of at most $O(n \ell)$ 
generalized boolean gates
and of depth $O(\log n)$ that 
solves the aforementioned all-prefix-sum problem.
\label{fct:prefixsumcircuit}
\end{fact}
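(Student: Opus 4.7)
The plan is to implement the standard parallel prefix-sum (scan) algorithm on a balanced binary tree, but to avoid the $\log\ell$ blowup in depth that a naive adder would incur, we carry all intermediate partial sums in the delayed-carry representation introduced for Fact~\ref{fct:countprelim}. Concretely, build a complete binary tree with the $n$ inputs at the leaves. In an \emph{up-sweep} pass, compute at each internal node the sum of the values in its subtree, where every such sum is stored as a pair $(x, y)$ with $x + y$ equaling the true sum. Each internal node performs one shallow addition on the delayed-carry outputs of its two children, which costs $O(\ell)$ generalized boolean gates and $O(1)$ depth. Hence the up-sweep uses $O(n \ell)$ gates and $O(\log n)$ depth in total.

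Next, perform a \emph{down-sweep} in the Blelloch style: the root's ``prefix-so-far'' is initialized to $0$; each node passes its prefix-so-far to its left child and passes (prefix-so-far $+$ left subtree sum) to its right child. Both of these handoffs are again single shallow additions in the delayed-carry representation, and the left subtree sum was already computed during the up-sweep. After the down-sweep, each leaf $i$ holds a delayed-carry pair $(x_i, y_i)$ with $x_i + y_i = A[1] + \cdots + A[i-1]$. Adding $A[i]$ with one more shallow addition gives the $i$-th prefix sum in delayed-carry form. So far we have spent $O(n\ell)$ gates and $O(\log n)$ depth, and by the promise that the entire sum fits in $\ell$ bits, every intermediate value fits in $\ell$ bits so the delayed-carry pairs never overflow.

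The final step is to convert each of the $n$ delayed-carry pairs back to an ordinary $\ell$-bit binary number. This is one honest $\ell$-bit addition per output, which a standard carry-lookahead adder realizes with $O(\ell)$ generalized boolean gates and $O(\log \ell)$ depth. Since $\ell \le n$ we have $\log \ell = O(\log n)$, so the $n$ parallel conversions contribute $O(n \ell)$ gates and only $O(\log n)$ additional depth. Putting the three phases together yields the claimed $O(n\ell)$-size, $O(\log n)$-depth circuit.

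The main obstacle is the usual one for this kind of construction: ensuring that every addition along the way stays within $\ell$ bits so that the delayed-carry invariant $x+y = \text{true value}$ never overflows the allotted width. This is exactly where the hypothesis ``the sum of the entire array $A$ can be stored in $\ell$ bits'' is used; it guarantees that every prefix sum, and every subtree sum computed in the up-sweep, is also at most $\ell$ bits. Once this is observed, the depth bound follows because every step in both sweeps is a constant-depth shallow addition, the tree has $O(\log n)$ levels, and the single final carry-propagating conversion fits within $O(\log n)$ depth by the assumption $\ell \le n$.
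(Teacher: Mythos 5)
Your proof is correct and takes essentially the same approach as the paper: both use the standard recursive/tree-based parallel prefix-sum with all intermediate values in delayed-carry representation (so each tree level costs $O(1)$ depth), followed by a final carry-propagating conversion whose $O(\log\ell)$ depth is absorbed into $O(\log n)$ via $\ell \le n$. Your up-sweep/down-sweep phrasing is just an explicit unrolling of the paper's recursive ${\bf AllPrefixSum}$ formulation.
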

\begin{proof}
We can use the standard prefix sum algorithm, but represent
all numbers using the delayed-carry representation,
and use shallow addition which can be computed in constant depth. 

\begin{mdframed}
\begin{center}
${\bf AllPrefixSum}(A)$
\end{center}
\paragraph{Input:} An array $A$ containing $n$ bits, 
where $n$ is a power of $2$.
We assume that each bit $A[i]$ is 
represented in a delayed-carry representation
as the sum of $A[i]$ and $0$.

\paragraph{Algorithm:}
\begin{enumerate}[leftmargin=5mm,itemsep=1pt]
\item 
If $n = 1$, return the only element of $A$.
Else proceed with the following
\item 
Let $A'$ be the array of length $n/2$ containing sums of adjacent pairs in $A$.
$A'$ can be computed from $A$ 
using $n/2$ shallow additions.
\item 
Compute $S := {\bf AllPrefixSum}(A')$.
\item 
Compute the all-prefix-sum 
array for $A$ from $S$, filling the 
gaps by performing $n/2$ shallow additions.
\end{enumerate}
\end{mdframed}

If we run the ${\bf AllPrefixSum}$ algorithm using 
the delayed-carry representation, 
the outcome will be $n$ prefix sums where the $i$-th prefix
sum is expressed the sum of two numbers, $s_i$ and $t_i$. 
Finally, we can compute $s_i + t_i$ in parallel 
for all $i \in [n]$ in parallel, taking $O(\log\ell) \le O(\log n)$ depth.
The entire circuit for computing all $n$ prefix sums
takes $O(n \ell)$ generalized boolean gates 
and $O(\log n)$ depth.
\elaine{do we need concrete constants?}
\end{proof}

\ignore{
\paragraph{Find in array.}
Given an input array containing $n$ elements
each with a $k$-bit label and a $w$-bit payload,
given also a desired $k$-bit label $L^*$.
Find the first occurrence in the input array an element whose label is $L^*$
and output any fixed canonical value if not found.
This can be implemented 
by comparing each element with the desired label $L^*$ 
and aggregating the result over a binary tree.

\begin{fact}
The above task to find an element with desired $k$-bit label
in an array of length $n$ can be achieved in a circuit 
with $O(nk)$
generalized boolean gates,  
$n$ number of $w$-selector gates, and of depth $O(\log n)$.
\label{fct:findinarray}
\end{fact}
}
\elaine{deleted find in array}

\paragraph{Generalized binary-to-unary conversion.}
\ignore{
Imagine that there are $n$ receivers where the $i$-th receiver
is labeled with an indicator bit ${\bf x}[i]$.
We are given an integer $\ell \in \{0,1,\dots,n\}$ expressed
in binary representation, 
and we want to output an array of $n$ bits where the $i$-th bit represents
the bit received by the $i$-th receiver. 
We want that the first $\ell$ receivers marked with $1$ 
receive $1$, and all other receivers marked with $1$ 
receive $0$. The receivers marked with $0$ may receive an arbitrary bit.
Note that in the special case 
that all receivers are marked with $1$, then the problem
boils down to converting an integer $\ell \in \{0, 1, \ldots, n\}$ 
expressed in binary representation to a corresponding unary string.
}
The generalized binary-to-unary conversion problem
has been defined earlier in Section~\ref{sec:prambldgblock}.
Earlier, we also showed how to solve it on an oblivious PRAM  
in linear total work and logarithmic depth. 
It turns out that it is a little trickier if we want 
to accomplish the same with a linear-sized 
and logarithmic depth circuit. This is because on a PRAM,  
arithmetic and boolean operations on $\log n$ bits can be performed in unit cost,
whereas in a circuit model, we charge $O(\log n)$.

We can solve the generalized binary-to-unary conversion problem 
with the following algorithm.
Without loss of generality, we can assume that $n$ is a power of $2$;
if not, we can round $n$ up to the nearest power of $2$.

\begin{mdframed}
\begin{center}
{\bf Generalized binary-to-unary conversion circuit}
\end{center}
\begin{enumerate}[leftmargin=5mm,itemsep=1pt]
\item 
First, we apply the counting circuit of Fact~\ref{fct:countprelim} 
to the input array ${\bf x}$. Specifically, we compute the sum
over a binary tree using the delayed-carry representation  
of numbers. At the end of this step,
every tree node stores the sum of its subtree, in 
delayed-carry representation. 

Henceforth, let $S(v)$ denote the 
sum of the subtree rooted at the node $v$.
We may assume that all numbers below use  
a delayed-carry representation.

\ignore{
\item
The root node now receives the number $\ell$. 
Let ${\sf lc}$ and ${\sf rc}$ be its left child and right child,
respectively.
The root sends $v$ to ${\sf lc}$ and sends $v - S({\sf lc})$ to ${\sf rc}$.
}
\item 
\label{step:passnumber}
For convenience, assume that the root receives $\ell$ 
from an imaginary parent.

From level $i = 0$ to $\log n - 1$: every node in level $i$ performs
the following. Let $S$ be the number received
from its parent,  and let ${\sf lc}$ and ${\sf rc}$ denote
the node's left child and right child, respectively. 
Send $S$ to ${\sf lc}$ and send $S - S({\sf lc})$
to ${\sf rc}$. 

\item 
\label{step:passlabel}
For convenience, assume that the root receives the label ``{\tt M}'' 
from an imaginary parent.

From level $i = 0$ to $\log n - 1$, every node in level $i$ does the following
where ${\sf lc}$ and ${\sf rc}$ denote its left child and right child,
respectively:
\begin{itemize}[leftmargin=5mm,itemsep=1pt]
\item
If the label received from its parent is not ``{\tt M}'', then 
pass the label to both children; 
\item 
Else, let $S$ be the number received earlier from its parent. 
\begin{itemize}[leftmargin=5mm,itemsep=1pt]
\item
if $S \geq S({\sf lc})$
then pass ``1'' to ${\sf lc}$ and 
pass ``{\tt M}'' to ${\sf rc}$; 

\item
else, pass ``{\tt M}'' to ${\sf lc}$ and pass ``0'' to ${\sf rc}$. 
\end{itemize}
\end{itemize}
\item 
If a leaf node receives ``0'' or ``1'' from the parent, then output 
the corresponding bit. 
Otherwise, let $S$ be the 1-bit number  
received from the parent, output $S$.
\end{enumerate}

\paragraph{Implementation as a circuit.}
All numbers use a delayed-carry representation. 
Let $v_1 := x_1 + y_1$ and $v_2 := x_2 + y_2$ be two $\ell$-bit numbers
in delayed-carry representation, and suppose that $v_1 \geq v_2$.
Then, $v_1 - v_2$ can be derived 
by computing $x_1 + x_2 + \overline{y}_1 + \overline{y}_2 + 2$
and keeping only the last $\ell$ bits, 
where $\overline{y}_b$ denotes the number obtained by flipping
all bits of $y_b$ for $b \in \{1, 2\}$.
Therefore, we can use the shallow addition trick to compute subtraction.
Of course, before a node receives the label  
from $\{{\tt M}, 0, 1\}$ from its parent, 
it is not guaranteed that $S \geq S({\sf lc})$, but we can just pretend
it will be the case and continue.
Therefore, Step~\ref{step:passnumber}
can be implemented in $O(\log n)$ depth.

Step~\ref{step:passlabel}
must be implemented in a pipelined manner to save depth: basically, as soon
as a node receives the number $S$ from its parent during 
Step~\ref{step:passnumber}, it immediately
starts to compute the comparison between $S$ and $S({\sf lc})$
which takes $O(\log \log n)$ depth.
In other words, the nodes do not wait for its parent
to compute this comparison before it computes its own
comparison, but rather 
pre-computes this comparison ahead of time. 
Using this pipelining trick, 
Step~\ref{step:passlabel}
can also be accomplished in $O(\log n)$ depth.

Finally, observe that $S$   
is at most $\log n + 1$ bits at the root; and at level $i$ it is at most 
$\log n + 1 - i$ bits. 
Therefore, the above can be implemented with an $O(n)$-sized circuit.
\end{mdframed}

This gives rise to the following fact.
\begin{fact}
There is a circuit with $O(n)$ generalized boolean gates 
and of $O(\log n)$ depth that solves the aforementioned generalized
binary-to-unary conversion problem.
\label{fct:binary_to_unary}
\end{fact}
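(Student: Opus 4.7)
The plan is to verify correctness by an inductive invariant on the levels of the binary tree, and then to bound the circuit size and depth separately, with the depth analysis being the main obstacle.

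For correctness, I would establish by induction on the level $i$ the following invariant for every node $v$ at level $i$: $v$ receives from its parent a pair $(S_v, L_v)$ with $L_v \in \{\mathtt{M}, 0, 1\}$ such that (i) if $L_v = \mathtt{M}$, then in any correct output the first $S_v$ receivers in $v$'s subtree whose indicator bit equals $1$ must output $1$ and all other $1$-marked receivers in the subtree must output $0$; (ii) if $L_v \in \{0, 1\}$, then every $1$-marked receiver in $v$'s subtree must output $L_v$. The base case holds at the root, which receives $(\ell, \mathtt{M})$ from the imaginary parent. The inductive step is immediate when $L_v \in \{0, 1\}$ (the label is simply copied to both children), and when $L_v = \mathtt{M}$ it reduces to the case split in Step~3: comparing $S_v$ with the left subtree's count $S(\mathrm{lc})$ determines whether the boundary lies in the left or the right subtree, and the budget passed to the recursing child is $S_v$ or $S_v - S(\mathrm{lc})$ accordingly, while the sibling receives a final label $0$ or $1$. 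At the leaves the invariant directly yields the claimed output bit.

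For the size bound, Step~1 reuses the shallow-adder binary-tree aggregation of Fact~\ref{fct:countprelim} to obtain every subtree sum $S(v)$ in delayed-carry form, using $O(n)$ gates. In Steps~2 and 3, a node at level $i$ manipulates numbers of at most $O(\log n - i + 1)$ bits, since every subtree sum at level $i$ is bounded by $n/2^i$ and the remaining budget at level $i$ is bounded by the subtree sum at level $i-1$. Shallow subtraction in Step~2 and the comparison $S_v \ge S(\mathrm{lc})$ in Step~3 each use $O(\log n - i + 1)$ gates per node, and the level-wise totals telescope to $\sum_{i=0}^{\log n} 2^i \cdot (\log n - i + 1) = O(n)$.

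The main obstacle is the depth bound. A naive implementation of Step~3 would serialize label propagation and pay for a comparison at every level, yielding depth $O(\log n \cdot \log\log n)$ because comparing two $\Theta(\log n)$-bit numbers in delayed-carry form costs $\Theta(\log\log n)$. The pipelining trick fixes this: once $S_v$ reaches a node (at time $O(i)$ for a level-$i$ node via shallow subtractions down the tree), that node launches its comparison $S_v \ge S(\mathrm{lc})$ in parallel with the ongoing propagation in the rest of the tree, so the comparison at level $i$ is ready by time $O(i) + O(\log\log n)$. Label propagation in Step~3 then needs only $O(1)$ extra depth per level on top of the already-computed comparisons. Unrolling the resulting recurrence $T(i) \le \max(T(i-1),\, O(i) + O(\log\log n)) + O(1)$ gives $T(\log n) = O(\log n)$, which combined with the $O(\log n)$ depth of Steps~1 and 2 yields the claimed $O(\log n)$ total depth.
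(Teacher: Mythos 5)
Your proposal is correct and follows essentially the same approach as the paper: the same binary-tree algorithm with delayed-carry arithmetic, the same per-level bit-width bound giving $O(n)$ size, and the same pipelining trick (launch the comparison at a node as soon as its budget arrives, in parallel with downstream propagation) to keep the depth at $O(\log n)$ rather than $O(\log n \cdot \log\log n)$. The added inductive invariant and the explicit recurrence $T(i)\le\max(T(i-1),\,O(i)+O(\log\log n))+O(1)$ are just a more formal rendering of what the paper argues informally.
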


\ignore{
\elaine{TO ADD: we need a generalized version of the binary-unary conv.}

\paragraph{Binary to unary conversion.}
Fix $n \in \N$. For any integer $\ell \in \{0,1,\dots,n\}$,
the binary-to-unary conversion 
takes as input $\ell$ and then outputs an $n$-bit string $u$,
where $\ell$ is represented in binary (i.e., a 
string of $\ceil{\log n} + 1$ bits),
and $u$ is the unary representation of $\ell$, i.e.,
the head $\ell$ bits of $u$ are all 0s and the tail $n -\ell$ 
bits of $u$ are all 1s. Suppose $n$ is a power of 2
and $\ell$ is represented in $\log n + 1$ bits.
Asharov et al.~\cite{soda21}
describes a circuit of size $O(n)$ and depth $O(\log n)$ that accomplishes
the aforementioned binary-to-unary conversion task, as stated
in the following fact:

\begin{fact}
The binary-to-unary conversion task
can be implemented in a circuit 
of size $O(n)$ and depth $O(\log n)$.

\elaine{do we need to state it as generalized boolean  gates in these facts?}
\label{fct:binary_to_unary}
\end{fact}

}

\section{Lossy Loose Compaction Circuit}
\label{sec:llc}

\subsection{Definitions}

\paragraph{Lossy loose compactor.}
Let $\alpha \in [0, 1)$. An $(\alpha, n, w)$-lossy loose compactor 
(also written as $\alpha$-lossy loose compactor
when $n$ and $w$ are clear from the context)
solves the following problem: 
\begin{itemize}[leftmargin=5mm,itemsep=1pt]
\item 
{\bf Input}:
an array ${\bf I}$ containing $n$ elements 
of the form $\{(b_i, v_i)\}_{i \in [n]}$, 
where each $b_i \in \{0, 1\}$ is 
a metadata bit indicating 
whether the element is {\it real} or {\it filler},
and each $v_i \in \{0, 1\}^w$ is the {\it payload}.
The input array is promised to 
{\it have at most $n/128$ real elements}.

\item 
{\bf Output}:
an array ${\bf O}$ containing $\floor{n/1.9}$ elements, 
such that ${\it mset}({\bf O}) \subseteq {\it mset}({\bf I})$,
and moreover, 
$|{\it mset}({\bf I}) - {\it mset}({\bf O})| 
\leq \alpha n$
where ${\it mset}({\bf O})$
denotes the {\it multiset} of 
real elements contained in ${\bf O}$, and 
${\it mset}({\bf I})$
is similarly defined.
\end{itemize}

In other words, lossy loose compaction takes a relatively sparse 
input array containing only a small constant fraction of real elements;
it compresses the input 
to slightly more than half its original length\footnote{It is 
not exactly
half the original length 
due to rounding issues --- See Remark~\ref{rmk:notdivisible}.}
while preserving 
all but $\alpha \cdot n$ real elements in the input.

\paragraph{Loose compactor.}
If $\alpha = 0$, i.e., there is no loss,
we also call it a loose compactor.  
More formally, an $(0, n, w)$-lossy loose compactor is also called 
an $(n, w)$-loose compactor.

\subsection{Intuition for the Next 4 Sections:
Bootstrapping an Efficient Lossy Loose Compactor}
\label{sec:roadmap:bootstrap}
Fix an arbitrary constant $C > 2$.
First, we want to construct 
a $1/(\log n)^C$-lossy loose compactor
that has $O(n \cdot w)$ generalized boolean gates, 
$O(n)$ number of $w$-selector gates (ignoring $\poly\log^*$ terms), 
and with depth $O(\log^{0.5}n)$ --- here $w$ denotes
the bit-width of an element's payload.

We could get an {\it inefficient} $1/(\log n)^C$-lossy loose compactor
(for an arbitrary constant $C > 1$)
using techniques described in Asharov et al.~\cite{paracompact}: 
specifically, the 
resulting $1/(\log n)^C$-lossy loose compactor requires
$O(n \log \log n)$ generalized boolean gates, $O(n)$ number of $w$-selector
gates, and incurs depth $O(\log \log n)$. 
If $w \geq \log \log n$, we would then be able to implement
this as a constant fan-in, constant fan-out boolean circuit 
of $O(n w)$ size and $O(\log n + \log w)$ depth.

Henceforth we focus on the case when $w = o(\log \log n)$.
In this case, the generalized boolean gates cost
asymptotically 
more than the $w$-selector gates when we fully instantiate the circuit as  
a constant fan-in, constant fan-out boolean circuit.
We want to bootstrap a more efficient
$1/(\log n)^C$-lossy loose compactor 
by balancing these two costs.
During the bootstrapping,
we can blow up the 
$\alpha$ parameter (i.e., the fraction of lost elements)
by at most a (poly-)logarithmic factor.

We are inspired by Asharov et al.~\cite{soda21}'s 
repeated bootstrapping technique:
they 
use a loose compactor to bootstrap a 
tight compactor without incurring too much overhead, and then
use the tight compactor to bootstrap a loose compactor 
much more efficient than the original one.
 This is repeated for $d := \log(\log^* n - \log^* w)$
times.
Unfortunately, even if we allow
lossiness, we cannot directly use their techniques due 
to the blowup in depth. 
One critical factor contributing to the depth blowup
comes from the bootstrapping step in which they construct 
a tight compactor given a loose compactor.
Here, they have to perform metadata computation 
that is $\Theta(\log n)$ in depth.
This would incur at least $(\Theta(\log n))^d$
total depth over all steps of the bootstrapping, 
where $d := \log(\log^* n - \log^* w)$.

Our key observation is to use a weaker intermediate 
abstraction during the bootstrapping, that is, an approximate splitter. 
Specifically, we  
use a lossy loose compactor to construct an
approximate splitter without incurring too much overhead,
and then use the resulting approximate splitter
to construct a lossy loose compactor much more efficient than the
original one.
Unlike Asharov et al.~\cite{soda21},  
the repeated bootstrapping no longer gives us
a tight compactor directly; it only  
gives an efficient lossy loose
compactor. As explained later, getting a 
tight compactor from an efficient lossy loose compactor 
requires additional novel techniques.

\paragraph{Approximate splitter from lossy loose compactor.}
In a pre-processing phase, 
we first mark misplaced elements (and some additional
elements) as either ${\tt blue}$ 
or ${\tt red}$, such that the approximate splitter
task can be expressed as 
pairing up each ${\tt blue}$ with a distinct ${\tt red}$
and swapping almost all such pairs. 
Specifically, any distinguished element
not contained in the first $\floor{\beta n + n/64}$
positions of the input are colored ${\tt blue}$.
Any non-distinguished element
contained in the first $\floor{\beta n + n/64}$
positions of the input are colored ${\tt red}$.
This makes sure that $n_{\rm red} \geq n_{\rm blue} + n/64$,
where $n_{\rm red}$ and $n_{\rm blue}$
denote the number of ${\tt red}$ and ${\tt blue}$ elements,
respectively.
Observe that the metadata computation in the pre-processing
step has constant depth (as opposed to logarithmic depth had
we used the tight compaction version of the bootstrapping~\cite{soda21}).

Next, we rely on an {\it approximate swapper}
that swaps most of the ${\tt blue}$ 
elements with their paired ${\tt red}$, except for
leaving behind at most $n/256$ ${\tt blue}$ elements that 
are unswapped.
Henceforth, we may assume that swapped elements become uncolored.
Such an approximate swapper circuit can be constructed
using a linear-sized and constant-depth circuit by combining
prior techniques~\cite{paracompact,soda21}.

Now, we want to extract 
almost all of the remaining ${\tt blue}$ 
elements except for at most $n/\poly\log n$ of them, 
as well as slightly more ${\tt red}$ elements 
than ${\tt blue}$. Further, the extracted array is a constant
factor shorter than the original array.
For technical reasons, 
we have to use a different 
algorithm for extracting the ${\tt blue}$
and ${\tt red}$ elements, respectively.
Specifically, we rely on a lossy loose compactor
to extract the ${\tt blue}$ elements;
and rely on an $\epsilon'$-near-sorter 
to extract the ${\tt red}$ elements for some sufficiently
small constant $\epsilon' \in (0, 1)$.
At this moment, the problem 
boils down to swapping almost all ${\tt blue}$ elements
in the extracted array with a distinct, paired ${\tt red}$
element, and 
reverse routing the 
result back to the original array. 
We can accomplish this by recursing on the extracted array.
The recursion stops
when the extracted array's size becomes 
$n/\poly\log n$ for some appropriate choice of $\poly\log(\cdot)$.

We defer a formal description of the scheme and the parameters
to the subsequent technical sections.
This bootstrapping step incurs the following blowup in parameters:
\begin{itemize}[leftmargin=5mm,itemsep=1pt,topsep=2pt]
\item 
Let $\alpha := 1/\poly\log n$ 
be the loss-factor of the $\alpha$-lossy loose compactor,
then the resulting approximate splitter has the 
approximation factor $8 \alpha$.
\item 
Suppose that the $1/\poly\log(n)$-lossy loose compactor
has $B_{\rm lc}(n)$ number of 
generalized boolean gates, and
$S_{\rm lc}(n)$ number of 
$w$-selector gates, and has $D_{\rm lc}(n)$
depth, then the resulting 
approximate splitter 
has $C_1 \cdot B_{\rm lc}(n)$ 
generalized boolean gates, $C_2 \cdot S_{\rm lc}(n)$
number of $w$-selector gates, 
and $C_3 \log \log n \cdot D_{\rm lc}(n)$ depth,
where $C_1, C_2, C_3 > 2$ are appropriately large constants.
\end{itemize}

\paragraph{Lossy loose compactor from approximate splitter.}
We want to construct 
a more efficient lossy loose compactor given an approximate splitter.
\ignore{
The idea here is similar to how Asharov et al.~\cite{soda21}
constructed a loose compactor from a tight compactor, except
that we instead use an approximate splitter in lieu of their
tight compactor. As a result, we need to re-parametrize, 
and moreover, there will be some elements
lost during this bootstrapping step.
}
Let $f(n) < \log n$ be some function on $n$, and let $C_{\rm sp} > 1 $ 
be some appropriate constant.
Suppose that we have an $\alpha$-approximate splitter
that costs $C_{\rm sp} \cdot n \cdot f(n)$
generalized boolean gates,  
$C_{\rm sp} \cdot n$ number of $w$-selector gates, and 
$D_{\rm sp}(n)$ depth.
We can construct a lossy loose compactor as follows:

\begin{enumerate}[leftmargin=5mm,itemsep=1pt,topsep=2pt]
\item 
Divide the input array into $f(n)$-sized chunks. 
We say that a chunk is {\it sparse} if there are at most $f(n)/32$ 
real elements in it; otherwise
it is called {\it dense}.
Since the input is promised to be $1/128$-sparse,  
we will later prove 
that at least $3/4$ fraction of the chunks are sparse.
\item 
Call an $(\alpha, 1/4)$-approximate splitter 
 to move almost all 
dense chunks to the front and almost all sparse chunks  
to the end.
Here the approximate splitter 
works on $n/f(n)$ elements each of bit-width $f(n) \cdot w$.

\item 
Apply an $(\alpha, 1/32)$-approximate splitter
to the trailing $\ceil{(\frac{3}{4} - \frac{1}{64}) 
\cdot \frac{n}{f(n)}}$ chunks to compress
each of these chunks to a length of $\floor{\frac{3f(n)}{64}}$,
losing few elements in the process.
The first $\floor{(\frac{1}{4} + \frac{1}{64}) \cdot \frac{n}{f(n)}}$
chunks are unchanged. Output the resulting array.
\end{enumerate}

The resulting lossy loose compactor  
has a lossy factor of $1.74\alpha$;
moreover, it costs  
at most $2.1 \cdot C_{\rm sp}\cdot n\cdot f(f(n))$
generalized boolean gates, at most  
$2.1 \cdot C_{\rm sp}\cdot n$
number of $w$-selector gates, and has depth $2.1 D_{\rm sp}(n)$.
Note that the total number of generalized boolean gates
reduces quite significantly in this step but the total number of $w$-selector 
gates and the depth  increase by a constant factor.

\paragraph{Repeated bootstrapping.}
We repeatedly perform the above bootstrapping.
Henceforth going from lossy loose compactor
to approximate splitter, and then back to a lossy loose compactor 
is called one step in our bootstrapping.
After $d := \log (\log^*n - \log^*w)$ steps of bootstrapping,
the cost incurred by generalized
boolean gates and $w$-selector gates will be balanced.
Specifically, 
there will be $O(nw) \cdot \poly(\log^* n - \log^* w)$
generalized boolean gates and 
$O(n) \cdot \poly(\log^* n - \log^* w)$ number of $w$-selector gates. 
Both can be instantiated with 
$O(nw) \cdot \poly(\log^* n - \log^* w)$
number of AND, OR, and NOT gates of constant fan-in.
After $d$ steps of bootstrapping,
the depth will be $\log \log n \cdot (\Theta(\log\log n))^d$
which is upper bounded by $O(\log^{0.5} n)$.
The total lossy factor 
will be $\poly(\log^* n - \log^* w) \cdot \alpha$, where $\alpha = 1/\poly\log(n)$ 
denotes the lossy factor of the initial lossy loose compactor we started out with.

\section{Inefficient Lossy Loose Compaction Circuit}
\label{sec:slowllc}
In this section, we will 
 prove the following theorem. 

\begin{theorem}
Let $c > 1$ be an arbitrary constant.
There is a circuit in the indivisible model with 
$O(n \log \log n)$ generalized boolean gates,  
$O(n)$ number of $w$-selector gates, 
and of depth $O(\log \log n)$
that realizes
an $(\frac{1}{\log^c n}, n, w)$-lossy loose compactor.
\label{thm:initllc}
\end{theorem}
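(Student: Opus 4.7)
}
The plan is to realize the lossy loose compactor via a ProposeAcceptFinalize scheme on a fixed bipartite expander, in the style of Pippenger's self-routing super-concentrator~\cite{selfroutingsuperconcentrator}, run for exactly $T = O(\log \log n)$ rounds. Concretely, I would fix a bipartite graph $G$ with $n$ left vertices (input slots), $m := \floor{n/1.9}$ right vertices (output slots), constant left- and right-degree $d$, and the expansion property that every subset $S$ of left vertices with $|S| \le n/128$ satisfies $|N_G(S)| \ge (1+\delta) |S|$ for some absolute constant $\delta > 0$. Such an expander $G$ can be taken from any standard explicit construction, is hard-wired into the circuit, and contributes only $O(n)$ edges, each of which will carry at most one $w$-selector in the final routing layer.

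The construction then runs as follows. In round $t = 1, 2, \ldots, T$, each still-unmatched real input proposes to all $d$ of its right neighbors in $G$; each still-unmatched right vertex uses a fixed priority rule over its $d$ incident edges to accept at most one proposing input and mark itself taken; the matched inputs are then flagged as done. Crucially, this whole phase is performed on the one-bit metadata only (the ``real/filler'' bit, the ``taken'' bit, and the ``matched by edge $e$'' bit for each edge $e$), so each round contributes $O(n)$ generalized boolean gates and $O(1)$ depth, for a total metadata cost of $O(n \log \log n)$ gates and $O(\log \log n)$ depth. Once all $T$ rounds have completed, each right vertex has at most one active incoming edge, and I would then perform one single payload-routing pass: every output slot selects its payload from among its $d$ incoming edges via a constant-depth $d$-way multiplexer built from $O(d) = O(1)$ $w$-selectors, yielding $O(n)$ $w$-selectors in total and $O(1)$ extra depth. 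The unused output slots are filled with fillers. By Lemma~\ref{lem:operational}, the resulting object then lives in the promised gate and depth budget.

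For correctness I would use the expansion property to show that in every round the number of still-unmatched real inputs shrinks by a constant multiplicative factor. Let $R_t$ denote the set of unmatched real inputs at the start of round $t$; then $|R_t| \le |R_0| \le n/128$, so expansion gives $|N_G(R_t)| \ge (1+\delta)|R_t|$. Because at most $n/128$ right vertices have ever been taken across all previous rounds, the set of free right neighbors of $R_t$ still has size at least $\delta |R_t|$ (up to lower-order loss), and a standard counting argument on the bipartite edges shows that at least a constant fraction $\rho \cdot |R_t|$ of the inputs in $R_t$ receive at least one acceptance in round $t$, so $|R_{t+1}| \le (1-\rho)|R_t|$. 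Iterating gives $|R_T| \le (n/128)(1-\rho)^T$, and choosing $T = \ceil{c \log \log n / \log(1/(1-\rho))} = O(\log \log n)$ drives this below $n/\log^c n$, which is exactly the permitted loss.

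The main obstacle is the bookkeeping that keeps the $w$-selector count at $O(n)$ rather than $O(n \log \log n)$: a naive implementation would move payloads in each of the $T$ rounds and spend $O(n)$ selectors per round. The fix is the indivisibility discipline enforced in our operational circuit model---I carry out the full ProposeAcceptFinalize interaction over the one-bit metadata only, so each of the $\log \log n$ rounds costs $O(n)$ generalized boolean gates but no $w$-selectors at all, and only after the metadata has stabilized do I perform a single per-edge payload multiplex. A smaller subtlety to verify is that the priority rule used by right vertices, together with the conflict resolution among the $d$ edges of a right vertex, can indeed be implemented in $O(1)$ depth per round with $O(d) = O(1)$ boolean gates per vertex; this reduces to a constant-sized truth table and is straightforward given that $d$ is an absolute constant.
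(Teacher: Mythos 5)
Your high-level plan (ProposeAcceptFinalize on a fixed expander, run $O(\log\log n)$ rounds, keep the metadata phase and the payload-routing phase separate so that selector cost stays $O(n)$) is the right shape, and the separation trick via the indivisible model is exactly what the paper does. However, the concrete protocol you describe and the convergence argument you use for it are both different from the paper's, and the convergence argument has a genuine gap.

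You place each \emph{element} as a left vertex, each output \emph{slot} as a right vertex, and run a matching process where each right vertex accepts at most one proposer. The crux of your correctness claim is that, in round $t$, the set $R_t$ of unmatched real inputs has $|N_G(R_t)| \ge (1+\delta)|R_t|$, and ``because at most $n/128$ right vertices have ever been taken, the set of free right neighbors of $R_t$ still has size at least $\delta|R_t|$ (up to lower-order loss).'' This inequality reads $(1+\delta)|R_t| - n/128 \ge \delta|R_t|$, i.e.\ $|R_t| \ge n/128$, which is true only at $t=0$. After the first round $|R_t|$ drops and the subtracted term $n/128$ is no longer ``lower-order'' --- it dominates and the bound becomes vacuous. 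Nothing in vertex expansion or even lossless expansion rescues this, because the cumulative set of taken right vertices can be as large as $n/128$ while $|R_t|$ is only $n/\poly\log n$, and those taken vertices can cover essentially all of $N_G(R_t)$. So a single-acceptance, element-level matching process cannot be driven down to a loss of $n/\log^c n$ by this argument; it stalls once $|R_t|$ is a constant factor below $n/128$.

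The paper avoids this entirely by changing what a left/right vertex is and what ``accept'' means. The left and right vertex sets are both of size $m = 2n/d$, corresponding to chunks of $d/2$ elements, and a right vertex (facility) does \emph{not} get used up: in each round it either accepts all current proposals (if it has at most $d/8$ of them) or rejects all of them. Acceptances are monotone across rounds, and at the end the matching $M$ gives each satisfied dense chunk $\ge d/2$ outgoing edges while every right chunk receives $\le d/8$, after which each chunk is compressed locally to $d/8$ slots. The convergence bound $m/2^{\sf iter}$ on unsatisfied factories is proved via the expander mixing lemma, bounding the number of rejecting facilities by bounding the number of edges into an overly-popular set, not by a free-neighbor/greedy-matching argument. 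This is what actually yields loss $n/\log^c n$ after $c\log\log n$ rounds. If you want to keep an element-level scheme, you would at minimum need facilities that accept proposals in bulk (so they don't get exhausted) plus the mixing-lemma style rejection count; as written, the matching process you describe does not give the claimed loss bound.
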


To prove the above theorem, 
we describe how to implement lossy loose compaction as a low-depth circuit.
Our construction is almost the same as 
the loose compactor circuit described by 
Asharov et al.~\cite{soda21}, except that 
we now run the algorithm for fewer (i.e., $c \log \log n$) iterations rather
than $O(\log n)$ iterations).
Because we omit some iterations, 
we end up losing a small fraction of elements during the loose compaction.
\ignore{
and 2) we implement the circuit using low-depth 
circuit gadgets 
described earlier in Section~\ref{sec:circuitbldgblock},
rather than the circuit gadgets
which Asharov et al.~\cite{paracompact}
used since they did not care about depth.
}
We describe the algorithm below.

\elaine{could defer to appendix?}

\paragraph{Expander graphs.}
The construction will rely on a suitable family of bipartite 
expander graphs denoted $\{G_{\epsilon, m}\}_{m \in {\sf SQ}}$
where ${\sf SQ} \subseteq \N$ is the set of perfect squares.
The parameter $\epsilon \in (0, 1)$ is a suitable constant referred to as 
the {\it spectral expansion}. 
The graph $G_{\epsilon, m}$
has $m$ vertices on the left henceforth denoted $L$, and $m$
vertices on the right henceforth denoted $R$,  
and each vertex has $d := d(\epsilon)$ number of edges
where $d$ is a constant that depends on $\epsilon$.
We give additional preliminaries on expander graphs
in Appendix~\ref{sec:expanderprelim}.

Without loss of generality, we may assume that $d$ 
is a multiple of $8$ since we can always consider the graph
that duplicates each edge $8$ times.

\paragraph{Construction.}
The input array is grouped into chunks of $d/2$ size. 
Chunks that have at most $d/8$ elements (i.e., a quarter loaded)
are said to be {\it sparse} and chunks that have more than $d/8$ 
elements are said to be {\it dense}.
The idea is to first distribute the dense chunks 
such that 
there are only very few dense chunks after this step.
Then, we can easily compact each chunk separately.
When the remaining dense chunks are compressed, we end up  
losing some elements.

The challenge is how to distribute
the dense chunks. We can consider the chunks to be left-vertices 
in the bipartite expander graph $G_{\epsilon, m}$.
Each dense chunk wants to 
distribute its real elements to its neighbors on the right, such that each
right vertex receives no more than $d/8$ elements, i.e., each vertex
on the right is a sparse chunk too.
At this moment, we can replace dense chunks on the left 
with filler elements --- for almost all of these
dense chunks, their real elements have moved to the right.
For the remaining dense chunks, replacing them with filler 
causes some elements to be lost.
Now that all chunks are sparse, 
and we can compress each chunk on the left and the right 
to a quarter its original size. 
All compressed chunks are concatenated and output, and the output
array is a half the length of the input.  

The distribution of the real elements to its neighbors on the right requires some care, as we have to guarantee that no node on the right will become dense. We will have to compute on which subset of edges we will route the real elements. This is done via the procedure ${\sf ProposeAcceptFinalize}$ described below. 
\elaine{is our previous impl of the algo small depth?}

\paragraph{${\sf ProposeAcceptFinalize}$ subroutine.}
We now describe the ${\sf ProposeAcceptFinalize}$ subroutine 
which is the key step to achieve the aforementioned distribution 
of dense chunks.
To make the description more intuitive, henceforth
we call each vertex in $L$ a {\it factory} and each vertex in $R$ a {\it facility}.
Initially, 
imagine that the dense vertices correspond to factories that manufacture at most $d/2$
products, and the sparse vertices are factories that are unproductive.
There are at most $m/32$ productive factories, and they 
want to route all their products to facilities on the right satisfying the following
constraints: 1) each edge can route only 1 product; and 2) 
each facility can receive at most $d/8$ products.
The ${\sf ProposeAcceptFinalize}$ algorithm described below finds 
a set of edges $M$ to enable such routing, 
also called a feasible route as explained earlier. 

\begin{mdframed}
\begin{center}
${\sf ProposeAcceptFinalize}$
{\bf subroutine}
\end{center}

\vspace{3pt}
Initially, each productive factory is {\it unsatisfied} and 
each unproductive factory is {\it satisfied}.
For a productive factory $u \in L$, we use notation ${\sf load}(u)$ 
to denote the number of products it has (corresponding to the number
of real elements in the chunk).

\paragraph{Algorithm:} Repeat the following for 
${\sf iter}$
times
and output the resulting matching $M$ at the end:
\elaine{specify const}

\begin{enumerate}[label=(\alph*),itemsep=1pt]
\item {\it Propose:} 
Each unsatisfied factory 
sends a proposal (i.e., the bit 1) to each one of its neighbors. 
Each satisfied factory sends 0 to each one of its neighbors. 

\item {\it Accept:}
If a facility $v \in R$
received no more than $d/8$ proposals, 
it sends an acceptance message to each one of its $d$ neighbors;
otherwise, it sends a reject message along each of its $d$ edges. 

\item  {\it Finalize:}
Each currently unsatisfied factory $u \in L$
checks if it received at least $\frac{d}{2}$ acceptance messages.
If so, 
add the set of edges over which acceptance messages
are received to the matching $M$.
At this moment, this factory becomes satisfied.
\end{enumerate}
\end{mdframed}

Notice that for a facility $v \in R$, 
the proposals it receives in iteration $i+1$ is a subset
of the proposals it receives in iteration $i$. Therefore,
once $v$ starts accepting 
in some iteration $i$, it will also 
accept all proposals received in future rounds $i+1, i+2, \ldots$ 
too, if any proposals are received.
Moreover, the total number of product $v$ receives
will not exceed $d/8$. 
Pippenger~\cite{selfroutingsuperconcentrator}
and Asharov et al.~\cite{paracompact} showed the following fact: 

\begin{fact}[Pippenger~\cite{selfroutingsuperconcentrator} and  
Asharov et al.~\cite{paracompact}]
There exist an appropriate constant $\epsilon \in (0, 1)$
and a bipartite expander graph family $\{G_{\epsilon, m}\}_{m \in \N}$
where each vertex has $d$ edges for a constant
$d := d(\epsilon)$ assumed to be a multiple of $8$, 
such that for any $m \in {\sf SQ} \subseteq \N$,
at the end of the above ${\sf ProposeAcceptFinalize}$ procedure 
which runs for ${\sf iter}$ iterations
(and assuming it is instantiated with 
the family of graphs $\{G_{\epsilon, m}\}_{m \in \N}$), 
the following must hold:
\begin{enumerate}[leftmargin=7mm,itemsep=0pt]
\item 
at most $m/2^{\sf iter}$ vertices in $L$ remain unsatisfied;
\item 
every satisfied vertex in $u \in L$ has 
at least $d/2$
edges in the output matching $M$;
\item 
for every vertex in $v \in R$, the output 
matching $M$ has at most $d/8$ edges
incident to $v$.
\end{enumerate}
\label{fct:matching}
\end{fact}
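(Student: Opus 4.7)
\medskip
\noindent\textbf{Proof proposal for Fact~\ref{fct:matching}.}
The plan is to handle the three properties separately, with properties (2) and (3) essentially by inspection of the algorithm, and property (1) via the spectral expansion of $G_{\epsilon,m}$ together with an ``expander mixing'' argument.

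\emph{Property (2)} is immediate: an unsatisfied factory $u\in L$ transitions to satisfied in some iteration only when it receives at least $d/2$ acceptance messages, and at that moment exactly the edges on which those acceptance messages arrived are the ones added to $M$. Hence every satisfied $u$ contributes at least $d/2$ incident edges to $M$.

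\emph{Property (3)} uses the monotonicity of proposals at a facility. First I would observe that for any $v\in R$, the multiset of proposals $v$ receives in iteration $i+1$ is a subset of those received in iteration $i$, because satisfied factories never propose again. Therefore, once $v$ accepts in some iteration $i_0$, it continues to accept in every subsequent iteration (this is how $v$ contributes edges to $M$ at all). Any edge $(u,v)\in M$ must have been added at the iteration $i\ge i_0$ in which $u$ became satisfied; at that iteration $u$ was still an unsatisfied neighbor of $v$ that sent a proposal. Thus the number of edges incident to $v$ in $M$ is bounded by the number of distinct unsatisfied neighbors of $v$ that proposed to $v$ at some iteration $i\ge i_0$, which is in turn bounded by the number of proposals $v$ received at iteration $i_0$, namely at most $d/8$.

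\emph{Property (1)} is the main obstacle and is where I would invoke the expansion of $G_{\epsilon,m}$. Let $U_i\subseteq L$ denote the set of factories still unsatisfied at the \emph{start} of iteration $i$. Since at most $m/32$ factories are productive, $|U_0|\le m/32$. The goal is to show $|U_{i+1}|\le |U_i|/2$, so that after ${\sf iter}$ iterations the number of unsatisfied factories is at most $m/2^{\sf iter}$ (with room to spare in the leading constant). Call $v\in R$ \emph{bad} in iteration $i$ if it receives more than $d/8$ proposals in that iteration, and let $B_i\subseteq R$ denote the set of bad facilities. A factory $u\in U_i$ fails to become satisfied in iteration $i$ only if more than $d/2$ of its $d$ neighbors lie in $B_i$; hence the number of edges from $U_{i+1}$ to $B_i$ is strictly more than $(d/2)\,|U_{i+1}|$. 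On the other hand, by the expander mixing lemma applied to $G_{\epsilon,m}$, the number of edges between $U_i$ and $B_i$ is at most $\tfrac{d\,|U_i|\,|B_i|}{m}+\epsilon d\sqrt{|U_i|\,|B_i|}$, while the total number of proposals from $U_i$ into $B_i$ must exceed $\tfrac{d}{8}|B_i|$ by definition of badness; combining these two inequalities bounds $|B_i|$ by $O(|U_i|)$ (with constants tuned by $\epsilon$). Plugging this bound back into the expander mixing lemma, now applied to $U_{i+1}\subseteq U_i$ and $B_i$, and using that $\epsilon$ is a small constant and $d$ is correspondingly large, yields $|U_{i+1}|\le |U_i|/2$. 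The choice of $\epsilon$ sufficiently small (and hence $d=d(\epsilon)$ sufficiently large) makes all error terms absorbable; I expect this tuning of constants, together with the two-step application of the mixing lemma, to be the most delicate part of the proof, and would mirror it on the arguments already carried out in~\cite{selfroutingsuperconcentrator,paracompact}.
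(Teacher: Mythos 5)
The paper states Fact~\ref{fct:matching} as a citation to Pippenger and Asharov et al.\ without an in-text proof, so there is no internal derivation to compare against directly; the closest analogue inside the paper is Lemma~\ref{lem:remainunsat} in Section~\ref{sec:sparselc}, which establishes the analogous per-iteration decay for the polylogarithmic-degree variant of ${\sf ProposeAcceptFinalize}$. Your proof is sound. Property (2) is indeed immediate from the Finalize rule (and, as you implicitly assume, should be read as applying to productive factories that transition to satisfied --- the initially-satisfied unproductive factories trivially have no load to route). Property (3) correctly exploits the monotonicity of the proposal sets at a facility $v$: once $v$ first accepts at iteration $i_0$, it accepts in all subsequent iterations, and every edge of $M$ incident to $v$ was a proposal-carrying edge at some iteration $\ge i_0$, hence lies in the set of at most $d/8$ proposal edges seen at $i_0$. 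For property (1), your two-stage application of the expander mixing lemma does work once $\epsilon$ is small enough, but the second application is unnecessary and the argument can be streamlined. After the first mixing application bounds the number of bad facilities by $|B_i|\le O(\epsilon^2)|U_i|$, simply count rejection edges directly: each bad facility has only $d$ incident edges, so there are at most $d\,|B_i|$ rejection-carrying edges, and each factory that remains unsatisfied must receive more than $d/2$ of them, giving $|U_{i+1}|\le 2|B_i|\le O(\epsilon^2)|U_i|$. This is precisely how Lemma~\ref{lem:remainunsat} concludes, and it yields a sharper per-iteration contraction (a factor $\Theta(\epsilon^2)$ rather than the $\Theta(\epsilon^4)$ your version produces) with less machinery; either rate comfortably gives the claimed $m/2^{\sf iter}$ bound for appropriately small $\epsilon$.
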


Given the ${\sf ProposeAcceptFinalize}$ subroutine,
we can realize a $1/\log^c n$-lossy loose compaction as follows
where $c > 1$ denotes a constant.

\begin{mdframed}
\begin{center}
{\bf $1/(\log n)^c$-Lossy Loose Compaction} 
\end{center}
\begin{itemize}[leftmargin=5mm,itemsep=1pt]
\item {\bf Input:} An input array ${\bf I}$ of $n$ elements, in which at most $n/128$ are real and the rest are dummies. 
\item
{\bf Assumption}:
Without loss of generality, we assume  
that $d$ is a multiple of $8$.
Further, we assume that $m$ is a perfect square 
and that $n$ is a multiple of $d/2$; henceforth let $m := n/(d/2) = 2n/d$.
The algorithm can be easily generalized to any choice
of $n$ --- see Remark~\ref{rmk:notdivisible}.

\item 
{\bf The algorithm}:
\begin{enumerate}[leftmargin=5mm]
\item 
Divide ${\bf I}$ into $m$ chunks of size $d/2$.
If a chunk contains at most $d/8$ real elements (i.e., at most a quarter loaded), 
it is said to be {\it sparse};
otherwise it is said to be {\it dense}.
It is not hard to see that the number of dense chunks must be at most  $m/32$.
\label{step:sparsedense}

\item 
Now imagine that each chunk is a vertex in $L$ of $G_{\epsilon, m}$, 
and $D \subset L$ is a set of dense vertices (i.e., corresponding to 
the dense chunks). 
Let ${\sf edges}(D, R)$ denote all the edges in $G_{\epsilon, m}$
between $D \subset L$ and $R$. 

Let $D$ be the subset of productive factories, and 
run the 
${\sf ProposeAcceptFinalize}$ subroutine for $c \log \log n$ iterations.
The outcome is a 
subset of edges $M \subseteq {\sf edges}(D, R)$ 
that satisfy Fact~\ref{fct:matching}, where the fraction
of unsatisfied chunks is $1/\log^c n$.
\ignore{
 such that 
\begin{enumerate}[leftmargin=7mm,itemsep=1pt,label=(\alph*)]
\item 
$1-1/\log^c n$ fraction of the vertices
$u \in D$ each has ${\sf load}(u) \leq d/2$ 
edges in $M$
where ${\sf load}(u)$ denotes the number of real elements in the chunk;
and 
\item 
every vertex $R$ has at most $d/8$ incoming edges in $M$.
\end{enumerate}
}
%
%
\label{step:matchfinding}
\item 
Now, every dense chunk $u$ in $D$ 
does the following:
for each of an arbitrary subset of ${\sf load}(u) \leq d/2$ 
outgoing edges of $u$ in $M$, send one element over the edge 
to a corresponding neighbor in $R$; for all remaining out edges
of $u$, send a filler element on the edge.

Every vertex in $R$ receives
no more than $d/8$ real elements.  
Henceforth we may consider every vertex in $R$
as a sparse chunk, i.e., an array of capacity $d/2$ but containing
only $d/8$ real elements. 

\label{step:onlineroute}
\item 
At this moment, 
for each dense 
chunk in $L$, replace the entire chunk with $d/2$ filler elements.

\label{step:nodense}
\item 
Now, all chunks in $L$ and in $R$ must be sparse, that is, 
each chunk contains at most $d/8$ real elements, while its size is $d/2$. We now compress each chunk in $L$ and $R$ to a quarter of its original size (i.e., to size $d/8$ in length), 
losing few elements in the process (we will bound
the number of lost elements later).

Output the compressed array ${\bf O}$, containing of $2m \cdot \frac{d}{8} = 2 \cdot \frac{2n}{d} \cdot \frac{d}{8} = n/2$ elements. 
\label{step:compress}
\end{enumerate}
\end{itemize}
\end{mdframed}

\begin{proposition}[Pippenger~\cite{selfroutingsuperconcentrator} and Asharov et al.~\cite{paracompact}]
There exists an appropriate constant $\epsilon \in (0, 1)$
and a bipartite expander graph family $\{G_{\epsilon, m}\}_{m \in \N}$
where each vertex has
$d$ edges for a constant $d := d(\epsilon)$, 
such that for 
any perfect square $m$ and $n = md/2$,
the above lossy loose compaction algorithm, 
when instantiated with this family of
 bipartite expander graph, 
can correctly compress any input array of length $n$
to a half its original size 
losing at most $n/\log^c n$ 
real elements, 
as long as the input array has at most $n/128$ real elements.
\label{prop:llc}
\end{proposition}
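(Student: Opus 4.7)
The plan is to derive Proposition~\ref{prop:llc} almost directly from Fact~\ref{fct:matching} by carefully tracking where real elements go across Steps~\ref{step:sparsedense}--\ref{step:compress}. I would first instantiate the expander family $\{G_{\epsilon,m}\}_{m\in\mathsf{SQ}}$ and constants $d,\epsilon$ promised by Fact~\ref{fct:matching}, and set ${\sf iter} := c\log\log n$ in the ${\sf ProposeAcceptFinalize}$ invocation of Step~\ref{step:matchfinding}. With this choice, Fact~\ref{fct:matching} immediately yields a matching $M$ such that (i) the number of unsatisfied dense vertices in $D$ is at most $m/2^{c\log\log n}=m/(\log n)^c$, (ii) every satisfied $u\in D$ has at least $d/2$ edges in $M$, and (iii) every $v\in R$ has at most $d/8$ edges in $M$. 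The fact that $|D|\leq m/32$, needed for ${\sf ProposeAcceptFinalize}$ to behave well, follows from the hypothesis that at most $n/128$ elements of $\mathbf{I}$ are real: a dense chunk has more than $d/8$ real elements, so there are at most $(n/128)/(d/8) = n/(16d) = m/32$ of them.

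Next, I would partition the real elements of $\mathbf{I}$ into three classes and account for each. Real elements inside originally \emph{sparse} chunks of $L$ are untouched by Steps~\ref{step:onlineroute}--\ref{step:nodense} and hence survive. Real elements inside \emph{satisfied} dense chunks of $D$ are fully routed in Step~\ref{step:onlineroute}: by (ii), such a chunk $u$ owns at least $d/2\geq {\sf load}(u)$ edges of $M$, so all ${\sf load}(u)$ real elements are shipped to neighbors in $R$ before $u$ is overwritten with fillers in Step~\ref{step:nodense}. Real elements inside \emph{unsatisfied} dense chunks are the only source of loss: each such chunk contributes at most $d/2$ lost real elements. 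By (i), the total loss is therefore at most
\[
\frac{d}{2}\cdot \frac{m}{(\log n)^c} \;=\; \frac{d}{2}\cdot \frac{2n/d}{(\log n)^c} \;=\; \frac{n}{(\log n)^c}.
\]

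For Step~\ref{step:compress} to introduce no additional loss, I would verify that every chunk entering compression carries at most $d/8$ real elements: dense chunks in $L$ have been replaced by all-filler chunks; originally sparse chunks in $L$ carry at most $d/8$ real elements by definition; and each chunk in $R$ receives at most $d/8$ real elements by (iii). Hence compressing each chunk from size $d/2$ down to size $d/8$ preserves all remaining real elements. A final sanity check on the output length gives $2m\cdot (d/8)=n/2$, matching the desired half-size output.

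The proof is essentially a bookkeeping argument on top of Fact~\ref{fct:matching}, so I do not anticipate a substantive obstacle; the one place to be careful is the correspondence between the ``productivity'' abstraction used in ${\sf ProposeAcceptFinalize}$ and the actual routing in Step~\ref{step:onlineroute}, specifically the claim that satisfied factories ship out \emph{all} their products rather than merely $d/2$ of them. This relies precisely on the invariant ${\sf load}(u)\leq d/2$ for dense chunks (since a chunk has capacity $d/2$) combined with the $\geq d/2$ edges in $M$ guaranteed by Fact~\ref{fct:matching}(ii). I would state this correspondence explicitly before summing up the three loss contributions. Remark~\ref{rmk:notdivisible} handles divisibility corner cases and can be invoked to extend the argument to arbitrary $n$.
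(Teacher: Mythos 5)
Your bookkeeping argument is correct and is the one implicit in the paper's reliance on Fact~\ref{fct:matching} (the paper itself defers the proof to Pippenger and Asharov~et~al.\ rather than spelling it out): you correctly verify $|D|\leq m/32$ from the $n/128$ sparsity bound, set ${\sf iter}=c\log\log n$ to make Fact~\ref{fct:matching} yield at most $m/(\log n)^c$ unsatisfied left vertices, and then observe that only these chunks lose elements, at most $d/2$ apiece, giving the claimed $n/(\log n)^c$ total loss. The one point worth making explicit is that in ${\sf ProposeAcceptFinalize}$ an unsatisfied factory never commits any edges to $M$ (edges are added only at the moment of becoming satisfied), so your per-chunk bound of $d/2$ lost elements is the exact worst case rather than a loose estimate.
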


\begin{Remark}
In the above, we assumed that $n$ is divisible by $d/2$. 
If $n = d m/2$ where $m$ is a perfect square. 
In case this is not the case, we can always round $n$ up to the
next integer that satisfies this requirement; 
this blows up $n$ by at most a $1+o(1)$ factor. 
This is why in 
our definition of lossy loose compactor, 
the output size is allowed
to be $\floor{n/1.9}$ rather than $\floor{n/2}$, assuming that $n$
is sufficiently large.
\label{rmk:notdivisible}
\end{Remark}

\paragraph{Implementing the algorithm in a low-depth circuit.}
Since our lossy loose compactor 
algorithm is almost the same as Asharov et al.'s loose compactor~\cite{soda21},
we can implement 
the algorithm as a circuit in almost the same way 
as described by Asharov et al.~\cite{soda21}, except that we run fewer
iterations.
It is not hard to check that
the resulting circuit has $O(n \log \log n)$
generalized boolean gates, $O(n)$
number of $w$-selector gates, and 
has depth $O(\log\log n)$.

\ignore{
The only
difference is that we now employ the low-depth counterparts of the 
circuit gadgets
described earlier in Section~\ref{sec:circuitbldgblock}
rather than the circuit gadgets Asharov et al. adopted.
}

\ignore{
\subsection{Slow Loose Compaction Circuit}

\begin{theorem}[Slow loose compaction circuit ${\bf SlowLC}$]
There is a circuit in the indivisible model with 
$O(n \log n)$ generalized boolean gates,  
$O(n)$ number of $w$-selector gates, 
and of depth $O(\log n)$ that realizes
an $(n, w)$-loose compactor.
\label{thm:slowlc}
\end{theorem}
\begin{proof}
We can construct a slow loose compaction circuit using the 
$1/(\log n)^c$-lossy loose compaction 
algorithm of Theorem~\ref{thm:initllc}, but now instead
we run the  
${\sf ProposeAcceptFinalize}$ algorithm for ${\sf iter} := 
\ceil{\log n}$ iterations.
In this way, there will not be any loss during the compression.
The performance 
bounds can be analyzed in a similar manner as in 
Theorem~\ref{thm:initllc} 
except that now we run the  
${\sf ProposeAcceptFinalize}$ algorithm for more iterations.
\end{proof}
}

\elaine{i deleted the slowlc algorithm. we don't need to refer
to it now. since for slowtc we just refer to soda'21 paper now}

\elaine{maybe we could just use poly log depth 
loose compactor to bootstrap? 
but we will still need lossy loose compactor
to bootstrap an imperfect TC
}

\section{Approximate Splitter from Lossy Loose Compaction}
\label{sec:splitterfromlc}

\subsection{Approximate Swapper Circuit}

\paragraph{Approximate swapper.}
An $(n, w)$-approximate swapper obtains an input array where
each element is marked with a label that is 
 $\bot$, ${\tt blue}$, or ${\tt red}$.
Let $n_{\rm red}$ and $n_{\rm blue}$
denote the number of red and blue elements, respectively.
The $(n, w)$-approximate swapper 
circuit swaps a subset of the ${\tt blue}$
elements with ${\tt red}$ ones and the swapped 
elements receive the label $\bot$.
We call elements marked ${\tt red}$ or ${\tt blue}$
{\it colored} and those marked $\bot$ {\it uncolored}.

Formally, an $(n, w)$-approximate swapper 
solves the following problem:
\begin{itemize}[leftmargin=5mm,itemsep=1pt]
\item {\bf Input}:
an input array containing $n$ elements
where each element contains a $w$-bit payload string
and a two-bit metadata label
whose value is chosen
from the set $\{{\tt blue}, {\tt red}, \bot\}$.
Henceforth we assume the first bit of the label
encodes whether the element is colored or not, 
and the second bit of the label
picks a color between ${\tt blue}$ and ${\tt red}$ 
if the element is indeed colored.

\item {\bf Output}:
a {\it legal swap} of the input array such that
at most $n/128 + |n_{\rm red} - n_{\rm blue}|$ 
elements remain colored, where the notion of
a legal swap is defined below.

We say that an output array ${\bf O}$ is a {\it legal swap}
of the input array ${\bf I}$ iff
there exist pairs $(i_1, j_1), (i_2, j_2), \ldots, (i_\ell, j_\ell)$
of indices that are all distinct,
such that for all $k \in [\ell]$, ${\bf I}[i_k]$
and ${\bf I}[j_k]$ are colored and have opposite colors,
and moreover ${\bf O}$ is obtained by swapping
${\bf I}[i_1]$ with ${\bf I}[j_1]$,
swapping ${\bf I}[i_2]$ with ${\bf I}[j_2]$,  $\ldots$, and
swapping ${\bf I}[i_k]$ with ${\bf I}[j_k]$; further, all swapped
elements become uncolored.
\end{itemize}

\ignore{
\paragraph{Swapper.}
If $\alpha = 0$, i.e., all colored elements are swapped
except for the residual caused by imbalance $|n_{\rm red} - n_{\rm blue}|$,
then we simply call it a {\it swapper}.
Formally, 
an $(0, n, w)$-approximate swapper is also called 
an {\it $(n, w)$-swapper} (or simply {\it swapper} for short).
}


\begin{theorem}[Approximate swapper]
There exists an $(n, w)$-approximate swapper 
circuit containing $O(n)$ generalized boolean gates 
and $O(n)$ number of $w$-selector gates, \elaine{is it w} and of depth 
$O(1)$.
\label{thm:approxswap}
\end{theorem}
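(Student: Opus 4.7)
My plan is to adapt the ${\sf ProposeAcceptFinalize}$ paradigm of Pippenger~\cite{selfroutingsuperconcentrator} and Asharov et al.~\cite{paracompact} to realize a local blue--red matching, iterated a constant number of times. Fix a bipartite expander graph $G_{\epsilon,n}$ of constant degree $d$ with a sufficiently small spectral-expansion constant $\epsilon$, whose $n$ left vertices and $n$ right vertices are each identified with the positions $[n]$ of the input array. The circuit then executes $c = O(1)$ rounds of the following local protocol in sequence: each still-${\tt blue}$ left vertex sends a $1$-bit proposal along each of its $d$ outgoing expander edges; each still-${\tt red}$ right vertex that receives at least one proposal accepts one of them via a fixed tiebreaking rule (e.g., the lowest-indexed proposer); and for each accepted pair the payloads at the two endpoints are exchanged via a pair of $w$-selector gates while both endpoints' labels are reset to $\bot$. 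Because $d$ is constant, each round uses $O(n)$ generalized boolean gates for the propose/accept/tiebreak metadata logic, $O(n)$ $w$-selector gates for the payload exchanges, and has $O(1)$ depth; running $c = O(1)$ rounds in sequence preserves the linear gate count and the constant depth bound claimed in the theorem.

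For correctness I would show that each round reduces the number of still-colored minority-color elements by a constant multiplicative factor as long as that count exceeds $n/256$. Let $B$ and $R$ denote the numbers of still-unmatched blues and reds after some round, and assume WLOG that $B \leq R$. By the bipartite expander-mixing lemma, whenever $B \geq n/256$ the number of edges of $G_{\epsilon,n}$ between the current blues and the current reds is at least $dBR/n - \epsilon d\sqrt{BR}$, which for sufficiently small $\epsilon$ exceeds $dB/2$; consequently almost every blue has at least one red neighbor in $G_{\epsilon,n}$. The tiebreaking rule wastes at most a constant fraction of these potential matches, since each red is overclaimed by at most $d$ blues. Hence each round matches away a constant fraction of the unmatched blues, so after $c = O(1)$ rounds we have $B \leq n/256$; at that point the number of remaining colored elements is $B + (R - \textnormal{matches made}) \leq 2 \cdot n/256 + (R - B) = n/128 + |n_{\rm red} - n_{\rm blue}|$, as required.

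The main obstacle is the two-sided expansion analysis. Unlike Fact~\ref{fct:matching}, which covers a one-sided routing problem in which only the factories on the left are active, our protocol has active vertices on both sides of $G_{\epsilon,n}$, so proposals can be rejected both because of missing red neighbors and because of tiebreaking conflicts among multiple blue proposers. Making the constant-fraction decrease per round go through therefore requires careful joint tuning of $\epsilon$, $d$, and the tiebreaking rule so that the conflict-induced losses on the red side are bounded by a constant fraction of $B$. One safe route is to replace the single lowest-indexed-proposer acceptance by a tiny $O(1)$-depth sub-protocol that lets each red distribute acceptances across up to $d/8$ distinct proposers when underloaded, mirroring the ``accept $\le d/8$ proposals'' rule of ${\sf ProposeAcceptFinalize}$; this directly reuses the expansion argument already established in Fact~\ref{fct:matching} and keeps both the circuit size and depth within the stated bounds.
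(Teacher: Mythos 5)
There is a genuine gap in your per-round progress argument. You claim that whenever $B \geq n/256$, the expander mixing lemma gives at least $dBR/n - \epsilon d\sqrt{BR} > dB/2$ edges between the current blues and reds for sufficiently small $\epsilon$. Dividing by $dB$, this requires $R/n - \epsilon\sqrt{R/B} > 1/2$, which in particular needs $R/n > 1/2$ and fails completely when $R$ is small. For a concrete counterexample take $B = R = n/256$: the main term $dBR/n = dn/256^2$ is already well below $dB/2 = dn/512$ no matter how small $\epsilon$ is, so the claimed constant-fraction decrease per round cannot hold, and no amount of ``careful joint tuning'' of $\epsilon$, $d$, and the tiebreaking rule rescues it. The suggested ``safe route'' of re-using Fact~\ref{fct:matching} also does not apply: that fact concerns a one-sided routing problem in which every right vertex is always willing to accept, whereas here only the (possibly tiny) set of still-red positions accepts, which is exactly the two-sided difficulty you identify but do not resolve.

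The fix is to abandon the round-by-round progress argument and apply the mixing lemma once, to the terminal state. Decompose a $d$-regular bipartite expander with spectral expansion $\lambda \leq 1/256$ into $d$ perfect matchings $\pi_1, \ldots, \pi_d$ (hard-wired into the circuit), process them in order, and swap $i$ with $\pi_j(i)$ whenever the two have opposite colors. Let $S$ and $T$ be the still-blue and still-red positions at the end, $|S|\leq|T|$. If $i \in S$ and $j \in [d]$, then $i$ was blue at round $j$ and the pair did not swap, so $\pi_j(i)$ was not red at round $j$, hence cannot be red at the end either. Therefore $e(S,T) = 0$, and the mixing lemma forces $\sqrt{|S||T|} \leq \lambda n$, giving $|S| \leq \lambda n \leq n/256$ and so at most $2|S| + (|T|-|S|) \leq n/128 + |n_{\rm red} - n_{\rm blue}|$ colored elements remain. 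Using perfect matchings also eliminates the tiebreaking sub-protocol entirely (each round is a contention-free pairing), making the $O(n)$-gate, $O(1)$-depth bounds immediate. This is essentially what the paper does: its proof simply invokes Algorithm 6.10 of Asharov et al.~\cite{paracompact} (a swapper of exactly this shape) and observes that, unlike in the uniform PRAM setting of that work, the matchings need not be computed on the fly and can instead be hard-wired into the non-uniform circuit.
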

\begin{proof}
We can use Algorithm 6.10 in Asharov et al.~\cite{paracompact}: their
algorithm is described for
the oblivious PRAM model, and achieves $O(n)$ work and $O(1)$ depth. 
It is straightforward to check that 
the same algorithm can be implemented 
as a circuit with $O(n)$ generalized boolean gates, 
$O(n)$ number of $w$-selector gates, 
and $O(1)$ in depth.
Note that Algorithm 6.10 in Asharov et al.~\cite{paracompact}
needs to compute the decomposed perfect matchings on the fly since their
oblivious PRAM algorithm is uniform;
however, we do not need to compute the matchings on the fly  
in the circuit model, since the circuit
model is non-uniform.
\end{proof}

\ignore{
\begin{theorem}[$1/\poly \log n$-approximate swapper]
Suppose that there is an 
$(\alpha, n, w)$-lossy loose compaction
circuit with $B_{\rm lc}(n)$
generalized boolean gates and $S_{\rm lc}(n)$ $w$-selector gates,
and of depth $D_{\rm lc}(n)$.
Then, 
for any constant $c > 1$, 
there exists a $(1/\log^c n, n, w)$-approximate swapper 
circuit containing 
\elaine{FILL}
generalized boolean gates 
and  \elaine{FILL}
number of $w$-selector gates, and of depth  
\elaine{FILL}.
\end{theorem}
}

\paragraph{Swapper.}
A swapper is defined in almost the same way
as an approximate swapper, except  
that we require that the remaining colored
elements do not exceed $\abs{n_{\rm red} - n_{\rm blue}}$.
In other words, if initially, the number of ${\tt red}$
elements equals the number of ${\tt blue}$ elements, 
then the swapper must swap every ${\tt red}$ element
with a distinct ${\tt blue}$ element, leaving 
no colored elements behind.

\begin{theorem}[Slow swapper]
There exists an $(n, w)$-swapper circuit (henceforth
denoted ${\bf SlowSwap}(\cdot)$) 
with $O(n \log n)$
generalized boolean gates, 
and $O(n \log n)$ number of $w$-selector gates \elaine{is it w},
and whose depth is $O(\log n)$.
\label{thm:slowswap}
\end{theorem}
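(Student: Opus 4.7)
The plan is to sort the input so that all ${\tt red}$ and all ${\tt blue}$ elements form two contiguous blocks, perform the pairwise swaps by routing one block on top of the other, and then invert the sort to restore original positions. Concretely: (i) apply the AKS sorting network~\cite{aks} with the $2$-bit color key ${\tt red} < {\tt blue} < \bot$, recording each comparator's swap bit so the network can later be applied in reverse; (ii) count $n_{\rm red}$, $n_{\rm blue}$ and form $m := \min(n_{\rm red}, n_{\rm blue})$; (iii) swap the contents of positions $(i,\, n_{\rm red}+i)$ for every $i \in [m]$ and relabel both of these positions $\bot$; (iv) apply the recorded AKS decisions in reverse order to restore every element (including the swapped payloads and updated labels) to its input position.

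Steps (i), (ii), and (iv) are standard. Because the sort key in step (i) has constant width, each of the $O(n\log n)$ AKS comparators contributes $O(1)$ generalized boolean gates for the comparison plus one $w$-selector for the payload swap, giving $O(n\log n)$ generalized boolean gates and $O(n\log n)$ $w$-selectors in depth $O(\log n)$; the reverse pass in step (iv) has the same asymptotic costs by using one $w$-selector per recorded bit. Step (ii) is just the counting circuit of Fact~\ref{fct:countprelim} followed by an $O(\log n)$-bit minimum, absorbed in $O(n)$ generalized boolean gates and depth $O(\log n)$.

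The hard part will be step (iii). The required permutation exchanges two length-$m$ blocks separated by the $O(\log n)$-bit data-dependent offset $n_{\rm red}$, so it cannot be hard-wired. The plan is to implement it with a standard barrel shifter: an $O(\log n)$-level butterfly of $w$-selectors in which the $i$-th level either leaves the array fixed or shifts it cyclically by $2^i$ positions, driven by the $i$-th bit of the (broadcasted) shift amount; such a shifter has $O(n\log n)$ $w$-selectors and depth $O(\log n)$. I would run two shifters in parallel on the AKS-sorted array, one producing a left-by-$n_{\rm red}$ shift $L$ and one producing a right-by-$n_{\rm red}$ shift $R$, and then at each output position $j$ select $L[j]$ if $j \le m$, $R[j]$ if $n_{\rm red} < j \le n_{\rm red} + m$, and the original value otherwise, overwriting the color label with $\bot$ in the first two cases. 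The three $O(\log n)$-bit comparisons per position add another $O(n\log n)$ generalized boolean gates with depth $O(\log\log n)$, which is absorbed by the surrounding $O(\log n)$. The net effect on the input array is the swap of the $i$-th ${\tt red}$ with the $i$-th ${\tt blue}$ (in AKS-sorted order) for every $i \in [m]$, leaving exactly $|n_{\rm red} - n_{\rm blue}|$ colored elements, so the output is a legal swap in the sense of the definition; summing the four steps gives the claimed bounds of $O(n\log n)$ generalized boolean gates, $O(n\log n)$ $w$-selector gates, and depth $O(\log n)$.
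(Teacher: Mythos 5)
Your overall scaffold --- AKS-sort by color, perform the pairwise swaps while like-colored elements are contiguous, then invert the recorded AKS routing --- matches the paper's proof. Where you diverge is entirely in the middle step, and the paper's version is substantially lighter. You sort with the key order ${\tt red} < {\tt blue} < \bot$, so that after sorting the reds occupy positions $[1, n_{\rm red}]$ and the blues occupy $[n_{\rm red}+1, n_{\rm red}+n_{\rm blue}]$; pairing the $i$-th red with the $i$-th blue then requires matching positions across the data-dependent offset $n_{\rm red}$, which is exactly why you are forced into barrel shifters and broadcasting $O(\log n)$-bit thresholds to every position. The paper instead sorts so that ${\tt red}$ comes first and ${\tt blue}$ comes \emph{last}, with $\bot$ in the middle, placing reds at $[1, n_{\rm red}]$ and blues at $[n-n_{\rm blue}+1, n]$. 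The required pairing is then the hardwired mirror $i \leftrightarrow n+1-i$: for each $i \le \lfloor n/2 \rfloor$ one simply checks the local labels of $X[i]$ and $X[n+1-i]$ and conditionally swaps, noting that $X[i]$ is ${\tt red}$ iff $i \le n_{\rm red}$ and $X[n+1-i]$ is ${\tt blue}$ iff $i \le n_{\rm blue}$, so the swap fires exactly for $i \le \min(n_{\rm red}, n_{\rm blue})$. That step is a fixed, data-independent permutation costing only $O(n)$ generalized boolean gates, $O(n)$ $w$-selectors, and depth $O(1)$; no counting, no binary shift amount, no threshold comparisons. Your construction is correct and lands within the claimed $O(n\log n)$ gates and $O(\log n)$ depth because AKS already dominates, but the two extra $\Theta(n\log n)$-selector butterflies and the constant-fan-out fan-outs of $n_{\rm red}$ and $m$ are dead weight. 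The trick worth taking away is to choose the sort order so that the post-sort pairing you need becomes a static permutation.
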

\begin{proof}
We can use the following algorithm:
\begin{enumerate}[leftmargin=5mm,itemsep=1pt]
\item 
Use an AKS sorting circuit~\cite{aks} to sort 
the input array such that 
all the ${\tt red}$ elements are in the front;
and all the ${\tt blue}$ elements are at the end.
Let the result be ${\bf X}$.
\label{step:slowswap-aks}
\item 
For each $i \in 1, 2, \ldots, \floor{n/2}$ in parallel:
if ${\bf X}[i]$ is marked ${\tt red}$
and ${\bf X}[n + 1 -i]$ is marked ${\tt blue}$, 
then swap ${\bf X}[i]$ and ${\bf X}[n + 1 -i]$ and mark 
both elements as uncolored.
\item 
Reverse route the resulting array by reversing
the decisions made by the AKS 
network in Step~\ref{step:slowswap-aks}, and output the result.
\end{enumerate}

Since the AKS sorting network
performs comparisons on labels 
that are at most 2-bits long, 
the entire algorithm can be 
implemented as a circuit with $O(n \log n)$ 
generalized boolean gates, $O(n \log n)$ number of 
$w$-selector gates, and 
of depth $O(\log n)$. 
\end{proof}


\subsection{Approximate Splitter from Lossy Loose Compaction}

\paragraph{Approximate splitter.}
Let $\beta \in (0, 1/4]$ and let $\alpha \in (0, 1)$.
An $(\alpha, \beta, n, w)$-approximate splitter 
(also written as $(\alpha, \beta)$-approximate splitter 
when $n$ and $w$ are clear from the context)
solves the following problem:
suppose we are given an input array ${\bf I}$ containing $n$ elements
where each element has a $w$-bit payload
and a 1-bit label indicating whether the element 
is {\it distinguished} or not.
It is promised that at most $\beta \cdot n$ 
elements in ${\bf I}$ are distinguished.
We want to output 
a permutation (denoted ${\bf O}$) of 
the input array ${\bf I}$, such that 
at most $\alpha n$ distinguished elements 
are not contained in the first $\floor{\beta n + n/64}$
positions of ${\bf O}$. 

\begin{theorem}[Approximate splitter from lossy loose compaction]
Suppose that there is an 
$(\alpha, n, w)$-lossy loose compaction
circuit with $B_{\rm lc}(n)$
generalized boolean gates and $S_{\rm lc}(n)$ $w$-selector gates,
and of depth $D_{\rm lc}(n)$.
Suppose also that there is an $O(1)$-depth 
approximate swapper circuit with
$B_{\rm sw}(n)$
generalized boolean gates and $S_{\rm sw}(n)$ $w$-selector gates
for an input array containing $n$ element each of bit-width $w$.

For any constant $\beta \in (0, 1/4]$,
there is a 
$(8\alpha, \beta, n, w)$-approximate splitter circuit  
with at most $2.5 S_{\rm sw}(n) + 5 S_{\rm lc}(n) + O(n)$
number of $w$-selector gates, 
$2.5 S_{\rm sw}(n) + 
2.5 B_{\rm sw}(n) +  2.5 B_{\rm lc}(n) + 10 S_{\rm lc}(n) + O(n)$
generalized boolean gates, and has depth 
at most $2.4 \log{\frac{1}{\alpha}} \cdot (D_{\rm lc}(n) + O(1))$. 
\elaine{check the constants}
\label{thm:splitterfromlc}
\end{theorem}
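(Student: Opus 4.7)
\bigskip
\noindent\textbf{Proof proposal for Theorem~\ref{thm:splitterfromlc}.}
The plan is to follow the blueprint sketched in Section~\ref{sec:roadmap:bootstrap}: color the ``out-of-place'' elements, do one round of approximate swapping, extract whatever is still colored into a much shorter array using a lossy loose compactor (for \texttt{blue}) and a near-sorter (for \texttt{red}), recurse on the extracted array, and then reverse-route the corrected elements back. First I will set up the coloring. In an $O(1)$-depth pre-processing layer, every position $i$ with $i>\floor{\beta n+n/64}$ whose element is distinguished is labeled \texttt{blue}; every position $i\le\floor{\beta n+n/64}$ whose element is not distinguished is labeled \texttt{red}; all other elements get $\bot$. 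The target region has size $\floor{\beta n+n/64}$ but contains at most $\beta n$ distinguished elements, so $n_{\rm red}-n_{\rm blue}\ge n/64-O(1)$, and a \emph{legal swap} that removes all remaining \texttt{blue} elements produces an output in which every distinguished element lies in the first $\floor{\beta n+n/64}$ positions.

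Next, invoke the approximate swapper on the colored array. By the swapper's guarantee, the total number of colored elements after one swap is at most $n/128+(n_{\rm red}-n_{\rm blue})$; since swapping preserves the imbalance, the number $b'$ of residual \texttt{blue} satisfies $2b'\le n/128$, i.e., $b'\le n/256$. This is sparse enough that the array of residual \texttt{blue} elements fits the $1/128$-sparsity promise of the lossy loose compactor. I will then use one invocation of the $(\alpha,n,w)$-lossy loose compactor on the \texttt{blue}-only indicator array to compact residual \texttt{blue}s into the first half of the array, losing at most $\alpha n$ of them; in parallel, I will use an $\epsilon'$-near-sorter of linear size and constant depth to push slightly more than $b'$ of the residual \texttt{red}s to one end, and take a prefix of length $\floor{n/2}$ as the extracted array $E$. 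Both routings are decided entirely from the 1-bit color labels, so the ``decision'' sub-circuits run in parallel with the $w$-selector routing at depth $D_{\rm lc}(n)+O(1)$.

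I will then recurse on $E$, which has size at most $n/1.9$ and still has $n_{\rm red}-n_{\rm blue}\ge 1$ colored elements of each type available to swap, and finally reverse-route the recursive output back through the extraction selectors into the correct positions of the original array. For the recursion I will stop after $T=\lceil\log_2(1/\alpha)\rceil$ levels (when the residual \texttt{blue} count has shrunk to at most $\alpha n$), replacing the recursive call at the base by a direct application of the approximate swapper on a short array. The loss analysis is a geometric series: level $t$ loses at most $\alpha\cdot n/1.9^{t}$ elements through the lossy loose compactor and near-sorter, summing to at most $(1.9/0.9)\alpha n<3\alpha n$, while the residual unswapped \texttt{blue}s at the base contribute at most $\alpha n$ more from the base-case swapper's tolerance; being slightly loose, this totals well under $8\alpha n$ misplaced distinguished elements, as required. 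The size/selector/depth recurrences are $S(n)\le S(n/1.9)+S_{\rm sw}(n)+S_{\rm lc}(n)+O(n)$ and $D(n)\le D(n/1.9)+D_{\rm lc}(n)+O(1)$, which resolve geometrically to the bounds claimed in the theorem once multiplied by the $T=O(\log(1/\alpha))$ recursion levels along the depth axis only.

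The main obstacle I expect is the interaction between the two extractors in step three: because the lossy compactor is allowed to drop real elements arbitrarily while the near-sorter is only approximately accurate, I have to verify that after their outputs are concatenated to form $E$, the invariant $n_{\rm red}^E\ge n_{\rm blue}^E$ still holds with enough slack so that the recursive swap is meaningful, and that reverse-routing does not resurrect a \texttt{blue} into a \texttt{red}'s slot (or vice versa) in a way that breaks the legal-swap requirement. I will handle this by having the near-sorter extract enough extra \texttt{red}s to absorb the imbalance error $\epsilon' n$ of the near-sorter plus the $\alpha n$ loss of the lossy compactor, which is why the constants $8\alpha$ and $n/64$ (rather than tighter values) appear in the statement; the bookkeeping is routine but must be written out carefully, and that will be where most of the remaining work lies.
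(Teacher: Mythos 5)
Your proposal matches the paper's proof at the structural level: you color with the same $n/64$ slack, run an approximate swapper, extract the residual \texttt{blue}s with a lossy loose compactor and the residual \texttt{red}s with an $\epsilon'$-near-sorter, recurse on the concatenated extract, and reverse-route back; and you correctly identify the key remaining obligation---preserving the red--blue imbalance invariant $m_{\rm red}\ge m_{\rm blue}+m/64$ through the concatenated extract---which the paper indeed proves as a separate fact using the near-sorter's guarantee that at least $(1-\epsilon')(m/64)$ reds land in the first $\floor{m/32}$ positions. A few bookkeeping details are off but do not change the approach: the paper's base case triggers on array size $m\le\alpha n$ rather than on residual blue count; the extracted array has length about $\floor{m/1.9}+\floor{m/32}$, giving a shrink factor $c\approx1.79$ rather than $1.9$ or $2$; and your size and depth recurrences omit the reverse-routing copies of the lossy-compactor and near-sorter selectors, which is where the factor-of-two (hence the $5S_{\rm lc}$ and the $2.4\log(1/\alpha)\cdot D_{\rm lc}$) in the stated constants comes from.
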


\begin{proofof}{Theorem~\ref{thm:splitterfromlc}}
Consider the following algorithm.

\begin{mdframed}
\begin{center}
{\bf Approximate splitter from lossy loose compaction}
\end{center}
\begin{enumerate}[leftmargin=6mm,itemsep=1pt]
\ignore{
\item 
{\it Near-sort.}
Let $\epsilon := \beta/16$.
Apply an $\epsilon$-near-sorter to the input array ${\bf I}$
where distinguished elements are considered smaller than  
non-distinguished elements.
Let ${\bf X}$ be the outcome.
\label{step:nearsort}
}
\item 
\label{step:color}
{\it Color.}
Any distinguished element 
\emph{not} contained in the first $\floor{\beta n + n/64}$
positions of ${\bf X}$ are colored ${\tt blue}$.
Any non-distinguished element 
contained in the first $\floor{\beta n + n/64}$ 
positions of ${\bf X}$ are colored ${\tt red}$.
Observe that 
$n_{\rm red} \geq n_{\rm blue} + n/64$,
where $n_{\rm red}$ and $n_{\rm blue}$
denote the number of ${\tt red}$ and ${\tt blue}$ elements, 
respectively.
\elaine{to fix: i am assuming divisible now.}

Note that at this moment, each element 
in ${\bf X}$ is labeled with 3 bits of metadata, 
one bit of distinguished indicator
and two bits of color-indicator (indicating whether 
the element is colored or uncolored,
and if so, which color).

\item 
{\it Swap.}
\label{step:swapmain}
Call ${\bf Swap}^n({\bf X})$
defined below
to swap the ${\tt blue}$ elements with ${\tt red}$ elements 
except for a small residual fraction 
(here we use a payload of size $w+1$ and not $w$ as 
we also include the distinguished-indicator as part of the payload). 
Return the outcome.
\end{enumerate}
\end{mdframed}

We now describe the ${\bf Swap}^n(\cdot)$ subroutine.

\begin{mdframed}
\begin{center}
${\bf Swap}^n({\bf X})$
\end{center}
\begin{itemize}[leftmargin=5mm,itemsep=0pt]
\item {\bf Input:} An array {\bf X} of $m \leq n$ 
elements, each has a $w$-bit payload\footnotemark\
and a 2-bit label indicating whether the element is colored,
and if so, whether the element is ${\tt blue}$ or ${\tt red}$.
$n$ is the size of the original problem when 
${\bf Swap}$ is first called; the same $n$   
will be passed into all recursive 
calls since it is used to decide
when the recursion stops.  
It is promised that 
$m_{\rm red} \geq m_{\rm blue} + m/64$
where 
$m_{\rm red}$
and 
$m_{\rm blue}$
denote the number of ${\tt red}$ and ${\tt blue}$ elements
in the input array ${\bf X}$, respectively.

\footnotetext{Our approximate splitter algorithm 
actually requires a swapper where elements are of bit-length $w+1$, 
but for convenience
we rename the variable to $w$ in the description of the swapper.
}

\item {\bf Algorithm:}
\begin{enumerate}[leftmargin=5mm,itemsep=1pt,label=(\alph*)] 
\item 
\label{step:splitter:base}
{\it Base case.}
If $m \leq \alpha n$, then 
return ${\bf X}$;
else continue with the following steps. 

\item 
{\it Approximate swapper.}
Call an $(m, w)$-approximate swapper (see Theorem~\ref{thm:approxswap}) 
on ${\bf X}$ to swap elements of opposite colors 
and uncolor them in the process,
such that at most $m/128 + m_{\rm red} - m_{\rm blue}$ 
elements remain colored. 
Let the outcome be called ${\bf X}'$.
\label{step:swap}
\item 
{\it Lossy-extract blue.}
Call an $(\alpha, m, w+1)$-lossy loose 
compactor 
to compact ${\bf X}'$ by a half, 
where the lossy loose compactor treats 
the ${\tt blue}$
elements as real, and all other elements as fillers (i.e.,
the loose compactor treats
the second bit of the color label as a real-filler indicator,
and the first bit of the color label 
is treated as part of the payload).

\ignore{
In other words, the lossy 
loose compactor treats the 1st bit of the color label
as a filler indicator, and treats the 2nd bit of the color label and an 
element's payload string as the payload.
}
Let the outcome be ${\bf Y}_{\rm blue}$ 
whose length is $\floor{\abs{\bf X} / 1.9}$.
\elaine{fix: may not be divisible by 2}
\label{step:compactblue}
\item 
\label{step:compactred}
{\it Extract red.}
Let $\epsilon' = 1/2^{10}$. Apply an $\epsilon'$-near-sorter
(defined in Section~\ref{sec:segmenter-prelim})
to the array ${\bf X}'$ 
treating all ${\tt red}$ elements as 
smaller than all other elements.
Let ${\bf Y}_{\rm red}$   
be the first $\floor{m/32}$ elements of the 
resulting near-sorted array.
Mark every non-${\tt red}$ element in ${\bf Y}_{\rm red}$ as 
uncolored,
and let
${\bf Y} := {\bf Y}_{\rm red} || {\bf Y}_{\rm blue}$.

\item 
{\it Recurse.}
Recursively call ${\bf Swap}^n({\bf Y})$, 
and let the outcome be ${\bf Y}'$.
\label{step:recurse}
\item 
{\it Reverse route.}
Reverse the routing decisions made by all selector gates 
during Steps~\ref{step:compactblue} and \ref{step:compactred}
(see Remark~\ref{rmk:reverseroute}).
Specifically, 
\begin{itemize}[leftmargin=5mm,itemsep=1pt]
\item 
pad ${\bf Y}'[:\floor{m/32}]$ 
with a vector of fillers to a length of $m$ 
and reverse-route the padded array 
by reversing the decisions of Step~\ref{step:compactred} --- let
${\bf Z}_{\rm red}$ be the outcome; 
\item 
reverse-route 
${\bf Y}'[\floor{m/32}+1 : ]$ by reversing the decisions
of Step~\ref{step:compactblue}, resulting
in ${\bf Z}_{\rm blue}$.
\end{itemize}
Note that both ${\bf Z}_{\rm blue}$
and ${\bf Z}_{\rm red}$ have length $m$, i.e., length of the input 
to this recursive call.
\label{step:reverse}

\item 
{\it Output.}
\label{step:output}
Return 
${\bf O}$ which is formed by a performing coordinate-wise 
select operation between ${\bf X}'$, ${\bf Z}_{\rm red}$,
and ${\bf Z}_{\rm blue}$.
For every $i \in [m]$:
\begin{itemize}[leftmargin=5mm,itemsep=1pt]
\item 
if ${\bf X}'[i]$ originally had a ${\tt blue}$ element
and the element was not lost during Step~\ref{step:compactblue},
then let ${\bf O}[i] := {\bf Z}_{\rm blue}[i]$;
\item 
if ${\bf X}'[i]$ originally had a ${\tt red}$ element
and ${\bf Z}_{\rm red}[i]$ is not a filler, then 
let ${\bf O}[i] := {\bf Z}_{\rm red}[i]$;
\item 
else 
let ${\bf O}[i] := {\bf X}'[i]$;
\end{itemize}

\ignore{back to the input ${\bf X}$, such that each position 
$i$ receives either a real or a dummy element: if a real element
is received the element ${\bf X}[i]$ is overwritten with the received element;
otherwise the element ${\bf X}[i]$ is unchanged.
Finally, output the result.
}
\end{enumerate} 
\end{itemize}
\end{mdframed}

\begin{Remark}[Reverse routing details]
For every selector gate $g$ in Steps~\ref{step:compactblue} 
and \ref{step:compactred}, 
its reverse selector gate denoted $g'$ 
is one that receives a single element as input and outputs two elements; the same control bit $b$ input to the original gate $g$ 
is used by $g'$ to select which of the output 
receives the input element, and the other one will simply receive 
a filler element. \elaine{TO FIX: how is the filler encoded, 
is it part of the payload} 
If $g$ selected the first input element to route to the output, then 
in $g'$, the input element is routed to the first output.
\label{rmk:reverseroute}
\end{Remark}

\begin{fact}
Suppose that $n$ is greater than a sufficiently large constant. 
If a call to ${\bf Swap}^n({\bf X})$ 
does not hit the base case, 
then, in the next recursive call 
to ${\bf Swap}^n({\bf Y})$ in Step~\ref{step:recurse}, 
$m' := |{\bf Y}| \leq \frac{m}{1.9} + \frac{m}{32}$.
Therefore, the recursive 
call will hit the base case after 
at most $\ceil{\log_{c} \frac{1}{\alpha}}$
steps of recursion where $c := 1/(1/1.9 + 1/32) > 1.79$.
\label{fct:lenreduce}
\end{fact}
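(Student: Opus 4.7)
\begin{proofof}{Fact~\ref{fct:lenreduce}}
The plan is a direct bookkeeping argument from the algorithm's description. First I would verify the one-step size bound by inspecting the construction of ${\bf Y}$ in Steps~\ref{step:compactblue} and \ref{step:compactred}. The array ${\bf Y}_{\rm blue}$ is produced by the $(\alpha,m,w+1)$-lossy loose compactor applied to ${\bf X}'$ and thus has length exactly $\floor{m/1.9}$ by the definition of a lossy loose compactor. The array ${\bf Y}_{\rm red}$ is the prefix of length $\floor{m/32}$ of the near-sorted array. Since ${\bf Y} := {\bf Y}_{\rm red} \| {\bf Y}_{\rm blue}$, we have
\[
m' \;=\; |{\bf Y}_{\rm red}| + |{\bf Y}_{\rm blue}| \;=\; \lfloor m/32 \rfloor + \lfloor m/1.9 \rfloor \;\leq\; \tfrac{m}{1.9} + \tfrac{m}{32},
\]
which is exactly the claimed one-step recurrence.

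Next I would iterate this bound. Writing $c := 1/(1/1.9 + 1/32)$, the one-step bound says $m' \leq m/c$, and by direct computation $c > 1.79$. Let $m_0 = n$ be the size at the outermost call and let $m_t$ denote the size at the start of the $t$-th recursive call. Then by induction $m_t \leq n/c^t$. The base case in Step~\ref{step:splitter:base} triggers as soon as $m_t \leq \alpha n$, and $n/c^t \leq \alpha n$ holds whenever $c^t \geq 1/\alpha$, i.e. whenever $t \geq \log_c(1/\alpha)$. Hence the base case is reached after at most $\ceil{\log_c(1/\alpha)}$ recursive calls, as claimed.

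The only mild subtlety is the role of the hypothesis that $n$ is larger than a sufficiently large constant: it ensures that the floors in $\lfloor m/1.9 \rfloor$ and $\lfloor m/32\rfloor$ do not introduce degenerate behavior before the base case is hit. Concretely, $m \geq \alpha n$ whenever the call is non-base, so choosing $n$ large enough (depending on the constant $c > 1.79$) guarantees that $m$ is large enough for the geometric shrinkage to dominate any additive rounding slack, and in particular that $\floor{m/1.9} + \floor{m/32} \leq m/c$ holds cleanly. No further structural argument is needed; this is a purely arithmetic verification, and I expect no real obstacle in writing it out.
\end{proofof}
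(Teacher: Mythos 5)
Your proof is correct and is essentially the argument the paper has in mind; the Fact is stated without an explicit proof, and the straightforward bookkeeping you give — $|{\bf Y}| = \lfloor m/32\rfloor + \lfloor m/1.9\rfloor$ from Steps~\ref{step:compactred} and \ref{step:compactblue}, followed by iterating $m_t \le n/c^t$ until $m_t \le \alpha n$ — is exactly what is intended. One small note: your remark about needing $n$ large so that ``the geometric shrinkage dominates rounding slack'' is slightly misplaced, since $\lfloor m/1.9\rfloor + \lfloor m/32\rfloor \le m/1.9 + m/32 = m/c$ holds for every $m$ with no condition at all (floors only shrink the total), so the one-step recurrence is unconditional; the hypothesis that $n$ is sufficiently large is carried along for other parts of the construction (e.g.\ the divisibility/rounding conventions of the lossy loose compactor in Remark~\ref{rmk:notdivisible} and the correctness of the near-sorter on constant-size inputs), not for this arithmetic.
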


\begin{fact}
Suppose that $n$ is greater than a sufficiently large constant. 
If the condition $m_{\rm red} \geq m_{\rm blue} + m/64$
is satisfied at the beginning of 
some call ${\bf Swap}^n({\bf X})$, 
then if and when the function makes a recursive
call to ${\bf Swap}^n({\bf Y})$, 
the same condition is satisfied by the array ${\bf Y}$.
\end{fact}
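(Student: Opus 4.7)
The plan is to track the number of red and blue elements through the three transformative operations in a non-base-case execution of ${\bf Swap}^n$: the approximate swapper in step~(b), the lossy loose compactor that extracts blues in step~(c), and the $\epsilon'$-near-sorter that extracts reds in step~(d). First I would handle step~(b). Let $s$ be the number of swaps performed, so after this step ${\bf X}'$ contains $m_{\rm red}-s$ reds and $m_{\rm blue}-s$ blues. The approximate swapper's guarantee --- at most $m/128 + (m_{\rm red}-m_{\rm blue})$ elements remain colored --- combined with the identity that the remaining colored count equals $(m_{\rm red}-m_{\rm blue}) + 2(m_{\rm blue}-s)$ forces $m_{\rm blue}-s \leq m/256$. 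Hence ${\bf X}'$ carries at most $m/256$ blues and $R := m_{\rm red}-s \geq m_{\rm red}-m_{\rm blue} \geq m/64$ reds.

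Next I would analyze steps~(c) and~(d). In step~(c), the $\alpha$-lossy loose compactor treats blues as real and everything else as filler; since $m_{\rm blue}-s \leq m/256 \leq m/128$ the sparsity promise holds, and its output ${\bf Y}_{\rm blue}$ of length $\lfloor m/1.9 \rfloor$ preserves only real (blue) elements, so ${\bf Y}_{\rm blue}$ contributes at most $m/256$ blues and zero reds to ${\bf Y}$. For step~(d), apply the $\epsilon'$-near-sorter with $\epsilon' = 1/2^{10}$, treating reds as the smallest, and set $k := \min(R,\, \lfloor m/32 \rfloor - \epsilon' m)$. By the near-sort definition, positions $[1 : k + \epsilon' m] \subseteq [1 : \lfloor m/32 \rfloor]$ contain at least $(1-\epsilon')k$ of the $k$ smallest, which are all red. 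Using $R \geq m/64$, this yields at least $\min(R,\, \lfloor m/32 \rfloor) - 2\epsilon' m \geq m/64 - m/512$ reds inside ${\bf Y}_{\rm red}$, while any non-red elements landing in ${\bf Y}_{\rm red}$ are explicitly uncolored by the algorithm.

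Finally, assembling on ${\bf Y} = {\bf Y}_{\rm red} \,\|\, {\bf Y}_{\rm blue}$ of length $m' = \lfloor m/32 \rfloor + \lfloor m/1.9 \rfloor$, the recursive input satisfies
\[
m'_{\rm red} - m'_{\rm blue} \;\geq\; \Bigl(\tfrac{m}{64} - \tfrac{m}{512}\Bigr) - \tfrac{m}{256} \;=\; \tfrac{5\,m}{512},
\]
while $m'/64 \leq (1/32 + 1/1.9)\,m/64 < 5m/512$, yielding $m'_{\rm red} \geq m'_{\rm blue} + m'/64$ as desired. The main obstacle is the bookkeeping in step~(d): the near-sort guarantee is most delicate exactly when $R$ is only marginally above $m/64$, so one has to be careful to instantiate the near-sort bound with $k := \min(R, \lfloor m/32 \rfloor - \epsilon' m)$ rather than $k = \lfloor m/32\rfloor$. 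The constants $1/128$ (swapper sparsity loss), $1/2^{10}$ (near-sort accuracy), and $1/1.9$ (compactor shrink factor) were chosen precisely so that the three slacks --- $1/256$ from step~(b), $m/512$ from step~(d), and the $1/64$ recursive budget --- line up with a small absolute margin; an off-by-one tightening in any of them would break the induction.
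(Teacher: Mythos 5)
Your proof is correct and follows essentially the same approach as the paper: track the red and blue counts through the approximate swapper (obtaining $\leq m/256$ blues and $\geq m/64$ reds in ${\bf X}'$), observe that the lossy loose compactor in step~(c) contributes no reds and at most $m/256$ blues to ${\bf Y}$, apply the near-sorter guarantee in step~(d) to lower-bound the reds in ${\bf Y}_{\rm red}$, and then verify the $1/64$ imbalance on ${\bf Y}$. One small note: the paper's near-sorter analysis is simpler than yours — it just instantiates the guarantee with $k = m/64$ (the known lower bound on the number of reds after step~(b)), since $m/64 + \epsilon' m < \lfloor m/32 \rfloor$, giving directly at least $(1-\epsilon')(m/64)$ reds in ${\bf Y}_{\rm red}$; the $\min(R, \lfloor m/32\rfloor - \epsilon' m)$ case split you introduce is unnecessary, and your closing remark that the constants line up ``with a small absolute margin'' overstates the tightness — the final inequality $5m/512 > m'/64$ holds with roughly a $12\%$ relative margin.
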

\begin{proof}
If the execution does not trigger the base case,  
since $n$ is greater than a sufficiently large constant, $m$ must
be greater than a sufficiently 
large constant too.

Suppose the inequality is satisfied at the beginning of the recursive call.
Then, after Step~\ref{step:swap}, at most $m/256$ elements are ${\tt blue}$,
and at least $m/64$ elements are ${\tt red}$.
After Step~\ref{step:compactred},
due to the property of the near-sorter,
${\bf Y}_{\rm red}$ has 
at least $(1-\epsilon') \cdot (m/64)$ 
${\tt red}$ elements.
As long as $m$ is greater than some appropriate constant,
in the next recursive call 
to ${\bf Swap}^n({\bf Y})$ in Step~\ref{step:recurse}, 
$m' := |{\bf Y}| \leq \frac{m}{1.9} + \frac{m}{32}$.
Let $m'_{\rm red}$
and  $m'_{\rm blue}$
be the number of ${\tt red}$
and ${\tt blue}$ elements in ${\bf Y}$ respectively.
We have that $m'_{\rm blue} \leq m/256$
and $m'_{\rm red} \geq (1-\epsilon') \cdot (m/64)$.
Therefore, 
\[
\frac{m'_{\rm red} - m'_{\rm blue}}{m'}
\geq \frac{ (1-\epsilon') \cdot (m/64) - m/256 }{\frac{m}{1.9} + \frac{m}{32}}
> 1/64 
\]
\end{proof}

\begin{fact}
Assume that $n$ is greater than a sufficiently large constant.
The remaining number of colored elements at the end of the algorithm 
is at most 
$8 \alpha n + n_{\rm red} - n_{\rm blue}$.
\end{fact}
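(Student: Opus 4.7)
The plan is to bound the blue-element count in the output and then convert that bound into a bound on the total colored count via an imbalance invariant.

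First I would establish the identity $R_{\bf O} - B_{\bf O} = n_{\rm red} - n_{\rm blue}$ for the final output by induction on recursion depth, where $B_{\bf O}, R_{\bf O}$ denote the numbers of blues and reds in $\bf O$. The base case is immediate since $\bf O = \bf X$. For the inductive step, a case analysis of the three branches of Step~\ref{step:output} yields $B_{\bf O} = b_1 + \ell$ and $R_{\bf O} = r_1 + \ell'$, where $b_1, r_1$ are the unswapped blues and reds produced by the recursive call on $\bf Y$, $\ell \le \alpha m$ is the number of blues dropped by the lossy loose compactor in Step~\ref{step:compactblue}, and $\ell'$ is the number of reds the near-sorter in Step~\ref{step:compactred} failed to extract. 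Since $\bf Y_{\rm red}$ contains no blues after the forced uncoloring, the inductive hypothesis gives $r_1 - b_1 = R_{\bf Y} - B_{\bf Y} = r_{Y_{\rm red}} - (B_{\bf X'} - \ell)$, while $\ell' = R_{\bf X'} - r_{Y_{\rm red}}$; substituting and cancelling yields $R_{\bf O} - B_{\bf O} = R_{\bf X'} - B_{\bf X'}$, which equals $n_{\rm red} - n_{\rm blue}$ because the approximate swapper of Step~\ref{step:swap} preserves the imbalance. Consequently the total colored count in $\bf O$ equals $2 B_{\bf O} + (n_{\rm red} - n_{\rm blue})$, so it suffices to prove $B_{\bf O} \leq 4\alpha n$.

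Next I would bound $B_{\bf O}$ by a second induction. Let $N(m)$ be the maximum possible $B_{\bf O}$ over all valid inputs of size $m$. In the base case of Step~\ref{step:splitter:base} ($m \leq \alpha n$) trivially $N(m) \leq \alpha n$. For the inductive step, because $\bf Y_{\rm red}$ contributes no blues to the recursive call, every blue in $\bf Y$ comes from $\bf Y_{\rm blue}$, so $b_1 \leq N(m')$ with $m' \leq m/1.9 + m/32 \leq 0.56 m$ by Fact~\ref{fct:lenreduce}; together with $\ell \leq \alpha m$ this yields
\[
N(m) \leq \alpha m + N(0.56 m).
\]

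Unrolling this recurrence geometrically until it hits the base case gives $N(n) \leq \alpha n / (1 - 0.56) + \alpha n \leq 3.28 \alpha n \leq 4 \alpha n$. Combined with the parity identity of the first paragraph, the total colored count in $\bf O$ is at most $8\alpha n + (n_{\rm red} - n_{\rm blue})$, as claimed. The main subtlety is verifying the first identity in the presence of the forced uncoloring in Step~\ref{step:compactred}: the identity survives because a position $j$ in $\bf X'$ whose source element ends up in $\bf Y_{\rm red}$ as a blue is precisely one where Step~\ref{step:output} selects $\bf Z_{\rm blue}[j]$ rather than $\bf Z_{\rm red}[j]$, so the forcibly uncolored copy in $\bf Y_{\rm red}$ never reaches $\bf O$, and the bookkeeping with $\ell$, $\ell'$, and $r_{Y_{\rm red}}$ cancels exactly.
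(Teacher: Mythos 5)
Your proof is correct and follows the same basic strategy as the paper's: bound the number of remaining blue elements by roughly $\alpha n$ (base case) plus a geometric sum of losses over recursion levels to get $\leq 4\alpha n$, then convert to a bound on all colored elements using the invariant $R_{\bf O} - B_{\bf O} = n_{\rm red} - n_{\rm blue}$. The paper leaves that imbalance-preservation step implicit in its final sentence; you make it explicit via the induction on recursion depth and the cancellation with $\ell$, $\ell'$, $r_{Y_{\rm red}}$, which is a useful clarification but not a different approach.
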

\begin{proof}
The number of ${\tt blue}$
elements remaining is equal to the total number of ${\tt blue}$ elements
lost during all executions of Step~\ref{step:compactblue},
plus the size of the base case $\alpha n$.
Let $c := 1/(1/1.9 + 1/32)$.
The total number of elements lost
during all executions of Step~\ref{step:compactblue}
is upper bounded by 
$\alpha n + \alpha n / c  + \alpha n / c^2 + \ldots  \leq 3 \alpha n$.
Therefore, the total number of ${\tt blue}$
elements remaining 
is upper bounded by $4\alpha n$.
This means that the total number of 
colored elements remaining 
is at most 
$8 \alpha n + n_{\rm red} - n_{\rm blue}$.
\end{proof}

Clearly, Step~\ref{step:color}
of the algorithm takes only $n$ number of generalized boolean gates.
We now discuss how to implement Step~\ref{step:swapmain}
as a circuit.

\paragraph{Implementing Step~\ref{step:swapmain} in circuit.}
This step is accomplished through recursive calls
to ${\bf Swap}$  
on arrays of length $n' := 
n, n/c, n/c^2, \ldots$, where $c := 1/(1/1.9 + 1/32)$.
The recursion stops when $n' < \alpha n$.
For each length $n'$, we consume an approximate swapper, 
a loose compactor, an $\epsilon'$-near-sorter,
and the reverse-routing circuitry of   
the loose compactor and the $\epsilon'$-near-sorter.
Thus for each problem size $n' = n, n/c, n/c^2, \ldots$, we need 
\begin{MyItemize}
\item 
$S_{\rm sw}(n')$ number of $(w + 1)$-selector gates 
and $B_{\rm sw}(n')$ number of generalized boolean gates
corresponding to Step~\ref{step:swap}; 
\item 
$2 S_{\rm lc}(n')$ number of $(w+2)$-selector gates 
(one for the forward direction
and one for the reverse direction)
and $B_{\rm lc}(n')$
generalized boolean gates 
 corresponding to Step~\ref{step:compactblue}; 

\item 
$O(n')$ number of generalized boolean gates 
and $O(n')$ number of $(w+2)$-selector gates
due to Step~\ref{step:compactred}
and its reverse routing;
and  
\item 
$O(n')$ generalized boolean gates and 
$O(n')$ number of 
$w$-selector gates due to 
Step~\ref{step:output}. 
\end{MyItemize}

Note that each $(w+1)$-selector gate can be realized 
with one $w$-selector gate that operates on the $w$-bit payload 
and one generalized boolean gate that computes on the extra metadata bit;
further, during the reverse routing, 
the metadata generalized boolean gate 
can also be used to save whether each output is a filler.
Thus each problem size $n'$ can be implemented with 
$S_{\rm sw}(n') + 2 S_{\rm lc}(n') + O(n')$
number of $(w+1)$-selector gates and 
$B_{\rm sw}(n') +  B_{\rm lc}(n') + 2 S_{\rm lc}(n') + O(n')$
generalized boolean gates.
Replacing each $(w+1)$-selector gate with a $w$-selector
gate and a generalized boolean gate, we have that 
each problem size $n'$ can be implemented with 
$S_{\rm sw}(n') + 2 S_{\rm lc}(n') + O(n')$
number of $w$-selector gates and 
$S_{\rm sw} (n') + B_{\rm sw}(n') +  B_{\rm lc}(n') + 
4 S_{\rm lc}(n') + O(n')$
generalized boolean gates.

Summing over all $n' = n, n/c, n/c^2, \ldots$, 
we have the follow fact:

\begin{fact}
In the above approximate splitter algorithm, 
the total number of $w$-selector gates 
needed is upper bounded by 
$2.5 S_{\rm sw}(n) + 5 S_{\rm lc}(n) + O(n)$
and the total number of generalized boolean gates
is upper bounded by 
$2.5 S_{\rm sw}(n) + 
2.5 B_{\rm sw}(n) +  2.5 B_{\rm lc}(n) + 10 S_{\rm lc}(n) + O(n)$;
furthermore, the depth 
is upper bounded 
by 
$\log_{1.79} \frac{1}{\alpha} \cdot (2 D_{\rm lc}(n) + O(1))
\leq 2.4 \log \frac{1}{\alpha} \cdot (D_{\rm lc}(n) + O(1))$.
\label{fct:swap}
\end{fact}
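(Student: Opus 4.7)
The plan is to verify the three bounds by a straightforward accounting over the levels of the recursion, combined with a geometric-series summation. By Fact~\ref{fct:lenreduce}, successive recursive calls operate on arrays of sizes $n' = n, n/c, n/c^2, \ldots$, where $c = 1/(1/1.9 + 1/32) > 1.79$, and the recursion terminates once $n' \leq \alpha n$. The per-level gate count was already tallied just above the fact statement: at problem size $n'$ we spend $S_{\rm sw}(n') + 2S_{\rm lc}(n') + O(n')$ number of $w$-selector gates and $S_{\rm sw}(n') + B_{\rm sw}(n') + B_{\rm lc}(n') + 4 S_{\rm lc}(n') + O(n')$ generalized boolean gates (after charging the extra metadata bit in each $(w+1)$- or $(w+2)$-selector to a generalized boolean gate and downgrading the remaining $w$ bits to a $w$-selector). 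So I only need to sum these contributions across all levels and account for the base case in Step~(a).

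For the gate-count bounds, I would assume (as is standard and holds for all primitives used in this paper, since they scale at least linearly in input length) that $S_{\rm sw}, B_{\rm sw}, S_{\rm lc}, B_{\rm lc}$ are nondecreasing and satisfy $f(n/c^i) \leq f(n)/c^i$ for the relevant range. Then
\[
\sum_{i \geq 0} \frac{1}{c^i} \;=\; \frac{1}{1 - 1/c} \;<\; \frac{1}{1 - 1/1.79} \;<\; 2.27 \;\leq\; 2.5,
\]
and applying this geometric sum coordinate-wise to the per-level cost yields the stated bounds of $2.5\, S_{\rm sw}(n) + 5\, S_{\rm lc}(n) + O(n)$ for $w$-selector gates and $2.5\, S_{\rm sw}(n) + 2.5\, B_{\rm sw}(n) + 2.5\, B_{\rm lc}(n) + 10\, S_{\rm lc}(n) + O(n)$ for generalized boolean gates. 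The $O(n)$ slack absorbs both the cumulative $O(n')$ terms (also a geometric sum) and the cost of the base case, where Step~(a) returns ${\bf X}$ of size at most $\alpha n$ without doing any work.

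For the depth bound, a single recursive frame contributes its approximate-swapper (which is $O(1)$-depth by Theorem~\ref{thm:approxswap}), one lossy loose compactor of depth $D_{\rm lc}(n')$, one constant-depth $\epsilon'$-near-sorter, the reverse-routing circuits (which have depth equal to their forward counterparts), and the final coordinate-wise selection (also $O(1)$-depth). This gives $2 D_{\rm lc}(n') + O(1)$ depth per level, and since $D_{\rm lc}$ is monotone we bound each frame by $2 D_{\rm lc}(n) + O(1)$. The recursion depth is at most $\lceil \log_c (1/\alpha)\rceil$, and using $1/\log_2(1.79) < 1.19$ together with $1.19 \cdot 2 < 2.4$ we conclude that the total depth is at most $2.4 \log(1/\alpha) \cdot (D_{\rm lc}(n) + O(1))$.

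The only slightly subtle point is what happens at the base case. Since Step~(a) triggers once $m \leq \alpha n$, the remaining subtree of the recursion has total size $O(\alpha n) = O(n)$, and its contribution fits comfortably inside the $O(n)$ slack in each of the two gate-count bounds. This is the step that is easiest to misaccount for, since one has to check that Step~(f)'s reverse routing, invoked once at every level of the recursion above the base case, also shrinks geometrically with $n'$ and therefore sums to $O(n)$; this is immediate from the gate counts of the lossy loose compactor and the near-sorter, which I invoke in the forward and backward directions with equal cost. No other step poses difficulties.
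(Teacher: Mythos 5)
Your proof is correct and takes essentially the same approach as the paper: the paper's proof of this fact is the single sentence ``Summing over all $n' = n, n/c, n/c^2, \ldots$, we have the follow fact,'' and you have simply elaborated that geometric-sum argument, including the right per-level tally, the right base-case handling, and the right numeric constants ($1/(1-1/1.79) < 2.27 \leq 2.5$ and $2/\log_2 1.79 < 2.4$). Your explicit sublinearity assumption $f(n/c^i) \leq f(n)/c^i$ is exactly the (implicit) hypothesis the paper relies on, and it is satisfied by every concrete cost function plugged into this fact in the bootstrapping.
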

\end{proofof}

\section{Lossy Loose Compaction from Approximate Splitter}
\label{sec:lcfromsplitter}

In this section, we show how to construct a circuit for lossy 
loose compaction from an approximate splitter. 

\begin{theorem}
Let $f(n)$ be some function 
in $n$ such that $1 < f(n) \leq \log_2 n$ holds for 
every $n$ greater than an appropriate constant; let $C_{\rm sp} > 1$
be a constant.  Fix some $\alpha \in (0, 1)$ 
which may be a function of $n$.
Suppose that for any $\beta \in (0, 1/4]$, 
for any $n$ that is greater 
than an appropriately large constant,  
$(\alpha, \beta, n, w)$-approximate splitter can be solved
by a circuit 
with $C_{\rm sp} \cdot n \cdot f(n)$ 
generalized boolean 
gates, $C_{\rm sp} \cdot n$ selector gates, 
and of depth $D_{\rm sp}(n)$. 
Then, for any $n$ greater than an appropriately large constant, 
$(1.74 \alpha, n, w)$-lossy loose compaction can be solved by a circuit 
with $2.07 C_{\rm sp} \cdot n \cdot f(f(n)) + O(n)$ generalized boolean
gates, $2.07 C_{\rm sp} \cdot n$ number of $w$-selector gates, 
and of depth $2.07 D_{\rm sp}(n) + O(\log f(n))$.
\elaine{check expressions}
\label{thm:lcfromsplitter}
\end{theorem}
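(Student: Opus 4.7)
\begin{proofof}{Theorem~\ref{thm:lcfromsplitter} (sketch)}
The plan is to realize the three-step blueprint sketched in Section~\ref{sec:roadmap:bootstrap}. First I would partition the length-$n$ input into $m := \lceil n/f(n)\rceil$ chunks of size $f(n)$ (padding the last chunk with fillers if necessary) and, in parallel, have each chunk classify itself as \emph{dense} (strictly more than $f(n)/32$ real elements) or \emph{sparse}. Counting the $f(n)$ real-flag bits inside one chunk is a $\log$-depth, linear-size subcircuit by Fact~\ref{fct:countprelim}, so across all chunks this classification costs $O(n)$ generalized boolean gates and $O(\log f(n))$ depth. Because the input is promised to carry at most $n/128$ real elements, the total number of dense chunks is at most $(n/128)/(f(n)/32) = m/4$, which is exactly the $\beta=1/4$ promise required by the chunk-level splitter used next.

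Next, I would invoke the hypothesized $(\alpha,1/4,m,f(n)\cdot w)$-approximate splitter, treating each chunk as one element of payload width $f(n)\cdot w$ and the ``distinguished'' label as the dense-flag; by the splitter's guarantee, all but at most $\alpha m$ dense chunks land inside the first $\lfloor m/4+m/64\rfloor$ output positions. I leave those leading $\lfloor(1/4+1/64)m\rfloor$ chunks untouched and then, in parallel, apply the hypothesized $(\alpha,1/32,f(n),w)$-approximate splitter \emph{inside} each of the trailing $\lceil(3/4-1/64)m\rceil$ chunks (treating real elements as distinguished), after which I truncate each trailing chunk to its first $\lfloor 3f(n)/64\rfloor$ positions. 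The concatenation of the untouched head chunks and the compressed tail chunks, padded with fillers up to the required output length $\lfloor n/1.9\rfloor$, is the final output; since the pre-padding length is $(17/64 + (47/64)(3/64))\,n \approx 0.30\,n < n/1.9$, the padding is well-defined and the output is a sub-multiset of the input real elements, which is what lossy loose compaction requires.

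For correctness I need to bound the real elements lost by $1.74\alpha n$. Losses come from two sources: (i) the at most $\alpha m$ dense chunks misplaced into the tail by the chunk-level splitter—each has at most $f(n)$ real elements and loses at most $f(n)-\lfloor 3f(n)/64\rfloor \le (61/64) f(n)$, for a total of at most $(61/64)\alpha n$; and (ii) for each of the $\lceil(3/4-1/64)m\rceil$ tail chunks that is actually sparse, the per-chunk $(\alpha,1/32)$-splitter promise holds (its real count is at most $f(n)/32$), so at most $\alpha f(n)$ real elements per chunk fail to reach the first $\lfloor f(n)/32 + f(n)/64\rfloor = \lfloor 3f(n)/64\rfloor$ positions, contributing at most $(47/64)\alpha n$. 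Summing gives a loss of at most $(108/64)\alpha n < 1.74\alpha n$, as required.

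For the resource accounting, the chunk-level splitter operates on $m$ elements of width $f(n)w$, so it contributes $C_{\rm sp}\cdot m\cdot f(m) \le O(n)$ generalized boolean gates and $C_{\rm sp}\cdot m$ selector gates of width $f(n)w$; unfolding each into $f(n)$ parallel $w$-selector gates yields exactly $C_{\rm sp}\cdot n$ $w$-selector gates in one $D_{\rm sp}(m)\le D_{\rm sp}(n)$ chunk of depth. The tail splitters together process $\le m$ independent subproblems of size $f(n)$ and payload width $w$, contributing $m\cdot C_{\rm sp} f(n)\,f(f(n)) \le C_{\rm sp}\cdot n\cdot f(f(n))$ generalized boolean gates, $m\cdot C_{\rm sp} f(n) = C_{\rm sp}\cdot n$ $w$-selector gates, and depth $D_{\rm sp}(f(n))\le D_{\rm sp}(n)$. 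Adding the $O(n)$ classification/padding overhead and absorbing it into the slack between $2$ and $2.07$, we get $2.07\,C_{\rm sp}\,n\,f(f(n)) + O(n)$ generalized boolean gates, $2.07\,C_{\rm sp}\,n$ $w$-selector gates, and depth $2.07\,D_{\rm sp}(n)+O(\log f(n))$. The main obstacle is the bookkeeping of item~(i) above: the tail splitter is applied to chunks whose $\beta=1/32$ sparsity promise is violated precisely for the misplaced dense chunks, so one must argue loss pessimistically by the chunk size rather than by the splitter's $\alpha$-guarantee—the $61/64$ slack in the constants is what makes $(108/64)<1.74$ work out.
\end{proofof}
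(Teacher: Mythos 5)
Your proposal follows the paper's proof essentially step for step: the same chunk-into-$f(n)$ decomposition, the same $(\alpha,1/4)$-splitter at chunk granularity, the same per-chunk $(\alpha,1/32)$-splitter with truncation to $\lfloor 3f(n)/64\rfloor$, the same resource and depth accounting. The only noticeable deviation is in bounding the loss from misplaced dense chunks: you refine the per-chunk bound from $f(n)$ to $f(n)-\lfloor 3f(n)/64\rfloor$ on the grounds that one cannot lose more real elements than the number of truncated positions, getting $(108/64)\alpha n$; the paper is coarser and takes the full $f(n)$, getting roughly $(111/64)\alpha n$. Both are below $1.74\alpha n$, so your closing remark that the $61/64$ slack ``is what makes $(108/64)<1.74$ work out'' slightly overstates the necessity of your refinement --- the paper's looser estimate already suffices. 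One small bookkeeping caveat: you write the chunk-level splitter costs ``$C_{\rm sp}\cdot m\cdot f(m) \le O(n)$'' generalized boolean gates; since this theorem is applied iteratively with a growing $C_{\rm sp}$, it is cleaner to keep the $C_{\rm sp}\cdot n$ term explicit and absorb it into $2.07\,C_{\rm sp}\,n\,f(f(n))$ via $f(f(n))\ge 1$, as the paper does, rather than into the $O(n)$ slack.
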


The remainder of this section will be dedicated to proving the above theorem.

\medskip
\begin{proofof}{Theorem~\ref{thm:lcfromsplitter}}
For simplicity, we first consider the case when $n$ is divisible
by $f(n)$. Looking ahead, we will use $f(n)$ to 
be $\log^{(x)}n$ for some $x$ that is power of $2$. 
We will later extend our theorem statement 
to the case when $n$ is not divisible by $f(n)$.  
Consider the following algorithm:

\medskip
\begin{mdframed}
\begin{center}
{\bf Lossy loose compaction from approximate splitter}
\end{center}
\begin{enumerate}[leftmargin=5mm,itemsep=1pt]
\item 
Divide the input array into $f(n)$-sized chunks. 
We say that a chunk is {\it sparse} if there are at most $f(n)/32$ 
real elements in it; otherwise
it is called {\it dense}.
Now, count the number of elements in every chunk, 
and mark each chunk as either ${\tt sparse}$ or ${\tt dense}$.
We will show later in Fact~\ref{fct:sparsechunk} 
that at least $3/4$ fraction of the chunks are sparse.
\label{step:countperchunk}
\item 
Call an $(\alpha, 1/4, n/f(n), w\cdot f(n))$-approximate splitter 
 to move  
almost all 
the dense chunks to the front and almost all the sparse chunks  
to the end.
\label{step:movechunks}

\item 
Apply an $(\alpha, 1/32, f(n), w)$-approximate splitter
to the trailing $\ceil{(\frac{3}{4} - \frac{1}{64}) 
\cdot \frac{n}{f(n)}}$ chunks to compress
each of these chunks to a length of $\floor{\frac{3f(n)}{64}}$,
losing few elements in the process.
The first $\floor{(\frac{1}{4} + \frac{1}{64}) \cdot \frac{n}{f(n)}}$
chunks are unchanged. Output the resulting array.
\elaine{note: now there are loose-end output wires... 
use the other notion of tight compaction}
\label{step:compactperchunk}
\end{enumerate}
\end{mdframed}

At the end of the algorithm, the output array has length at most
\begin{equation}
(\frac{3}{4} -\frac{1}{64}) \cdot \frac{n}{f(n)}
\cdot \frac{3f(n)}{64} 
+ (\frac{1}{4} + \frac{1}{64}) \cdot \frac{n}{f(n)} \cdot f(n) 
\leq 0.32 n < n/1.9 
\label{eqn:compressionrate}
\end{equation}

\begin{fact}
At least ${\frac{3}{4} \cdot \frac{n}{f(n)}}$ chunks are sparse.
\label{fct:sparsechunk}
\end{fact}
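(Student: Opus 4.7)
The plan is to argue by counting real elements. The input array is promised to contain at most $n/128$ real elements, and a chunk is declared dense precisely when it contains strictly more than $f(n)/32$ real elements. So I would let $D$ denote the number of dense chunks and observe that the real elements packed inside dense chunks alone already contribute strictly more than $D \cdot f(n)/32$ real elements to the input.

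Combining this with the sparsity promise gives $D \cdot \frac{f(n)}{32} < \frac{n}{128}$, from which $D < \frac{n}{4 f(n)}$. Since the total number of chunks equals $n/f(n)$ (using that $n$ is divisible by $f(n)$; the not-divisible case is handled separately as announced), the number of sparse chunks is at least $\frac{n}{f(n)} - \frac{n}{4 f(n)} = \frac{3}{4} \cdot \frac{n}{f(n)}$, which is exactly the claim.

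There is no real obstacle here — this is a one-line pigeonhole/averaging argument given the $1/128$-sparsity promise on the input of the lossy loose compactor and the $f(n)/32$ density threshold. The only thing to be slightly careful about is the strictness of the inequality defining ``dense'' (at most $f(n)/32$ vs.\ more than $f(n)/32$), which is what allows the strict bound $D \cdot f(n)/32 < n/128$ and hence $D < n/(4 f(n))$, which in turn yields the desired inequality even when $n/(4 f(n))$ may not be an integer. No further lemmas from the paper are needed for this fact.
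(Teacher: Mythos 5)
Your argument is correct and is essentially the paper's proof, just written in the contrapositive: the paper assumes more than a quarter of chunks are dense and derives that the total count of real elements exceeds $n/128$, while you count dense chunks directly and deduce $D < n/(4f(n))$ from the same $n/128$ promise and $f(n)/32$ threshold. Your remark about the strictness of the density threshold guaranteeing the conclusion even when $n/(4f(n))$ is not an integer is a sound observation that the paper leaves implicit.
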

\begin{proof}
Suppose not, this means that more than  
${\frac{1}{4} \cdot \frac{n}{f(n)}}$
have more than $f(n)/32$ real elements.
Thus the total number of elements is more than 
$n/128$ \elaine{note more than} 
which contradicts the input sparsity assumption of loose compaction.
\end{proof}

\begin{fact}
The above algorithm loses at most 
$1.74 \alpha n$ real elements. 
\end{fact}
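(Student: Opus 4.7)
\begin{proofof}{the Fact (lossy bound)}
The plan is to decompose the real elements that the algorithm can drop into two disjoint groups, bound each group separately using the two approximate splitters, and sum the bounds.

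First I would classify the losses. No real elements are lost in Step~\ref{step:countperchunk}; losses can only occur in Step~\ref{step:movechunks} (where some dense chunks can be misrouted to the trailing region rather than the leading region) and in Step~\ref{step:compactperchunk} (where each trailing chunk is compressed from length $f(n)$ down to $\lfloor 3f(n)/64 \rfloor$). Since the first $\lfloor (1/4 + 1/64) \cdot n/f(n) \rfloor$ chunks are passed through untouched, the only chunks contributing to loss are those in the trailing region. I would split the trailing chunks into two kinds: those that are genuinely dense (whose presence in the trailing region violates the $\beta = 1/32$ promise of the inner splitter, so I give up all their real elements), and those that are sparse (which satisfy the promise and lose only the approximation-factor amount).

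For the first group, the $(\alpha, 1/4, n/f(n), w \cdot f(n))$-approximate splitter in Step~\ref{step:movechunks} is applied to $n/f(n)$ ``super-elements'' with the ``distinguished'' label set to ${\tt dense}$. Fact~\ref{fct:sparsechunk} shows at most $(n/f(n))/4$ chunks are dense, so the $\beta = 1/4$ promise holds. Hence at most $\alpha \cdot (n/f(n))$ dense chunks fail to land in the first $\lfloor (1/4 + 1/64) \cdot n/f(n)\rfloor$ positions. Charging a full $f(n)$ real elements to each such misplaced dense chunk yields at most $\alpha n$ lost real elements from this group.

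For the second group, each trailing chunk that is actually sparse contains at most $f(n)/32$ real elements, so the $\beta = 1/32$ promise of the $(\alpha, 1/32, f(n), w)$-approximate splitter used in Step~\ref{step:compactperchunk} is satisfied. That splitter then places all but at most $\alpha \cdot f(n)$ real elements into the first $\lfloor f(n)/32 + f(n)/64 \rfloor = \lfloor 3 f(n)/64 \rfloor$ positions, so truncating to length $\lfloor 3f(n)/64 \rfloor$ loses at most $\alpha \cdot f(n)$ real elements per sparse trailing chunk. Summing over the at most $\lceil (3/4 - 1/64) \cdot n/f(n) \rceil$ trailing chunks gives at most $(3/4 - 1/64)\,\alpha n = (47/64)\,\alpha n$ lost real elements from this group. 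Adding the two contributions:
\[
\alpha n \;+\; \tfrac{47}{64}\,\alpha n \;=\; \tfrac{111}{64}\,\alpha n \;<\; 1.74\,\alpha n,
\]
which is the claimed bound. The only real subtlety, and the one place I would be careful, is handling the dense chunks that slip past the outer splitter: since they violate the promise of the inner splitter I cannot rely on its guarantee, and must instead absorb their entire content into the loss budget; the numerical slack $1.75 \to 1.74$ comes from the extra $-1/64$ saved by the inner splitter's $\beta = 1/32$ overhead.
\end{proofof}
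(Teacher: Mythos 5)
Your proof is correct and follows essentially the same approach as the paper's: decompose the loss into the at most $\alpha\cdot n/f(n)$ dense chunks misrouted by the outer splitter (charging $f(n)$ elements each, giving $\alpha n$) plus the per-chunk loss of at most $\alpha f(n)$ from the inner splitter applied to the trailing chunks, and sum to get $\frac{111}{64}\alpha n<1.74\alpha n$. You are slightly more careful than the paper in one respect: the paper's phrasing ``for each chunk, we may lose at most $\alpha f(n)$ real elements'' is literally false for a dense chunk in the trailing region (its $\beta=1/32$ promise is violated), and you resolve this explicitly by absorbing such chunks entirely into the first term rather than relying on the inner splitter's guarantee for them; the paper handles this implicitly since its first term already charges all of a misplaced dense chunk's contents.
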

\begin{proof}
If a {\tt dense} chunk is not contained within the first 
 $\floor{(\frac{1}{4} + \frac{1}{64}) \cdot \frac{n}{f(n)}}$ chunks, 
we may assume that all elements in it will be lost.
Due to the property of the approximate splitter,
at most $\alpha n/f(n)$ {\tt dense} chunks 
are not contained within the first
 $\floor{(\frac{1}{4} + \frac{1}{64}) \cdot \frac{n}{f(n)}}$ chunks.
Further, when we apply an approximate splitter to compress 
the trailing  $\ceil{(\frac{3}{4} - \frac{1}{64}) 
\cdot \frac{n}{f(n)}}$ chunks to 
each to a length of $\floor{\frac{3f(n)}{64}}$,
for each chunk, we may lose at most $\alpha f(n)$ real elements.

Therefore, the number of real elements
lost is upper bounded by  
the following as long as $n$ is greater than an appropriate constant:
\[
\alpha \cdot (n/f(n)) \cdot f(n)  
+ \alpha f(n) \cdot \ceil{(\frac{3}{4} - \frac{1}{64}) 
\cdot \frac{n}{f(n)}} 
\leq 1.74 \alpha n 
\]
\end{proof}

\paragraph{Implementing the above algorithm in circuit.}
We now analyze the circuit size of the above algorithm.
For simplicity, we first assume that $n$ is divisible by $f(n)$
and we will later modify our analysis to the more general case when $n$ is not
divisible by $f(n)$.

\begin{enumerate}[leftmargin=5mm,itemsep=1pt]
\item 
Due to Fact~\ref{fct:countprelim}, 
Step~\ref{step:countperchunk}
of the algorithm 
requires 
at most $O(n)$ generalized boolean gates as we have $n/f(n)$ counters 
each for a chunk of size $f(n)$. 
The counting for all chunks are performed in parallel, and thus
the depth is $O(\log f(n))$.


\item 
Step~\ref{step:movechunks} is a single invocation of 
an $(\alpha, 1/4, 
n/f(n),w\cdot f(n))$-approximate splitter. 
Assuming that $(\alpha, 1/4, n,w)$-approximate splitter 
can be realized with $C_{\rm sp}\cdot n \cdot f(n)$ 
generalized boolean gates and $C_{\rm sp} \cdot n$ 
selector gates, this step requires at most 
$C_{\rm sp} \cdot (n/f(n)) \cdot f(n/f(n)) \leq C_{\rm sp} 
\cdot (n/f(n)) \cdot f(n)$
generalized boolean gates and $C_{\rm sp} \cdot n/f(n)$
number of $w\cdot f(n)$-selector gates.
Each such selector gate can in 
turn be realized with $f(n)$ number of $w$-selector gates; 
moreover, the flag bit needs to be replicated $f(n)$ times 
over a binary tree, requiring $O(\log f(n))$ depth and $O(f(n))$ generalized
boolean gates per chunk.
\elaine{note the replication of the flag}
Thus, in total, 
Step~\ref{step:movechunks}
requires $C_{\rm sp} \cdot n + O(n)$ generalized boolean gates,
$C_{\rm sp} \cdot n$ number of $w$-selector gates, 
and requires at most $D_{\rm sp}(n) + O(\log f(n))$ depth.
\item 
Step~\ref{step:compactperchunk}
of the algorithm requires applying 
$\ceil{(\frac{3}{4} - \frac{1}{64}) \cdot \frac{n}{f(n)}}$
number of 
$(\alpha, 1/32, f(n), w)$-approximate splitters, 
where, according to our assumption in Theorem~\ref{thm:lcfromsplitter}, 
each such approximate splitter 
consumes $C_{\rm sp} \cdot f(n) \cdot f(f(n))$ generalized boolean
gates and $C_{\rm sp} \cdot f(n)$ number of $w$-selector gates. 
For sufficiently large $n$ and $f(n) \leq \log_2 n$, we have that 
$\ceil{(\frac{3}{4} - \frac{1}{64}) \cdot \frac{n}{f(n)}} \cdot f(n) \leq  n$.
Therefore, 
in total there are at most 
$C_{\rm sp} \cdot n \cdot f(f(n))$ generalized boolean
gates and $C_{\rm sp} \cdot n$
number of $w$-selector gates.
The depth of this step is upper bounded by $D_{\rm sp}(f(n))$.
\end{enumerate}

Summarizing the above, we have the following fact:

\begin{fact}
Assume the same assumptions as in Theorem~\ref{thm:lcfromsplitter}, and moreover
$n$ is divisible by $f(n)$.
The lossy loose compaction algorithm 
above can be realized with a 
circuit consisting of $C_{\rm sp} 
\cdot n \cdot (f(f(n)) + 1) + O(n)$
generalized boolean gates, 
$2 C_{\rm sp} \cdot n$ number of $w$-selector gates,
and of depth $D_{\rm sp}(n) + D_{\rm sp}(f(n)) + O(\log f(n))$.
\end{fact}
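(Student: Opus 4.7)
The plan is to simply aggregate the per-step bounds established in the three-item enumeration immediately preceding the statement. Since Steps~\ref{step:countperchunk}, \ref{step:movechunks}, and \ref{step:compactperchunk} are executed sequentially (each consumes the output of the previous), the total gate counts add and the depths add; chunk-level sub-operations within each step are parallel and have already been accounted for in the per-step tallies.

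For the gate counts, Step~\ref{step:countperchunk} contributes $O(n)$ generalized boolean gates and zero selector gates. Step~\ref{step:movechunks} contributes $C_{\rm sp}\cdot n + O(n)$ generalized boolean gates (the $O(n)$ absorbing the flag-replication binary trees across all $n/f(n)$ chunks, each costing $O(f(n))$ gates) and $C_{\rm sp}\cdot n$ number of $w$-selector gates, obtained by expanding each of the $C_{\rm sp}\cdot n/f(n)$ splitter-level $(w f(n))$-selector gates into $f(n)$ elementary $w$-selector gates. Step~\ref{step:compactperchunk} contributes $C_{\rm sp}\cdot n\cdot f(f(n))$ generalized boolean gates and $C_{\rm sp}\cdot n$ number of $w$-selector gates, using $f(n)\le \log_2 n$ together with $\lceil(\tfrac{3}{4}-\tfrac{1}{64})\cdot n/f(n)\rceil\cdot f(n)\le n$ for all sufficiently large $n$. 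Summing these three contributions yields the claimed $C_{\rm sp}\cdot n\cdot(f(f(n))+1)+O(n)$ generalized boolean gates and $2C_{\rm sp}\cdot n$ number of $w$-selector gates.

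For the depth, Step~\ref{step:countperchunk} runs in $O(\log f(n))$ depth (parallel counters of size $f(n)$), Step~\ref{step:movechunks} in $D_{\rm sp}(n)+O(\log f(n))$ depth (the splitter call plus the flag-replication tree), and Step~\ref{step:compactperchunk} in $D_{\rm sp}(f(n))$ depth (parallel chunk-level splitters). Adding these gives $D_{\rm sp}(n)+D_{\rm sp}(f(n))+O(\log f(n))$, matching the statement. There is no real obstacle here—the only care needed is to verify that the flag-replication overhead in Step~\ref{step:movechunks} is absorbed additively rather than multiplicatively, which is valid precisely because we instantiate the $w\cdot f(n)$-selector gates in the indivisible operational model (Lemma~\ref{lem:operational}).
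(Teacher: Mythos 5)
Your proof is correct and follows the paper's argument essentially line by line: it aggregates the per-step gate and depth bounds from the three-item enumeration, and the arithmetic matches. The only cosmetic slip is the appeal to Lemma~\ref{lem:operational} for the additivity of the flag-replication depth—that lemma concerns compiling $w$-selector gates to plain boolean gates, not splitting a $(w f(n))$-selector gate into $f(n)$ parallel $w$-selector gates; the correct justification is the same partial-evaluation-in-the-indivisible-model idea underlying that lemma rather than the lemma statement itself, and the paper tacitly relies on this without invoking the lemma either.
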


\paragraph{When $n$ is not divisible by $f(n)$.}
When $n$ is not divisible by $f(n)$, we can pad the 
last chunk with filler elements to a 
length of a multiple $f(n)$.  After the padding the total number of elements 
is upper bounded by $n + f(n)$. 
As long as $n$ is greater than an appropriately large constant,
even with the aforementioned padding, we would have the 
following fact:


\begin{fact}
Assume the same assumptions as in Theorem~\ref{thm:lcfromsplitter}.
Then, 
for sufficiently large $n$,
the above lossy loose compaction algorithm 
can be realized with a 
circuit consisting of $2.07 C_{\rm sp} \cdot n \cdot f(f(n)) + O(n)$
generalized boolean gates, 
$2.07 C_{\rm sp} \cdot n$ number of $w$-selector gates, 
and in depth $2.07 D_{\rm sp}(n) + O(\log f(n))$.
\end{fact}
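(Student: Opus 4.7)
My plan is to handle the non-divisibility by simple padding and then reduce to the divisible-case Fact just proved. Given an input of size $n$ with $n \not\equiv 0 \pmod{f(n)}$, I would pad the input array with at most $f(n)-1$ filler (i.e., non-real) elements so that the padded size $n' := n + r$ satisfies $f(n) \mid n'$ and $n \leq n' \leq n + f(n)$. Because fillers do not count as real elements, the input-sparsity precondition (at most $n/128 \leq n'/128$ reals) still holds for the padded array, so I can invoke the divisible-case algorithm and its correctness/loss analysis verbatim on the padded input; the compression calculation of Equation~\eqref{eqn:compressionrate} then gives an output length of at most $0.32\, n' < n/1.9$ as required by the definition of lossy loose compaction.

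\paragraph{Translating the $n'$-bounds to $n$-bounds.} The real content of the proof is a short calculation re-expressing the previous Fact's bounds---stated in $n'$---in terms of $n$. Since $f(n) \leq \log_2 n = o(n)$, for any fixed constant $\epsilon > 0$ I can assume $n' \leq (1+\epsilon) n$ by taking $n$ larger than some threshold $n_0(\epsilon)$. Moreover, $f$ is monotone and slowly growing, so $f(f(n')) = (1 + o(1)) f(f(n))$ and $\log f(n') = \log f(n) + O(1)$; and without loss of generality $D_{\rm sp}(\cdot)$ is non-decreasing in the input size (if not, replace it by its running maximum, which at worst doubles it). Using $f(f(n')) + 1 \leq 2 f(f(n'))$ (valid once $f(f(n)) \geq 1$, which holds for large enough $n$), the divisible-case bounds become at most
\[
2(1+\epsilon)(1+o(1)) \, C_{\rm sp} \, n \, f(f(n)) + O(n) \quad \text{generalized boolean gates},
\]
$2(1+\epsilon) C_{\rm sp} n$ number of $w$-selector gates, and $D_{\rm sp}(n') + D_{\rm sp}(f(n')) + O(\log f(n)) \leq 2 D_{\rm sp}(n') + O(\log f(n)) \leq 2(1+\epsilon) D_{\rm sp}(n) + O(\log f(n))$ depth, where the first inequality uses $D_{\rm sp}(f(n')) \leq D_{\rm sp}(n')$ by monotonicity. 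Choosing $\epsilon$ small enough that $2(1+\epsilon) < 2.07$ (for instance $\epsilon = 0.03$) yields the claimed constants for all sufficiently large $n$.

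\paragraph{Main obstacle.} I do not expect any serious obstacle---the argument is pure bookkeeping built on top of the divisible-case Fact. The two points that deserve verification are (i)~that the $+1$ in the $f(f(n'))+1$ summand fits into the $\times 2$ slack, which forces $f(f(n)) \geq 1$ and hence a lower bound on $n$, and (ii)~that the depth sum $D_{\rm sp}(n') + D_{\rm sp}(f(n'))$ collapses to $\leq 2 D_{\rm sp}(n')$ via monotonicity of $D_{\rm sp}$. The constant ``$2.07$'' in the statement exists precisely to absorb the multiplicative $(1+\epsilon)$ blow-up caused by padding together with the factor-of-two combinations above.
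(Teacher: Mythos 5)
Your approach — pad the input with fillers to the next multiple of $f(n)$, then invoke the divisible-case Fact and absorb the $O(f(n))$ blowup into the claimed constants — is exactly the paper's argument, which the authors state only in one sentence. Two small remarks on the bookkeeping: the step where you bound $D_{\rm sp}(n')\le(1+\epsilon)D_{\rm sp}(n)$ is not a consequence of monotonicity alone (monotonicity gives $D_{\rm sp}(n')\ge D_{\rm sp}(n)$, the wrong direction), but the detour is unnecessary — after padding, the two approximate splitters are applied to arrays of $n'/f(n)\le n$ and $f(n)\le n$ elements respectively, so monotone $D_{\rm sp}$ already gives depth $\le 2D_{\rm sp}(n)+O(\log f(n))$; and the parenthetical ``replace by its running maximum, which at worst doubles it'' is off (taking a running maximum is a pointwise upper bound and need not double anything), though the WLOG-monotone reduction is fine for other reasons.
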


\end{proofof}


\section{Linear-Sized, Low-Depth 
$1/\poly\log(\cdot)$-Lossy Loose Compactor}
\label{sec:optllc}

In this section, we shall prove the following theorem. 

\begin{theorem}[Linear-sized loose compactor]
Let $\widetilde{C}> 1$ be an arbitrary constant.
There exists a 
circuit 
in the indivisible model 
that solves $(1/\log^{\widetilde{C}}(n), n, w)$-lossy loose compaction, and moreover 
the circuit depth is $O(\log^{0.5} n)$,  \elaine{is this tight enough}
the total number of generalized boolean gates
is upper bounded by 
 $O(n \cdot w) \cdot 
\max\left(1,\poly(\log^* n - \log^* w)\right)$, and the number 
of 
$w$-selector gates is upper bounded by  
 $O(n) \cdot 
\max\left(1,\poly(\log^* n - \log^* w)\right)$.

As a direct corollary, for any arbitrarily large constant $c \geq 1$, 
if $w \geq \log^{(c)} n$, it holds that 
the number of generalized boolean gates 
is upper bounded by $O(nw)$, and the number of  
$w$-selector gates is  
upper bounded by $O(n)$.
\label{thm:linearllc}
\end{theorem}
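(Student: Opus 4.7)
The plan is to realize the repeated-bootstrapping scheme outlined in Section~\ref{sec:roadmap:bootstrap} by iteratively composing three results we have already proved: Theorem~\ref{thm:initllc} provides the starting point, a $(1/\log^{c_0}n, n, w)$-lossy loose compactor with $O(n\log\log n)$ generalized boolean gates, $O(n)$ selector gates, and depth $O(\log\log n)$; Theorem~\ref{thm:splitterfromlc} (combined with the constant-depth approximate swapper of Theorem~\ref{thm:approxswap}) converts any lossy loose compactor into an approximate splitter; and Theorem~\ref{thm:lcfromsplitter} turns any approximate splitter back into a strictly more efficient lossy loose compactor. We choose the starting constant $c_0$ to be a sufficiently large constant above $\widetilde{C}$ (large enough to absorb the geometric blowup in the loss factor over all iterations), and iterate the composition ``Theorem~\ref{thm:splitterfromlc} followed by Theorem~\ref{thm:lcfromsplitter}'' for $d := \Theta(\log(\log^*n - \log^*w))$ rounds to obtain the target circuit.

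To analyze the parameters, we track the tuple $(B_i(n), S_i(n), D_i(n), \alpha_i)$ of generalized boolean gates, $w$-selector gates, depth, and loss factor after $i$ rounds. The invariant to establish is
\[
B_i(n) \leq \kappa^i\cdot n\cdot \log^{(2^{i+1})}(n),\quad
S_i(n) \leq \kappa^i\cdot n,\quad
D_i(n) \leq (C\log\log n)^{i+1},\quad
\alpha_i \leq \gamma^i\cdot \alpha_0,
\]
where $\log^{(j)}$ denotes the $j$-fold iterated logarithm and $\kappa,C,\gamma$ are universal constants determined by the leading constants of Theorems~\ref{thm:splitterfromlc} and~\ref{thm:lcfromsplitter}. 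The key observations driving the induction are: (i) Theorem~\ref{thm:lcfromsplitter} maps an $f(n)$-factor in the splitter to an $f(f(n))$-factor in the next lossy loose compactor, so starting from $f_0(n) = \log\log n$ we obtain $f_i = \log^{(2^{i+1})}$; (ii) the selector-gate count only inflates by a multiplicative constant per round; (iii) the depth multiplies by $\Theta(\log(1/\alpha_i)) = \Theta(\log\log n)$ per round because of the $\log(1/\alpha)$ factor in Theorem~\ref{thm:splitterfromlc}; (iv) the loss factor inflates by a multiplicative constant of at most $8\cdot 1.74 < 14$ per round.

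At round $d$, the stopping condition guarantees $f_d(n) = O(w)$, hence $B_d(n) \leq \kappa^d\cdot O(nw) = O(nw)\cdot\poly(\log^*n - \log^*w)$ and $S_d(n) = O(n)\cdot\poly(\log^*n - \log^*w)$. The loss factor satisfies $\alpha_d \leq \gamma^d/\log^{c_0}n \leq \poly(\log^*n)/\log^{c_0}n \leq 1/\log^{\widetilde{C}}n$ once $c_0$ is chosen a sufficiently large constant above $\widetilde{C}$. For the depth, $D_d(n) \leq (C\log\log n)^{O(\log\log^*n)}$; taking logarithms yields $\log D_d(n) = O(\log\log^*n\cdot\log\log\log n) = o(\log\log n)$, so $D_d(n) \leq \log^{0.5}n$ for sufficiently large $n$. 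The corollary is immediate: when $w \geq \log^{(c)}n$ we have $\log^*n - \log^*w = O(1)$, collapsing the $\poly(\log^*n - \log^*w)$ overhead to $O(1)$.

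The main obstacle we expect is bookkeeping: three quantities (depth, loss factor, and stopping round) must be balanced simultaneously, and they are driven by different mechanisms. The crucial quantitative point is that the $\Theta(\log\log n)$-per-round depth blowup at the splitter step compounds only $\Theta(\log\log^*n)$ times before the recursion halts, and the very slow growth of $\log\log^*n$ is what ultimately keeps the depth below $\log^{0.5}n$. Minor technicalities include handling small $n$ (where we can fall back to AKS, absorbed into the $O(\cdot)$), handling the boundary case $w \geq \log\log n$ directly via Theorem~\ref{thm:initllc} without further bootstrapping, and checking that Theorems~\ref{thm:splitterfromlc} and~\ref{thm:lcfromsplitter} continue to apply at every intermediate $n$ along the recursion (they only require $n$ above an absolute constant and $f(n) \leq \log_2 n$, both of which are preserved).
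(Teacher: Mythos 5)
Your proposal is correct and takes essentially the same route as the paper's proof: start from Theorem~\ref{thm:initllc}, iterate the $\LC \to \SP \to \LC$ composition (Theorems~\ref{thm:splitterfromlc} and~\ref{thm:lcfromsplitter}) for $d = \ceil{\log(\log^*n - \log^*w)}$ rounds, and track the four quantities (gate counts, depth, loss factor) by induction. The paper fixes $C_0 := \widetilde{C}+1$ explicitly rather than leaving $c_0$ as a large enough constant, and it slightly loosens the $f$-parameter of $\SP_1$ to $\log n$ so that the displayed recurrence stays clean, but these are cosmetic differences; your depth estimate $(\log\log n)^{O(\log\log^* n)} = o(\log^{0.5}n)$ is exactly the calculation the paper relies on.
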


The case when $w > \log \log n$ is easier (see Footnote~\ref{fnt:bigw}), 
so in the remainder
of this section, unless otherwise noted, 
we shall assume that $w \leq \log \log n$.


\medskip
\begin{proofof}{Theorem~\ref{thm:linearllc}}
We will construct tight compaction through 
repeated bootstrapping and boosting.
Without loss of generality, we may assume that $n$ 
is greater than an appropriately large constant.
We have two steps:

\begin{itemize}[leftmargin=5mm,itemsep=1pt]
\item {\bf \boldmath $\LC_i \Longrightarrow \SP_{i+1}$ 
(Theorem~\ref{thm:splitterfromlc}):}
from lossy loose compactor to 
approximate splitter. 
Due to 
Theorem~\ref{thm:approxswap}
and 
Theorem~\ref{thm:splitterfromlc}, we get the following, 
where we use different subscripts in the big-O notations
to hide different constants.
\begin{quote}
Assuming $(\alpha, n,w)$-lossy loose compactor with:
\[
\begin{array}{rl}
\text{\rm \# generalized boolean gates}: &  B_{\rm lc}(n) 
\\
\text{\rm \# selector gates}: &  S_{\rm lc}(n) \\
\text{\rm depth}: &  D_{\rm lc}(n) \\
\end{array}
\] 
Then, for any $\beta \in (0, 1/4]$, 
there exists $(8\alpha, \beta, n,w)$-approximate splitter with:
\[
\begin{array}{rl}
\text{\rm \# generalized boolean gates}: &  
2.5 B_{\rm lc}(n) + 10 S_{\rm lc}(n) + O_1(n)
\\
\text{\rm \# selector gates}: &  
5 S_{\rm lc}(n) + O_2(n) \\
\text{depth}: & 
2.4 \log{\frac{1}{\alpha}} \cdot (D_{\rm lc}(n) + O_3(1))
\end{array}
\] 
\end{quote}
\item {\bf \boldmath $\SP_{i+1} \Longrightarrow \LC_{i+1}$ (Theorem~\ref{thm:lcfromsplitter})}: from approximate splitter to 
lossy loose compactor. Simplifying  Theorem~\ref{thm:lcfromsplitter} 
we have:
\begin{quote}
Fix some $\alpha \in (0, 1)$.
Assuming that for 
any $\beta \in (0, 1/4]$,
$(\alpha, \beta, n,w)$-approximate splitter 
can be realized 
in a circuit with the following cost
for some constant $C_{\rm sp}$ and function $f(n)$:
\[
\begin{array}{rl}
\text{\rm \# generalized boolean gates}: &  C_{\rm sp}\cdot n\cdot f(n)
\\
\text{\rm \# selector gates}: &  C_{\rm sp}\cdot  n  \\
\text{depth}: & D_{\rm sp}(n)\\
\end{array}
\] 
Then there exists a $(1.74 \alpha, n,w)$-lossy 
loose compactor such that:
\[
\begin{array}{rl}
\text{\rm \# generalized boolean gates}: &  2.07\cdot C_{\rm sp}\cdot n\cdot f(f(n)) + O_4(n)
\\
\text{\rm \# selector gates}: &  2.07\cdot C_{\rm sp} \cdot n 
\\
\text{depth}: & 2.07 \cdot D_{\rm sp} (n) + O_5(\log f(n))
\end{array}
\] 
\end{quote}
\end{itemize}

Choose $C_0 := \widetilde{C}+ 1$. 
Our starting point is Theorem~\ref{thm:initllc}, 
which gives as a circuit $\LC_0$
that realizes $(1/\log^{C_0} n, n, w)$-lossy loose compaction
for the constant $C_0 > 1$.
Using the 
above two steps, we bootstrap and boost the circuit:
\begin{itemize}
\item[$\LC_0$:]
By Theorem~\ref{thm:initllc}, there exists
a constant $C > 1$ 
such that we can solve $(1/\log^{C_0} n, n, w)$-lossy 
loose compaction with
\[
\begin{array}{rl}
\text{\rm generalized boolean gates}: &  C n \log \log n
\\
\text{\rm selector gates}: &  C n \\
\text{\rm depth}: & C \log \log n\\
\end{array}
\]

\item[$\SP_1$:] By Theorem~\ref{thm:splitterfromlc}, 
for any $\beta \in (0, 1/4]$, 
we can 
construct an $(8/\log^{C_0} n, \beta, n, w)$-approximate 
splitter circuit $\SP_1$ from $\LC_0$.
$\SP_1$'s size is upper bounded by the 
expressions\footnote{\label{fnt:bigw} When 
$w > \log \log n$, $\LC_0$ gives Theorem~\ref{thm:linearllc}. 
Therefore, the rest of this section assumes $w \leq \log \log n$.
}:
\[
\begin{array}{rl}
\text{\rm generalized boolean gates}: &  
2.5 C n \log n + 10 C n + O_1(n) \leq 5.1 C n \log n 
\\
\text{\rm selector gates}: &  5 C n + O_2(n) \leq 5.1 C n \\
\text{\rm depth}: & 2.4 C_0 \log \log n \cdot (C \log \log n + O_3(1))
\leq 2.5 C_0 \log \log n \cdot (C \log \log n)
\\
\end{array}
\] 
In the above, 
the inequalities hold as long as $n$ is greater than an appropriately
large constant.

\item[$\LC_1$:] 
Due to Theorem~\ref{thm:lcfromsplitter},
we build
a $(8 \cdot 1.74/\log^{C_0}n, n, w)$
 lossy loose compaction circuit $\LC_1$ from $\SP_1$.
 $\LC_1$'s size is upper bounded by the expressions:
\[
\begin{array}{rl}
\text{\rm generalized boolean gates}: &  
2.07 \cdot 5.1 C n \log \log n + O_4(n) 
\leq 2.1 \cdot 5.1 Cn \log \log n \\[2pt]
\text{\rm selector gates}: &  
2.07 \cdot 5.1 Cn  \leq 2.1 \cdot 5.1 Cn\\[2pt]
\text{depth}: & 
\begin{array}{ll}
& 2.07 \cdot (2.5 C_0 \log \log n) \cdot (C \log \log n) + O_5(\log \log n)
\\
\leq & 2.1 \cdot (2.5 C_0 \log \log n) \cdot (C \log \log n) 
\end{array}
\end{array}
\] 
\item[$\SP_2$:] 
Due to Theorem~\ref{thm:splitterfromlc},
for any $\beta \in (0, 1/4]$, 
we can construct 
a $(8^2 \cdot 1.74 / \log^{C_0} n, \beta, n, w)$-approximate splitter
circuit $\SP_2$ from $\LC_1$.
$\SP_2$'s size is upper bounded by the expressions:
\[
\begin{array}{rl}
\text{\rm generalized boolean gates}: &  
2.5 \cdot 2.1 \cdot 5.1 Cn \log \log n  + 
10 \cdot 2.1 \cdot 5.1 Cn + O_1(n) \leq 
2.1 \cdot (5.1)^2 C n \log \log n
\\[2pt]
\text{\rm selector gates}: &  
5 \cdot 2.1 \cdot 5.1 Cn  + O_2(n) \leq 2.1 \cdot (5.1)^2 Cn 
\\[2pt]
\text{depth}: & 
\begin{array}{ll}
& 2.4 C_0 \log \log n \cdot ( 2.1 \cdot (2.5 C_0 \log \log n) \cdot (C \log \log n) 
+ O_3(1))\\
\leq 
& 2.1 \cdot (2.5 C_0 \log \log n)^2 \cdot (C \log \log n) 
\end{array}
\end{array}
\] 
\item[$\LC_2$:]
Due to Theorem~\ref{thm:lcfromsplitter},
we build
a $((8 \cdot 1.74)^2 /\log^{C_0}n, n, w)$-lossy 
loose compaction circuit $\LC_2$ from $\SP_2$.
 $\LC_2$'s size is upper bounded by the expressions:
\[
\begin{array}{rl}
\text{\rm generalized boolean gates}: &  
2.07 \cdot 2.1 \cdot (5.1)^2 C n \log^{(4)}n + O_4(n) 
\leq (2.1 \cdot 5.1)^2 Cn \log^{(4)}n 
\\
\text{\rm selector gates}: &   
2.07 \cdot 2.1 \cdot (5.1)^2 Cn  \leq (2.1 \cdot 5.1)^2 Cn
\\
\text{\rm depth}: & 
\begin{array}{ll}
& 2.07 \cdot 2.1 \cdot (2.5 C_0 \log \log n)^2 \cdot (C \log \log n) 
+ O_5(\log^{(3)}(n))  
\\
\leq & (2.1 \cdot 2.5 C_0 \log \log n)^2 \cdot (C \log \log n)   
\end{array}
\end{array}
\] 
\end{itemize}

Let $d \in \N$ be the smallest integer
such that $\log^{(2^d)}n \leq w$, i.e., 
$d = \ceil{\log(\log^* n - \log^* w)} \leq \log(\log^* n - \log^* w) + 1$.
Continuing for $d$ iterations, we get:
\begin{itemize}
\item[$\LC_{d}$:]  
$\LC_d$ is a 
$((8\cdot 1.74)^d / \log^{C_0} n, n, w)$-lossy loose compactor, 
and $\LC_{d}$'s size is upper bounded by the expressions:
\[
\begin{array}{rl}
\text{\rm generalized boolean gates}: &  
(2.1 \cdot 5.1)^{d} Cn \log^{(2^{d})}n 
= O(nw) \cdot \poly(\log^*n - \log^* w) \\
\text{\rm selector gates}: &   
(2.1 \cdot 5.1)^{d} Cn
= O(n) \cdot \poly(\log^*n - \log^* w)\\
\text{\rm depth}: & (2.1 \cdot 2.5 C_0 \log \log n)^{d} 
 \cdot (C \log \log n) 
\leq O(\log^{0.5} n)
\end{array}
\] 
\end{itemize}
This gives rise to Theorem~\ref{thm:linearllc}.
\end{proofof}

\section{Approximate Tight Compaction}
\label{sec:approxtc}

\paragraph{Definition.}
Let $\alpha \in (0, 1)$.
An $(\alpha, n, w)$-approximate tight compactor
(also written as $\alpha$-approximate tight compactor
when $n$ and $w$ are clear from the context)
solves the following problem:
given an input array ${\bf I}$ containing $n$ elements, 
each containing  
a $1$-bit key and a $w$-bit payload, 
we want to output 
a permutation (denoted ${\bf O}$) of the input array ${\bf I}$, such that 
at most $\alpha \cdot n$ elements 
in ${\bf O}$
are misplaced  --- here, an element ${\bf O}[i]$ is said
be misplaced iff ${\bf O}[i]$ is marked
with the key $b \in \{0, 1\}$; however, 
the sorted array ${\sf sorted}({\bf I})$
wants the key $1-b$ in position $i$. 

\begin{theorem}[Approximate tight compaction]
Fix an arbitrary constant $\widetilde{C} > 1$.
There is an $(1/(\log n)^{\widetilde{C}}, n, w)$-approximate 
tight compaction circuit  
that has 
 $O(n \cdot w) \cdot 
\max\left(1,\poly(\log^* n - \log^* w)\right)$
generalized boolean gates, 
$O(n) \cdot 
\max\left(1,\poly(\log^* n - \log^* w)\right)$
number of $w$-selector gates, and 
with depth at most $O(\log n)$.
\label{thm:approxtc}
\end{theorem}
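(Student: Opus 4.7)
The plan is to invoke Pippenger's classical reduction from tight compaction to loose compaction~\cite{selfroutingsuperconcentrator} using Theorem~\ref{thm:linearllc}'s lossy loose compactor as the extraction primitive, truncating the recursion after $O(\log\log n)$ levels and finishing the small residual directly with AKS~\cite{aks}. The early truncation is what prevents the usual $\Omega(\log n)$-factor depth blowup one would get from a full $\Theta(\log n)$-level Pippenger recursion, and the lossy nature of the loose compactor contributes only a $1/\log^C n$ fraction of residual misplaced elements that we will absorb into the final approximation budget.

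Concretely, I would first perform standard pre-processing in $O(n)$ gates and $O(\log n)$ depth: count $n_0$ (the number of $0$-keys) via the tree-sum of Fact~\ref{fct:countprelim}, broadcast the split point by the generalized binary-to-unary conversion of Fact~\ref{fct:binary_to_unary} so each position carries its ought-to-be bit, and apply a constant-depth $\epsilon$-near-sorter with $\epsilon < 1/256$ (Section~\ref{sec:segmenter-prelim}) to bring the number of misplaced elements below $n/128$. Then at each level of recursion on an array of size $m$ with at most $m/128$ misplaced reals, I invoke Theorem~\ref{thm:linearllc}'s $1/\log^C n$-lossy loose compactor (for a sufficiently large constant $C > \widetilde{C}$) to extract the reals into an array of size $m/1.9$, where each real still carries its original (key, payload) and the target bit of its originating position; I then recurse on this extracted array, treating the mismatch between the element's original key and its target bit as the $1$-bit key for the inner tight compaction. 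Recursion terminates once the size drops to $\leq n/\log^{\widetilde{C}+O(1)} n$, where AKS finishes in $O(n)$ gates and $O(\log n)$ depth. Reverse-routing ascends the recursion stack using the loose compactor's recorded selector-gate flags, returning each swapped element to its original position, and a coordinate-wise multiplexer overlays the corrections onto the untouched correctly-placed elements. Because array sizes form a geometric sequence with ratio $1/1.9$, the total size summed over all levels is dominated by the top-level call, yielding $O(nw)\cdot\max(1,\poly(\log^* n - \log^* w))$ generalized boolean gates, $O(n)\cdot\max(1,\poly(\log^* n - \log^* w))$ number of $w$-selector gates, and depth $O(\log\log n)\cdot O(\log^{0.5} n) + O(\log n) = O(\log n)$. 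The total misplaced count at the output is bounded by the geometric series of per-level lossy losses, $O(n/\log^C n)$, plus the $O(n/\log^{\widetilde{C}+O(1)} n)$ residual from the truncated base case, which is at most $n/\log^{\widetilde{C}} n$ for an appropriately chosen $C$.

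The main obstacle I foresee is the metadata bookkeeping across recursive levels: the recursive call does not perform tight compaction on the raw keys of the extracted array but on the derived mismatch between the element's original key and its target bit, so the target bit must be packaged as a secondary flag that rides through the loose compactor as ordinary payload. Fillers introduced at a previous level must also be tagged so that the next level's near-sorter and extractor treat them as ``already in place,'' which can be done by assigning them a consistent neutral marker whose mismatch bit is $0$. Once these details are fixed, Pippenger's original correctness argument carries through essentially unchanged, with the only new contribution to the error being the geometric sum of per-level lossy losses that the choice of the constant $C$ controls.
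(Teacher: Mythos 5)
Your high-level plan — a Pippenger-style recursion on the lossy loose compactor of Theorem~\ref{thm:linearllc}, truncated after $O(\log\log n)$ levels to cap the depth — is the right framework and matches the paper's. But your description is missing the one ingredient that makes the recursion work: a \emph{per-level constant-depth approximate swapper} (Theorem~\ref{thm:approxswap}). This gap is not cosmetic; it breaks the argument in two places. First, after you loose-compact the roughly $m/128$ misplaced reals into an array of length $\lfloor m/1.9\rfloor$, the extracted array has real fraction about $1.9/128 \approx 1/67 > 1/128$, so it violates the sparsity precondition of the loose compactor and the next level of extraction cannot be invoked at all. Second, the recursive sub-problem is a \emph{swap} problem, not a tight compaction problem: positions in the extracted array carry provenance back to the parent array via the loose compactor's wiring, and reverse-routing a tight-compacted permutation of the extracted array will route an element to whatever position its new slot's \emph{provenance} was, not to a position that \emph{needs} its key. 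Tight-compact-then-reverse-route does not implement the required red-blue exchange. Your choice of "mismatch bit" as the key compounds this: every extracted misplaced element has mismatch bit $1$ (and fillers $0$), so compacting by it merely separates reals from fillers and achieves nothing.

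What the paper does instead is: after the count/color preprocessing (your near-sorter preprocessing is superfluous — coloring with balanced red/blue counts lets the swapper do the initial reduction), each call of $\widehat{\bf Swap}^n$ first runs a constant-depth approximate swapper (adapted from Asharov et al.~\cite{paracompact}, Algorithm~6.10), which actually pairs up reds and blues in place, uncolors them, and leaves at most $m/128$ colored elements; only then does it loose-compact the colored residual, recurse, reverse-route, and overlay. This simultaneously restores the sparsity invariant and performs the substantive correction at every level. The base case simply returns the up to $\alpha n$ residual unswapped (it is the recursion's job to keep the total of these residuals plus the geometric sum of lossy losses below $8\alpha n$); your AKS-at-the-base idea is a harmless optional refinement that would eliminate the base-case contribution but is not needed for the approximate guarantee. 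With the approximate swapper inserted at every level, the size, depth, and loss accounting you sketch go through essentially as stated.
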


\begin{proofof}{Theorem~\ref{thm:approxtc}}
Given an $(\alpha, n, w)$-lossy loose compactor, we can obtain a
$(8\alpha, n, w)$-approximate tight compactor  \elaine{check param}
using 
an algorithm that is similar to the one described
in the proof of Theorem~\ref{thm:splitterfromlc}.
For convenience, below 
we shall refer to the elements with the $0$-key
in the input array ${\bf I}$ as {\it distinguished}.

\begin{mdframed}
\begin{center}
{\bf Approximate tight compaction from lossy loose compaction} 
\end{center}
\begin{enumerate}[leftmargin=5mm,itemsep=1pt]
\item
{\it Count.}
Compute the total number (denoted ${\sf cnt}$) of 
distinguished elements 
in the input array ${\bf I}$.
\label{step:count-tc}
\item
{\it Color.}
For any $i \leq {\sf cnt}$, if ${\bf I}[i]$ 
is not distinguished, mark the element ${\tt red}$;
for any $i > {\sf cnt}$, if ${\bf I}[i]$ is
distinguished, mark the element ${\tt blue}$; every other element is marked $\bot$.
Let the outcome be ${\bf X}$.

Note that at this moment, each element
is labeled with 3 bits of metadata, one bit of distinguished indicator
and two bits of color-indicator (indicating whether the element is colored,
and if so, which color).
\label{step:color-tc}
\item 
{\it Swap.} \label{step:swap-tc}
Call 
$\widehat{\bf Swap}^n({\bf X})$
(to be defined below)
to swap almost all the ${\tt blue}$ elements
each with a ${\tt red}$ element --- 
here we use a payload of size $w+1$ and not $w$ as we also include 
the distinguished-indicator as part of the payload.
Return the outcome.
\end{enumerate}
\end{mdframed}

$\widehat{\bf Swap}^n({\bf X})$
is defined 
in a very similar to the ${\bf Swap}^n$
algorithm of Theorem~\ref{thm:splitterfromlc}; except
that now, we simply use a lossy loose compactor to extract
the residual ${\tt red}$ and ${\tt blue}$ elements, 
and then recurse
on the extracted array. In comparison, in the earlier 
${\bf Swap}^n$ algorithm, we used a lossy loose compactor
to extract ${\tt blue}$ elements and used 
a near-sorter to extract the ${\tt red}$ elements.

\begin{mdframed}
\begin{center}
$\widehat{\bf Swap}^n({\bf X})$
\end{center}
\begin{itemize}[leftmargin=5mm,itemsep=0pt]
\item {\bf Input:} An array {\bf X} of $m \leq n$ 
elements, each has a $w$-bit payload\footnotemark  
and a 2-bit label indicating whether the element is colored,
and if so, whether the element is ${\tt blue}$ or ${\tt red}$.
$n$ is the size of the original problem when 
${\bf Swap}$ is first called; the same $n$   
will be passed into all recursive 
calls since it is used to decide
when the recursion stops.  
\footnotetext{Our approximate tight compaction algorithm 
actually requires a swapper where elements are of bit-length $w+1$, 
but for convenience
we rename the variable to $w$ in the description of the swapper.
}

\item {\bf Algorithm:}
\begin{enumerate}[leftmargin=5mm,itemsep=1pt,label=(\alph*)] 
\item 
{\it Base case.}
Same as Step~\ref{step:splitter:base} of the earlier ${\bf Swap}^n$ of Theorem~\ref{thm:splitterfromlc}.
\item 
{\it Approximate swapper.}
Same as Step~\ref{step:swap} of the earlier ${\bf Swap}^n$ of Theorem~\ref{thm:splitterfromlc};
recall that the resulting array is denoted as ${\bf X}'$.
\ignore{
Call an $(m, w)$-approximate swapper (see Theorem~\ref{thm:approxswap}) 
on ${\bf X}$ to swap elements of opposite colors 
and uncolor them in the process,
such that at most $m/128 + m_{\rm red} - m_{\rm blue}$ 
elements remain colored. 
Let the outcome be called ${\bf X}'$.
\label{step:swap}
}
\item 
{\it Lossy-extract colored.}
Call an $(\alpha, m, w+1)$-lossy loose 
compactor 
to compact ${\bf X}'$ by a half, 
where the lossy loose compactor treats 
the 
colored 
elements as real, and all other elements as fillers 
(i.e., the loose compactor treats
the first bit of the color label as a real-filler indicator,
and the second bit of the color label 
is treated as part of the payload).

Let the outcome be ${\bf Y}$ 
whose length is half of ${\bf X}$.
\label{step:compactcolored}
\ignore{
\item 
\label{step:compactred}
{\it Extract red.}
Let $\epsilon' = 1/2^{10}$. Apply an $\epsilon'$-near-sorter
to the array ${\bf X}'$ 
treating all ${\tt red}$ elements as 
smaller than all other elements.
Let ${\bf Y}_{\rm red}$   
be the first $\floor{m/32}$ elements of the 
resulting near-sorted array.
Mark every non-${\tt red}$ element in ${\bf Y}_{\rm red}$ as 
uncolored,
and let
${\bf Y} := {\bf Y}_{\rm red} || {\bf Y}_{\rm blue}$.
}
\item 
{\it Recurse.}
Recursively call $\widehat{\bf Swap}^n({\bf Y})$, 
and let the outcome be ${\bf Y}'$.
\label{step:recurse-tc}
\item 
{\it Reverse route.}
Reverse the routing decisions made by all selector gates 
during Steps~\ref{step:compactcolored} 
(see Remark~\ref{rmk:reverseroute} in the
proof of Theorem~\ref{thm:splitterfromlc}).
In this way, we can reverse-route elements in ${\bf Y}'$ 
to an array (denoted $\widetilde{\bf X}$) 
whose length is $m$.
\label{step:reverse-tc}

\item 
{\it Output.}
\label{step:approxtc:output}
Return 
${\bf O}$ which is formed by a performing coordinate-wise 
select operation between ${\bf X}'$
and $\widetilde{\bf X}$.
For every $i \in [m]$:
\begin{itemize}[leftmargin=5mm,itemsep=1pt]
\item 
if ${\bf X}'[i]$ originally had a colored element
and the element was not lost during Step~\ref{step:compactblue},
then let ${\bf O}[i] := \widetilde{\bf X}[i]$;
\item 
else let ${\bf O}[i] := {\bf X}'[i]$;
\end{itemize}

\ignore{back to the input ${\bf X}$, such that each position 
$i$ receives either a real or a dummy element: if a real element
is received the element ${\bf X}[i]$ is overwritten with the received element;
otherwise the element ${\bf X}[i]$ is unchanged.
Finally, output the result.
}
\end{enumerate} 
\end{itemize}
\end{mdframed}

Suppose that $n$ is sufficiently large. Then, 
the recursive call will hit the base case after
at most $\ceil{\log_{1.9} \frac{1}{\alpha}}$
steps of recursion.

\begin{fact}
Assume that $n$ is greater than a sufficiently large constant.
The remaining number of colored elements at the end of the algorithm 
is at most $8 \alpha n$.
\end{fact}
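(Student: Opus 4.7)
The plan is to follow the same template as the proof of the analogous colored-residual bound in Theorem~\ref{thm:splitterfromlc}, taking advantage of the fact that here $n_{\rm red} = n_{\rm blue}$ at the top-level input. This balance holds by construction of the coloring step: since ${\sf cnt}$ is exactly the number of distinguished elements, the non-distinguished items in positions $1,\ldots,{\sf cnt}$ are in bijection with the distinguished items in the remaining positions. The main structural difference from the ${\bf Swap}^n$ analysis is that $\widehat{\bf Swap}^n$ extracts both colors using a single lossy loose compactor (rather than using a lossy loose compactor for ${\tt blue}$ and a near-sorter for ${\tt red}$), so the recursion shrinks the array size by a factor of $1/1.9$ per level rather than $1/1.9 + 1/32$.

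First I would set up the recursion bookkeeping. The input at recursion level $i$ has size $m_i \le n/1.9^i$, and the base case fires when $m_i \le \alpha n$, so there are at most $\lceil \log_{1.9}(1/\alpha) \rceil$ levels. I would also verify the invariant that the counts remain balanced across levels: the approximate swapper preserves $m_{\rm red} - m_{\rm blue}$ because every swap removes one element of each color, and the lossy loose compactor treats both colors uniformly as ``real'', so any asymmetric drift is bounded by $\alpha m$ per level. Consequently the approximate swapper's guarantee of ``at most $m/128 + |m_{\rm red}-m_{\rm blue}|$ colored elements remaining'' keeps the colored count of ${\bf X}'$ at any level at most $m/128$ (up to a lower-order slack), ensuring the sparsity precondition of the lossy loose compactor is met at every call.

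Next I would trace the output-assembly step to identify the sources of colored elements in the final ${\bf O}$. By inspection of Step~\ref{step:approxtc:output}, a position $i$ of the top-level output is colored only if either (a)~${\bf X}'[i]$ was colored at this level but was not extracted by the lossy loose compactor, i.e., was among the $\le \alpha m$ ``lost'' real elements of this call, or (b)~the reverse-routed $\widetilde{\bf X}[i]$ is colored, inheriting from the colored residual of the recursive call. Unrolling the recursion, the total contribution from type~(a) across all levels is $\sum_{i\ge 0} \alpha m_i \le \alpha n \sum_{i\ge 0}(1/1.9)^i \le 2.12\,\alpha n$, and the base case contributes at most $m_{\rm base} \le \alpha n$ more. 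This yields a bound of at most $4\alpha n$ on the ${\tt blue}$ remaining, and then, as in the proof of Theorem~\ref{thm:splitterfromlc}, doubling to account for the corresponding ${\tt red}$ residual gives the claimed $8\alpha n$.

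The main obstacle I anticipate is cleanly propagating the balance invariant through the recursion so that the sparsity precondition of the lossy loose compactor is unambiguously satisfied at each recursive call. The delicate point is that the output ${\bf Y}$ of a level-$i$ compaction has size $m_i/1.9$ but may carry up to $m_i/128$ colored elements, so its colored density $\approx 1.9/128$ exceeds $1/128$; the key observation that saves the argument is that the \emph{next} level applies the approximate swapper \emph{before} invoking the lossy loose compactor, and this swap immediately reduces the colored count to at most $m_{i+1}/128$, restoring the sparsity requirement just in time. Once this detail is established, the geometric-series bookkeeping above is routine.
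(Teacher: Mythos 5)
Your proof is correct and takes essentially the same approach as the paper: telescope the per-level compactor losses into a geometric series $\sum_i \alpha m_i \le 3\alpha n$, add the base-case residual $\le\alpha n$, and apply a factor-of-two safety margin tied to the red/blue balance, arriving at $8\alpha n$. One small wrinkle worth noting: the quantity $\sum_i \alpha m_i + m_{\rm base} \le 4\alpha n$ that you derive by tracing the output-assembly step already bounds the \emph{total} colored residual (both colors), not just the ${\tt blue}$ residual, so the subsequent doubling is logically redundant — it merely yields a valid but looser bound; the paper's own writeup over-counts in an analogous way, so this matches. Your extra vigilance about the compactor's sparsity precondition is reasonable, though the resolution you offer (swapper-before-compactor) does not by itself control the cumulative imbalance $|m_{\rm red}-m_{\rm blue}|$ that feeds the swapper's residual guarantee; this is a gap the paper's proof also elides, so it is not something your argument introduces.
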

\begin{proof}
The total number of elements lost during
Step~\ref{step:compactcolored}
of the algorithm 
is upper bounded by $\alpha n + \alpha n / 1.9 + \alpha n / 1.9^2 
\ldots  \leq 3 \alpha n$.
Also, the recursion stops when $m \leq \alpha n$, 
all remaining colored elements 
will not get swapped.
Therefore, the total number of colored elements
remaining at the end is upper bounded by 
$2 \cdot 3\alpha n + \alpha n < 8\alpha n$,
where the factor 2 comes from the fact that we may lose all $3 \alpha n$ 
in {\tt blue} color and thus there are another $3 \alpha n$ in {\tt red}.
\end{proof}

\paragraph{Implementing Steps~\ref{step:count-tc} and
\ref{step:color-tc} in circuit.} 
Due to Fact~\ref{fct:countprelim}, 
Step~\ref{step:count-tc}
can be accomplished with $O(n)$ generalized boolean gates
and in depth $O(\log n)$.
When the count ${\sf cnt}$ is computed 
from Step~\ref{step:count-tc}, we can implement
Step~\ref{step:color-tc} as follows. 
Recall that ${\sf cnt} \in \{0,1, \dots,n\}$ 
is a $((\log_2 n) + 1)$-bit number.
Imagine that there are $n$ receivers numbered 
$1, 2, \ldots, n$. Each receiver is waiting to receive  
either ``$\leq$'' or ``$>$''.
Those with indices $1, \ldots, {\sf cnt}$ should receive ``$\leq$'' and 
those with indices ${\sf cnt} + 1, \ldots, n$ should receive ``$>$''.
We can accomplish this 
using the binary-to-unary conversion circuit of 
 Fact~\ref{fct:binary_to_unary},
i.e., convert ${\sf cnt}$ 
into an $n$-bit string so that the head ${\sf cnt}$ bits are 0
and the tail $n-{\sf cnt}$ bits are 1. 
Due to Fact~\ref{fct:binary_to_unary},
Step~\ref{step:color-tc} can be implemented
as a circuit consisting of at most $O(n)$  
generalized boolean gates and in depth $O(\log n)$.
Once each of the $n$ receivers receive either ``$\leq$'' or ``$>$'',  
it takes a single generalized boolean gate 
per receiver 
to write down either ${\tt blue}$, ${\tt red}$, or uncolored.

\paragraph{Implementing Steps~\ref{step:swap-tc} in circuit.} 
The approach and analysis are 
similar to
the 
${\bf Swap}^n$ circuit 
in the proof of Theorem~\ref{thm:splitterfromlc}.

Summarizing the above, and plugging in a 
$(1/8\log^{\widetilde{C}} n, n, w)$-lossy loose compactor
as stated in Theorem~\ref{thm:linearllc}, 
we will get Theorem~\ref{thm:approxtc}.
\end{proofof}

\ignore{TODO: explain somewhere, maybe in roadmap
why we do not use the lc to tc to lc bootstrap}

\section{Sparse Loose Compactor}
\label{sec:sparselc}

\subsection{Building Blocks: Slow Tight Compaction and Distribution}
\label{sec:slowtc-distr}
\begin{lemma}[Slow tight compaction circuit ${\bf SlowTC}$]
There is an $(n, w)$-tight compaction 
circuit of depth $O(\log^2 n)$,
and requiring 
$O(nw + n\log n)$
generalized boolean gates
and  $O(n)$
number of $w$-selector gates.
Henceforth we will use ${\bf SlowTC}$ to denote this circuit.
\label{thm:slowtc}
\end{lemma}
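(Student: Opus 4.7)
The plan is a two-step construction: first build a lossless loose compactor of linear size and logarithmic depth, then bootstrap it into a tight compactor via Pippenger-style~\cite{selfroutingsuperconcentrator} recursive error correction.

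For the lossless loose compactor, I would re-parametrize the ${\sf ProposeAcceptFinalize}$ procedure of Section~\ref{sec:slowllc} to run for ${\sf iter} := \lceil \log_2 n \rceil + 1$ rounds rather than $\Theta(\log\log n)$. By Fact~\ref{fct:matching}, every dense vertex becomes satisfied after this many rounds, so the compression loses no real element. Each round propagates constant-bit labels across the constant-degree expander using $O(n)$ boolean gates and $O(1)$ depth, giving a matching-computation subcircuit of $O(n \log n)$ boolean gates and $O(\log n)$ depth; the payload routing is then performed just once, over the finalized matching, using only $O(n)$ $w$-selector gates because the expander has constant degree.

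For the bootstrap to tight compaction, given ${\bf I}$ I would count the number $k$ of $0$-keys via Fact~\ref{fct:countprelim}, then apply an $\epsilon$-near-sorter (Section~\ref{sec:segmenter-prelim}) with a sufficiently small constant $\epsilon$ so that each target half contains at most a $1/256$-fraction of misplaced elements. I would then use the lossless loose compactor to extract the sparse set of misplaced elements into an array of size $n/2$, recursively tight-compact the extracted array, and reverse-route the corrected result back to the original positions using the reverse-selector trick of Remark~\ref{rmk:reverseroute}. The recurrences $D(n) = D(n/2) + O(\log n)$, $B(n) = B(n/2) + O(n \log n)$, and $S(n) = S(n/2) + O(n)$ solve to the claimed $O(\log^2 n)$ depth, $O(n \log n)$ boolean gates, and $O(n)$ $w$-selector gates, respectively, with the base case handled by AKS~\cite{aks} on a constant-sized subarray.

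The main technical obstacle will be orchestrating the reverse-route so that each misplaced $1$ in the front half is correctly swapped with a misplaced $0$ in the back half: because the lossless loose compactor is unstable, naively reverse-routing the recursively-sorted extracted array will not automatically deliver the opposite key to each misplaced position. I would address this by performing the extraction in two parallel passes of the loose compactor, one for each half, carrying a source-half tag as an auxiliary payload bit, and then cross-reverse-routing the recursively-corrected output through the two extraction selector networks (using the fact that the misplaced $0$s and misplaced $1$s are equinumerous by the correctness of the count). Verifying that the tag-propagation plus cross-reverse-routing add only constant factors to the geometric series above, and fit within the $O(n)$ $w$-selector budget at every recursion level, is the key correctness and accounting check.
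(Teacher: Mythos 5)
Your proposal takes a genuinely different route from the paper and, unfortunately, it has a real gap at exactly the step you flag as "the main technical obstacle."

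The paper's own proof of Lemma~\ref{thm:slowtc} is a one-paragraph citation: it takes the tight-compaction circuit of Asharov et al.~\cite[Thms.~4.8, 5.1]{soda21} as a black box and replaces their ``loose swapper'' subroutine with the constant-depth approximate swapper of Theorem~\ref{thm:approxswap}, then observes that the depth drops to $O(\log^2 n)$. Your plan instead rebuilds the compactor from scratch via a lossless loose compactor (run ${\sf ProposeAcceptFinalize}$ for $\Theta(\log n)$ rounds) plus a Pippenger-style recursion. The non-recursive parts of your plan are fine: running the propose--accept--finalize loop for $\lceil\log n\rceil+1$ rounds does eliminate all losses by Fact~\ref{fct:matching}, the cost accounting is $O(n\log n)$ boolean gates, $O(n)$ $w$-selector gates, and $O(\log n)$ depth per level, and the near-sorter does leave only an $O(\epsilon)$ fraction of misplaced elements. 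The recurrences you write down would indeed solve to the claimed bounds \emph{if the recursive step were correct}.

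The problem is that ``extract via loose compactor, recursively tight-compact $Y$, then reverse-route'' does not correct the input. Tight compaction \emph{sorts} $Y$ by key; it does not \emph{swap} paired misplaced elements. After you tight-compact $Y$, a position $i$ of $Y$ that originally held a misplaced $1$ need not now hold a $0$: the $0$s and $1$s in $Y$ are not guaranteed to land in complementary positions, because the loose compactor's placement of reds and blues inside $Y$ is arbitrary. Reverse-routing the sorted $Y$ therefore delivers keys back to the original misplaced positions with no correlation to what those positions need. Your cross-reverse-routing fix does not repair this: the two extracted arrays $Y_0$ and $Y_1$ have different real/filler patterns (the compactor is unstable and gives no alignment guarantee), so routing $Y_0$ backwards through $Y_1$'s selector network sends real elements to wires that expected fillers and vice versa. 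What is actually needed is a \emph{swap} primitive that pairs each blue with a red before reverse-routing --- exactly what ${\bf SlowSwap}$ (Theorem~\ref{thm:slowswap}) or, better, the constant-depth approximate swapper (Theorem~\ref{thm:approxswap}) provides. Once you have a swapper, the right recursion is on a ${\bf Swap}$-style subroutine (as in $\widehat{\bf Swap}^n$ of Section~\ref{sec:approxtc}): apply the approximate swapper to fix all but a constant fraction of the colored pairs in $O(1)$ depth, then loose-compact the residual colored elements and recurse on ${\bf Swap}$, not on tight compaction. Your current plan omits the approximate swapper entirely, which is why the recursion doesn't make progress on the pairing problem.

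Also a smaller accounting note: naively plugging ${\bf SlowSwap}$ into your extracted array of size $n/2$ would cost $O(n\log n)$ $w$-selector gates (via AKS), blowing your $O(n)$ selector budget, so avoiding AKS inside the recursion is necessary; the constant-depth approximate swapper of Theorem~\ref{thm:approxswap} is precisely what keeps the per-level selector cost linear and makes the geometric series for $S(n)$ work out.
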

\begin{proof}
We can use the tight compactor
circuit constructed in Asharov et al.~\cite[Theorems 4.8 and 5.1]{soda21}.
In particular, wherever they employ an approximate swapper
(called a loose swapper in their paper~\cite{soda21}), we
replace its implementation with a constant-depth one as described
in Theorem~\ref{thm:approxswap}.
Asharov et al.~\cite{soda21}
did not analyze the depth of the circuit; however, with this
modification, it is not hard
to show that the resulting circuit has depth
upper bounded by  
$O(\log^2 n)$.
\elaine{todo: double check this}
\ignore{
Use the approximate tight compaction algorithm
of Theorem~\ref{thm:approxtc} but with the following modifications:
\begin{enumerate}[leftmargin=5mm,itemsep=1pt]
\item 
Now, we instead use ${\bf SlowLC}$ of Theorem~\ref{thm:slowlc} 
to extract the colored elements without incurring any loss.
\item  

\end{enumerate}
}
\end{proof}


Recall
that in Section~\ref{sec:prambldgblock}, we showed how to construct
an algorithm that accomplishes distribution 
from tight compaction.
The same algorithm applies in the circuit model.
This gives rise to the following corollary:

\begin{corollary}[Slow distribution circuit ${\bf SlowDistr}$]
\label{cor:slowdist}
There is a circuit that solves the aforementioned distribution
problem, henceforth denoted ${\bf SlowDistr}$; further,
the number of generalized boolean 
gates, $w$-selector gates, and depth asymptotically match
the ${\bf SlowTC}$ circuit of 
Theorem~\ref{thm:slowtc}.
\end{corollary}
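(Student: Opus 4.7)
The plan is to directly instantiate the four-step distribution algorithm from Section~\ref{sec:prambldgblock} as a circuit, using two copies of ${\bf SlowTC}$ plus a reverse-routing pass, and then verify that each cost parameter is swallowed by the ${\bf SlowTC}$ bound of Theorem~\ref{thm:slowtc}.

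First, I would construct the circuit as follows. Given the input array ${\bf I}$ with $w$-bit payloads together with the availability bit-vector ${\bf v}$, form a shadow array ${\bf X}$ whose entries are trivial (e.g.,\ empty) payloads, each tagged with the corresponding bit ${\bf v}[i]$. Apply ${\bf SlowTC}$ to ${\bf X}$ to route $1$-labeled entries to the front and $0$-labeled entries to the end; since the payload width here is a constant, this instance costs only $O(n \log n)$ generalized boolean gates, $O(n)$ selector gates, and $O(\log^2 n)$ depth. In parallel, apply a second copy of ${\bf SlowTC}$ to ${\bf I}$ itself (with its $w$-bit payloads) to push real elements to the front, yielding ${\bf I}'$; this costs $O(nw + n\log n)$ generalized boolean gates, $O(n)$ number of $w$-selector gates, and $O(\log^2 n)$ depth. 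Finally, feed ${\bf I}'$ into the reverse-selector network obtained by reversing every $w$-selector gate of the first ${\bf SlowTC}$ instance (in the sense of Remark~\ref{rmk:reverseroute}), using the flags that were already computed from ${\bf v}$ in the forward pass.

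Next, I would argue correctness. The forward pass on ${\bf X}$ determines a permutation $\pi$ that sends available positions to the front; applying the inverse $\pi^{-1}$ to ${\bf I}'$ places the real elements of ${\bf I}'$ (which occupy the front of ${\bf I}'$) exactly onto the available slots of the output. This is the same argument as in the oblivious PRAM distribution algorithm, and it goes through verbatim because the reverse-selector pass realizes $\pi^{-1}$ gate-by-gate.

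For the cost accounting, summing the three components gives $O(nw + n\log n)$ generalized boolean gates, $O(n)$ $w$-selector gates, and $O(\log^2 n)$ depth, which asymptotically match the ${\bf SlowTC}$ bounds of Theorem~\ref{thm:slowtc}. I do not expect any real obstacle here: the only delicate point is ensuring that the reverse-routing stage can reuse the flags computed by the first ${\bf SlowTC}$ without an extra depth blowup, and this is guaranteed because those flags depend only on ${\bf v}$ and are produced by the forward pass whose depth is already charged.
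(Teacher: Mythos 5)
Your proposal is correct and matches the paper's approach exactly: the paper's proof simply instantiates the four-step Distribution algorithm from Section~\ref{sec:prambldgblock} with ${\bf SlowTC}$, which is precisely what you do, including the two forward compaction passes and the reverse-routing pass reusing the flags from the first instance.
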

\begin{proof}
Use the above algorithm where tight compaction
is instantiated with ${\bf SlowTC}$. 

\elaine{use the same notation earlier}
\end{proof}

\subsection{Sparse Loose Compactor}

\paragraph{Sparsity of an array.}
Let $A$ be an array in which each element has a $w$-bit payload, and is tagged
with a bit denoting  
whether the element is {\it real} or a {\it filler}.
Let $\alpha \in (0, 1)$. An array $A$ of length $n$ is said to 
be $\alpha$-sparse
if there are at most $\alpha n$ real elements in it.

\paragraph{Sparse loose compactor.}
A sparse loose compactor is defined almost in the same way
as a loose compactor (see Section~\ref{sec:llc}), except
that 1) it works only on $1/\poly\log(n)$-sparse arrays;
and 2) it compresses the array by $1/\log n$ factor  
without losing any real elements.

More formally, let $C_{\smalltriangledown} > 8$
be a sufficiently large universal constant.
Given an input array ${\bf I}$ of length $n$ 
that is promised to be $1/(\log n)^{C_\smalltriangledown}$-sparse,
an $(n, w)$-sparse loose compactor 
outputs an array ${\bf O}$ whose length is $\floor{n/\log n}$,
and moreover, the multiset of real elements
in ${\bf O}$ must be equal to the multiset of real elements
in ${\bf I}$.

In the remainder of the section, we will prove the following theorem:
\begin{theorem}[Sparse loose compactor]
There is an $(n,w)$-sparse loose compactor 
circuit, 
with 
$O(nw)$ 
generalized boolean gates  
and 
$O(n)$ 
number of $w$-selector gates, and of depth $O(\log n)$.
\elaine{check expr}
\label{thm:lcsparse}
\end{theorem}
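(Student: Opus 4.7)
The plan is to combine the approximate tight compactor of Theorem~\ref{thm:approxtc} with a single-pass expander routing (in the style of \cite{selfroutingsuperconcentrator}) using a $\poly\log n$-degree expander, and to use ${\bf SlowTC}$ from Lemma~\ref{thm:slowtc} to clean up any residual. The high-level idea is that because the input is extremely sparse, a single high-degree expander pass routes almost every real element into the compressed output; what remains can be handled by a polylogarithmic-depth tight compactor applied to a very short auxiliary array.

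Concretely, first I would apply the $(1/(\log n)^{\widetilde{C}}, n, w)$-approximate tight compactor of Theorem~\ref{thm:approxtc} with a sufficiently large constant $\widetilde{C} \gg C_\smalltriangledown$. After this step, at most $n/(\log n)^{\widetilde{C}}$ real elements are misplaced, so all but $n/(\log n)^{\widetilde{C}}$ real elements now lie in the front region of length $n/(\log n)^{C_\smalltriangledown - 1}$ (call it the \emph{head}). Second, I would design an expander-based routing circuit that compresses the head into $n/\log n$ output positions: partition the head into chunks and use a $d$-regular bipartite expander with $d = \poly\log n$ between chunks, then run ${\sf ProposeAcceptFinalize}$ for ${\sf iter} = O(\log\log n)$ rounds. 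By Fact~\ref{fct:matching} the fraction of unsatisfied chunks drops to $2^{-{\sf iter}}$, and by further exploiting the $(\poly\log n)^{-1}$-sparsity inside the head (which entered the expander already nearly-sorted), almost all real elements are routed to distinct output slots. Third, the residual real elements --- both those that survived in the tail after approximate TC and those left unsatisfied by the expander pass --- total at most $n/(\log n)^{\widetilde{C} - 1}$; I would gather them into an auxiliary array of that length and apply ${\bf SlowTC}$ and ${\bf SlowDistr}$ (Corollary~\ref{cor:slowdist}) to tightly compact them and route them into the still-empty slots of the $n/\log n$-length output.

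For the cost accounting: the approximate TC contributes $O(nw) \cdot \poly(\log^* n - \log^* w)$ gates and $O(\log n)$ depth. The expander step must be implemented so that only the \emph{head} region feeds the $\poly\log n$-degree expander, giving $O(n / (\log n)^{C_\smalltriangledown-1}) \cdot d = O(n)$ edges and hence $O(n)$ $w$-selector gates and $O(nw)$ generalized boolean gates, with depth $O({\sf iter} \cdot \log d) = O(\log^2\log n)$ per PAF round and $O(\log n)$ in total after accounting for aggregation. The residual cleanup uses ${\bf SlowTC}$ on a size-$n'$ array with $n' \leq n/(\log n)^{C_\smalltriangledown - 1}$; to avoid the $O(\log^2 n')$ depth from a direct application, I would further chunk the auxiliary array into blocks of size $2^{\Theta(\sqrt{\log n})}$, process all blocks in parallel by ${\bf SlowTC}$ at depth $O(\log n)$, and then invoke a final short ${\bf SlowDistr}$ stage.

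The main obstacle will be the circuit-size accounting of the expander step: a naive $\poly\log n$-degree expander on the full length-$n$ input costs $\omega(n)$ $w$-selector gates, so it is essential to first drive the real elements into a sub-$n/\log n$ region via the approximate TC before invoking the expander, and to route payload only along ``active'' edges emanating from non-empty chunks. A second delicate point is driving the PAF residual below $n/(\log n)^{\widetilde{C} - 1}$ in only $O(\log \log n)$ iterations, which requires choosing $d$ large enough (but still $\poly\log n$) so that Fact~\ref{fct:matching}'s $m/2^{\sf iter}$ bound, combined with the extreme sparsity already present at entry, yields the required residual; I expect this to force $C_\smalltriangledown$ to exceed some explicit constant threshold strictly larger than $8$, which matches the hypothesis of the theorem.
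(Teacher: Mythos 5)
Your proposal has a genuine gap, and it comes in two pieces.

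\paragraph{Circular dependency in the residual step.}
After you run the approximate tight compactor on the sparse input (treating real/filler as the $1$-bit key), most real elements land in a short prefix, but the roughly $n/(\log n)^{\widetilde{C}}$ \emph{misplaced} real elements are scattered arbitrarily throughout the full length-$n$ array. To ``gather them into an auxiliary array of that length'' so that you can apply ${\bf SlowTC}$/${\bf SlowDistr}$ on a short array is precisely the sparse loose compaction problem you are trying to construct: it is a size-$n$ circuit that moves a $1/\poly\log n$ fraction of scattered reals into an array of length $O(n/\log n)$. You never specify how this gathering is done, and there is no way to do it with the ingredients you've listed. Applying the expander only to the head (which you correctly note is needed to keep the gate count linear) doesn't touch the scattered tail residuals. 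So either the expander sees the whole array and blows up to $\omega(n)$ gates, or it sees only the head and fails to collect the residuals. The two constraints you identified in your last paragraph are in direct conflict, and the plan does not resolve them.

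\paragraph{The key idea you're missing is super-elements, not preprocessing.}
The paper avoids this circularity by \emph{not} applying approximate tight compaction first. Instead it blocks the raw input into \emph{super-elements} of $\log n$ consecutive positions each, so the expander and the ${\sf ProposeAcceptFinalize}$ machinery operate on only $n' = n/\log n$ objects. That $\log n$-fold reduction is what pays for (a) a $\poly\log$-degree expander and (b) $\Theta(\log n'/\log\log n')$ iterations of PAF, while keeping the total metadata cost at $O(n') \cdot \poly\log\log n' \cdot \frac{\log n'}{\log\log n'} = O(n)$ generalized boolean gates. Crucially, with the sharper spectral expansion $\epsilon = 1/\log^4 m$, Lemma~\ref{lem:remainunsat} shows each PAF iteration multiplies the number of unsatisfied chunks by $32/\log^4 m$, so after $\log n'/\log\log n'$ iterations \emph{every} chunk is satisfied and there is \emph{zero} residual. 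Your $O(\log\log n)$-iteration variant necessarily leaves a positive residual, which is fine only if you can collect it — but you can't, per the circularity above. Losslessness is part of the definition of sparse loose compactor and is essential when it's later used inside the tight compaction composition. The role of ${\bf SlowTC}$/${\bf SlowDistr}$ in the paper is also different from yours: they are applied \emph{per chunk} (each of size $\poly\log n$) in the online routing phase, where their $\poly\log(\text{chunk size}) = O(\log n)$ depth is tolerable, not on a globally gathered auxiliary array.

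\paragraph{Size accounting.}
A secondary but still real issue: Theorem~\ref{thm:lcsparse} claims $O(nw)$ generalized boolean gates with \emph{no} $\poly\log^*$ factor, and Figure~\ref{fig:compaction} emphasizes this. Theorem~\ref{thm:approxtc}'s approximate tight compactor already carries a $\max(1,\poly(\log^* n - \log^* w))$ factor, so invoking it as a black box in your proof yields an $O(nw)\cdot\poly\log^*$ bound, not $O(nw)$. The paper's construction deliberately routes around this by building the sparse loose compactor from scratch out of super-elements, expanders, and $\poly\log$-depth slow primitives alone, none of which carry a $\poly\log^*$ factor.
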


\subsection{Intuition}
\label{sec:roadmap:sparseloose}
Now that we have a $1/(\log n)^C$-approximate tight compactor
with $O(nw) \cdot \poly(\log^* n - \log^* w)$ 
boolean gates and $O(\log n)$ depth,
we can apply it to the input array, such that all but 
$1/\poly\log n$ fraction of the elements are in the correct place.
Next, we 
want to extract the $1/\poly\log n$ fraction of misplaced elements
to an array of length at most $\Theta(n/\log n)$. 
If we can accomplish this, we can then use
AKS to swap every misplaced $0$ 
with a distinct misplaced $1$ in the extracted short array, 
and reverse route the result back.

Therefore, the crux is how to solve the sparse loose compaction problem,
that is, we want to extract 
the $1/\poly\log n$ fraction of misplaced elements
to an output array of a fixed length of $\floor{n/\log n}$; 
besides containing the misplaced elements, the output 
array is otherwise padded with filler elements.

\paragraph{Bipartite expander graphs with poly-logarithmic degree.}
We are inspired by the loose compactor construction
of Asharov et al.~\cite{soda21}
which in turn builds on Pippenger's self-routing
superconcentrator~\cite{selfroutingsuperconcentrator}.
Asharov et al.~\cite{soda21}'s construction relies on 
$d$-regular bipartite expander graph with constant degree $d$ and 
constant spectral expansion $\epsilon \in (0, 1)$.
We will instead need a bipartite expander
graph with $m$ vertices on the left
and $m$ vertices on the right,  
where each vertex has degree $d = \log^{c_1} m$.
The spectral expansion of the graph is $\epsilon := 1/\log^{c_2} m$. 
In the above, $c_1 > c_2 > 1$, and both $c_1$ and $c_2$ are 
suitable constants. 
Such a bipartite expander graph can be constructed
using standard techniques.
As we shall see later, using a polylogarithmic degree
bipartite expander graph introduces additional 
complications to the algorithm in comparison with  
earlier works~\cite{soda21,selfroutingsuperconcentrator}.

\paragraph{Intuition.}
Given such a polylogarithmic-degree bipartite expander graph, 
where $L$ denotes the left vertices and $R$ denotes the right vertices,
we construct a sparse loose compactor as follows.
Throughout, our algorithm will operate on {\it super-elements}
rather than elements, where each super-element
contains $\log n$ consecutive elements in the input array.
Each super-element is real if it contains at least one real element.
If the fraction of real elements 
in the input is at most $1/(\log n)^C$, then 
the fraction of real super-elements is at most $1/(\log n)^{C-1}$.
Henceforth let $n' := n/\log n$ denote the number of super-elements.

We divide the input array
into chunks each containing only $d/2$ super-elements. 
Henceforth let $m = 2n' /d$ 
be the number of chunks. For simplicity, we assume that 
the numbers $\log n$, $n/\log n$, and $2n'/d$ 
are integers in this informal overview, and we will deal
with rounding issues in the formal technical sections.
We will think of each of the $m$ chunks as a left vertex
in the bipartite expander graph.
If the chunk contains 
at most $d/(2\log^2 m)$ real super-elements, it is said to be sparse;
else it is said to be dense. 

At a very high level, the idea is for all the 
dense vertices on the left 
to distribute its load to 
the right vertices, such that each right vertex
receives no more than $d/(2\log^2 m)$ real super-elements.
After the load distribution step, we empty  
all real super-elements from the dense chunks; and now all vertices
on the left and right are sparse chunks. We now  
compress each  
left and right chunk to $1/\log^2 m$ of its original size without
losing any real super-elements. 
This would compress the array by a $\Theta(1/\log^2 m)$ factor.
\elaine{is this the right calculation?}

\paragraph{Offline phase.}
The load distribution step consists of an offline phase and 
an online phase. The offline phase looks at only 
the real/filler indicator
of each super-element, and does not 
look at the payloads. 
The goal of the offline phase is to output 
a matching $M$ between the left vertices $L$
and the right vertices $R$, such that each dense chunk on the left
has $d/2$ neighbors in the matching $M$, and each right 
vertex has no more than $d/2 \log^2 m$ neighbors in $M$.
If such a matching can be found, 
then during the online phase, each dense chunk can 
route up to $d/2$ super-elements each along a distinct edge
in the matching $M$ to a right vertex. 

To find the matching, we use the 
${\sf ProposeAcceptFinalize}$ algorithm first proposed 
by Pippenger~\cite{selfroutingsuperconcentrator}.
For convenience, a left vertex is called
a {\it factory} and a right vertex
is called a {\it facility}.

\vspace{3pt}
Initially, each factory corresponding to a dense chunk 
is {\it unsatisfied} and 
each factory corresponding to a sparse chunk is {\it satisfied}.
Each productive factory $u \in L$ has 
at most $d/2$ real super-elements. 
Now, repeat the following for 
${\sf iter} := \log n'/\log \log n'$ times
and output the resulting matching $M$ at the end:

\begin{enumerate}[label=(\alph*),itemsep=0pt,topsep=2pt]
\item {\it Propose:} 
Each unsatisfied factory 
sends a proposal (i.e., the bit 1) to each one of its neighbors. 
Each satisfied factory sends 0 to each one of its neighbors. 

\item {\it Accept:}
If a facility $v \in R$
received no more than $d/(2\log^2 m)$ proposals, 
it sends an acceptance message to each one of its $d$ neighbors;
otherwise, it sends a reject message along each of its $d$ edges. 

\item  {\it Finalize:}
Each currently unsatisfied factory $u \in L$
checks if it received at least $\frac{d}{2}$ acceptance messages.
If so, for each edge over which an acceptance message is received,
mark it as part of the matching $M$.
At this moment, this factory becomes satisfied.
\end{enumerate}

In our subsequent formal sections, we will use the Expander Mixing Lemma 
(see Lemma~\ref{lem:expander}
of Appendix~\ref{sec:expanderprelim})
to prove that 
in each iteration of the above 
${\sf ProposeAcceptFinalize}$ algorithm,
at most $32/\log^4 m$ fraction of the unsatisfied factories
remain unsatisfied at the end of the iteration 
(Lemma~\ref{lem:remainunsat}). 
Therefore, one can show that after $\log n'/\log \log n'$
iterations, all factories become satisfied. 
Note that each iteration takes $O(\log d) = O(\log \log n)$
depth (this is needed
for tallying how many proposals
or acceptance messages a vertex has received), 
and therefore the total depth is only $O(\log n)$.
One crucial observation is that 
the number of edges in the bipartite group
is within a constant factor of 
the number of super-elements, which is $O(n/\log n)$.
In this way, 
over all $\log n /\log \log n$ iterations of the offline phase,
the number of generalized boolean gates is upper bounded by $O(n)$.

Finally, like in prior work~\cite{alm90,selfroutingsuperconcentrator}, 
it is not hard to show that each facility on the right
will 
be matched with at most $d/(2\log^2 m)$ factories.

\paragraph{Online routing phase.}
Each dense chunk  
wants to route each of its up to $d/2$ real super-elements  
along a distinct edge in the matching $M$ to the right.
The challenge is that we need to accomplish this using a linear number of
gates, i.e., each chunk is allowed 
to consume $O(d \cdot w)$ gates (ignoring $\poly\log^*$ terms). 
In comparison, 
in prior works~\cite{soda21,selfroutingsuperconcentrator}, this was
a non-issue because their chunks were constant in size.

We accomplish this by leveraging a tight 
compaction circuit\footnote{In fact, in our formal technical
sections, we will define a slight variant of tight compaction
called ``distribution'' to accomplish the online routing --- 
see Sections~\ref{sec:prambldgblock} and \ref{sec:slowtc-distr}.}
that is optimal in size, but not so optimal in depth --- since each chunk
is small. In fact, to achieve this, we can 
use the tight compaction circuit by Asharov et al.~\cite{soda21},
but replace some its building blocks 
with parallel versions (see Theorem~\ref{thm:slowtc} for more details).
The resulting tight compaction circuit has depth 
that is super-polylogarithmic in the input length,  
but when applied to a chunk of $\poly\log n$ size,
the depth would be 
upper bounded by $O(\log n)$.

\paragraph{Compressing all chunks.}
Now that we have finished the load distribution phase,
all chunks on the left and right
must be sparse. We therefore compress each chunk to $1/\log^2 m$ of
its original size. This can be done by applying 
to each chunk a tight compaction 
circuit that is optimal in work but not optimal in depth (same as 
the building block we used in the online routing phase).

After this, the input 
is compressed to $1/\log^2 m$ of its original size, without losing any
real elements.

\subsection{Proof of Theorem~\ref{thm:lcsparse}}
We will run a variant of the lossy loose compactor  
algorithm described in the proof of Theorem~\ref{thm:initllc}
in Section~\ref{sec:llc}.

\paragraph{Bipartite expander graphs with polylogarithmic degree.} 
Recall the bipartite graph of Margulis~\cite{margulis1973}. Fix a positive
  $t \in \N$.  The left and right vertex sets are 
$L=R :=[t]\times [t]$. A left vertex 
  $(x,y)$ is connected to the right vertices 
$(x,y),(x,x+y),(x,x+y+1),(x+y,y),(x+y+1, y)$ where all
  arithmetic is modulo $t$. 
We let $H_m$ be the resulting graph that has $m = t^2$
  vertices on each side.

  It is known (Margulis~\cite{margulis1973}, Gabber and Galil\cite{GabberG81},
  and Jimbo and Maruoka~\cite{JimboM87}) that for every $m$ which is a perfect square
  (i.e., of the form $m=i^2$ for some $i\in \N$), $H_m$ is $5$-regular and
the second largest eigenvalue  
of its normalized adjacency matrix 
$\lambda_2(H_m) \in (1/5,1)$ is a constant. 
Let $\epsilon := 1/\log^4 m$. 
We will use a graph $G_{\epsilon, m} := H_m^\gamma$ 
that is the $\gamma$-th power of $G_m$, where 
$\gamma$ is the smallest odd integer such that 
$\lambda_2(G_{\epsilon, m}) = \lambda_2(H_m)^\gamma \leq \epsilon$.
In other words, in $G_{\epsilon, m}$, the edges
are the length-$\gamma$ paths in $H_m$.
Therefore, 
$G_{\epsilon, m}$ is a $5^\gamma$-regular bipartite graph.
Note that the degree 
$5^\gamma \in [\log^c m, 25 \log^c m]$ 
for some constant $c > 4$ (where any constant $c>4$ works later).


\paragraph{Sparse loose compactor algorithm.}
We first describe the modifications to the  
meta-algorithm 
on top of the lossy loose compactor algorithm 
in the proof of Theorem~\ref{thm:initllc}.
We then described the modified circuit implementation 
of the meta-algorithm.

\begin{mdframed}
\begin{center}
{\bf Sparse loose compactor}
\end{center}

\paragraph{Expander graph family and parameters.}
We use a family of bipartite expander graphs
$\{G_{\epsilon, m}\}_{m}$
whose special expansion $\epsilon \leq 1/\log^4 m$.
The expander graph family $\{G_{\epsilon, m}\}_{m}$ 
can be constructed in the aforementioned manner.

\paragraph{Input.}
The input is an array ${\bf I}$ of length $n$ which is promised
to be $1/(\log n)^{C_\smalltriangledown}$-sparse.
Interpret $\I$ as an array of $n'$ \emph{super-elements} 
where $n' := n/\floor{\log n}$,
each super-element consists of $\floor{\log n}$ 
consecutive elements in $\I$,
and a super-element is real if it consists of 
at least one real element.
Assume that $n' = m \cdot \floor{d/2}$ 
for some perfect square $m$
and $d = \Theta(\log^c m)$ is the degree
of the aforementioned bipartite expander 
graph $G_{\epsilon, m}$ where $c > 4$
is an appropriate constant.
For now, we assume that $n$ is divisible by $\floor{\log n}$, and 
that $n'$ is divisible by $\floor{d/2}$ --- 
see Remark~\ref{rmk:notdiv-sparse} regarding how to deal
with general parameters. \elaine{i tweaked this remark}


\paragraph{Algorithm.}
Similar to the lossy loose compactor
algorithm described in the proof of Theorem~\ref{thm:initllc},
except that we now parametrize the expander graph
family differently as explained above, 
we run the algorithm on {\it super-elements} throughout, 
and moreover,
we introduce the following parameter modifications:

\begin{itemize}[leftmargin=5mm,itemsep=1pt]
\item 
The array of length $n' = m \cdot \floor{d/2}$ super-elements
is divided into $m$ chunks of $\floor{d/2}$ super-elements. 
We redefine sparse and dense chunks as follows:
a {\it sparse} chunk is one that has at most $d/(2\log^2 m)$ 
real super-elements. Any chunk that is not sparse is said to be dense.

\item 
We will run the ${\sf ProposeAcceptFinalize}$
subroutine for ${\sf iter} := \log n'/\log \log n'$ 
iterations.   
Moreover, in every iteration, 
each right vertex sends a rejection
if it receives more than $d/(2\log^2 m)$ proposals; 
otherwise it sends an acceptance message. 
Each left vertex become satisfied
if it receives at least 
$\floor{d/2}$ acceptance messages.  

\item 
After 
the dense chunks distribute their real super-elements to the right vertices,  
we compress all chunks such that each chunk contains
$\floor{d/(2\log^2 m)}$ 
super-elements, 
without losing any 
real super-elements in the process (see Fact~\ref{fct:allsat}). 
\end{itemize}

Last but not the least, the circuit implementations
of the ${\sf ProposeAcceptFinalize}$
subroutine and the online routing phase 
are somewhat non-trivial, and needs to use the ${\bf SlowDistr}$ 
and ${\bf SlowTC}$ primitives --- we will explain  
these details later.
\end{mdframed}

Note that for sufficiently large $n$, 
$\log m = \Theta(\log n)$ and therefore the above
algorithm produces an output that is 
$\Theta(1/\log^2 n) < 1/\log n$
fraction of the original length.

\begin{lemma}
In each iteration of the ${\sf ProposeAcceptFinalize}$
algorithm, at most 
$32/\log^4 m$ fraction of the 
remaining unsatisfied left vertices remain unsatisfied.  
\label{lem:remainunsat}
\end{lemma}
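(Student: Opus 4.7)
\medskip
\noindent\textbf{Proof plan for Lemma~\ref{lem:remainunsat}.}
The plan is to apply the Expander Mixing Lemma twice: once to control the number of rejecting right vertices, and once to control the number of left vertices whose set of accepting neighbors is too small. Recall that for the bipartite $d$-regular graph $G_{\epsilon,m}$ with spectral expansion $\epsilon\le 1/\log^4 m$, the Expander Mixing Lemma states that for any $S\subseteq L$ and $T\subseteq R$,
\[
\left| e(S,T)-\tfrac{d|S||T|}{m}\right| \;\le\; \epsilon\, d\,\sqrt{|S||T|}.
\]
Fix an iteration and let $U\subseteq L$ be the set of currently unsatisfied factories, $B\subseteq R$ the set of facilities that reject (i.e.\ receive strictly more than $d/(2\log^2 m)$ proposals from $U$), and $U'\subseteq U$ the set that remain unsatisfied at the end of the iteration (i.e.\ receive fewer than $d/2$ acceptance messages, equivalently have at least $d/2$ neighbors in $B$).

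\smallskip
\noindent\textit{Step 1 (a priori bound on $|U|$).}
First I will use the input sparsity promise to show $|U|\le |U_0|\le m/\log^{5} m$ for all $m$ greater than an appropriate constant. The input has at most $n/\log^{C_\smalltriangledown}n$ real elements, hence at most the same number of real super-elements; and every dense chunk contains more than $d/(2\log^2 m)$ real super-elements. Since $n=\Theta(md\log n/2)$ and $C_\smalltriangledown>8$, dividing through shows $|U_0|=O(m/\log^{C_\smalltriangledown-3}m)$. Because $U$ shrinks monotonically, this bound holds in every iteration.

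\smallskip
\noindent\textit{Step 2 (bound $|B|$).}
By the acceptance rule, $e(U,B)>\tfrac{d|B|}{2\log^2 m}$. Combining with the Expander Mixing Lemma,
\[
\frac{d|B|}{2\log^2 m} \;<\; \frac{d|U||B|}{m}+\epsilon\, d\sqrt{|U||B|}.
\]
Cancelling $d$ and using $|U|\le m/\log^5 m$, the term $|U||B|/m\le |B|/\log^5 m$ is negligible compared to $|B|/(2\log^2 m)$, so for sufficiently large $m$ we get $|B|/(4\log^2 m)\le \epsilon\sqrt{|U||B|}$, which after squaring yields $|B|\le 16\epsilon^2\log^4 m\cdot |U|\le 16|U|/\log^4 m$.

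\smallskip
\noindent\textit{Step 3 (bound $|U'|$).}
Each $u\in U'$ has at least $d/2$ neighbors in $B$, so $e(U',B)\ge d|U'|/2$. By the Expander Mixing Lemma applied to $(U',B)$,
\[
\frac{|U'|}{2}\;\le\;\frac{|U'||B|}{m}+\epsilon\sqrt{|U'||B|}.
\]
Using the bound $|B|\le 16|U|/\log^4 m\le 16m/\log^9 m$ from Step~2, the first term on the right is at most $|U'|\cdot 16/\log^{9}m\le |U'|/4$ for large $m$, so $|U'|/4\le \epsilon\sqrt{|U'||B|}$, i.e.\ $|U'|\le 16\epsilon^2|B|\le 16\epsilon^2\cdot 16|U|/\log^4 m\le 256\,|U|/\log^{12}m$. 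For all sufficiently large $m$, this is much stronger than the claimed $32/\log^4 m$ fraction, completing the proof.

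\smallskip
\noindent\textit{Main obstacle.}
The only subtlety is disentangling which of the two terms in the Expander Mixing Lemma dominates: the ``expected'' term $d|S||T|/m$ versus the ``error'' term $\epsilon d\sqrt{|S||T|}$. In both Steps~2 and~3 the argument only goes through when the error term dominates, and this is exactly where the a priori sparsity bound $|U|\le m/\log^5 m$ from Step~1 is used. Once this case analysis is handled cleanly, the double application of the Expander Mixing Lemma yields the claim with substantial slack.
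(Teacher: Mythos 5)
Your proof is correct, and Steps 1 and 2 follow essentially the same path as the paper: both bound $|U_0|/m$ by $1/(\log n)^{C_\smalltriangledown - 3} \leq 1/\log^5 m$ using the sparsity promise, and both apply the Expander Mixing Lemma to $(U, B)$ (the paper writes $R_{\rm neg}$ for your $B$) to derive $|B| \leq 16|U|/\log^4 m$. Where you genuinely diverge is Step 3. The paper finishes with a one-line edge-counting argument: the rejecting facilities are incident to at most $|R_{\rm neg}|\cdot d \leq 16|U|d/\log^4 m$ edges in total, and every vertex that remains unsatisfied must have at least $d/2$ of its edges go to rejecting facilities, so $|U'| \leq 2|R_{\rm neg}| \leq 32|U|/\log^4 m$. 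You instead apply the Expander Mixing Lemma a second time to the pair $(U', B)$, using $e(U',B) \geq d|U'|/2$ and the fact that $|B|/m$ is tiny, to conclude $|U'| \leq 16\epsilon^2|B| \leq 256|U|/\log^{12} m$. Both arguments are valid; the paper's is more elementary and already suffices for the stated bound, while your double application buys a noticeably stronger contraction factor ($1/\log^{12}m$ vs.\ $1/\log^4 m$), which would permit fewer iterations of ${\sf ProposeAcceptFinalize}$ if one cared to tighten constants. One small stylistic note: in Step~3 you say ``at least $d/2$ neighbors in $B$,'' but since $G_{\epsilon,m}$ is a graph power it can have multi-edges; what you actually use (and what the mixing lemma counts) is at least $d/2$ \emph{edges} into $B$, which is the correct statement and does hold.
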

\begin{proof}
Let $B := \floor{d/2}$ be the number of super-elements of a chunk.
The fraction of dense chunks 
is at most $1/(\log n)^{C_\smalltriangledown - 3}$, since otherwise
the total number of 
real elements in the input array would be 
greater than 
\[B \cdot \frac{1}{\log^2 m}  \cdot 
\frac{1}{(\log n)^{C_\smalltriangledown -3}}
\cdot \frac{n'}{B} \geq \frac{n}{(\log n)^{C_\smalltriangledown}}\]

\ignore{
Let $S \subset L$ such that $\abs{S} \leq m/(32B)$, 
and let  $\epsilon = \frac{1}{64}$.
In each iteration of Algorithm~\ref{algo:slowMatch}, the number of unsatisfied vertices $|L'|$ decreases by a factor of~$2$.
}

Let $U \subseteq L$ 
be the set of unsatisfied vertices at beginning of 
any given iteration,  
let $R_{\rm neg} \subseteq {\sf neighbors}(U) \subseteq R$ 
be the set of neighbors that respond with a rejection.
Then, $e(U, R_{\rm neg}) > |R_{\rm neg} | \cdot d/(2 \log^2 m)$. 
From the 
expander mixing lemma (Lemma~\ref{lem:expander} 
of Appendix~\ref{sec:expanderprelim}),
we obtain 
$$\frac{|R_{\rm neg}|\cdot d}{2\log^2 m}
< e(U,R_{\rm neg}) \leq \frac{d \abs{U}\abs{R_{\rm neg}}}{m}
+\epsilon d \sqrt{\abs{U}\abs{R_{\rm neg}}}.$$ 

Dividing by $\abs{R_{\rm neg}}d$ and rearranging, 
we have that 
$\epsilon \sqrt{\abs{U}/\abs{R_{\rm neg}}} > 1/(2\log^2 m)-\abs{U}/m$. 
Since $\abs{U}/m \leq 1/(\log n)^{C_\smalltriangledown - 3}$ 
(recall that $U$ is initially all the dense chunks
on the left), 
we have that 
$$
\sqrt{\abs{U}/\abs{R_{\rm neg}}} > 
\frac{1}{\epsilon}\cdot \frac{1}{2\log^2 m}-\frac{1}{\epsilon}\cdot \frac{\abs{U}}{m} > \frac{1}{\epsilon}\cdot 
\left(\frac{1}{2\log^2 m} - \frac{1}{(\log n)^{C_\smalltriangledown - 3}}\right),
$$
Since $C_\smalltriangledown > 8$, 
and $\epsilon \leq \frac{1}{\log^4 m}$, we have that
$\sqrt{\abs{U}/\abs{R_{\rm neg}}} \geq 0.25 \log^2 m$, 
i.e., ${\abs{U}/\abs{R_{\rm neg}}} \geq \frac{1}{16} \cdot \log^4 m$, 
that is, 
$\abs{R_{\rm neg}} \leq 16 \abs{U}/\log^4 m$.

We conclude that the number of vertices in $R$ that respond 
with a rejection is at most $16 \abs{U}/\log^4 m$. 
Therefore, the number of edges
that receive a rejection is at most  
$16 \abs{U} d /\log^4 m$.
For a left vertex to remain unsatisfied,
it must receive at least $d/2$ rejections.
This means that at most 
$32 \abs{U} /\log^4 m$ left vertices can remain unsatisfied.

\ignore{
As $L'$ has $d_\epsilon \abs{L'}$ outgoing edges, and $R'_{\rm neg}$ has at most $d_\epsilon \abs{R'_{\rm neg}} \leq d_\epsilon \abs{L'}/4$ incoming edges, at most one quarter of edges in $L'$ lead to $R'_{\rm neg}$ and yield a negative reply. Since $d_\epsilon=B/2$ and every vertex in $L'$ sends $d_\epsilon$ requests and all negative are from $R'_{\rm neg}$, there are at most $d_\epsilon \abs{L'}/4$ negative replies, and therefore at most  $\abs{L'}/2$ nodes in $L'$ get more than $B=d_\epsilon/2$ negatives. We conclude that at least $\abs{L'}/2$ nodes become satisfied.
}
\end{proof}

\begin{fact}
Suppose 
that $n'$ is sufficiently large. Then, 
after ${\sf iter} := \log n'/ \log \log n'$ iterations, all left vertices become satisfied.
\label{fct:allsat}
\end{fact}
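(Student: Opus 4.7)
The plan is to iterate the per-round contraction bound of Lemma~\ref{lem:remainunsat} and show that after ${\sf iter} = \log n'/\log\log n'$ rounds the expected count of unsatisfied left vertices drops strictly below $1$, hence to $0$.

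Write $U_t \subseteq L$ for the set of unsatisfied left vertices at the beginning of iteration $t$, so $|U_0|$ is exactly the number of dense chunks, which by the sparsity assumption $|U_0|/m \leq 1/(\log n)^{C_\smalltriangledown - 3}$. First I would observe that Lemma~\ref{lem:remainunsat} applies in every iteration, not only the first: the only input to its proof was an upper bound on $|U|/m$, and since the set of unsatisfied vertices is monotonically nonincreasing, the bound $|U_t|/m \leq 1/(\log n)^{C_\smalltriangledown - 3}$ is preserved. Consequently,
\[
|U_{t+1}| \;\leq\; \frac{32}{\log^4 m}\,|U_t|,
\qquad \text{and so} \qquad
|U_t| \;\leq\; m \cdot \left(\frac{32}{\log^4 m}\right)^{t}.
\]

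Next I would solve for the smallest $t$ making the right-hand side strictly less than $1$. Taking logs, it suffices that $t\bigl(4\log\log m - 5\bigr) > \log m$, i.e.\ $t = \Omega(\log m / \log\log m)$ works for all sufficiently large $m$. Now I would translate this threshold from $m$ to $n'$: since $n' = m \cdot \lfloor d/2\rfloor$ with $d = \Theta(\log^c m)$, we have $\log n' = \log m + \Theta(\log\log m)$ and $\log\log n' = \log\log m + \Theta(1)$, so
\[
\frac{\log n'}{\log\log n'} \;=\; \Theta\!\left(\frac{\log m}{\log\log m}\right),
\]
and for sufficiently large $n'$ the constant hidden here dominates the constant $1/(4 - o(1))$ required above. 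Hence ${\sf iter} = \log n'/\log\log n'$ iterations force $|U_{\sf iter}| < 1$, and since $|U_{\sf iter}|$ is a nonnegative integer, $U_{\sf iter} = \emptyset$.

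The only subtle point is re-confirming that Lemma~\ref{lem:remainunsat} genuinely holds across rounds; this is where one must be careful that the constant $C_\smalltriangledown > 8$ was chosen large enough that the inequality $\sqrt{|U|/|R_{\rm neg}|} \geq 0.25\log^2 m$ used in its proof is valid using only the weakest bound $|U|/m \leq 1/(\log n)^{C_\smalltriangledown - 3}$, which holds for every round. No further obstacle remains.
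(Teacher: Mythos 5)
Your proof is correct and follows essentially the same route as the paper: iterate the per-round contraction of Lemma~\ref{lem:remainunsat} to get $|U_{\sf iter}| \leq m\,(32/\log^4 m)^{\sf iter}$, force this below~$1$, and translate the resulting threshold $\log m/(4\log\log m - 5)$ into a bound in terms of $n'$ using $n' = m\lfloor d/2\rfloor$. Your added observation that the lemma's hypothesis $|U_t|/m \le 1/(\log n)^{C_\smalltriangledown - 3}$ persists across rounds by monotonicity is implicit in the paper (its Lemma~\ref{lem:remainunsat} is stated per iteration and its proof notes $U$ is always contained in the initial dense set), and your ``expected count'' should read ``count'' since the bound is deterministic, but these are cosmetic; no gap.
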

\begin{proof}
We only need to make sure that 
$\left(\frac{32}{\log^4 m}\right)^{\sf iter} \cdot m < 1$, 
that is, 
$${\sf iter} > \log m/\log\left(\frac{\log^4 m}{32}\right)
= \frac{\log m}{4 \log \log m - 5}
.$$
Therefore, 
for sufficiently large $n$, it suffices
to make sure that ${\sf iter} > \frac{\log n'}{\log \log n'}$.
\end{proof}


\ignore{TODO: redo the analysis of the offline circuit size}

\paragraph{Circuit implementation.}
We now discuss how to implement the above meta algorithm 
in circuit.
\begin{itemize}[leftmargin=5mm,itemsep=1pt]
\item 
To determine whether each super-element is real or not,
all super-elements in parallel run the counting circuit of Fact~\ref{fct:countprelim}
and then call a comparator circuit of Fact~\ref{fct:comparator}.
In total, this step takes $O(n)$ generalized boolean gates and $O(\log \log n)$ depth.
\item 
To determine whether each chunk is sparse or dense,
all chunks in parallel run the counting circuit 
of Fact~\ref{fct:countprelim}
and then call a comparator circuit of Fact~\ref{fct:comparator}.
In total, this step takes 
$O(n')$ generalized boolean gates 
and $O(\log d) = O(\log \log n)$ depth.

\item 
Next, we invoke the ${\sf ProposeAcceptFinalize}$ algorithm.
In each iteration: 
\begin{itemize}[leftmargin=5mm,itemsep=1pt]
\item Every facility (i.e., right vertex) need to tally how many
proposals it received, and decide
whether it wants to send rejections or acceptance
messages. For each facility, this requires a counting circuit
of Fact~\ref{fct:countprelim}, and a comparator
circuit of Fact~\ref{fct:comparator}.
Then, the decision can be propagated over a 
binary tree to all $d$ edges.
Accounting for all facilities, this step 
in total requires 
$O(n')$ generalized boolean gates and 
$O(\log d) = O(\log \log n)$ depth.

\item 
Every factory (i.e., left vertex) needs
to tally how many acceptance messages it has received, and 
decide if it wants to mark itself as satisfied.
If it marks itself as satisfied, it will also  
mark all edges over which an acceptance 
message is received as being part of the matching $M$.
This can be done in a similar fashion as how facilities
tally their proposals, in total taking $O(n')$
generalized boolean gates and $O(\log \log n)$ depth.
\end{itemize}
Accounting for all 
$\log n' / \log \log n'$ iterations, the total
depth is at most $O(\log n)$, and the total
number of generalized boolean gates 
is at most $O(n') \cdot \log n' / \log \log n' = O(n)$.
\elaine{check this calculation}

\item 
Next, each dense chunk $u \in L$ 
must send one real super-element over each 
of an arbitrary subset of ${\sf load}(u)\leq d/2$ 
edges outgoing from $u$ in the matching $M$.
This can be accomplished by invoking an instance
of ${\bf SlowDistr}$ (Corollary~\ref{cor:slowdist}) for each chunk, such that 
in each dense chunk, each real super-element is sent over an edge in $M$.
Thus, each chunk takes 
$O(\floor{d/2}\cdot (w\log n + \log\floor{d/2}))
= O(\floor{d/2}\cdot w\log n)$ number of generalized boolean gates
and $O(\floor{d/2})$ total number of $(w \cdot \log n)$-selector gates.
Accounting for all chunks,
the total number of generalized boolean gates 
is at most 
$O(m \cdot \floor{d/2}\cdot w\log n) = O(nw)$, 
the total number of $(w \cdot \log n)$-selector gates  
is at most 
$O(n/\log n)$, 
and the depth is at most
$O(\log^2 \floor{d/2}) =O(\log n)$.
Recall that each $(w \cdot \log n)$-selector gate
can be implemented as 
$O(\log n)$ number of 
$w$-selector gates, and using $O(\log n)$
generalized boolean gates to propagate the flag
over a binary-tree of $\log n$ leaves and depth $\log \log n$. 

Therefore, in total, this step can be implemented 
with $O(nw)$
generalized boolean gates, 
$O(n)$ number of 
$w$-selector gates, and 
in depth $O(\log n)$.

\elaine{i modified the counting here}

\ignore{
suppose that $M_u \in \{0, 1\}^d$ 
is the bit-vector saved 
that denotes whether each of $u$'s outgoing edges   
is in the matching $M$ or not.
Suppose that ${\bf I}_u$ denotes the part of the input array, 
containing $\floor{d/2}$
elements, assigned to the vertex $u$. 
}

\item  
Now, all dense chunks mark all its super-elements as fillers.
This can be done by having each chunk broadcast
its dense/sparse
indicator bit over a binary tree to all $d$ positions of the chunk.  
In total, we can implement it 
with a circuit of $O(n')$
generalized boolean gates and $O(\log d) = O(\log \log n)$ 
depth.

\item 
Finally, we need to compress all chunks
on the left and the right to $\floor{d/(2\log^2 m)}$ super-elements.
This can be accomplished
by applying a ${\bf SlowTC}$ circuit to each chunk (Lemma~\ref{thm:slowtc}), 
and the number of generalized boolean gates, $w$-selector gates,
and depth are asymptotically the same as the 
earlier step in which we invoke a ${\bf SlowDistr}$ instance per chunk.
\end{itemize}

Summarizing the above, we get that 
the entire sparse loose compactor algorithm 
requires 
$O(nw)$
generalized boolean gates, 
$O(n)$ number of $w$-selector gates, 
and $O(\log n)$ depth.


\begin{Remark}
So far, we have assumed that $n$ is divisible by $\floor{\log n}$
and $n' := n/\floor{\log n}$ is equal to $m \cdot \floor{d/2}$
for some perfect square $m$,
and $d = \Theta(\log^c m)$ is the degree
of the aforementioned bipartite expander
graph $G_{\epsilon, m}$ where $c > 4$ is an appropriate constant.

If the above is not satisfied, 
we can let $n' := \ceil{n/\floor{\log n}}$. 
If $n'$ does not satisfy 
the above, we can find the largest 
$m^*$ such that $n' \geq m^* \cdot \floor{d^*/2}$ (note that $d^*$
is a function of $m^*$ for a fixed $\epsilon$).
Now, we can round $m^*$ up to the next 
perfect square $m$, and still use $d^*$ as the degree 
of the bipartite expander graph.
We can pad the array with fillers such that contains $m \cdot d^*$ 
super-elements, and then run the sparse loose compactor algorithm.
With this modification, 
one can check that Lemma~\ref{lem:remainunsat}
and Fact~\ref{fct:allsat} still hold. Therefore, our 
earlier analyses hold.
The padding incurs only $1 + o(1)$ blowup 
in the array's length, i.e., $n'' = (1 + o(1)) n'$.
Our algorithm compresses the array to 
$n'' /(\log m)^2$ in the number of super-elements,
for sufficiently large $n$ and thus 
sufficiently large $n'=\ceil{n/\log n}$, 
the output length  
is upper bounded by $\floor{n/\log n}$.
\label{rmk:notdiv-sparse}
\end{Remark}

\section{Linear-Sized, Logarithmic-Depth Tight Compaction Circuit}
\label{sec:tc}

Putting it all together, we can now realize
a linear-sized, logarithmic-depth tight compaction circuit, as stated
in the following theorem:

\begin{theorem}[Linear-sized, logarithmic-depth tight compaction circuit] 
There is an $(n,w)$-tight compaction circuit 
with $O(nw) \cdot \max(\poly(\log^* n - \log^* w), 1)$ 
generalized boolean gates, 
$O(n) \cdot \max(\poly(\log^* n - \log^* w), 1)$ 
number of $w$-selector gates, and of depth $O(\log n)$.
\label{thm:tccircuit}
\end{theorem}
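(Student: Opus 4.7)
The plan is to combine three components developed in the preceding sections: the $1/(\log n)^{\widetilde{C}}$-approximate tight compactor of Theorem~\ref{thm:approxtc} for a suitably large constant $\widetilde{C}$, the sparse loose compactor of Theorem~\ref{thm:lcsparse}, and the AKS sorting network~\cite{aks}. Conceptually, the approximate tight compactor does almost all the work, leaving only a $1/\poly\log n$ fraction of misplaced elements; the sparse loose compactor isolates these misplaced elements into a short array of size $\floor{n/\log n}$; AKS then fixes them; and finally a reverse routing phase pushes the correction back through the sparse loose compactor into the original array.

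More concretely, I would proceed as follows. First, apply the $(1/(\log n)^{\widetilde{C}}, n, w)$-approximate tight compactor from Theorem~\ref{thm:approxtc} to the input ${\bf I}$, producing a permutation $\widetilde{\bf O}$ of ${\bf I}$ in which at most $n/(\log n)^{\widetilde{C}}$ positions hold a key that disagrees with the key that a fully sorted array would have in that position. Second, count the number of $0$-keys in $\widetilde{\bf O}$ (using the counting circuit of Fact~\ref{fct:countprelim}) and then convert this count into a unary ``expected-key'' indicator at every position via the generalized binary-to-unary conversion circuit of Fact~\ref{fct:binary_to_unary}; by comparing this indicator bit to the actual key at each position, we obtain a 1-bit \emph{misplaced} flag at each position using $O(n)$ generalized boolean gates and $O(\log n)$ depth. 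The resulting flag vector is $1/(\log n)^{\widetilde{C}}$-sparse, so choosing $\widetilde{C} > C_\smalltriangledown$ makes the misplaced-flag array satisfy the sparsity premise of Theorem~\ref{thm:lcsparse}.

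Third, apply the sparse loose compactor of Theorem~\ref{thm:lcsparse}, treating the misplaced flag as the real/filler bit and carrying the payload and original key along, so that all misplaced elements are gathered into an array ${\bf S}$ of length $\floor{n/\log n}$, padded with fillers. Fourth, run AKS on ${\bf S}$ to sort by the $1$-bit key (with a one-bit real/filler tiebreaker that keeps fillers at the boundary), costing $O\!\left((n/\log n)\cdot \log(n/\log n)\cdot w\right) = O(nw)$ boolean gates and $O(\log n)$ depth; the number of misplaced $0$s equals the number of misplaced $1$s, so after sorting the multiset of keys in the real portion of ${\bf S}$ is ``flipped'' relative to the order induced by the extraction. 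Fifth, reverse-route the sorted ${\bf S}$ through the AKS network and then through the sparse loose compactor, so each misplaced slot of $\widetilde{\bf O}$ receives its corrected key and payload, while correctly placed slots are untouched. Here we need the extraction to traverse the $0$-region before the $1$-region, which is guaranteed because sparse loose compaction processes chunks in order and all but $O(1)$ boundary chunks lie wholly on one side of the threshold; the at most $O(d) = O(\poly\log n)$ residual elements inside boundary chunks can be cleaned up with a final ${\bf SlowTC}$ of that short region, costing $O(\poly\log n \cdot w)$ additional gates.

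The main obstacle I expect is exactly this last routing consistency issue: the sparse loose compactor is not stable, so one must verify that the compositional order of extraction is compatible with what AKS produces before reverse routing. I plan to finesse this by feeding the sparse loose compactor two separately-labeled halves (the $0$-region and the $1$-region of $\widetilde{\bf O}$) and by adding a single side-bit to the key used inside the AKS comparator, so that AKS sorts lexicographically by (side, key); then the reverse routing trivially swaps the misplaced pairs. Finally, adding the resource bounds of Theorem~\ref{thm:approxtc}, Theorem~\ref{thm:lcsparse}, and AKS gives the total cost $O(nw)\cdot \max(\poly(\log^* n - \log^* w),1)$ generalized boolean gates, $O(n)\cdot \max(\poly(\log^* n - \log^* w),1)$ number of $w$-selector gates, and depth $O(\log n)$, yielding Theorem~\ref{thm:tccircuit}.
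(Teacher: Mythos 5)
Your high-level plan coincides with the paper's: run the approximate tight compactor, mark misplaced positions, pull them out with the sparse loose compactor, fix the short extracted array, and reverse-route the fix. You also correctly identify the central difficulty: the sparse loose compactor is unstable, so you cannot rely on the extracted array's internal order to agree with the order of the misplaced slots in $\widetilde{\bf O}$. Where your proposal has a genuine gap is in the fix for this difficulty.

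First, the claim that ``sparse loose compaction processes chunks in order and all but $O(1)$ boundary chunks lie wholly on one side of the threshold'' is false. In the construction of Theorem~\ref{thm:lcsparse}, real super-elements from dense chunks are shipped along expander edges to right-side vertices that are concatenated after all left chunks, and a single right vertex can receive super-elements coming from many different (and non-adjacent) dense left chunks. On top of that, the per-chunk compression uses ${\bf SlowTC}$, which is itself unstable. So the output of the sparse loose compactor has no usable positional relation to the input, and there is no ``boundary region'' of size $O(\poly\log n)$ that could be cleaned up separately. Second, the proposed remedy of tagging each extracted element with a side-bit and having AKS sort lexicographically by $(\mathrm{side},\mathrm{key})$ does not produce a swap. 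All side-$0$ extracted elements have key $1$ and all side-$1$ extracted elements have key $0$, so sorting by $(\mathrm{side},\mathrm{key})$ just groups by side. Reverse-routing a sorted array is not a swap either: if you sort ${\bf S}$ with AKS, do nothing in between, and then reverse-route through the AKS network, you recover ${\bf S}$ exactly; and if you reverse-route only through the sparse loose compactor, position $\pi(i)$ of the sorted array has no reason to hold the key that slot $i$ of $\widetilde{\bf O}$ needs. The missing ingredient is an explicit \emph{pairwise swap}: the paper colors the misplaced $1$s in the $0$-region ${\tt red}$ and the misplaced $0$s in the $1$-region ${\tt blue}$, extracts the colored elements, and then applies the ${\bf SlowSwap}$ circuit of Theorem~\ref{thm:slowswap}. ${\bf SlowSwap}$ AKS-sorts by color, then for each $i$ swaps position $i$ with position $|{\bf Y}|+1-i$ whenever they carry opposite colors, and only then reverse-routes through AKS. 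Because the swap step physically exchanges each ${\tt red}$ with a distinct ${\tt blue}$ inside the array, the subsequent reverse-routing through both AKS and the sparse loose compactor places a $0$-keyed element in every slot that previously held a misplaced $1$-keyed one, and vice versa, with no dependence on any ordering property of the sparse loose compactor. This is exactly what your (side,key)-sorting cannot do, and it is the step you need to add.
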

Note that the above theorem 
and Lemma~\ref{lem:operational} together would 
imply
the following corollary.
\begin{corollary}
There is a circuit of size $O(nw) \cdot \max(\poly(\log^* n - \log^* w), 1)$
and depth $O(\log n + \log w)$ 
that can sort any array containing elements with 1-bit keys
and $w$-bit payloads.
\end{corollary}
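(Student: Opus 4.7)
The plan is to show that the corollary is essentially an immediate combination of Theorem~\ref{thm:tccircuit} and Lemma~\ref{lem:operational}. First, I would observe that sorting an array whose elements have $1$-bit keys and $w$-bit payloads is exactly the tight compaction problem: the sorted order places all elements with key $0$ before all elements with key $1$, which is precisely what an $(n,w)$-tight compactor produces. So it suffices to instantiate a tight compaction circuit and then push it through the operational-model-to-boolean-circuit translation.

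Next, I would invoke Theorem~\ref{thm:tccircuit} to obtain an $(n,w)$-tight compaction circuit in the operational model with $B = O(nw) \cdot \max(\poly(\log^* n - \log^* w), 1)$ generalized boolean gates, $S = O(n) \cdot \max(\poly(\log^* n - \log^* w), 1)$ number of $w$-selector gates, and depth $d = O(\log n)$. Since this circuit only moves payloads via $w$-selector gates and performs all arithmetic/metadata work on the $1$-bit keys (this is implicit in how all of our tight compaction building blocks — approximate tight compaction, sparse loose compaction, approximate swappers, etc. — are constructed), the resulting circuit satisfies the indivisibility hypothesis of Lemma~\ref{lem:operational}.

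Then I would apply Lemma~\ref{lem:operational} directly. It converts any indivisible-model circuit with $B$ generalized boolean gates, $S$ number of $w$-selector gates, and depth $d$ into a constant fan-in, constant fan-out boolean circuit of size $O(B + S \cdot w)$ and depth $O(d + \log w)$. Plugging in our parameters gives size
\[
O(B + S \cdot w) = O(nw) \cdot \max(\poly(\log^* n - \log^* w), 1)
\]
and depth $O(d + \log w) = O(\log n + \log w)$, matching the claimed bounds.

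The only subtle point — and the only place where there is anything to check — is verifying the indivisibility hypothesis for the composite circuit of Theorem~\ref{thm:tccircuit}. This is not really an obstacle, since every building block used throughout Sections~\ref{sec:llc}--\ref{sec:sparselc} (lossy loose compactors, approximate splitters, approximate swappers, approximate tight compactors, sparse loose compactors) is explicitly stated in the indivisible model, and composing indivisible-model circuits preserves indivisibility because each sub-circuit only relocates $w$-bit payloads through $w$-selector gates without performing boolean computation on them. Given this observation, the corollary follows with no further work.
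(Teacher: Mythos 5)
Your proposal is correct and matches the paper's own reasoning exactly: the paper states this corollary as an immediate consequence of Theorem~\ref{thm:tccircuit} and Lemma~\ref{lem:operational}, and your plug-in of the size/depth parameters, together with the observation that $1$-bit-key sorting is tight compaction, is precisely that derivation. The only thing you add beyond the paper's terse note is the explicit (and correct) check that the composed circuit lives in the indivisible model, which the paper leaves implicit.
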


\medskip
\begin{proofof}{Theorem~\ref{thm:tccircuit}}
We construct a linear-sized, logarithmic-depth tight
compaction circuit as follows.
\begin{mdframed}
\begin{center}
{\bf Tight compaction}
\end{center}

\paragraph{Input.}
An array ${\bf I}$ containing $n$ elements each with a $w$-bit payload 
and a 1-bit key.

\paragraph{Algorithm.}
\begin{enumerate}[leftmargin=5mm,itemsep=1pt]
\item
{\it Approximate tight compaction.}
Apply a $(1/(\log n)^{C_\smalltriangledown}, n, w)$-approximate 
tight compactor to the input array ${\bf I}$; let ${\bf X}$ denote
the outcome.
\label{step:approxtc}
\item 
{\it Count and label.}
Count how many $0$-keys 
there are in the array ${\bf I}$, let the result be ${\sf cnt}$.
For each $i \in [n]$ in parallel: 
\begin{itemize}[leftmargin=5mm,itemsep=1pt,topsep=3pt]
\item if ${\bf X}[i]$
has the key $1$ and $i \leq {\sf cnt}$, mark it as ${\tt red}$;  
\item else if ${\bf X}[i]$ has the key $0$ and $i > {\sf cnt}$, mark
it as ${\tt blue}$; 
\item else the element ${\bf X}[i]$ is uncolored.
\end{itemize}
\label{step:countlabel}
\item 
{\it Sparse loose compaction.}
Apply a sparse loose compactor 
to the outcome of the previous step; the outcome
is an array ${\bf Y}$ whose length is 
$\floor{n/\log n}$ containing all colored elements 
in ${\bf X}$, padded with filler elements
to a length of $\floor{n/\log n}$.
\label{step:sparselc}
\item 
{\it Slow swap.} Let ${\bf Y}' := {\bf SlowSwap}({\bf Y})$.
\label{step:slowswap}
\item 
{\it Reverse route.}
Reverse route the array ${\bf Y}'$ by reversing
the routing decisions made in Step~\ref{step:sparselc},
and let the outcome be ${\bf Z}$ which has length $n$. 

\label{step:reverse-final}

\item 
{\it Output.}
The output ${\bf O}$ is obtained by performing
a coordinate-wise select operation 
between ${\bf Z}$ and ${\bf X}$:
\[
\forall i \in [n]: \ \ 
{\bf O}[i] := \begin{cases}
{\bf Z}[i] & 
\text{if } {\bf X}[i] \text{ was marked ``{\tt misplaced}''} \\ 
{\bf X}[i] & \text{o.w.} 
\end{cases}
\]
\label{step:output-final}
\end{enumerate}
\end{mdframed}


\paragraph{Implementing the algorithm in circuit.}
Step~\ref{step:approxtc}
is implemented with the approximate tight compaction
circuit of Theorem~\ref{thm:approxtc}.

Step~\ref{step:countlabel}
is implemented as follows.
First, 
use the counting circuit of Fact~\ref{fct:countprelim} to compute
${\sf cnt}$.
Then, use the binary-to-unary circuit of 
Fact~\ref{fct:binary_to_unary}
to write down a string of $n$ bits where the beginning 
${\sf cnt}$ bits are $0$ and all other bits are $1$.
Next, all positions $i \in [n]$ uses the comparator
circuit of Fact~\ref{fct:comparator}
to compute its ``{\tt misplaced}'' label.

Step~\ref{step:sparselc}
is implemented with the sparse loose compactor circuit
of Theorem~\ref{thm:lcsparse}.
Step~\ref{step:slowswap}
is implemented using the ${\bf SlowSwap}$
circuit of Theorem~\ref{thm:slowswap}.
Step~\ref{step:reverse-final}'s costs are absorbed
by Step~\ref{step:sparselc}.
Finally, Step~\ref{step:output-final}
can be accomplished with $n$ generalized boolean gates.

Summarizing the above, the entire 
tight compaction circuit requires  
$O(nw) \cdot \max(\poly(\log^* n - \log^* w), 1)$
generalized boolean gates, 
$O(n) \cdot \max(\poly(\log^* n - \log^* w), 1)$
number of $w$-selector gates, and has depth $O(\log n)$.
\end{proofof}

\section{Sorting Circuit for Short Keys}
\label{sec:sort-circuit}

\subsection{Circuit Implementations of Additional Building Blocks}
Earlier, we described various building blocks 
for an Oblivious PRAM model. We now discuss  
the size and depth 
bounds for these building blocks in the circuit model.

\paragraph{Sorting elements with ternary keys.}
Given Theorem~\ref{thm:tccircuit}, 
and Fact~\ref{fct:countprelim},
we can implement
the algorithm of Theorem~\ref{thm:ternarysort}
using a circuit 
with $O(n w) \cdot \max(1, \poly (\log^* n - \log^* w))$
generalized boolean gates, $O(n)  \cdot 
\max(1, \poly (\log^* n - \log^* w))$ 
number of $w$-selector gates, and of
depth $O(\log n)$.
This leads to the following fact:

\begin{fact}
There exists a circuit  
with $O(n w) \cdot \max(1, \poly (\log^* n - \log^* w))$
generalized boolean gates, $O(n)  \cdot 
\max(1, \poly (\log^* n - \log^* w))$
number of $w$-selector gates, and of
depth $O(\log n)$, 
capable of sorting any input array 
containing $n$ elements 
each with a key from the domain $\{0, 1, 2\}$
and a payload of $w$ bits.
\label{fct:ternarysort-circ}
\end{fact}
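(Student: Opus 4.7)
The plan is to mirror the oblivious PRAM algorithm of Theorem~\ref{thm:ternarysort} step by step in the circuit model, replacing each PRAM primitive with its circuit counterpart established earlier in the paper. Recall that the PRAM algorithm first counts, for each key $b \in \{0,1,2\}$, the starting/ending indices $L_b, U_b$ that $b$ would occupy in the sorted output, and then invokes oblivious distribution three times (one per key value) to route the elements into their slots. Both of these sub-tasks have suitable circuit realizations already available.

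First, I would perform the counting step. Since there are only three key values, we can, in parallel, run three instances of the counting circuit of Fact~\ref{fct:countprelim}, each applied to the length-$n$ bit-vector indicating whether an element's key equals a fixed $b$. This costs $O(n)$ generalized boolean gates and $O(\log n)$ depth. From the three counts we derive $L_0, U_0, L_1, U_1, L_2, U_2$ using $O(1)$ additions on $O(\log n)$-bit numbers, also in $O(\log n)$ depth and $O(n)$ gates total. Next, for each $b \in \{0,1,2\}$, I need to produce the bit-vector $\mathbf{v}_b \in \{0,1\}^n$ such that $\mathbf{v}_b[i] = 1$ iff $i \in [L_b, U_b]$; this is two invocations of the generalized binary-to-unary conversion circuit of Fact~\ref{fct:binary_to_unary} (to mark the prefix $\leq U_b$ and the prefix $< L_b$, then XOR), yielding $O(n)$ gates and $O(\log n)$ depth per $b$.

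Second, I implement the three distributions in circuit. The distribution algorithm described in Section~\ref{sec:prambldgblock} builds distribution from tight compaction in a black-box manner (two tight-compaction calls plus reverse routing), so instantiating tight compaction with the linear-sized, logarithmic-depth circuit of Theorem~\ref{thm:tccircuit} immediately gives a distribution circuit with $O(nw) \cdot \max(1, \poly(\log^* n - \log^* w))$ generalized boolean gates, $O(n) \cdot \max(1, \poly(\log^* n - \log^* w))$ number of $w$-selector gates, and $O(\log n)$ depth. Running the three distribution circuits in parallel on the three bit-vectors $\mathbf{v}_0, \mathbf{v}_1, \mathbf{v}_2$ routes the real elements of each key class into their designated slots; a final coordinate-wise select over the three output arrays (one $w$-selector gate per position, constant depth) produces the sorted output.

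The argument is essentially bookkeeping, and I do not anticipate a serious obstacle: correctness follows directly from the correctness of the PRAM algorithm of Theorem~\ref{thm:ternarysort}, and the resource bounds follow by summing the costs of a constant number of counting circuits, binary-to-unary circuits, tight-compaction-based distribution circuits, and a final select layer. The only mild subtlety is to ensure that we count $w$-selector gates and generalized boolean gates separately and preserve the $\poly(\log^* n - \log^* w)$ factor across composition, which is automatic because all composed primitives already have their bounds stated in this form and only a constant number of them are invoked.
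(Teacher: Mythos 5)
Your proof is correct and follows essentially the same route as the paper: the paper's own justification of Fact~\ref{fct:ternarysort-circ} is simply to implement the PRAM algorithm of Theorem~\ref{thm:ternarysort} by plugging in the tight-compaction circuit of Theorem~\ref{thm:tccircuit} (via the distribution subroutine of Section~\ref{sec:prambldgblock}) together with the counting circuit of Fact~\ref{fct:countprelim}. You are a bit more explicit in spelling out that the availability bit-vectors $\mathbf{v}_b$ are produced via the binary-to-unary gadget of Fact~\ref{fct:binary_to_unary}, which the paper leaves implicit, but this is exactly the intended construction.
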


\paragraph{Slow sorter and slow alignment.}
We now discuss how to implement the 
earlier ${\bf SlowSort}^K(\cdot)$ 
and ${\bf SlowAlign}^{K, K'}(\cdot)$ algorithms in circuit. 

\begin{fact}[${\bf SlowSort}^K(\cdot)$ circuit]
Let $n$ be the length of the input array and $w$ be the length
of each element's payload. Recall that each element has a key
from the domain $[0, K-1]$, and let $k := \log K$. 
The ${\bf SlowSort}^K(\cdot)$ algorithm 
of Theorem~\ref{thm:slowsort}
can be implemented as a circuit with 
$O(n K \cdot (w+k)) \cdot \max(1, \poly (\log^* n - \log^* (w+k)))$
generalized boolean gates, $O(n K)  \cdot 
\max(1, \poly (\log^* n - \log^* (w+k)))$
number of $(w + k)$-selector gates, and of depth $O(\log n + k)$.
\label{fct:slowsort-circ}
\end{fact}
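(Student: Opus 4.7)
The plan is to compile the four-step ${\bf SlowSort}^K$ algorithm from Theorem~\ref{thm:slowsort} into a circuit by substituting each algorithmic primitive with its circuit counterpart developed in Section~\ref{sec:circuitbldgblock}, together with the ternary-sort circuit of Fact~\ref{fct:ternarysort-circ}. The extra $\poly\log^*$ factor is inherited precisely from the ternary sort (which is built on tight compaction), so Step~\ref{stp:slowsort:ternary} is the dominant step in both gate count and depth.

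First, I would realize Step~\ref{stp:slowsort:count} as follows: for every $u \in [0, K-1]$ in parallel, run the counting circuit of Fact~\ref{fct:countprelim} on the indicator bits of whether each $A[i].{\sf key} = u$, which can be produced by $nK$ many $k$-bit comparators (Fact~\ref{fct:comparator}). This yields all $c_u$ in $O(nKk)$ generalized boolean gates and $O(\log n)$ depth. Invoking the all-prefix-sum circuit of Fact~\ref{fct:prefixsumcircuit} on the $K$ counts (each at most $\log n$ bits) produces all $s_u$ in $O(K \log n)$ additional gates and $O(\log K) = O(k)$ depth. For Step~\ref{stp:slowsort:copy}, for each pair $(i, u) \in [n] \times [0, K-1]$, a comparator of $A[i].{\sf key}$ against $u$ drives one $(w+k)$-selector gate that routes either $A[i]$ or a canonical filler tag into $B_u[i]$; this contributes $O(nKk)$ generalized boolean gates, $O(nK)$ $(w+k)$-selector gates, and $O(\log k)$ depth.

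Next, Step~\ref{stp:slowsort:ternary} is executed in parallel across all $K$ copies. Within copy $u$, the labelling of the first $s_u$ fillers as $-\infty$ and the remaining fillers as $\infty$ is an instance of the generalized binary-to-unary conversion problem where the receivers are the filler positions of $B_u$ and the integer is $s_u$; by Fact~\ref{fct:binary_to_unary} this uses $O(n)$ gates and $O(\log n)$ depth per copy. Applying the ternary-key sorting circuit of Fact~\ref{fct:ternarysort-circ} to $B_u$ then costs $O(n(w+k))\cdot \max(1, \poly(\log^* n - \log^* (w+k)))$ generalized boolean gates, $O(n) \cdot \max(1, \poly(\log^* n - \log^* (w+k)))$ many $(w+k)$-selector gates, and $O(\log n)$ depth; summing over the $K$ parallel instances gives the dominant contribution to the overall bound. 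Finally, Step~\ref{stp:slowsort:populate} is realized per output position $i \in [n]$ as a balanced binary tree of $(w+k)$-selector gates over the $K$ candidates $B'_0[i], \ldots, B'_{K-1}[i]$, where each internal control bit is computed by a small generalized boolean circuit that propagates upward whether a subtree's chosen candidate has a key in $[0, K-1]$; this contributes $O(nKk)$ generalized boolean gates, $O(nK)$ $(w+k)$-selector gates, and $O(k)$ depth.

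The main bookkeeping obstacle is to verify that these four steps compose so that the depth stays $O(\log n + k)$ rather than multiplying. Since the four stages occur sequentially with depths $O(\log n)$, $O(\log k)$, $O(\log n)$, and $O(k)$ respectively, their sum is $O(\log n + k)$ as required. Adding the gate counts, Step~\ref{stp:slowsort:ternary} dominates, yielding the asserted $O(nK(w+k)) \cdot \max(1, \poly(\log^* n - \log^* (w+k)))$ generalized boolean gates and $O(nK) \cdot \max(1, \poly(\log^* n - \log^* (w+k)))$ many $(w+k)$-selector gates, which establishes the fact.
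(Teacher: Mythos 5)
Your compilation matches the paper's proof step-for-step: Step~\ref{stp:slowsort:count} via parallel counting plus all-prefix-sum, Step~\ref{stp:slowsort:copy} via replicated comparisons feeding $(w+k)$-selectors, Step~\ref{stp:slowsort:ternary} via $K$ parallel instances of generalized binary-to-unary conversion and ternary-key sort (the dominant term), and Step~\ref{stp:slowsort:populate} via a selection tree of fan-in~$K$. One small slip: you quote Step~\ref{stp:slowsort:copy} as $O(\log k)$ depth, but in the constant fan-out model each $A[i]$ (and its $k$-bit key) must be replicated to $K$ selector gates over a binary tree of $K$ leaves, which costs $\Theta(\log K)=\Theta(k)$ depth, not $O(\log k)$; the paper explicitly performs this broadcast. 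Since the Fact's claimed depth is $O(\log n + k)$, this correction is absorbed and does not affect the conclusion.
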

\begin{proof}
Recall the ${\bf SlowSort}^K(\cdot)$ algorithm of 
Theorem~\ref{thm:slowsort} where $K := 2^k$:
\elaine{hard coded refs}
\begin{enumerate}[leftmargin=5mm,itemsep=1pt]
\item 
Step~\ref{stp:slowsort:count} can be 
implemented using $K$ parallel instances of the counting
circuit of Fact~\ref{fct:countprelim} on arrays
of length $n$, 
and then using the all-prefix-sum circuit 
of Fact~\ref{fct:prefixsumcircuit} on an array of length $K$
where the entire sum is promised to be at most $O(\log n)$ bits long.
In total, Step~\ref{stp:slowsort:count} requires 
a circuit with $O(nK + K \log n) = O(nK)$
generalized boolean gates 
and of depth $O(\log K + \log n) = O(k + \log n)$.

\item 
Step~\ref{stp:slowsort:copy} can be implemented 
by broadcast each element of $A$ over a binary tree of $K$ leaves, 
and then having all $nK$ elements perform 
a comparison in parallel using Fact~\ref{fct:comparator}.
This requires $O(n K)$ number of $(w + k)$-selector gates, 
$O(n K)$ generalized boolean gates, 
and at most $O(k)$ depth.

\item 
Step~\ref{stp:slowsort:ternary} invokes $K$ parallel instances of the 
generalized binary-to-unary conversion
circuit on arrays of length $n$, 
and $K$ parallel instances of the ternary-key sorting circuit.
This requires  
$O(n K (w + k)) \cdot \max(1, \poly (\log^* n - \log^* (w+k)))$
generalized boolean gates, $O(n K)  \cdot 
\max(1, \poly (\log^* n - \log^* (w + k)))$
number of $(w+k)$-selector gates, and has  
depth $O(\log n)$.
\item 
Step~\ref{stp:slowsort:populate} can be accomplished  
in a circuit with 
$O(n K)$ number of generalized boolean gates, 
$O(n K)$ number of $(w + k)$-selector gates
and of depth $O(\log K) = O(k)$.
Note that we can use a single bit
to mark whether each element in each of $B'_0, B'_1, \ldots,
B'_{K-1}$
has a real key in the range $[0, K-1]$ or not.
\end{enumerate}

Summarizing the above, we have that the entire 
${\bf SlowSort}^K(\cdot)$ algorithm 
can be implemented as a circuit with 
$O(n K (w + k)) \cdot \max(1, \poly (\log^* n - \log^* (w + k)))$
generalized boolean gates, $O(n K)  \cdot 
\max(1, \poly (\log^* n - \log^* (w + k)))$
number of $(w + k)$-selector gates, and of depth $O(\log n)$.
\end{proof}

We now discuss the circuit implementation of the 
${\bf SlowAlign}^{K,K'}(\cdot)$ algorithm of 
Theorem~\ref{thm:slowalign}.

\begin{fact}[${\bf SlowAlign}^{K,K'}(\cdot)$ circuit]
Let $n$ be the length of the input array and $w$ be the length
of each element's payload. Recall that each element has a key
from the domain $[0, K-1]$, and an index from the 
domain $[0, K'-1]$.  Let $k = \log K$ and $k' = \log K'$.
The ${\bf SlowAlign}^{K,K'}(\cdot)$ algorithm
of Theorem~\ref{thm:slowsort}
can be implemented as a circuit with
$O(n \cdot (K + K') \cdot (w + k + k')) 
\cdot \max(1, \poly (\log^* n - \log^* (w + k + k')))$
generalized boolean gates, $O(n (K + K'))  \cdot 
\max(1, \poly (\log^* n - \log^* (w + k + k')))$
number of $(w + k + k')$-selector gates, and of 
depth $O(\log n + k + k')$.
\label{fct:slowalign-circuit}
\end{fact}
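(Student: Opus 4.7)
The proof will simply account for the three steps of the ${\bf SlowAlign}^{K,K'}$ algorithm described in Theorem~\ref{thm:slowalign} and invoke Fact~\ref{fct:slowsort-circ} on each. The plan is to carefully track the bit-widths of the payloads at each stage, noting that for Step~\ref{step:slowalign:sortkey} we must carry both the key and index as part of the payload, whereas for Step~\ref{step:slowalign:sortidx} we sort only on the indices so the payload is narrow. The one wrinkle lies in Step~\ref{step:slowalign:reverse}, where the reverse-routing network is the reversal of Step~\ref{step:slowalign:sortidx}'s sort but now carries the full $(w+k)$-bit elements from Step~\ref{step:slowalign:sortkey}; this is the one step where one must be careful not to conflate selector-gate counts with generalized boolean gate counts.

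Concretely, for Step~\ref{step:slowalign:sortkey}, I will invoke Fact~\ref{fct:slowsort-circ} with $K$ keys and effective payload width $w + k'$ (since the index is carried along). This gives $O(nK(w+k+k'))\cdot\max(1,\poly(\log^* n - \log^*(w+k+k')))$ generalized boolean gates, $O(nK)\cdot\max(1,\poly(\log^* n - \log^*(w+k+k')))$ number of $(w+k+k')$-selector gates, and depth $O(\log n + k)$. For Step~\ref{step:slowalign:sortidx}, I invoke Fact~\ref{fct:slowsort-circ} with $K'$ keys and payload width $k'$ (nothing else to carry, since we sort only on the index list), which gives $O(nK'(k'+k'))$ generalized boolean gates, $O(nK')$ number of $k'$-selector gates, and depth $O(\log n + k')$.

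For Step~\ref{step:slowalign:reverse}, the routing decisions are exactly those produced by Step~\ref{step:slowalign:sortidx}, so no new decision-computation circuitry is needed; one merely replaces each $k'$-selector gate by the corresponding reverse selector gate that now propagates $(w+k)$-bit elements (as in Remark~\ref{rmk:reverseroute} of Section~\ref{sec:splitterfromlc}). This contributes $O(nK')\cdot\max(1,\poly(\log^* n - \log^*(w+k+k')))$ number of $(w+k)$-selector gates (equivalently $(w+k+k')$-selector gates after padding), and its depth is no larger than that of Step~\ref{step:slowalign:sortidx}. Summing across the three steps and folding in the trivial inequality $w+k \le w+k+k'$ to unify the selector-gate widths yields the claimed bounds of $O(n(K+K')(w+k+k'))\cdot\max(1,\poly(\log^* n - \log^*(w+k+k')))$ generalized boolean gates, $O(n(K+K'))\cdot\max(1,\poly(\log^* n - \log^*(w+k+k')))$ number of $(w+k+k')$-selector gates, and depth $O(\log n + k + k')$.

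The only real obstacle is the reverse-routing bookkeeping: one has to check that promoting the $k'$-selector gates of Step~\ref{step:slowalign:sortidx} to $(w+k+k')$-selector gates for Step~\ref{step:slowalign:reverse} does not inflate the generalized boolean gate count (the control bits were already determined by the forward pass, so no $\log n$-width arithmetic is recomputed) and does not grow the depth (reverse routing mirrors the forward network layer by layer). Everything else is a direct summation using Fact~\ref{fct:slowsort-circ}.
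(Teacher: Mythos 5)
Your decomposition into ${\bf SlowSort}^K$, ${\bf SlowSort}^{K'}$, and reverse routing is exactly the paper's, and your accounting for Step~\ref{step:slowalign:sortkey} is correct. The gap is in how you parameterize Steps~\ref{step:slowalign:sortidx} and~\ref{step:slowalign:reverse}. You instantiate ${\bf SlowSort}^{K'}$ on the bare index list, i.e.\ on $k'$-bit elements with essentially no payload. By Fact~\ref{fct:slowsort-circ}, that circuit has $O(nK')\cdot\max\bigl(1,\poly(\log^* n - \log^* k')\bigr)$ selector gates---a factor you drop when you write ``$O(nK')$ number of $k'$-selector gates.'' Step~\ref{step:slowalign:reverse} reuses precisely that selector network in reverse, so it inherits the factor $\poly(\log^* n - \log^* k')$, \emph{not} the factor $\poly(\log^* n - \log^*(w+k+k'))$ that you then write down for Step~\ref{step:slowalign:reverse}. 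These need not coincide: since $k' \le w+k+k'$, we have $\log^* k' \le \log^*(w+k+k')$, so the factor you actually derived can strictly dominate the one you claim. For instance with $k'=O(1)$ and $w = \log^{(3)} n$, the first factor is $\poly(\log^* n)$ while the second is $O(1)$, so your derivation does not establish the stated selector-gate bound.

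The repair is what the paper's one-line proof is implicitly doing: build the circuit for Steps~\ref{step:slowalign:sortidx} and~\ref{step:slowalign:reverse} as ${\bf SlowSort}^{K'}$ \emph{parameterized for $(w+k)$-bit payloads}, treating the forward-pass payloads as dummies. The $\poly\log^*$ factor in Theorem~\ref{thm:tccircuit} (and hence in Fact~\ref{fct:slowsort-circ}) is fixed at circuit-construction time by the bit-width parameter---it controls the number of bootstrapping iterations in Theorem~\ref{thm:linearllc}---and it does not depend on whether the payload wires carry live data at runtime. Choosing the wider parameter $w+k$ therefore yields fewer (though wider) selector gates, giving the count $O(nK')\cdot\max\bigl(1,\poly(\log^* n - \log^*(w+k+k'))\bigr)$ directly from Fact~\ref{fct:slowsort-circ}; Step~\ref{step:slowalign:reverse} then reuses the same gates in reverse at no additional asymptotic cost, and the generalized-boolean-gate total $O(nK'(w+k+k'))\cdot\max\bigl(1,\poly(\log^* n - \log^*(w+k+k'))\bigr)$ is within the claimed budget. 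With that change the remainder of your argument goes through.
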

\begin{proof}
Recall that ${\bf SlowAlign}^{K, K'}$ invokes
one instance of ${\bf SlowSort}^K$ on an array of length $n$
containing $(w + k')$-bit payloads, 
and one instance of ${\bf SlowSort}^{K'}$ 
on an array of length $n$ containing $(w + k)$-bit payloads and 
its reverse routing circuit. Therefore, the fact follows from 
Fact~\ref{fct:slowsort-circ}.
\end{proof}

\elaine{in this case the selector gate's input depend
on boolean bits. can we still 
say we do all the selectors at the end?}

\paragraph{Finding the dominant key.}
We now analyze the complexity of the ${\bf FindDominant}$
algorithm (Theorem~\ref{thm:finddominant}) when implemented in circuit.
Note that the ${\bf FindDominant}$ algorithm
need not look at the elements' payload strings. Therefore,
we may plug in an arbitrary $w \geq 0$ as the fake payload length.
\elaine{hard coded refs}
\begin{enumerate}[leftmargin=5mm,itemsep=1pt]
\item 
Step~\ref{stp:finddom:base}, i.e., the base case calls the ${\bf SlowSort}^K$ algorithm 
on an array of length at most $n/K$ where $K := 2^k$. 
Therefore, this step  requires 
$O(n \cdot (w + k)) 
\cdot \max(1, \poly (\log^* n - \log^* (w + k)))$
generalized boolean gates, $O(n)  \cdot 
\max(1, \poly (\log^* n - \log^* (w + k)))$
number of $(w + k)$-selector gates, and of
depth $O(\log n + k)$. 
\item 
In each of the $O(k)$ recursive calls to ${\bf FindDominant}$, 
the array length would reduce by a factor of $4$, 
and during each recursive call, we divide the array into groups
of $8$ and run an AKS circuit on each group. 
In total over all levels of recursion, this requires
$O(n)$ number of $(w + k)$-selector gates, $O(n)$ 
generalized boolean gates, and $O(k)$ depth.
\end{enumerate}

Therefore, we have the following fact.

\begin{fact}[${\bf FindDominant}$ circuit]
Suppose that $n > 2^{k + 7}$ and moreover $n$ is a power of $2$. 
\elaine{note this condition}
Let $A$ be an array containing $n$ elements
each with a $k$-bit key, and suppose
that $A$ is $(1-2^{-8k})$-uniform.
Fix some arbitrary $w \geq 0$ (which need not
be the element's payload length\footnote{Note that the algorithm
need not look at the elements' payload strings.}). 
Then, there is a circuit 
that can correctly identify the dominant key given any such $A$;
and moreover, the circuit contains 
$O(n \cdot (w + k)) \cdot \max(1, \poly (\log^* n - \log^* (w + k)))$
generalized boolean gates, $O(n)  \cdot 
\max(1, \poly (\log^* n - \log^* (w + k)))$
number of $(w + k)$-selector gates, and of
depth $O(\log n + k)$. 
\label{fct:finddominant-circuit}
\end{fact}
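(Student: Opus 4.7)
The plan is to translate the recursive ${\bf FindDominant}$ algorithm from Theorem~\ref{thm:finddominant} directly into a circuit and account for each recursion level separately. The algorithm has the structure: a recursive step that sorts disjoint groups of $8$ elements with AKS and keeps the two medians, reducing the array size by a factor of $4$ each time, and a base case (on arrays of length $\le n/K$) that invokes ${\bf SlowSort}^K$ and returns a median. I will bound the circuit cost contribution of each of these two pieces and then combine.

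First I would analyze the base case. Since the base-case input has length $m^* \le n/K = n/2^k$, Fact~\ref{fct:slowsort-circ} gives a circuit for ${\bf SlowSort}^K$ on an array of length $m^*$ with $O(m^* K (w+k)) \cdot \max(1,\poly(\log^* m^* - \log^*(w+k)))$ generalized boolean gates, $O(m^* K) \cdot \max(1,\poly(\log^* m^* - \log^*(w+k)))$ number of $(w+k)$-selector gates, and depth $O(\log m^* + k)$. Since $m^* K \le n$ and $\log m^* \le \log n$, this already meets the claimed bounds, with the $\poly\log^*$ factor inherited from ${\bf SlowSort}^K$. The depth contribution is $O(\log n + k)$.

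Next I would analyze the recursive steps. At each recursion level $i = 0,1,\ldots$ the current array has length $m_i = n/4^i$. Sorting disjoint groups of $8$ with AKS is a constant-sized, constant-depth gadget per group, operating on elements of bit-width $O(w+k)$; this costs $O(m_i)$ generalized boolean gates, $O(m_i)$ number of $(w+k)$-selector gates, and $O(1)$ depth per level. Extracting the two medians is free. Summing geometrically over all $O(k) = O(\log_4 K)$ recursive levels before the base case gives a total of $O(n)$ generalized boolean gates and $O(n)$ number of $(w+k)$-selector gates from the recursive steps, with a depth contribution of $O(k)$. The selector gates here may carry the original $w$-bit payload even though the comparisons only touch the $k$-bit keys, which is fine because we are charging $(w+k)$-selector gates in the theorem statement.

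Finally I would combine the two contributions. The generalized boolean and selector gate counts are dominated by the base case (the $\poly\log^*$ blowup), giving $O(n(w+k)) \cdot \max(1,\poly(\log^* n - \log^*(w+k)))$ and $O(n) \cdot \max(1,\poly(\log^* n - \log^*(w+k)))$ respectively. The depth is the base-case depth $O(\log m^* + k)$ plus the recursive depth $O(k)$, which is $O(\log n + k)$. The only minor obstacle is bookkeeping the $\poly\log^*$ factor correctly: one must verify that $\log^* m^* - \log^*(w+k) \le \log^* n - \log^*(w+k)$, which is immediate since $m^* \le n$, so the base-case $\poly\log^*$ factor is absorbed into the claimed bound. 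Assembling these three observations yields Fact~\ref{fct:finddominant-circuit}.
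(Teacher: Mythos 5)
Your proposal is correct and follows essentially the same approach as the paper's own proof: it translates the two components of ${\bf FindDominant}$ (the $O(k)$-level recursion with AKS on groups of $8$, and the ${\bf SlowSort}^K$ base case) into circuits, bounds each via a geometric sum and Fact~\ref{fct:slowsort-circ} respectively, and observes that the base case dominates the gate count while the depths add to $O(\log n + k)$. The only additional detail you supply is the explicit monotonicity check $\log^* m^* \le \log^* n$, which the paper leaves implicit.
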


\subsection{Putting Everything Together: Sorting 
Short Keys in the Circuit Model}

We now finish it off and discuss how to implement the algorithm of 
Theorem~\ref{thm:mainsort-opram}
in the circuit model.
To do this, it suffices to describe how to 
implement a nearly orderly segmenter in circuit, 
and how to sort a nearly orderly  
array in circuit. 

\paragraph{Nearly orderly segmenter.}
Recall that for $k \leq \log n$, 
the algorithm of Theorem~\ref{thm:segmenter}
is a comparator-based circuit 
 with $O(nk)$ comparators and
of $O(k)$ depth.
We would like to convert this comparator-based
circuit to a circuit with generalized boolean gates
and $w$-selector gates.

\begin{fact}[$(2^{-8k}, 2^{3k})$-orderly segmenter circuit]
Suppose that $k \leq \log n$. \elaine{note the assumption}
There exists a $(2^{-8k}, 2^{3k})$-orderly-segmenter
circuit with $O(nk^2)$ generalized boolean gates, $O(nk)$ 
number of $(w + k)$-selector gates,
and of depth $O(k)$.
\label{fct:segmenter-circ}
\end{fact}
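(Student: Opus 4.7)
The plan is to compile the comparator-based segmenter of Theorem~\ref{thm:segmenter} --- which has $O(nk)$ comparators arranged in a comparator-network of depth $O(k)$ when $k \le \log n$ --- directly into the operational gate model. Since each element carries a $k$-bit key and a $w$-bit payload, every wire in the compiled circuit has bit-width $w+k$.

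A single comparator is realized in two stages: (i) a $k$-bit comparator gadget from Fact~\ref{fct:comparator} that reads the two keys and produces a 1-bit ``swap'' flag, using $O(k)$ generalized boolean gates in depth $O(\log k)$; followed by (ii) two $(w+k)$-selector gates, both controlled by this flag, that drive the min-slot and max-slot outputs of the comparator. Each comparator therefore uses $O(k)$ generalized boolean gates plus two $(w+k)$-selector gates. Summed over the $O(nk)$ comparators of the network this gives $O(nk^2)$ generalized boolean gates and $O(nk)$ $(w+k)$-selector gates, matching the claimed counts.

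The main obstacle is the depth. A straightforward layer-by-layer accounting yields $O(k)\cdot O(\log k) = O(k\log k)$, because each of the $O(k)$ comparator-layers internally contains a compare-then-select subcircuit of depth $O(\log k)$. To reach the claimed $O(k)$ I would invoke a flow-separation argument in the spirit of the partial-evaluation trick of Lemma~\ref{lem:operational}: the payload along any input-to-output path passes only through a chain of $O(k)$ selector gates and therefore has path-depth $O(k)$ on its own, while the $k$-bit comparison subcircuits that feed the swap flags at successive layers can be scheduled to run in parallel, each contributing only the single flag bit that its adjacent selector actually needs. The delicate point is arguing that the comparison costs at consecutive layers do not compose multiplicatively along any individual input-to-output path; at worst the depth degrades to $O(k\log k)$, which since $k\le\log n$ is still $O(\log n \cdot \log\log n)$ and would need to be absorbed (or shaved via the above pipelining) when combining with the overall depth target $O(\log n + \log w)$ of Theorem~\ref{thm:intro-sort-circ}.
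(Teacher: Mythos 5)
Your gate counts are right, and you correctly identify that the only real obstacle is the depth: a naive layer-by-layer compile of the $O(k)$-depth comparator network gives $O(k\log k)$ because each $k$-bit comparator itself has $O(\log k)$ boolean depth. But the fix you propose does not work, and the fallback you offer is not good enough.

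Your ``flow-separation'' argument in the spirit of Lemma~\ref{lem:operational} relies on the flags of all selector gates being computable independently of what the selectors route. That premise holds for circuits in the indivisible model, where the control bits are a function of metadata only. It fails here: the comparator at layer~$j$ compares two $k$-bit \emph{keys}, and those keys are the outputs of the selectors at layer~$j-1$. So the comparison subcircuits at successive layers are in a genuine dependency chain and cannot all run in parallel. Consequently your ``at worst $O(k\log k)$'' is not a fallback that can be absorbed: the Fact is invoked with $k$ as large as $\Theta(\log n)$, and $O(\log n\cdot\log\log n)$ exceeds the $O(\log n)$ budget of Theorem~\ref{thm:intro-sort-circ}, so the extra $\log k$ factor really must be removed.

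The paper removes it with a different pipelining idea, operating at the bit level rather than the gate level. Evaluate the network in a wavefront: process key bits from MSB to LSB, and have each comparator maintain an $O(1)$-bit running state (``tied so far'' vs.\ ``decided, swap'' vs.\ ``decided, no swap''). If the state after bits $1,\dots,i-1$ is decided, output bit $i$ is obtained by selecting on the fixed swap flag; if still tied, then one checks that output bit $i$ of the min/max pair equals $\min(a_i,b_i)$ and $\max(a_i,b_i)$ respectively, which is again an $O(1)$ local computation. Either way, a comparator can emit its $i$-th output bit within $O(1)$ depth of having its $i$-th input bit and its state-after-$(i{-}1)$ available. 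Thus bit $i$ at layer $\ell$ is ready by time $O(\ell+i)$, and the flags are all ready by depth $O(k+k)=O(k)$; the $(w+k)$-selectors can then be evaluated in topological order in another $O(k)$ depth. This is what Fact~\ref{fct:segmenter-circ} asserts, and it is not recoverable from the partial-evaluation trick you cite.
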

\begin{proof}
If we used a na\"ive method for converting
the comparator-based circuit in Theorem~\ref{thm:segmenter}
to a circuit with generalized boolean gates and
$w$-selector gates, 
the resulting circuit depth would have depth $O(k \log k)$
because every comparator can be implemented as an
$O(k)$-sized and $O(\log k)$-depth
boolean circuit due to Fact~\ref{fct:comparator}.

Fortunately, we can rely on a {\it pipelining} technique to make
the depth smaller.
\begin{itemize}[leftmargin=5mm]
\item 
In the beginning, all input bits of the input layer 
are ``ready''. 
All other comparators not in the input layer
see all bits of their inputs as ``not ready''.
\item 
Whenever a comparator detects 
a new $i \in [k]$ such that both of its inputs 
have the $i$-th bit ready, it can compare the $i$-th bits
of the two inputs, and as a result, the $i$-th bits
of the two outputs of the gate will be ready. 
\end{itemize}

Using this pipelining technique, we can 
first compute all the generalized boolean gates 
which will populate the flags 
of all selector gates.
This step takes $O(k)$ depth and $O(nk^2)$
generalized boolean gates.
Next, we can evaluate all $O(nk)$ number of 
$(w + k)$-selector gates 
in a topological order; this can be 
accomplished in $O(k)$ depth. 
\end{proof}

\paragraph{Sorting a nearly orderly array.}
We now describe how to implement the algorithm of 
Theorem~\ref{thm:corrector}  in circuit.

\begin{itemize}[leftmargin=5mm,itemsep=1pt]
\item 
Step~\ref{step:gb} calls the ${\bf FindDominant}$ 
circuit of Fact~\ref{fct:finddominant-circuit}, 
and then for each segment, invokes one copy of the counting circuit
of Fact~\ref{fct:countprelim}
and the generalized binary-to-unary conversion circuit
of Fact~\ref{fct:binary_to_unary}.
Therefore, this step can be accomplished 
with a circuit 
containing 
$O(n \cdot (w + k)) \cdot \max(1, \poly (\log^* n - \log^* (w + k)))$
generalized boolean gates, $O(n)  \cdot 
\max(1, \poly (\log^* n - \log^* (w + k)))$
number of $(w + k)$-selector gates, and of
depth $O(\log n + k)$.
\item 
For Step~\ref{step:compact-correction}: to mark each element
with its segment index, we can simply hard-wire the segment indices 
in the circuit.  
Then, we invoke the oblivious compaction  
circuit of Theorem~\ref{thm:tccircuit} 
which requires
$O(n(w+k)) \cdot \max(\poly(\log^* n - \log^* (w+k)), 1)$
generalized boolean gates,
$O(n) \cdot \max(\poly(\log^* n - \log^* (w+k)), 1)$
number of $(w+k)$-selector gates, and $O(\log n)$ depth.

\item 
Step~\ref{step:slowalign-correction}
invokes the ${\bf SlowAlign}^{K, K^2}$ 
circuit of Fact~\ref{fct:slowalign-circuit} on $3n/K^2$ elements.
This requires 
$O(n \cdot (w + k)) 
\cdot \max(1, \poly (\log^* n - \log^* (w + k)))$
generalized boolean gates, $O(n)  \cdot 
\max(1, \poly (\log^* n - \log^* (w + k)))$
number of $(w + 3k)$-selector gates, and has depth $O(\log n + k)$.

\item 
Step~\ref{step:reverseroute-correction}
is a reverse routing step whose costs are absorbed by 
Step~\ref{step:compact-correction}.
\item 
Step~\ref{step:multikey-correction}
invokes $K^2$ 
instances of the counting circuit of Fact~\ref{fct:countprelim}
each on an array of length $n/K^2$.
This takes $O(n)$ generalized boolean gates.

\item 
Step~\ref{step:compact2-correction}
invokes the compaction circuit 
of Theorem~\ref{thm:tccircuit} on an array containing
$K^2$ elements, where each element is of length $W := O(n(k+w)/K^2)$.
\elaine{note: need wide payload}
Since $K^2 \cdot n(k+w)/K^2 = O(n(k+w))$,
this step requires 
$O(n(k+w)) \cdot \max(\poly(\log^* n - \log^* (w+k)), 1)$ 
generalized boolean gates, 
$O(K) \cdot \max(\poly(\log^* n - \log^* (w+k)), 1)$
number of $W$-selector gates, and of depth $O(\log K)$.
\item 
Step~\ref{step:sortwithin-correction}
invokes $K$ instances the ${\bf SlowSort}^K$
circuit of Fact~\ref{fct:slowsort-circ}
each on an array of length $n/K^2$.
This cost of this step is dominated by that of 
Step~\ref{step:slowalign-correction}.

\item 
Step~\ref{step:final-correction}
is a reverse routing step whose costs are dominated
by Step~\ref{step:compact2-correction}.
\end{itemize}

Due to Lemma~\ref{lem:operational}, the above can be implemented
as a constant fan-in, constant fan-out 
a boolean circuit of size $O(n(w+k)) \cdot 
\max(1, \poly(\log^* n - \log^*(w+k)))$
and depth $O(\log n + \log w)$, assuming that $n > 2^{4k + 7}$.


\begin{fact}[Sorting a $(2^{-8k}, 2^{3k})$-orderly array in circuit]
Suppose that $n > 2^{4k + 7}$.
There is 
a constant fan-in, constant fan-out 
boolean circuit that fully sorts an $(2^{-8k}, 2^{3k})$-orderly array
containing $n$ elements each with a $k$-bit key and a $w$-bit payloads, 
whose size is $O(n(w+k)) \cdot \max(1, \poly(\log^* n - \log^* (w+k)))$
and whose depth is $O(\log n + \log w)$.
\label{fct:corrector-circ}
\end{fact}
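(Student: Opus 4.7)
The plan is to translate the oblivious PRAM algorithm of Theorem~\ref{thm:corrector} (the one that fully sorts an $(\eta,p)$-orderly array) step-by-step into the circuit model, instantiate each step with the circuit building blocks just developed, and then invoke Lemma~\ref{lem:operational} to convert from the operational model (generalized boolean gates plus $w$-selector gates) into a constant fan-in, constant fan-out boolean circuit. Because the assumption $n > 2^{4k+7}$ implies $k = O(\log n)$, any additive $k$ in a depth bound is absorbed into $O(\log n)$.

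First, I would walk down the eight steps of the algorithm and match each to a circuit gadget. Step~\ref{step:gb} (find the dominant key in each segment, count, and mark ``misplaced'' elements) is realized by $p = 2^{3k}$ parallel copies of the ${\bf FindDominant}$ circuit of Fact~\ref{fct:finddominant-circuit}, the counting circuit of Fact~\ref{fct:countprelim}, and the generalized binary-to-unary circuit of Fact~\ref{fct:binary_to_unary}; the hypothesis $n > 2^{4k+7}$ ensures each segment has length $m > 2^{k+7}$ so that Fact~\ref{fct:finddominant-circuit} applies. Step~\ref{step:compact-correction} (attach segment number as metadata, then compact) is one invocation of the tight compaction circuit of Theorem~\ref{thm:tccircuit} with payload width $w+k+3k = O(w+k)$. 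Step~\ref{step:slowalign-correction} is a single call to ${\bf SlowAlign}^{K,K^2}$ of Fact~\ref{fct:slowalign-circuit} on an array of length $3n/K^2$, with key range $K$ and index range $K^2$. Step~\ref{step:reverseroute-correction} is just the reverse-routing circuitry of Step~\ref{step:compact-correction} and its cost is absorbed. Step~\ref{step:multikey-correction} (mark ``multi-key'' super-segments) is $K^2$ parallel counting circuits of Fact~\ref{fct:countprelim} each on an array of length $n/K^2$. Step~\ref{step:compact2-correction} is another tight compaction of Theorem~\ref{thm:tccircuit}, this time on $K^2$ ``meta-elements'' each of bit-width $W := \Theta(n(k+w)/K^2)$. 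Step~\ref{step:sortwithin-correction} is $K$ parallel copies of ${\bf SlowSort}^K$ from Fact~\ref{fct:slowsort-circ} on arrays of length $n/K^2$. Step~\ref{step:final-correction} is the reverse-routing of Step~\ref{step:compact2-correction}, again absorbed.

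Next, I would aggregate the costs. Each step contributes at most $O(n(w+k)) \cdot \max(1,\poly(\log^* n - \log^*(w+k)))$ generalized boolean gates and $O(n) \cdot \max(1,\poly(\log^* n - \log^*(w+k)))$ number of $(w+k)$-selector gates, and each has depth $O(\log n + k) = O(\log n)$. For Step~\ref{step:sortwithin-correction} this uses $K \cdot O((n/K^2) \cdot K \cdot (w+k)) \cdot \poly\log^*(\cdot) = O(n(w+k)) \cdot \poly\log^*(\cdot)$ generalized boolean gates. For Step~\ref{step:compact2-correction} it uses $O(K^2 \cdot W) \cdot \poly\log^*(\cdot) = O(n(w+k)) \cdot \poly\log^*(\cdot)$ generalized boolean gates, which is the arithmetic that has to check out for the wide-payload bookkeeping. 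Summing all eight steps preserves the bound. Finally, Lemma~\ref{lem:operational} instantiates the $(w+k)$-selector gates with constant fan-in constant fan-out gates at an additive $O(\log(w+k)) = O(\log w + \log\log n) = O(\log n + \log w)$ depth cost, yielding a boolean circuit of size $O(n(w+k)) \cdot \max(1, \poly(\log^* n - \log^*(w+k)))$ and depth $O(\log n + \log w)$.

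The main obstacle is really just the bookkeeping: ensuring that the wide-payload compaction in Step~\ref{step:compact2-correction} is charged correctly (Theorem~\ref{thm:tccircuit} is stated with a $w$-parameter, and here $w$ must be taken to be $W = \Theta(n(k+w)/K^2)$, so the $O(nw)$ term in Theorem~\ref{thm:tccircuit} becomes $O(K^2 \cdot W) = O(n(w+k))$, not blown up), and that the $\poly\log^*$ factors do not compound across the sum of finitely many building blocks. Verifying that the indivisibility requirement needed by Lemma~\ref{lem:operational} is preserved throughout — in particular, that ``misplaced'' labels and segment indices in Steps~\ref{step:gb} and \ref{step:compact-correction} are treated as metadata computed by generalized boolean gates while payloads only move through selector gates — is the last technical check, but it follows the same pattern already used in the tight compaction and sparse loose compaction circuits earlier in the paper.
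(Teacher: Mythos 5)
Your proposal follows the paper's own proof essentially verbatim: it walks through the eight steps of Theorem~\ref{thm:corrector}, instantiates each with the same circuit gadgets (Facts~\ref{fct:finddominant-circuit}, \ref{fct:countprelim}, \ref{fct:binary_to_unary}, \ref{fct:slowalign-circuit}, \ref{fct:slowsort-circ}, Theorem~\ref{thm:tccircuit}), handles the wide-payload compaction in Step~\ref{step:compact2-correction} with the same $K^2 \cdot W = O(n(k+w))$ bookkeeping, and finishes with Lemma~\ref{lem:operational}. The argument and bounds match; no gaps.
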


\paragraph{Sorting short keys in the circuit model.}
Summarizing the above, we get the following theorem:

\begin{theorem}[Restatement of Theorem~\ref{thm:intro-sort-circ}]
Suppose that $n > 2^{4k + 7}$.
There is a constant fan-in, constant fan-out 
boolean circuit that correctly sorts any array 
containing $n$ elements each with a $k$-bit key and a $w$-bit payloads, 
whose size is $O(n k (w+k)) \cdot \max(1, \poly(\log^* n - \log^* (w+k)))$
and whose depth is $O(\log n + \log w)$.
\label{thm:mainsort-circ}
\end{theorem}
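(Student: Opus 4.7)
The plan is to assemble the main theorem from the two circuit-level ingredients we have already built: the nearly orderly segmenter of Fact~\ref{fct:segmenter-circ} and the corrector circuit for $(2^{-8k},2^{3k})$-orderly arrays of Fact~\ref{fct:corrector-circ}. The assumption $n > 2^{4k+7}$ guarantees $k < (\log n - 7)/4$, so in particular $k \leq \log n$ and the hypothesis of Fact~\ref{fct:segmenter-circ} is satisfied; likewise, $n > 2^{4k+7}$ is exactly the hypothesis needed by Fact~\ref{fct:corrector-circ}. No separate AKS fallback is needed under this constraint.

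The circuit I would construct has two stages in series. First, feed the input array into the $(2^{-8k}, 2^{3k})$-orderly segmenter of Fact~\ref{fct:segmenter-circ}, producing a permutation of the input that is $(2^{-8k}, 2^{3k})$-orderly. Second, feed that output into the corrector circuit of Fact~\ref{fct:corrector-circ}, which completes the sort. Correctness is immediate from the guarantees of these two facts. The remaining task is to bound the size and depth, first in the operational model (generalized boolean gates together with $(w+k)$-selector gates) and then to convert to the standard constant fan-in, constant fan-out model using Lemma~\ref{lem:operational}.

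For the operational-model accounting, the segmenter contributes $O(nk^2)$ generalized boolean gates, $O(nk)$ number of $(w+k)$-selector gates, and depth $O(k)$; the corrector contributes $O(n(w+k))\cdot \max(1,\poly(\log^* n - \log^*(w+k)))$ generalized boolean gates and $(w+k)$-selector gates, and depth $O(\log n + \log w)$ after the operational-to-boolean conversion already built into Fact~\ref{fct:corrector-circ}. Summing, the total number of generalized boolean gates is at most $O(nk(w+k))\cdot \max(1,\poly(\log^* n - \log^*(w+k)))$, the total number of $(w+k)$-selector gates is at most $O(nk)\cdot \max(1,\poly(\log^* n - \log^*(w+k)))$, and the depth is $O(\log n + \log w)$. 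Applying Lemma~\ref{lem:operational} turns the selector gates into $O(nk(w+k))$ boolean gates of constant fan-in and constant fan-out while adding only an $O(\log(w+k))$ additive term to the depth, which is absorbed into $O(\log n + \log w)$. The final size and depth therefore match the statement.

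The only mildly subtle point, and the one I would double-check carefully, is the application of Lemma~\ref{lem:operational}: that lemma requires the circuit to be in the indivisible model, i.e., no boolean computation is performed on the $w$-bit payloads. Both building blocks treat payloads purely as data that is routed through selector gates, so indivisibility is maintained by the composed circuit, and the additive $\log(w+k)$ depth overhead from flag replication applies globally rather than per selector gate. Once this is verified, the theorem follows by direct addition of the two circuits' sizes and depths.
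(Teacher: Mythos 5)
Your proposal is correct and follows essentially the same route as the paper: compose the $(2^{-8k},2^{3k})$-orderly segmenter of Fact~\ref{fct:segmenter-circ} with the corrector of Fact~\ref{fct:corrector-circ} (which is exactly the circuit-level instantiation of the two-phase algorithm in Theorem~\ref{thm:mainsort-opram}), then discharge the operational-model gates via Lemma~\ref{lem:operational}. The only cosmetic imprecision is that Fact~\ref{fct:corrector-circ} is already stated as a boolean circuit (the conversion via Lemma~\ref{lem:operational} is baked in there), so Lemma~\ref{lem:operational} really only needs to be applied to the segmenter's operational-model gates; but as you note the bound and depth arithmetic come out identically either way.
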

\begin{proof}
Follows directly due to the algorithm of 
Theorem~\ref{thm:mainsort-opram} where we implement
the nearly orderly segmenter and the sorter for a nearly orderly array
using the circuits of Facts~\ref{fct:segmenter-circ}
and \ref{fct:corrector-circ}, respectively.
Further, we use Lemma~\ref{lem:operational}
to convert each circuit gadget 
in our operational model to a constant fan-in,
constant fan-out boolean circuit gadget.
\end{proof}

\section*{Acknowledgments}
This work is in part supported by an NSF CAREER Award under the 
award number CNS1601879, a Packard Fellowship, and an ONR YIP award.
We would like to thank Silei Ren for discussions 
and help in an early stage of the project.
Elaine Shi would like to thank Bruce Maggs for explaining the AKS
algorithm, Pippenger's self-routing super-concentrator, the Wallace-tree trick, 
and the elegant work by Arora, 
Leighton, and Maggs~\cite{alm90}, as well as for 
his moral support of this work.

\nocite{optorama}
\bibliographystyle{alpha}
\bibliography{refs,ref,crypto,cache_obliv,bibdiffpriv}

\newcommand{\etalchar}[1]{$^{#1}$}
\begin{thebibliography}{AKL{\etalchar{+}}20b}

\bibitem[AFKL19]{lbmult}
Peyman Afshani, Casper~Benjamin Freksen, Lior Kamma, and Kasper~Green Larsen.
\newblock Lower bounds for multiplication via network coding.
\newblock In {\em 46th International Colloquium on Automata, Languages, and
  Programming, {ICALP} 2019, July 9-12, 2019, Patras, Greece.}, pages
  10:1--10:12, 2019.

\bibitem[AHNR98]{sortlinearandersson}
Arne Andersson, Torben Hagerup, Stefan Nilsson, and Rajeev Raman.
\newblock Sorting in linear time?
\newblock {\em J. Comput. Syst. Sci.}, 57(1):74--93, August 1998.

\bibitem[AKL{\etalchar{+}}20a]{optorama}
Gilad Asharov, Ilan Komargodski, Wei-Kai Lin, Kartik Nayak, Enoch Peserico, and
  Elaine Shi.
\newblock {OptORAMa}: Optimal {Oblivious} {RAM}.
\newblock In {\em Eurocrypt}, 2020.

\bibitem[AKL{\etalchar{+}}20b]{paracompact}
Gilad Asharov, Ilan Komargodski, Wei-Kai Lin, Enoch Peserico, and Elaine Shi.
\newblock Oblivious parallel tight compaction.
\newblock In {\em Information-Theoretic Cryptography (ITC)}, 2020.

\bibitem[AKS83]{aks}
M.~Ajtai, J.~Koml\'{o}s, and E.~Szemer{\'e}di.
\newblock An ${O}(n \log n)$ sorting network.
\newblock In {\em STOC}, 1983.

\bibitem[Ale69]{sortminmem}
V.E. Alekseev.
\newblock Sorting algorithms with minimum memory.
\newblock {\em Kibernetica}, 5:99--103, 1969.

\bibitem[ALM90]{alm90}
Sanjeev Arora, Frank~Thomson Leighton, and Bruce~M. Maggs.
\newblock On-line algorithms for path selection in a nonblocking network
  (extended abstract).
\newblock In {\em Proceedings of the 22nd Annual {ACM} Symposium on Theory of
  Computing, May 13-17, 1990, Baltimore, Maryland, {USA}}, 1990.

\bibitem[ALS21]{soda21}
Gilad Asharov, Wei-Kai Lin, and Elaine Shi.
\newblock Sorting short keys in circuits of size $o(n \log n)$.
\newblock In {\em SODA}, 2021.

\bibitem[BFP{\etalchar{+}}73]{blumselect}
Manuel Blum, Robert~W. Floyd, Vaughan Pratt, Ronald~L. Rivest, and Robert~E.
  Tarjan.
\newblock Time bounds for selection.
\newblock {\em J. Comput. Syst. Sci.}, 7(4):448--461, August 1973.

\bibitem[bmm]{bmm}
Private communication with Bruce Maggs.

\bibitem[BN16]{isthereoramlb}
Elette Boyle and Moni Naor.
\newblock Is there an oblivious {RAM} lower bound?
\newblock In {\em ITCS}, 2016.

\bibitem[CDR86]{orlbpram}
Stephen~A. Cook, Cynthia Dwork, and R{\"{u}}diger Reischuk.
\newblock Upper and lower time bounds for parallel random access machines
  without simultaneous writes.
\newblock {\em {SIAM} J. Comput.}, 15(1):87--97, 1986.

\bibitem[DO20]{DittmerOstrovsky20}
Samuel Dittmer and Rafail Ostrovsky.
\newblock Oblivious tight compaction in {O}(n) time with smaller constant.
\newblock In {\em SCN}, 2020.
\newblock \url{https://eprint.iacr.org/2020/377}.

\bibitem[FHLS19]{sortinglbstoc19}
Alireza Farhadi, MohammadTaghi Hajiaghayi, Kasper~Green Larsen, and Elaine Shi.
\newblock Lower bounds for external memory integer sorting via network coding.
\newblock In {\em STOC}, 2019.

\bibitem[GG81]{GabberG81}
Ofer Gabber and Zvi Galil.
\newblock Explicit constructions of linear-sized superconcentrators.
\newblock {\em J. Comput. Syst. Sci.}, 22(3):407--420, 1981.

\bibitem[Goo14]{zigzag}
Michael~T. Goodrich.
\newblock Zig-zag sort: A simple deterministic data-oblivious sorting algorithm
  running in {O(N Log N)} time.
\newblock In {\em STOC}, 2014.

\bibitem[Hae95]{Haemers_bipartite}
Willem~H. Haemers.
\newblock Interlacing eigenvalues and graphs.
\newblock {\em Linear Algebra and its Applications}, 226-228:593 -- 616, 1995.
\newblock Honoring J.J.Seidel.

\bibitem[Han04]{hansort01}
Yijie Han.
\newblock Deterministic sorting in {O}(\emph{n}loglog\emph{n}) time and linear
  space.
\newblock {\em J. Algorithms}, 50(1):96--105, 2004.

\bibitem[HT02]{hansort00}
Yijie Han and Mikkel Thorup.
\newblock Integer sorting in {O}(n sqrt (log log n)) expected time and linear
  space.
\newblock In {\em FOCS}, 2002.

\bibitem[J{\'{a}}J92]{jajabook}
Joseph J{\'{a}}J{\'{a}}.
\newblock {\em An Introduction to Parallel Algorithms}.
\newblock Addison-Wesley, 1992.

\bibitem[JM87]{JimboM87}
Shuji Jimbo and Akira Maruoka.
\newblock Expanders obtained from affine transformations.
\newblock {\em Combinatorica}, 7(4):343--355, 1987.

\bibitem[JM92]{jimboselect}
Shuji Jimbo and Akira Maruoka.
\newblock Selection networks with $8n$ $\log_2 n$ size and ${O}(\log n)$ depth.
\newblock In {\em Algorithms and Computation}, pages 165--174, 1992.

\bibitem[KK21]{KK21}
Michal Kouck{\'{y}} and Karel Kr{\'{a}}l.
\newblock Sorting short integers.
\newblock In {\em ICALP}, 2021.

\bibitem[Knu73]{knuthbook}
Donald~E. Knuth.
\newblock {\em The Art of Computer Programming, Volume {III:} Sorting and
  Searching}.
\newblock Addison-Wesley, 1973.

\bibitem[KR81]{Kirkpatricksort}
David~G. Kirkpatrick and Stefan Reisch.
\newblock Upper bounds for sorting integers on random access machines.
\newblock Technical report, 1981.
\newblock University of British Columbia.

\bibitem[LL04]{lilinetcoding}
Zongpeng Li and Baochun Li.
\newblock Network coding : The case of multiple unicast sessions.
\newblock In {\em Allerton Conference on Communications}, volume~16, page~8,
  2004.

\bibitem[LMS95]{leightonselection}
Tom Leighton, Yuan Ma, and Torsten Suel.
\newblock On probabilistic networks for selection, merging, and sorting.
\newblock In {\em Proceedings of the Seventh Annual ACM Symposium on Parallel
  Algorithms and Architectures}, SPAA '95, pages 106--118, 1995.

\bibitem[LSX19]{osortsmallkey}
Wei{-}Kai Lin, Elaine Shi, and Tiancheng Xie.
\newblock Can we overcome the $n \log n$ barrier for oblivious sorting?
\newblock In {\em SODA}, 2019.

\bibitem[Mar73]{margulis1973}
Grigorii~Aleksandrovich Margulis.
\newblock Explicit constructions of concentrators.
\newblock {\em Problemy Peredachi Informatsii}, 9(4):71--80, 1973.

\bibitem[MZ14]{odsmitchell}
John~C. Mitchell and Joe Zimmerman.
\newblock {D}ata-{O}blivious {D}ata {S}tructures.
\newblock In {\em STACS}, pages 554--565, 2014.

\bibitem[Pat90]{akspaterson}
M.~S. Paterson.
\newblock Improved sorting networks with ${O}(\log n)$ depth.
\newblock In {\em Algorithmica}, 1990.

\bibitem[Pip90]{pippengerselect}
Nicholas Pippenger.
\newblock Selection networks.
\newblock In {\em Algorithms}, pages 2--11, Berlin, Heidelberg, 1990. Springer
  Berlin Heidelberg.

\bibitem[Pip96]{selfroutingsuperconcentrator}
Nicholas Pippenger.
\newblock Self-routing superconcentrators.
\newblock {\em J. Comput. Syst. Sci.}, 52(1):53--60, February 1996.

\bibitem[PPRY18]{panorama}
Sarvar Patel, Giuseppe Persiano, Mariana Raykova, and Kevin Yeo.
\newblock Panorama: Oblivious ram with logarithmic overhead.
\newblock In {\em FOCS}, 2018.

\bibitem[Sav97]{savagebook}
John~E. Savage.
\newblock {\em Models of Computation: Exploring the Power of Computing}.
\newblock Addison-Wesley Longman Publishing Co., Inc., Boston, MA, USA, 1st
  edition, 1997.

\bibitem[Sei09]{aksjoel}
Joel Seiferas.
\newblock Sorting networks of logarithmic depth, further simplified.
\newblock {\em Algorithmica}, 53(3):374–384, March 2009.

\bibitem[Tho02]{thorupsort}
Mikkel Thorup.
\newblock Randomized sorting in {$O(n \log \log n)$} time and linear space
  using addition, shift, and bit-wise boolean operations.
\newblock {\em J. Algorithms}, 42(2):205--230, 2002.

\bibitem[Yao80]{yaoselectnet}
Andrew~Chi{-}Chih Yao.
\newblock Bounds on selection networks.
\newblock {\em {SIAM} J. Comput.}, 9(3):566--582, 1980.

\end{thebibliography}

\appendix

\section{Expander Graphs and Spectral Expansion}
\label{sec:expanderprelim}

\ignore{
\begin{definition}[The parameter $\lambda(G)$ {\cite[Definition 21.2]{AroraBarak}}]
  Given a $d$-regular graph $G$ on $n$ vertices, we let $A = A(G)$ be the matrix
  such that for every two vertices $u$ and $v$ of $G$, it holds that $A_{u,v}$
  is equal to the number of edges between $u$ and $v$ divided by $d$. (In other
  words, $A$ is the adjacency matrix of $G$ multiplied by $1/d$.)
  The parameter $\lambda(A)$, denoted also as  $\lambda(G)$, is:
  \begin{align*}
    \lambda(A) =  \max_{{\mathbf v}\in \mathbf{1}^\bot, \| {\mathbf  v}\|_2=1 } \|
    A{\mathbf {v}}\|_2,
  \end{align*}
  where $\mathbf{1}^\bot = \{\mathbf v \mid \sum \mathbf{v}_i = 0\}$.
\end{definition}

\begin{lemma}[Expander mixing lemma {\cite[Definition 21.11]{AroraBarak}}]\label{lem:expander}
  Let $G=(V,E)$ be a $d$-regular $n$-vertex graph. Then, for all sets
  $S,T\subseteq V$, it holds that
  \begin{align*}
    \left|e(S,T) - \frac{d}{n}\cdot |S|\cdot |T| \right| \leq \lambda(G)\cdot
    d\cdot \sqrt{|S|\cdot |T|},
  \end{align*}
\end{lemma}
}

\begin{lemma}[Expander mixing lemma for bipartite graphs~\cite{Haemers_bipartite}]
\label{lem:expander}
  Let $G=(L \cup R,E)$ be a $d$-regular bipartite graph such that $\abs{L}=\abs{R}=n$. 
  Then, for all sets $S \subseteq L$ and $T\subseteq R$, it holds that
  \begin{align*}
    \left|e(S,T) - \frac{d}{n}\cdot |S|\cdot |T| \right| \leq \lambda_2(G)\cdot
    d\cdot \sqrt{|S|\cdot |T|},
  \end{align*}
where $\lambda_2(G)$ is defined as the second largest eigenvalue
of the normalized adjacency matrix $A$ of $G$. In other words,  
$A$ is the adjacency matrix of $G$ multiplied by $1/d$; 
let $\lambda_1 \ge \lambda_2 \ge \dots \ge \lambda_{2n}$ be the eigenvalues of $A$,
then $\lambda_2(G) := \lambda_2$.
The eigenvalue  
$\lambda_2(G) \in (1/d, 1)$ 
is also called the {\it spectral expansion}
of the bipartite graph $G$.
\end{lemma}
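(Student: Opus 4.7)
The plan is to prove the bipartite expander mixing lemma by the standard spectral decomposition argument, adapted to the bipartite setting. First, I would express $e(S,T)$ as a bilinear form: letting $B$ denote the $2n \times 2n$ adjacency matrix of $G$ (with rows/columns indexed by $L \cup R$) and $\mathbf{1}_S, \mathbf{1}_T \in \{0,1\}^{2n}$ the indicator vectors of $S$ and $T$, we have $e(S,T) = \mathbf{1}_S^\top B \mathbf{1}_T = d \cdot \mathbf{1}_S^\top A \mathbf{1}_T$, where $A = B/d$ is the normalized adjacency matrix. The goal is therefore to bound $\left|\mathbf{1}_S^\top A \mathbf{1}_T - \tfrac{|S||T|}{n}\right|$ by $\lambda_2(G) \cdot \sqrt{|S||T|}$.

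The key step is exploiting the structure of the spectrum of $A$ in the bipartite $d$-regular case. Because $G$ is bipartite and $d$-regular, $A$ has orthonormal eigenvectors $v_1, v_2, \dots, v_{2n}$ with eigenvalues $1 = \lambda_1 \ge \lambda_2 \ge \cdots \ge \lambda_{2n} = -1$, and moreover the spectrum is symmetric around $0$, so $|\lambda_i| \le \lambda_2(G)$ for all $i \notin \{1, 2n\}$. The top and bottom eigenvectors are the normalized all-ones on $L \cup R$ and the normalized $\mathbf{1}_L - \mathbf{1}_R$, namely $v_1 = \tfrac{1}{\sqrt{2n}}(\mathbf{1}_L + \mathbf{1}_R)$ and $v_{2n} = \tfrac{1}{\sqrt{2n}}(\mathbf{1}_L - \mathbf{1}_R)$. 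Writing $\mathbf{1}_S = \sum_i \alpha_i v_i$ and $\mathbf{1}_T = \sum_i \beta_i v_i$, and using that $S \subseteq L$ and $T \subseteq R$, a direct calculation gives $\alpha_1 = \alpha_{2n} = |S|/\sqrt{2n}$ and $\beta_1 = -\beta_{2n} = |T|/\sqrt{2n}$. The contributions of the top and bottom eigenvectors to $\mathbf{1}_S^\top A \mathbf{1}_T$ are therefore $\alpha_1 \beta_1 \cdot 1 + \alpha_{2n} \beta_{2n} \cdot (-1) = 2 \cdot \tfrac{|S||T|}{2n} = \tfrac{|S||T|}{n}$, which is exactly the expected main term.

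It remains to bound the contribution of the middle eigenvectors, $\sum_{i=2}^{2n-1} \alpha_i \beta_i \lambda_i$. Using $|\lambda_i| \le \lambda_2(G)$ and Cauchy--Schwarz, this is at most $\lambda_2(G) \cdot \sqrt{\sum_{i=2}^{2n-1} \alpha_i^2} \cdot \sqrt{\sum_{i=2}^{2n-1} \beta_i^2}$. By Parseval's identity, $\sum_i \alpha_i^2 = \|\mathbf{1}_S\|_2^2 = |S|$ and subtracting the two endpoint contributions gives $\sum_{i=2}^{2n-1} \alpha_i^2 = |S| - |S|^2/n \le |S|$, and analogously for $\beta$. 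Multiplying through by $d$ yields the claimed inequality.

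I expect the main subtlety to be in the bipartite spectral setup, specifically verifying that $v_1$ and $v_{2n}$ are indeed eigenvectors with eigenvalues $\pm 1$ and that the spectrum is symmetric about $0$ (so that $|\lambda_i| \le \lambda_2(G)$ for all non-extremal $i$); these facts follow from the bipartite structure, which ensures that conjugation of $A$ by the sign operator $\operatorname{diag}(\mathbf{1}_L, -\mathbf{1}_R)$ sends $A$ to $-A$. The rest is routine linear algebra and does not warrant spelling out in detail.
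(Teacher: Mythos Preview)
Your proposal is correct: it is the standard spectral proof of the bipartite expander mixing lemma, and the computation of the contributions from the two extremal eigenvectors as well as the Cauchy--Schwarz bound on the remaining spectrum are all sound. The paper itself does not supply a proof of this lemma at all; it simply states it with a citation to Haemers~\cite{Haemers_bipartite} and uses it as a black box, so your write-up in fact goes beyond what the paper provides.
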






\section{Summary and Improvement of Kouck{\'{y}} and Kr{\'{a}}l~\cite{KK21}}
\label{sec:KK21}

Kouck\`y and Kr\`al show a circuit that sorts $n$ integers each is $k$-bit
(without payload),
taking circuit size $O(nk^2)$ and depth $O(\log n + k\log k)$.
In this section, we improve the circuit depth to $O(\log n + k)$
(with the same circuit size).

If $k = \Omega(\log n)$, then the standard AKS sorting network~\cite{aks} 
is applied to $n$ integers directly:
recall that the depth of AKS is $O(k+\log n)$ using the pipelining technique
in Fact~\ref{fct:segmenter-circ} while the circuit size is still $O(kn\log n)$.

For $k \le 0.1 \cdot \log n$, Kouck\`y and Kr\`al takes a counting approach as below
(for readability, integer rounding is omitted).

\begin{enumerate}
\item 
Divide input into chunks, each consists of $2^{5k}$ integers.
For each chunk, sort integers in the chunk using AKS sorting network.
This takes size $O(nk^2)$ and depth $O(k)$.

\item
\label{stp:KK_count}
For each chunk, count the number of occurrences for all integer $i\in[2^k]$,
which yields a short list of $2^k \cdot 5k$ bits
(compared to chunk size, $2^{5k} \cdot k$).
To do so, for each chunk, the following is performed.
  \begin{enumerate}
  \item 
  The sorted chunk is sub-divided into $2^{3k}$ pieces,
  each consists of $2^{2k}$ integers.
  Because it is sorted, the chunk has at most $2^k$ pieces that are non-uniform
  (that is, having more than 1 distinct integers).
  \item
  Counting is straightforward for each uniform piece.
  \item
  The non-uniform pieces are collected into a short list of $2^{3k}$ integers
  using AKS sorting network.
  Then on the short list, a counting is performed for each integer $i \in [2^k]$
  (e.g., Fact~\ref{fct:prefixsumcircuit}),
  resulting the counts of the short list.
  \item
  The counts of all pieces and the short list are summed up,
  resulting the counts of this chunk.
  \end{enumerate}
Notice that the cost of AKS is the same as previous step,
and that counting the short list takes circuit size $O(2^k \cdot 2^{3k})$ and depth $O(k)$.
Hence for all chunks, this takes size $O(nk^2)$ and depth $O(k)$.

\item
Sum up the counts from all $n / 2^{5k}$ chunks, and then calculate
the ``desired counts'' for each chunk when all $n$ integers are sorted.
The counts from all chunks are just $(n / 2^{5k})\cdot (2^k \cdot 5k)$ bits,
so this can be implemented in circuit size $O(nk)$ and depth $O(k+\log n)$,
e.g., using delayed-carry addition and all prefix sums (Facts~\ref{fct:prefixsumcircuit}).

\item
With the desired counts, each chunk restore integers from its counts
using a reversed procedure of Step~\ref{stp:KK_count}.
\item
The concatenation of all restored chunks is the sorted output.
\end{enumerate}
This concludes the improved circuit depth $O(k+\log n)$;
notice that the depth is achieves using the above pipelined AKS,
compared to depth $O(k \log k + \log n)$ in Kouck\`y and Kr\`al.


\section{Epilogue: Reducing the $\poly\log^*$ to $\log^*$}

While Kouck\`y and Kr\'al achieved a sub-optimal depth $O(\log^3 n)$,
their circuit size is $O(nk(w+k)\cdot(1+\log^*n - \log^*(w+k)))$,
which is slightly better than our circuit size,
$O(n k (w+k)) \cdot (1 + \poly(\log^* n - \log^* (w+k)))$.
It turns out that we can combine the techniques in the two papers, and 
achieve optimal depth 
and $O(nk(w+k)\cdot(1+\log^*n - \log^*(w+k)))$ 
circuit size for sorting $k$-bit keys. 
Note that besides this section which is added in hindsight, the rest
of the paper is concurrent and independent work to 
Kouck{\'{y}} and Kr{\'{a}}l~\cite{KK21}.


Specifically, we will prove the following slightly improved theorem in this section. 
\begin{theorem}
\label{thm:better_size_sort}
Suppose that $n > 2^{4k + 7}$.
There is a constant fan-in, constant fan-out 
boolean circuit that correctly sorts 
any array containing $n$ elements each with a $k$-bit key 
and a $w$-bit payloads, whose 
size is $O(n k (w+k)) \cdot (1+ \log^* n - \log^* (w+k))$
and whose depth is $O(\log n + \log w)$.
\label{thm:bettersortcircuit}
\end{theorem}

To get the above theorem, it suffices to replace
the (tight) compaction circuit 
in the proof of Theorem~\ref{thm:mainsort-circ}
with one that achieves logarithmic depth and 
$O(nw (1 + \log^* n - \log^* w))$ size. Recall that 
the proof of Theorem~\ref{thm:mainsort-circ} 
performs a $1$-bit to $k$-bit upgrade using the new techniques developed earlier
in our paper.
Henceforth, we focus on constructing a compaction circuit satisfying the above
requirements.
More specifically, to prove 
the above Theorem~\ref{thm:bettersortcircuit},
it suffices to prove the following:

\begin{theorem}
There exists a tight compaction circuit
of size $O(nw \cdot (1+\log^*(n)-\log^*(w)))$
and depth $O(\log n)$, where $w$ is the width of the payload.
\end{theorem}

Inspired by Kouck{\'{y}} and Kr{\'{a}}l~\cite{KK21}, 
we will start with a logarithmic-depth compaction
circuit that is a $\log\log n$ 
factor non-optimal in size, and 
then use a recursive bootstrapping technique 
to 
compress the circuit size 
while preserving the asymptotical depth.
\ignore{
With the upgraded tight compaction, the sorting 
circuit of Theorem~\ref{thm:better_size_sort}
is obtained directly from our 1-bit to $k$-bit upgrade, 
as described in Section~\ref{sec:sort-circuit}.
Notice that the large-size tight compaction uses our sparse loose compactor from
Section~\ref{sec:sparselc}, while the ``upgrade'' is based on 
Kouck\`y and Kr\'al~\cite[Lemma 19]{KK21} (Figure~\ref{fig:bettersize}).
}


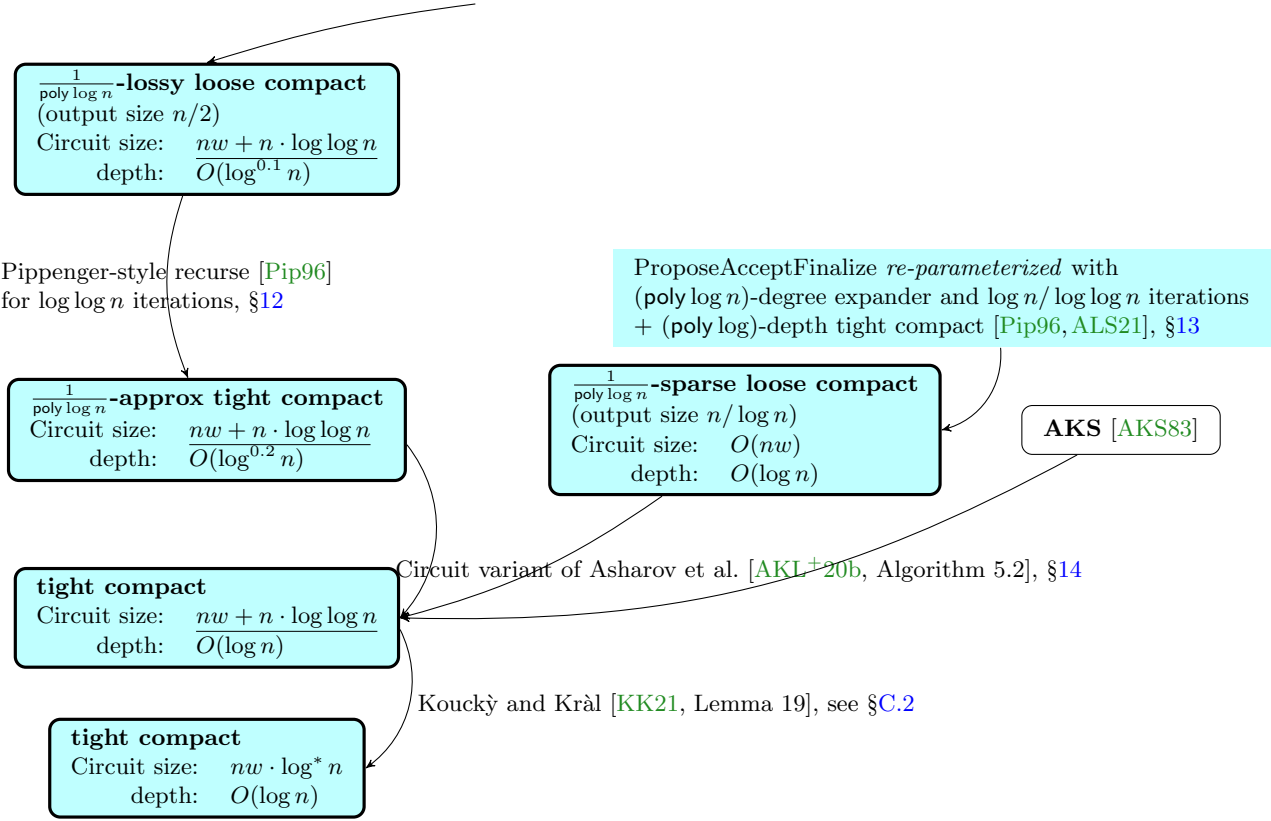
\begin{figure}[t]
\begin{center}
\begin{tikzpicture}[->,>=stealth']
\footnotesize

\definecolor{aqua}{rgb}{0.75, 1.0, 1.0}

\tikzstyle{state} = [rectangle,
           rounded corners,
           draw=black, very thick,
           minimum height=2em,
           inner sep=2pt,
           text centered,]

\tikzstyle{box} = [rectangle,
           minimum height=2em,
           inner sep=2pt,
           text centered,]

\node[box,
] (PAF) 
{
\begin{tabular}{l}
 ProposeAcceptFinalize, const-degree expander, 
 $\log\log n$ iterations (Pippenger~\cite{selfroutingsuperconcentrator})
\end{tabular}
};

\node[state,
  below of=PAF,
  node distance=2cm,
  anchor=east,
  xshift=-1cm,
  fill=aqua,
] (LOSSYLOOSE) 
{
\begin{tabular}{rl}
  \multicolumn{2}{l}{
    \textbf{$\frac{1}{\poly\log n}$-lossy loose compact}
  }\\
  \multicolumn{2}{l}{
    (output size $n/2$)
  }\\
  Circuit size: & \underline{$nw + n \cdot \log\log n$}\\
  depth: & $O(\log^{0.1} n)$\\
\end{tabular}
};

\node[state,
  below of=LOSSYLOOSE,
  node distance=4cm,
  fill=aqua,
] (APPROXTIGHT) 
{
\begin{tabular}{rl}
  \multicolumn{2}{l}{
    \textbf{$\frac{1}{\poly\log n}$-approx tight compact}
  }\\
  Circuit size: & \underline{$nw + n \cdot \log\log n$}\\
  depth: & $O(\log^{0.2} n)$\\
\end{tabular}
};

\node[box,
  below of=APPROXSPARSE,
  node distance=2.25cm,
  xshift=-1.5cm,
  fill=aqua,
] (PAFNEW) 
{
\begin{tabular}{l}
 ProposeAcceptFinalize \emph{re-parameterized} with \\
 ($\poly \log n$)-degree expander and $\log n / \log\log n$ iterations\\
  + ($\poly\log$)-depth tight compact~\cite{selfroutingsuperconcentrator,soda21},  
  \S\ref{sec:sparselc}\\
\end{tabular}
};

\node[state,
  below of=PAFNEW,
  node distance=1.75cm,
  anchor=east,
  fill=aqua,
] (SPARSELOOSE) 
{
\begin{tabular}{rl}
  \multicolumn{2}{l}{
    \textbf{$\frac{1}{\poly\log n}$-sparse loose compact}
  }\\
  \multicolumn{2}{l}{
    (output size $n/\log n$)
  }\\
  Circuit size: & $O(nw)$\\
  depth: & $O(\log n)$\\
\end{tabular}
};

\node[state,
  thin,
  right of=SPARSELOOSE,
  node distance=5cm,
] (AKS) 
{
\begin{tabular}{l}
 \textbf{AKS}~\cite{aks}\\
\end{tabular}
};

\node[state,
  below of=APPROXTIGHT,
  node distance=2.5cm,
  fill=aqua,
] (LTIGHT) 
{
\begin{tabular}{rl}
  \multicolumn{2}{l}{
    \textbf{tight compact}
  }\\
  Circuit size: & \underline{$nw + n \cdot \log\log n$}\\
  depth: & $O(\log n)$\\
\end{tabular}
};

\node[state,
  below of=LTIGHT,
  node distance=2cm,
  fill=aqua,
] (TIGHT) 
{
\begin{tabular}{rl}
  \multicolumn{2}{l}{
    \textbf{tight compact}
  }\\
  Circuit size: & $nw\cdot\log^* n$\\
  depth: & $O(\log n)$\\
\end{tabular}
};

\path 
(PAF.south)      edge[bend right=5] (LOSSYLOOSE.north)

(PAFNEW.320)      edge[bend left=40] (SPARSELOOSE.east)
(LOSSYLOOSE)      edge[bend right=20]
  node[]{
    \begin{tabular}{l}
    Pippenger-style recurse~\cite{selfroutingsuperconcentrator} \\
    for $\log\log n$ iterations, \S\ref{sec:approxtc}  
    \end{tabular}
  } 
  (APPROXTIGHT)
(APPROXTIGHT)      edge[bend left=40]  (LTIGHT.east)
(SPARSELOOSE)      edge[bend left=10] 
  node[anchor=west,xshift=-2cm]{
    Circuit variant of Asharov et al.~\cite[Algorithm 5.2]{paracompact}, \S\ref{sec:tc}
  }
  (LTIGHT.east)
(AKS)      edge[bend left=15] (LTIGHT.east)
(LTIGHT)      edge[bend left=40]
  node[anchor=west]{
    Kouck\`y and Kr\`al~\cite[Lemma 19]{KK21}, see \S\ref{sec:KK_upgrade}
  }
  (TIGHT.east)
;
\end{tikzpicture}
\end{center}
\caption{Blueprint of the improved circuit size,
putting together our techniques with that of Kouck\`y and Kr\`al.
Notice that 
the underlined circuit sizes are larger than
the corresponding ones in Figure~\ref{fig:compaction}.
 }
\label{fig:bettersize}
\end{figure}

\subsection{Compaction Circuit Optimal in Depth but Slightly Non-Optimal in Size}
As a starting point, we will use a compaction
circuit that is optimal in depth but a $\log \log n$ factor
non-optimal in size, as stated in the following theorem: 

\begin{theorem}[Tight compaction: optimal-depth, slightly non-optimal in size] 
\label{thm:largeTC}
There is an $(n,w)$-tight compaction circuit 
with $O(nw + n \log\log n)$ 
generalized boolean gates, 
$O(n)$ 
number of $w$-selector gates, and of depth $O(\log n)$.
\end{theorem}
\begin{proof}
To construct such circuit, 
we need an approximate tight compaction circuit 
that swaps the all but $1/\poly\log n$ fraction of the misplaced
elements in the original array.
We want that this approximate tight compaction circuit
to achieve sub-logarithmic depth, but we allow
the circuit size to be a $\log\log n$ factor non-optimal.
More specifically, we need the following:

\begin{lemma}
\label{lem:approxtc_large}
Fix an arbitrary constant $\widetilde{C} > 1$.
There is an $(1/(\log n)^{\widetilde{C}}, n, w)$-approximate 
tight compaction circuit that has
$O(n \cdot \log\log n)$ generalized boolean gates,
$O(n)$ number of $w$-selector gates, and
with depth at most $O((\log\log n)^2)$.
\end{lemma}
\begin{proof}
The construction is similar to that of Theorem~\ref{thm:approxtc},
the only difference is at Step~\ref{step:compactcolored}
of $\widehat{\bf Swap}^n$:
when performing the lossy loose compaction,
we use the large-size lossy loose compaction
from Theorem~\ref{thm:initllc}
instead of the small circuit from Theorem~\ref{thm:linearllc}.
Symmetrically at Step~\ref{step:reverse-tc} of $\widehat{\bf Swap}^n$,
we also use the large circuit from Theorem~\ref{thm:initllc}.

With these modifications, the depth is $O( (\log\log n)^2 )$
because each lossy loose compaction takes $O(\log\log n)$ detph
and it is recursively applied for $O(\log\log n)$ times 
in $\widehat{\bf Swap}^n$.
The circuit size follows similarly.
\end{proof}

\begin{proofof}{Theorem~\ref{thm:largeTC}}.
We use the same meta-algorithm
as in Section~\ref{sec:tc}
(which was first proposed by
Pippenger~\cite{selfroutingsuperconcentrator} and later used 
in Asharov et al.~\cite{paracompact}).
The algorithm proceeds as follows.
First, sort all but $(1/\poly\log n)$-fraction using 
the low-depth approximate tight compaction (Lemma~\ref{lem:approxtc_large}).
Second,
collect the $(1/\poly\log n)$-fraction misplaced elements
into a short list of $n / \log n$ elements using 
the sparse loose compactor in Theorem~\ref{thm:lcsparse}.
Third, sort the short list using the AKS sorting network.
Finally, reversely route the sorted short list back to the original array.

Using Lemma~\ref{lem:approxtc_large} and~\ref{thm:lcsparse}, 
the performance bound analysis is direct.
Notice that the three steps are
very similar to that of Theorem~\ref{thm:tccircuit},
and the only difference is that we perform a less efficient 
approximate tight compaction at the first step 
(Lemma~\ref{lem:approxtc_large} instead of Theorem~\ref{thm:approxtc}).
\end{proofof}

\end{proof}

\subsection{Improving Circuit Size through Recursive Bootstrapping}
\label{sec:KK_upgrade}
\newcommand{\LargeTC}{\mathbf{LargeTC}}

Next, we  
show how to use the recursive bootstrapping technique 
of Kouck{\'{y}} and Kr{\'{a}}l~\cite[Lemma 19]{KK21}
to compress the circuit size without blowing up  
the asymptotical depth.
The meta-algorithm is identical that of Kouck\`y and Kr\`al,
but we present it in a top-down recursion (compared to their bottom-up)
and parameterize the algorithm using the size and depth of the given larger circuit.


Let $\LargeTC_{n,w}$ be the tight compaction circuit that sorts $n$ elements
each with 1-bit key and $w$-bit payload,
let $O(nw + n\cdot \log\log n)$ be the circuit size
and $O(\log n)$ be the depth.
We construct $\TC_{n,w}$ recursively as below, where
$n$ is the number of elements 
in the input array $A$ and 
and $w$ is the width of each payload string.
For simplicity we also suppose the division and $\log(x)$ always map to 
proper integers in the algorithm.


\paragraph{$\TC_{n,w}(A)$:}  \ \ //Assume: $A$ consists of $n$ elements of width $w$, each with a 1-bit key.
\begin{enumerate}

\item
(Base case.)
If $\log\log n \le w$, invoke $\LargeTC_{n,w}(A)$,
and then output the result.
Otherwise, continue with the following.

\item 
\label{stp:KK:recurse}
Let $n' := \log n$.
Interpret $A$ as $n / n'$ super-pieces,
each super-piece consists of $n'$ elements.
For each super-piece, recursively call $\TC_{n',w}$ 
to sort $n'$ elements in the super-piece.
Let $B$ be the concatenation of all resulting super-pieces.

\item
Let $m' := (\log n)^{1/3}$.
Interpret $B$ as $n / m'$ pieces,
each piece consists of $m'$ elements. 
For each piece, identify itself as 0-, 1-, or mixed-piece,
where 0-piece consists of only 0-elements, 
and 1-piece consists of only 1-elements.
Notice that there are at most $n/n'$ mixed-pieces. 

\item
\label{stp:KK:move_piece}
Invoke $\LargeTC_{n/m',w\cdot m'}$ so that all 0-pieces are moved to the front,
and similarly all 1-pieces are moved to the back.
\weikai{note that $n'=(f^{\circ d}n)^3$ for all $d \ge 1$ so that $n'/m_{d+1} = O(n' / \log n')$}
Invoke $\LargeTC_{n/m',w\cdot m'}$ so that all mixed-pieces are moved to a short scratch array
which consists of $(n/n') \cdot m' = n/(m')^2$ elements.

\item
\label{stp:KK:sort_mixed}
Invoke $\LargeTC_{n/(m')^2, w}$ so that all elements in the scratch array are sorted.

\item
Merge the sorted scratch array elements with those 0- and 1-pieces
using another $\LargeTC_{n/m',w\cdot m'}$ on pieces.
Output the merged array.
\end{enumerate}

We claim the following.

\begin{theorem}
\label{thm:KK_upgrade}
Suppose $\LargeTC_{n,w}$ is a correct tight compaction and
takes circuit size
$O(nw + n\cdot \log\log n)$ and depth $O(\log n)$.
Then, $\TC_{n,w}$ is a correct tight compaction
that sorts $n$ elements each with $w$-bit payload,
takes circuit size $O(nw \cdot (1+\log^*n-\log^*w))$
and depth $O(\log n)$.
\end{theorem}

\begin{proof}

Correctness follows inductively:
the base case is correct by $\LargeTC$,
and then other cases follows as we correctly sort 0- and 1-pieces
in Step~\ref{stp:KK:move_piece} 
and mixed-pieces in Step~\ref{stp:KK:sort_mixed}.
Next, we focus on the circuit size and depth.

The circuit size is
\[
S(n,w) = \begin{cases}
(n/\log n) \cdot S(\log n) 
  + 4S_{\LargeTC_{n/m',w\cdot m'}}
  + S_{\LargeTC_{n/(m')^2,w}}
  & \log\log n > w\\
O(n \cdot w) & \log\log n \le w
\end{cases}.
\]
By Theorem~\ref{thm:largeTC}, we have 
\[
4S_{\LargeTC_{n/m',w\cdot m'}} + S_{\LargeTC_{n/(m')^2,w}}
= 4\cdot O(nw + 2n) + O(nw + 2n)
= O(nw)
\]
for all $n>2^{2^3}$
since $m' = (\log n)^{1/3} > (1/2) \log\log n$ for all $n>2^{2^3}$.
Because the recursive call to $\TC$ itself reduces the problem size
to $\log n$ at Step~\ref{stp:KK:recurse},
the recursion reaches the base case at depth $\log^* n - \log^* w$.
Then the total circuit size $O(nw\cdot (1+\log^*(n)-\log^*(w)))$ 
follows by a simple summation over each recursion depth.


To calculate the circuit depth, observe that ``for each piece/super-piece'' steps
in the procedure are all performed in parallel,
and that only $\LargeTC$ and identifying 0, 1, or mixed pieces take depth $O(\log n)$.
Moreover, for each recursion depth $i$, 
$\LargeTC$ and the identification work on at most $n_i$ items,
where $n_i = \log^{(i)} n$ denotes the number of elements 
in the input of the recursive call $\TC$.
Hence, the total depth is 
\[
D(n) = \sum_{i=0}^{\log^*(n)-\log^*(w)} O(\log n_i) = O(\log n).
\]
\end{proof}

\paragraph{Putting everything together.}
By plugging Theorem~\ref{thm:largeTC}
into $\LargeTC$,
we obtain Theorem~\ref{thm:KK_upgrade} as claimed.
Then, plugging Theorem~\ref{thm:KK_upgrade}
into our 1-bit to $k$-bit upgrade 
(described in Section~\ref{sec:sort-circuit}),
we get Theorem~\ref{thm:better_size_sort}.

\end{document}